\def\dOi{11(2:6)2015}
\subjclass{F.3.2}
\newcommand{\restrict}{\upharpoonright}
\newcommand{\from}{\leftarrow}
\newcommand{\tto}{\Rightarrow}
\newcommand{\intr}[1]{\llbracket #1 \rrbracket}
\newcommand{\lpair}[1]{\langle #1 \rangle}
\newcommand{\leg}[1]{\mathcal{L}_{#1}}
\newcommand{\inter}{\mathop{|\!|}}
\newcommand{\unfold}[1]{\widetilde{A}}
\newcommand{\vis}{\mathcal{V}}
\newcommand{\church}[1]{\underline{#1}}
\newcommand{\size}{\mathrm{rsize}}
\newcommand{\cosize}{\mathrm{rcosize}}
\newcommand{\dom}{\mathrm{dom}}
\newcommand{\lhr}{\mathrm{lhr}}
\newcommand{\bs}{\mathrm{bs}}
\newcommand{\embeds}{\hookrightarrow}
\newcommand{\norm}{\mathcal{N}}
\newcommand{\depth}{\mathrm{depth}}
\newcommand{\depthbs}{\mathrm{depth}}
\newcommand{\nodes}{\mathrm{nodes}}
\newcommand{\lh}{\mathrm{lh}}
\newcommand{\id}{\mathrm{id}}
\newcommand{\daimon}{*}
\newcommand{\fv}{\mathrm{fv}}
\newcommand{\h}{\mathrm{h}}
\newcommand{\lv}{\mathrm{lv}}
\newcommand{\loc}{\mathrm{L{ls}}}
\newcommand{\gen}{\mathrm{L{gen}}}
\newcommand{\laml}{{\lambda\mathrm{l}}}
\newcommand{\ord}{\mathrm{ord}}
\newcommand{\length}{\mathrm{l}}
\newcommand{\purea}{I_{\infty}}
\newcommand{\enb}{\mathrel{\vdash}}
\newcommand{\tbs}{\mathcal{T}}
\newcommand{\sz}{\mathrm{size}}
\newcommand{\df}{\mathrm{df}}
\newcommand{\ensquare}[1]{\begin{array}{|l|}\hline #1 \\ \hline \end{array}}
\newlength{\viewht}
\newlength{\viewlift}
\newlength{\viewdp}
\newlength{\viewdrop}
\newcommand{\pview}[1]{
\settoheight{\viewht}{\makebox{$#1$}}
\setlength{\viewlift}{\viewht}%
\addtolength{\viewlift}{-1ex}%
\raisebox{0.3\viewlift}{
  \makebox{$\ulcorner$}}
  \!#1\!
\settoheight{\viewht}{\makebox{$#1$}}
\setlength{\viewlift}{\viewht}%
\addtolength{\viewlift}{-1ex}%
\raisebox{0.3\viewlift}{
  \makebox{$\urcorner$}}
}
\newcommand{\oview}[1]{
\settodepth{\viewdp}{\makebox{$#1$}}
\setlength{\viewdrop}{0.3\viewdp}%
\addtolength{\viewdrop}{0.5ex}%
\raisebox{-\viewdrop}{
  \makebox{$\llcorner$}}
  \!#1\!
\settodepth{\viewdp}{\makebox{$#1$}}
\setlength{\viewdrop}{0.3\viewdp}%
\addtolength{\viewdrop}{0.5ex}%
\raisebox{-\viewdrop}{
  \makebox{$\lrcorner$}}
} 
\begin{document}
\title[Bounding linear head reduction
and visible interaction through skeletons]{Bounding linear head reduction
and visible interaction through skeletons\rsuper*}
\author[P.~Clairambault]{Pierre Clairambault}
\address{CNRS, ENS de Lyon, Inria, UCBL, Universit\'e de Lyon}
\email{pierre.clairambault@ens-lyon.fr}

\keywords{Linear head reduction, complexity, game semantics}

\titlecomment{{\lsuper*}This work presents the combined results of two conference papers \cite{fossacs11,tlca13}.}
\thanks{This work was partially supported by the ERC Advanced Grant ECSYM and
by the French ANR grant ELICA, ANR-14-CE25-0005.}

\maketitle

\begin{abstract}
In this paper, we study the complexity of execution in higher-order programming languages. 
Our study has two facets: on the one hand
we give an upper bound to the length of interactions between bounded P-visible strategies in Hyland-Ong game semantics. This
result covers models of programming languages with access to computational effects like non-determinism, state or control
operators, but its semantic formulation causes a loose connection to syntax.
On the other hand we give a syntactic counterpart of our semantic study: a non-elementary upper bound to 
the length of the linear head reduction sequence (a low-level notion of reduction, close to the actual implementation of
the reduction of higher-order programs by abstract machines) of simply-typed $\lambda$-terms. In both cases our upper
bounds are proved optimal by giving matching lower bounds.

These two results, although different in scope, are proved using the same method: we introduce a simple reduction on finite
trees of natural numbers, hereby called \emph{interaction skeletons}. We study this reduction and give upper bounds to its complexity.
We then apply this study by giving two simulation results: a semantic one measuring progress in game-theoretic interaction
via interaction skeletons, and a syntactic one establishing a correspondence between linear head reduction of terms satisfying a
locality condition called \emph{local scope} and the reduction of interaction skeletons. This result is then generalized to
arbitrary terms by a local scopization transformation.
\end{abstract}

\section{Introduction}

In the last two decades there has been a significant interest in the study of \emph{quantitative} or \emph{intensional} aspects of
higher-order programs; in particular, the study of their \emph{complexity} has generated a lot of effort. In the context of
the $\lambda$-calculus, the first result that comes to mind is the work by Schwichtenberg \cite{schw91}, later improved by
Beckmann \cite{beck}, establishing upper bounds to the length of $\beta$-reduction sequences for simply-typed $\lambda$-terms.
In the related line of work of \emph{implicit complexity}, type systems have been developed to characterize extensionally certain
classes of functions, such as polynomial \cite{lll} or elementary \cite{ell} time. Such systems rely on a soundness theorem
establishing that well-typed terms normalize in a certain restricted time, which is itself established using syntactic methods that
are specific to the system being studied. This calls for the development of syntax-independent tools to study precisely the execution
time of higher-order programs. The present paper is a contribution towards that goal.

In order to develop tools for estimating the complexity of programs that are at the same time syntax-independent but still
precise and computationally relevant, game semantics is a good place to start. Indeed Hyland-Ong game semantics, originally
introduced to solve the full abstraction problem for PCF \cite{hogames}, have since proved powerful and flexible enough to provide
fully abstract models for various computational features of programming languages: non-determinism, state, exceptions, control
operators, concurrency\dots~At the same time game semantics are computationally informative: moves in an interaction exactly
correspond to computation steps as performed by some abstract machines, as was formally proved by Danos, Herbelin and Regnier \cite{dhr}
for the simply-typed $\lambda$-calculus. Accordingly, the first contribution of this paper is an upper bound to the
length of interactions between bounded P-visible strategies in Hyland-Ong game semantics. Bounded P-visible strategies allow
computational effects such as non-determinism, control operators or ground state but disallow, for instance, higher-order references
using which a fixpoint combinator can be defined. 
In this context, we give a non-elementary upper bound to the length of interactions (so implicitly, computation), for which
we prove optimality by providing a matching lower bound. This is done by extracting from interaction sequences a finite tree
of natural numbers called an \emph{interaction skeleton}. We then prove a simulation result, showing that progress in the game theoretic
interaction is witnessed by a simple reduction on the corresponding interaction skeletons. This reduction is then analyzed independently
of its connection with game semantics, and this analysis informs a bound on the length of P-visible interactions.

This syntax-independent approach advocated above yields a result with a broader scope, however there is a price to pay:
the innocent strategies of game semantics are representations of $\eta$-long $\beta$-normal $\lambda$-terms, so their interactions
are only easily linked to syntactic reduction of terms of the form $M~N_1~\dots~N_p$ where $M$ and all the $N_i$s are
$\beta$-normal $\eta$-long normal forms; we will call such terms \emph{game situations} in this paper. In order to compensate for
this structural restriction, we investigate the direct relationship between interaction skeletons and linear head reduction,
independently of game semantics. We define a set of terms called \emph{generalized game situations}, that are both $\eta$-long in a
sense adapted to linear head reduction, and satisfy a key locality constraint called \emph{local scope}. Generalized game situations
are preserved by linear head reduction. Moreover they can be simulated by interaction skeletons, from which we deduce
optimal upper bounds to their complexity. These bounds are then generalized to arbitrary simply-typed $\lambda$-terms by
means of a syntactic $\lambda$-lifting \cite{llift} transformation putting arbitrary terms to locally scoped form.

\emph{Related works.} There are multiple approaches to the complexity analysis of higher-order programs, but they seem to separate
into two major families. On the one hand, Beckmann \cite{beck}, extending earlier work by Schwichtenberg \cite{schw91}, gave exact
bounds to the maximal length of $\beta$-reduction on simply-typed $\lambda$-terms. His analysis uses very basic information on the
terms (their length, or height, and order), but gives bounds that are in general very rough. On the other hand other groups,
including Dal Lago and Laurent \cite{DBLP:conf/csl/LagoL08}, De Carvalho \cite{DBLP:journals/tcs/CarvalhoPF11}, or Bernardet and
Lengrand \cite{DBLP:conf/fossacs/BernadetL11}, use semantic structures (respectively, game semantics, relational semantics, or
non-idempotent intersection types) to capture abstractly the precise complexity of particular terms. Their bounds are much more
precise on particular terms, but require information on the terms whose extraction is in general as long to obtain as actual
execution. The present contribution belongs to the first family. However, unlike Beckmann and Schwichtenberg our core tools are
syntax-independent. Moreover we focus on \emph{linear head reduction}, the notion of reduction implemented by several call-by-name
abstract machines \cite{pam}, closer to the actual execution of functional programming languages.

\emph{Outline.} In Section \ref{sec:prelim}, we introduce a few basic notions or notations useful to the rest of the paper. In
Section \ref{sec:games} we present the game semantics framework from which interaction skeletons were originally extracted, and 
prove our semantic simulation result. Section \ref{sec:bs} is a largely standalone section in which we study interaction
skeletons and their reduction, and prove our main complexity result. Finally, Section \ref{sec:lhr} focuses on the syntactic
implications of our study of skeletons and details their relationship with linear head reduction.

This paper is organized around the notion of interaction skeleton and their reduction, with two largely independent
applications to game semantics and to the complexity of linear head reduction. We chose to present first the game-theoretic
development, since it motivates the definition of interaction skeletons. However, our intention is that
the paper should be accessible to semantically-minded as well as more syntactically-minded readers. In particular, readers
not interested in game semantics should be able to skip Section \ref{sec:games} and still have everything needed to
understand Sections \ref{sec:bs} and \ref{sec:lhr}.

\section{Preliminaries}
\label{sec:prelim}

\subsection{Syntax and dynamics of the $\lambda$-calculus}

In this paper, we consider the simply-typed $\lambda$-calculus $\Lambda$ built from one base type $o$. Its \textbf{types} 
and \textbf{terms} are:
\begin{eqnarray*}
A, B &::=& o \mid A \to B\\
M, N &::=& \lambda x^A.~M \mid M~N \mid x \mid \daimon_A
\end{eqnarray*}
subject to the usual typing rules defining the typing relation $\Gamma \vdash M:A$ (for definiteness, contexts $\Gamma$
are considered to be sets of pairs $x:A$, where $x$ is a variable name and $A$ is a simple type). All the terms in this paper
are considered well-typed, although we will not always make it explicit.
 Note that we work here with the simply-typed $\lambda$-calculus \emph{à la Church}, 
\emph{i.e.} the variables are explicitly annotated with types (although we often omit the annotations for the sake of readability).
For each type $A$, there is a constant $\daimon_A : A$ of type $A$. We will often omit the index and write $\daimon$.
As usual, we write $\fv(M)$ for the set of \emph{free variables} of a term $M$, \emph{i.e.} variables appearing in the
term but not bound by a $\lambda$. If $A$ is a type, we write $\id_A$ for the term $\vdash \lambda x^A.~x: A \to A$.
Terms are assumed to obey Barendregt's convention, and are considered up to $\alpha$-equivalence.
Note that our design choices -- only one atom, each type is inhabited -- merely make the presentation
simpler and are not strictly required for our results to hold.

If $\Gamma, x:A \vdash M: B$ and $\Gamma \vdash N:A$, we write $M[N/x]$ for the \textbf{substitution} of $x$ by $N$ in $M$, 
\emph{i.e.} $M$ where all occurrences of $x$ have been replaced by $N$.
Although this paper focuses on linear head reduction (to be defined later), we will occasionally need \textbf{$\beta$-reduction}.
It is the usual notion of reduction in the $\lambda$-calculus, defined by the context-closure of:
\[
(\lambda x.~M)~N \to_\beta M[N/x]
\]
Likewise we consider \textbf{$\eta$-expansion} to be the context closure of:
\[
M \to_\eta \lambda x^A.~M~x
\]
valid whenever $M$ has type $A\to B$ for some types $A, B$ and $x\not \in \fv(M)$.

The \textbf{level} of a type is defined by $\lv(o) = 0$ and $\lv(A\to B) = \max(\lv(A) + 1, \lv(B))$.
Likewise, the \textbf{level} $\lv(M)$ of a term $M$ is the level of its type. Finally, the \textbf{order}
$\ord(M)$ of a term $M$ is the maximal $\lv(N)$, for all subterms $N$ of $M$. Within a term $\Gamma \vdash M:A$ such that
$(x:B) \in \Gamma$, we write $\lv_M(x) = \lv(B)$. The term $M$ will generally be implicit from the context, so we will just
write $\lv(x)$.

\subsection{Growth rates of functions}
We recall some standard notations for comparing growth rates of functions. For functions
$f, g: \mathbb{N} \to \mathbb{N}$,
we write $f(n) = \Theta(g(n))$ when there exists reals $c_1, c_2 > 0$ and $N \in \mathbb{N}$ such that for all $n\geq N$, $c_1 g(n) \leq f(n)
\leq c_2 g(n)$. This is generalized to functions of multiple variables $f, g: \mathbb{N}^p \to \mathbb{N}$ by setting that $f(n_1, \dots, n_p) =
\Theta(g(n_1, \dots, n_p))$ iff there are $c_1, c_2 > 0$ and $N \in \mathbb{N}$ such that for all $n_i \geq N$
we have $c_1 g(n_1, \dots, n_p) \leq f(n_1, \dots, n_p) \leq c_2 g(n_1, \dots, n_p)$. If $h: \mathbb{N} \to \mathbb{N}$ is another function,
we write $f(n_1, \dots, n_p) = h(\Theta(g(n_1, \dots, n_p)))$ iff there is a function $\phi : \mathbb{N}^p \to \mathbb{N}$ such that
$f(n_1, \dots, n_p) = h(\phi(n_1, \dots, n_p))$ and $\phi(n_1, \dots, n_p) = \Theta(g(n_1, \dots, n_p))$.

\section{Visible pointer structures and interaction skeletons}
\label{sec:games}

As we said before, the general purpose of this paper is to develop syntax-independent tools to reason about termination and
complexity of programming languages. \emph{Game semantics} provide such a framework: in this setting, programs
are identified with the set of their possible interactions with the execution environment, presented as a \emph{strategy}. All
the syntactic information is forgotten but at the same time no dynamic information is lost, since the strategy exactly describes
the behaviour of the term within any possible evaluation context.

More importantly, game semantics can be seen both as a \emph{denotational semantics} and
as an \emph{operational semantics} in the sense that the interaction process at the heart of games is operationally informative and
strongly related to the actual evaluation process as implemented, for instance, by abstract machines. This important intuition was
made formal for the first time, to the author's knowledge, by Danos, Herbelin and Regnier in \cite{dhr}. There, they showed that
assuming one has a simply-typed $\lambda$-term $M~N_1~\dots~N_p$ where $M$ and the $N_i$s are both $\eta$-long and $\beta$-normal,
then there is a step-by-step correspondence between:
\begin{itemize}
\item The \emph{linear head reduction sequence} of $M~N_1~\dots~N_p$,
\item The \emph{game-theoretic interaction} between strategies $\intr{M}$ and $\Pi_{1\leq i \leq p} \intr{N_i}$.
\end{itemize}
In this paper, we will refer to simply-typed $\lambda$-terms of that particular shape as \textbf{game situations}, because 
it is in those situations that the connection between game semantics and execution of programs is the most direct. Indeed
(innocent) game semantics can be seen as a reduction-free way of composing Böhm trees, \emph{i.e.} of computing game situations.

Game situations also provide the starting point of our present contributions. Indeed, the connection above reduces the termination and complexity
analysis of the execution of a game situation $M~N_1~\dots~N_p$ to the syntax-independent analysis of the game-theoretic witness
of this execution, \emph{i.e.} the game-theoretic interaction between the strategies $\intr{M}$ and $\Pi_{1\leq i \leq n} \intr{N_i}$.
More precisely, it turns out that from this interaction one just has to keep the structure of pointers in order to get a precise
estimate of the complexity of execution.

In this section, we will start by recalling a few basic definitions of Hyland-Ong game semantics. We will show that the mechanism
of composition gives rise to structure called \emph{visible pointer structures}. We will then show the main result of this section,
that visible pointer structures can be simulated by a simple rewriting system called \emph{interaction skeletons}.

\subsection{Brief reminder of Hyland-Ong games}

We start this section by recalling some of the basic definitions of Hyland-Ong games. The presentation
of game semantics will be intentionally brief and informal, firstly because it is only there to provide context and is
not a prerequisite to understand the paper, and secondly because good introductions can be easily found in the literature, 
see \emph{e.g.} \cite{harmer2004innocent}.

We are interested in games with two participants: Opponent (O, the \emph{environment}) and Player (P, the \emph{program}).
They play on directed graphs called \emph{arenas}, which are semantic versions of \emph{types}.
Formally, an \textbf{arena} is a structure $A=(M_A,\lambda_A,I_A,\vdash_{A})$ where:
\begin{itemize}
\item $M_{A}$ is a set of \textbf{moves},
\item $\lambda_{A}:M_{A} \to \{O,P\}$ is a polarity function indicating whether
a move is an Opponent or Player move ($O$-move or $P$-move).
\item $I_A\subseteq \lambda_A^{-1}(\{O\})$ is a set of \textbf{initial moves}.
\item $\vdash_{A} \subset M_{A}\times M_A$ is a relation called \textbf{enabling}, such that
if $m \vdash_A n$, $\lambda_{A}(m)\neq \lambda_{A}(n)$.
\end{itemize}

In our complexity analysis we will use notions of \emph{depth}.
A move $m \in M_A$ has \textbf{depth} $d\in \mathbb{N}$ if there is an enabling sequence:
\[
m_0 \vdash_A m_1 \vdash \dots \vdash m_{n-1} = m
\]
where $m_0 \in I_A$, and no shortest enabling sequence exists; for instance any initial move
has depth $0$. Likewise, an arena $A$ has depth $d$ if $d$ is the largest depth
of any move in $M_A$.

We now define plays as \textbf{justified sequences} over ${A}$: these are
sequences $s$ of moves of ${A}$, each non-initial move $m$ in $s$ being equipped with a pointer to an earlier move
$n$ in $s$, satisfying $n\vdash_{A} m$. In other words, a justified sequence $s$ over ${A}$ is such that
each reversed pointer chain $s_{i_0}\from s_{i_1} \from \dots \from s_{i_n}$ is a path on ${A}$ (viewed as a graph).
The role of pointers is to allow \emph{reopenings} or \emph{backtracking} in plays. When writing justified sequences, we will often omit the
justification information if this does not cause any ambiguity. The symbol $\sqsubseteq$ will denote the prefix ordering on 
justified sequences, and $s_1 \sqsubseteq^P s_2$ (resp. $s_1 \sqsubseteq^O s_2$ will mean that $s_1$ is a $P$-ending 
(resp. $O$-ending) prefix of $s_2$. If $s$ is a justified sequence on ${A}$, $|s|$ will denote its length. If $s$ is a justified
sequence over $A$ and $\Sigma \subseteq M_A$, then the \textbf{restriction} $s\restrict \Sigma$ comprises the moves of $s$ in
$\Sigma$. Pointers in $s\restrict \Sigma$ are those obtained by following a path of pointers in $s$:
\[
s_{j_1} \to s_{j_2} \to \dots \to s_{j_n}
\]
where $s_{j_1}, s_{j_n} \in \Sigma$ but for $2\leq k \leq n-1$ $s_{j_k} \not \in \Sigma$. If $s = s_0\dots s_n$ is
a justified sequence and $i\leq n$, write $s_{\leq i}$ for its subsequence $s_0\dots s_i$.
The \textbf{legal plays} over ${A}$ are the justified sequences $s$ on ${A}$ satisfying the
\textbf{alternation} condition, \emph{i.e.} that if $tmn \sqsubseteq s$, then $\lambda_{A}(m)\neq \lambda_A(n)$. The
set of legal plays on $A$ is denoted by $\leg{A}$.

Given a justified sequence $s$ on ${A}$, it has two subsequences of particular interest: the P-view and O-view.
The view for P (resp. O) may be understood as the subsequence of the play where P (resp. O) only sees his own duplications.
Practically, the P-view $\pview{s}$ of $s$ is computed recursively by forgetting everything
under Opponent's pointers, as follows:
\begin{itemize}
\item $\pview{sm} = \pview{s}m$ if $\lambda_{A}(m)=P$;
\item $\pview{sm} = m$ if $m\in I_A$ and $m$ has no justification pointer;
\item $\pview{s_1m s_2n} = \pview{s_1}mn$ if $\lambda_{A}(n)=O$ and $n$ points to $m$.
\end{itemize}
The O-view $\oview{s}$ of $s$ is defined dually, without the special treatment of initial moves.

In this subsection, we will present several classes of strategies on arena games that are of interest to us in the present paper.
A \textbf{strategy} $\sigma$ on ${A}$ is a set of even-length legal plays on ${A}$, closed under even-length prefix. A strategy from $A$ to $B$
is a strategy $\sigma: A\tto B$, where $A\tto B$ is the usual arrow arena defined by 

\begin{eqnarray*}
M_{A\tto B} &=& M_A + M_B\\
\lambda_{A\tto B} &=& [\overline{\lambda_A}, \lambda_B]\\
I_{A\tto B} &=& I_B \\
\vdash_{A\tto B} &=& \vdash_A + \vdash_B + I_B\times I_A
\end{eqnarray*}
where $\overline{\lambda_A}$ means $\lambda_A$ with polarity $O/P$ reversed.

\subsubsection{Composition.} We define composition of strategies by the usual parallel interaction plus hiding mechanism.
If ${A}$, ${B}$ and ${C}$ are arenas, we define the set of \textbf{interactions}
$I({A},{B},{C})$ as the set of justified sequences $u$ over ${A}$, ${B}$
and ${C}$ such that $u{\restrict{{A},{B}}}\in \mathcal{L}_{{A}\tto {B}}$,
$u{\restrict{{B},{C}}}\in \mathcal{L}_{{B}\tto {C}}$ and
$u{\restrict{{A},{C}}}\in \mathcal{L}_{{A}\tto{C}}$. Then, if $\sigma:{A}\tto {B}$
and $\tau:{B}\tto {C}$, their parallel interaction is
$
\sigma \inter \tau = \{u\in I({A},{B},{C}) \mid   u{\restrict A, B}\in \sigma \wedge u{\restrict B, C} \in \tau\}
$.
Their composition is $\sigma;\tau = \{u{\restrict A, C} \mid  u\in \sigma||\tau\}$, 
is associative and admits copycat strategies as identities.

\subsubsection{$P$-visible strategies.} A strategy $\sigma$ is \textbf{$P$-visible} if each of its moves points to the current $P$-view. Formally, for all $sab \in \sigma$,
$b$ points inside $\pview{sa}$. $P$-visible strategies are stable under composition, and 
correspond to functional programs with ground type references \cite{DBLP:journals/entcs/AbramskyM96}.

\subsubsection{Innocent strategies.}
The class of \emph{innocent} strategies is central in game semantics, because of their correspondence with purely functional programs (or
$\lambda$-terms) and of their useful definability properties. A strategy $\sigma$ is \textbf{innocent} if
\[
sab\in \sigma \wedge t\in \sigma \wedge ta\in \mathcal{L}_{A} \wedge \pview{sa}=\pview{ta} \implies tab\in \sigma
\]
Intuitively, an innocent strategy only takes its $P$-view into account to determine its next move. Indeed, any innocent strategy is characterized by
a set of $P$-views. This observation is very important since $P$-views can be seen as abstract representations of branches of $\eta$-expanded Böhm trees
(\emph{a.k.a.} Nakajima trees \cite{nakajima1975infinite}): this is the key to the definability process on innocent strategies. 
Arenas and innocent strategies form a cartesian closed category and
are therefore a model of $\Lambda$; let us add for completeness that $o$ is interpreted as the singleton arena and 
$\daimon_A$ is interpreted as the singleton strategy on $\intr{A}$ containing only the empty play.

\subsubsection{Bounded strategies.} A strategy $\sigma$ is \textbf{bounded} if it is $P$-visible and if the length of its $P$-views
is bounded: formally, there exists $N\in \mathbb{N}$ such that for all $s\in \sigma$, $|\pview{s}|\leq N$. Bounded strategies only
have finite interactions \cite{totality}; this result
corresponds loosely to the normalisation result on simply-typed $\lambda$-calculus. Syntactically, bounded strategies include the
interpretation of all terms of a higher-order programming language with ground type references, arbitrary non-determinism and
control operators, but without recursion. 
This remark is important since it implies that our results will hold for any program written with these constructs,
as long as they do not use recursion or a fixed point operator.

Our complexity results, in this game-theoretic part of this paper, will be expressed as a function of the \textbf{size}
of the involved bounded strategies, given by:
\[
|\sigma| = \frac{max_{s\in \sigma} |\pview{s}|}{2}
\]
When $\sigma$ is an innocent strategy coming from a Böhm tree, then $|\sigma|$ is proportional to the \emph{height} of this 
Böhm tree, as defined in the syntactic part of this paper.

\subsection{Visible pointer structures}

\subsubsection{Introduction of visible pointer structures}

We note here that the notion of $P$-view (and of its size) only takes into account the structure of pointers within the plays,
and completely ignores the actual labels of the moves. In fact the underlying \emph{pointer structure} of a play will be all we
need to study the asymptotic complexity of execution. 

\begin{defi}
The \textbf{pure arena} $\purea$ is defined by:
\begin{eqnarray*}
M_{\purea} &=& \mathbb{N}\\
\lambda_{\purea} &=& \left\{\begin{array}{l} 2n \mapsto O\\2n+1 \mapsto P\end{array}\right.\\
\enb_{\purea} &=& \{(n,n+1)\mid n\in \mathbb{N}\}\\
I_{\purea} &=& \{0\}
\end{eqnarray*}
A \textbf{pointer structure} is a legal play $s\in \leg{\purea}$ with at most one initial move. Note that for any arena
$A$, a legal play $s\in \leg{A}$ with only one initial move can always be mapped to its pointer structure $\bar{s}\in \leg{\purea}$
by sending each move $s_i$ to its \emph{depth}, \emph{i.e.} the number of pointers to be taken in $s$ before reaching the initial
move.

The \textbf{depth} of a pointer structure $s$ is the largest depth of $s_i\in \mathbb{N}$, for $0\leq i \leq |s|-1$.
\end{defi}

Pointer structures retain some information about the control flow of the execution, but on the other hand forget most of the typing
information. Our results will rely on the crucial assumption that the strategies involved act in a \emph{visible} way. This is
necessary, because non-visible strategies are able to express programs with general (higher-order) references within which a
fixpoint operator is definable, so termination is lost. 

\begin{defi}
A \textbf{visible pointer structure} $s$ is a pointer structure $s\in \leg{\purea}$ such that:
\begin{itemize}
\item It is $P$-visible: for any $s'a \sqsubseteq^P s$, $a$ points within $\pview{s'}$.
\item It is $O$-visible: for any $s'a \sqsubseteq^O s$, $a$ points within $\oview{s'}$.
\end{itemize}
We write $\vis_d$ for the set of all visible pointer structures of depth lower than $d$.
\end{defi}

We are interested in the maximal length of a visible pointer structure resulting from the interaction between two bounded strategies.
Since the collapse of plays to visible pointer structures forgets the identity of moves, all that remains from bounded strategies
in this framework is their \emph{size}. Therefore, for $n, p\in \mathbb{N}$, we define the set $n \star_d p$ of all visible
pointer structures possibly resulting from an interaction of strategies of respective sizes $n$ and $p$, in an ambient arena
of depth $d$. Formally:
\[
n\star_d p = \{ s\in \vis_d \mid \forall s'\sqsubseteq s, |\pview{s'}| \leq 2n \wedge |\oview{s'}|\leq 2p+1\}
\]

In \cite{totality}, we already examined the termination problem for visible pointer structures. We proved that any interaction
between bounded strategies is necessarily finite. Therefore since $n\star_d p$, regarded as a tree, is finitely branching, it
follows that it is finite. So there is an upper bound $N_d(n,p)$ to the length of any visible pointer structure in $n\star_d p$.

\subsubsection{The visible pointer structure of an interaction} We take here the time to detail more formally our claim
that an estimation of the length of visible pointer structures in $n\star_d p$ is informative of the complexity of interaction
between bounded strategies.

Let $\sigma : A \tto B$ and $\tau : B \tto C$ be bounded strategies, of respective size $n$ and $p$. We are interested in the
possible length of interactions in $\sigma \parallel \tau$. Of course arbitrary such interactions are not bounded in size, since
$\sigma$ and $\tau$ both interact with an external opponent in $A$ and $C$, whose behavior is not restricted by any size condition.
Therefore we restrict our interest to \textbf{passive} such interactions, \emph{i.e.} interactions $u\in \sigma \parallel \tau$
such that the unique Opponent move in $u\restrict A, C$ is a unique initial question in $C$. When $\sigma$ and $\tau$ are both
innocent and correspond to $\lambda$-terms, this corresponds by the results of \cite{dhr} to a linear head reduction sequence from
a game situation. In particular, the \emph{passivity} condition ensures that the interaction stops if a free variable ever arrives
in head position.

\begin{prop}
Let $\sigma : A \tto B$ be a bounded strategy of size $p$, let $\tau: B \tto C$ be a bounded strategy of size $n$, and suppose
that $B$ has depth $d-1$, for $d\geq 2$. Then for all passive $u\in \sigma \parallel \tau$:
\[
|u| \leq N_d(n, p) + 1
\]
\end{prop}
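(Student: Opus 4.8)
The plan is to extract from a passive interaction $u \in \sigma \parallel \tau$ a visible pointer structure living in $n \star_d p$, so that the length bound $N_d(n,p)$ applies directly (up to the additive constant accounting for the single initial $C$-move that has no analogue on the Opponent side of the pure arena). First I would consider the restriction $u \restrict A, B, C$ (that is, $u$ itself) and observe that, since the interaction is passive, $u \restrict A, C$ contains exactly one Opponent move, a unique initial question in $C$; in particular $u$ itself has a single initial move. This is precisely the condition allowing $u$ to be collapsed to its pointer structure $\bar u \in \leg{\purea}$, by sending each move to its depth, \emph{i.e.} the number of justification pointers to follow before reaching the initial move. Because $B$ has depth $d-1$ and $C$ has depth at least $1$ while the Opponent of $C$ only plays the initial question, every move of $u$ has depth at most $d$, so $\bar u \in \vis_d$ provided we check $P$- and $O$-visibility.

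Next I would establish that $\bar u$ is indeed a visible pointer structure and moreover lies in $n \star_d p$. For visibility: the collapse to the pure arena commutes with the computation of $P$-views and $O$-views (views only depend on the pointer structure and polarities, not on move identities), so it suffices to argue visibility of $u$ in the interaction. Here I would use that $\sigma$ and $\tau$ are $P$-visible and that, in the parallel composition, a move is a $P$-move of the interaction when it is a $P$-move of the component that played it; the standard "zipping" argument for interaction sequences — that $\pview{u}$ decomposes into the $P$-views of $u \restrict A, B$ and $u \restrict B, C$ — then yields both $P$-visibility and $O$-visibility of $u$ as a play over the composite. For the size constraints: if $s' \sqsubseteq \bar u$, pick the corresponding prefix of $u$; its $P$-view, when restricted appropriately, sits inside a $P$-view of $\tau$ (size $n$) or of $\sigma$ (size $p$) depending on the relevant component, and dually for the $O$-view. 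Unwinding the bounds $|\pview{\cdot}| \leq 2n$ and $|\pview{\cdot}| \leq 2p$ on the components and tracking the off-by-one coming from the single uncounted initial $C$-move gives exactly $|\pview{s'}| \leq 2n$ and $|\oview{s'}| \leq 2p+1$, so $\bar u \in n \star_d p$.

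Finally, since $\bar u \in n \star_d p$ and the collapse preserves length, we get $|u| = |\bar u| \leq N_d(n,p)$; the stated bound $|u| \leq N_d(n,p) + 1$ follows (the $+1$ absorbs any discrepancy in how the initial move is counted, and gives some slack for the fact that $u$ also records the single Opponent move in $C$ which is the root on the pure-arena side).

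I expect the main obstacle to be the bookkeeping in the second step: carefully matching $P$-views and $O$-views of the \emph{interaction} $u$ (a sequence over three arenas $A$, $B$, $C$) with $P$-views of the two two-party plays $u \restrict A, B$ and $u \restrict B, C$, and correctly transferring the size bounds through this correspondence while keeping precise track of the role of the unique initial $C$-question. This is the classical subtlety of interaction sequences (one must distinguish internal $B$-moves, which are "$P$-like" from one side and "$O$-like" from the other), and it is where the depth hypothesis on $B$ and the passivity hypothesis genuinely intervene. The collapse to $\purea$ and the invocation of finiteness of $n \star_d p$ from \cite{totality} are then routine.
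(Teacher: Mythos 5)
Your overall strategy --- collapse the interaction to a pointer structure, check that it lies in $n\star_d p$, and invoke the bound $N_d(n,p)$ --- is the right one and is the paper's. But the step you yourself flag as the ``main obstacle'' is precisely the step you have not carried out, and the route you propose for it (a three-way zipping argument relating the views of $u$ to those of $u\restrict A,B$ and $u\restrict B,C$) is both harder than necessary and, as stated, not coherent: an internal $B$-move is a $P$-move for one of $\sigma,\tau$ and an $O$-move for the other, so ``a move is a $P$-move of the interaction when it is a $P$-move of the component that played it'' does not define a polarity function on $u$, and without a single polarity assignment the conditions defining $\vis_d$ and $n\star_d p$ ($P$-visibility, $O$-visibility, the $2n$ and $2p+1$ bounds) cannot even be stated for $\bar{u}$.

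The missing idea is that passivity makes all of this collapse: since the external Opponent contributes only the initial question $\circ_C$, every subsequent move of $u$ lies in $B$, except possibly one final trailing move in $A$ or $C$, after which the interaction must stop. Hence $u$ is $\circ_C u'$ (possibly followed by one trailing move) with $\circ_C u' \in \leg{B\tto\{\circ_C\}}$, a genuine two-party alternating play on an arena of depth $d$, whose polarity is simply $\tau$'s. Its $P$-views are $P$-views of $\tau$, giving $|\pview{\cdot}|\leq 2n$, and its $O$-views are, up to the prepended initial move, $P$-views of $\sigma$, giving $|\oview{\cdot}|\leq 2p+1$; no zipping lemma over three arenas is needed. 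This also corrects your accounting of the $+1$: the initial $C$-move is the root (depth $0$) of the pointer structure and is already counted inside $N_d(n,p)$, whereas the $+1$ pays for the possible trailing move in $A$ or $C$, which does not belong to the play on $B\tto\{\circ_C\}$ and is therefore not covered by the bound on $|\circ_C u'|$.
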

\begin{proof} 
Since $u$ is passive, it consists in a play $\circ_C u'$ with $\circ_C$ initial in $C$ and $u'\in \leg{B}$, possibly
followed by a trailing move $\bullet$ in $A$ or $C$. Take any strict prefix $\circ_C u'' \sqsubseteq u$.
Then $u''$ is either in $\sigma$ or an immediate prefix of a play in $\sigma$, so $|\pview{u''}| \leq 2p$. Accordingly, 
$|\oview{\circ_C u''}| \leq 2p+1$. Likewise $\circ_C u''$ is either in $\tau$ or an immediate prefix of a play in $\tau$, 
so $|\pview{\circ_C u''}| \leq 2n$. 

Writing $\overline{u}$ for the \textbf{pointer structure} of a play $u$ (\emph{i.e.} the play of $\purea$ obtained by sending
any move to its depth). Then we have $\overline{\circ_C u'} \in n\star_d p$. Indeed the size of $P$-views and $O$-views in 
a play only depends on pointers, so it is unchanged by forgetting arena labels. Moreover $\circ_C u' \in \leg{B\tto \{\circ_C\}}$
(where $\{\circ_C\}$ is the singleton arena), which has depth $d$. 

From the above, we deduce that $|\circ_C u'|\leq N_d(n,p)$, so $|u|\leq N_d(n,p) + 1$.
\end{proof}

So the study of visible pointer structure is sufficient to study the length of interactions in terms of the
sizes of the strategies involved. 

\subsubsection{Interaction skeletons and simulation of visible pointer structures}

We now introduce \emph{interaction skeletons} (or just \emph{skeletons} for short), the main tool used in this paper in order to
study the complexity of execution.

As we mentioned repeatedly, game-theoretic interaction corresponds to linear head reduction --- which is itself efficiently
implemented by machines with environment such as the Krivine Abstract Machine (KAM). In such machines, game situations produce
by reduction situations where the terms interacting are no longer plain closed terms but rather terms-in-environments, also
known as \emph{closures}. Following this phenomenon, whereas the measure of the first move $s_0$ of a visible pointer 
structure is given by the sizes of the strategies involved (so by a pair of natural numbers), the measure of a later move $s_i$
will be given by a finite tree of natural numbers reminiscent of the structure of \emph{closures}.

We will call a \textbf{pointed visible pointer structure} a pair $(s, i)$ where $s$ is a visible pointer structure and
$i\leq |s|-1$ is an arbitrary ``starting" move. We adapt the notions of size and depth for them, and introduce a notion of
\emph{context}.

\begin{defi}
Let $(s,i)$ be a pointed visible pointer structure. The \textbf{residual size} of $s$ at $i$, written $\size(s, i)$, is defined as follows:
\begin{itemize}
\item If $s_i$ is an Opponent move, it is $\max_{s_i \in \pview{s_{\leq j}}} |\pview{s_{\leq j}}| - |\pview{s_{\leq i}}| + 1$
\item If $s_i$ is a Player move, it is $\max_{s_i \in \oview{s_{\leq j}}} |\oview{s_{\leq j}}| - |\oview{s_{\leq i}}| + 1$
\end{itemize}
where $s_i \in \pview{s_{\leq j}}$ means that the computation of $\pview{s_{\leq j}}$ reaches\footnote{So starting from $s_j$ and following Opponent's pointers
eventually reaches $s_i$.} $s_i$. Dually, we have the notion of
\textbf{residual co-size} of $s$ at $i$, written $\cosize(s, i)$, defined as follows:
\begin{itemize}
\item If $s_i$ is an Opponent move, it is $\max_{s_i \in \oview{s_{\leq j}}} |\oview{s_{\leq j}}| - |\oview{s_{\leq i}}| + 1$
\item Otherwise, $\max_{s_i \in \pview{s_{\leq j}}} |\pview{s_{\leq j}}| - |\pview{s_{\leq i}}| + 1$
\end{itemize}
The \emph{residual depth} of $s$ at $i$ is the maximal length of a pointer chain in $s$ starting from $s_i$.
\end{defi}

\begin{defi}
Let $s$ be a visible pointer structure. We define the \textbf{context} of $(s,i)$ as:
\begin{itemize}
\item If $s_i$ is an O-move, the set $\{s_{n_1}, \dots, s_{n_p}\}$ of O-moves appearing in $\pview{s_{< i}}$,
\item If $s_i$ is a P-move, the set $\{s_{n_1}, \dots, s_{n_p}\}$ of P-moves appearing in $\oview{s_{< i}}$.
\end{itemize}
In other words it is the set of moves to which $s_{i+1}$ can point whilst abiding to the visibility condition, except $s_i$. We also need the dual notion
of co-context, which contains the moves the other player can point to. The \textbf{co-context} of $(s, i)$ is:
\begin{itemize}
\item If $s_i$ is an O-move, the set $\{s_{n_1},\dots, s_{n_p}\}$ of P-moves appearing in $\oview{s_{< i}}$,
\item If $s_i$ is a P-move, the set $\{s_{n_1},\dots, s_{n_p}\}$ of O-moves appearing in $\pview{s_{< i}}$.
\end{itemize}
\end{defi}

\begin{defi}
A \textbf{skeleton} is a finite tree, whose nodes and edges are both labeled by natural numbers. If $a_1, \dots, a_p$ 
are skeletons and $d_1, \dots, d_p$ are natural numbers, we write:
\[
n[\{d_1\} a_1, \dots, \{d_p\} a_p] = 
\raisebox{20pt}{\xymatrix{
&n      \ar@{-}[dl]_{d_1}
        \ar@{-}[dr]^{d_p}\\
a_1&\dots&a_p
}}
\]
\end{defi}

We now define what it means for $(s, i)$ to respect a skeleton $a$.

\begin{defi}[Trace, co-trace, interaction]
The two notions $Tr$ and $coTr$ are defined by mutual recursion, as follows:
let $a = n[\{d_1\}a_1, \dots, \{d_p\}a_p]$ be a skeleton. We say that $(s,i)$ is a \textbf{trace} (resp. a \textbf{co-trace}) of $a$, denoted $(s,i)\in Tr(a)$ (resp.
$(s, i)\in coTr(a)$) if the following conditions are satisfied:
\begin{itemize}
\item $\size(s, i) \leq 2n$ (resp. $\cosize(s, i) \leq 2n+1$),
\item If $\{s_{n_1}, \dots, s_{n_p}\}$ is the context of $(s,i)$ (resp. co-context), then for each $k\in \{1, \dots, p\}$ we have $(s,n_k) \in coTr(a_k)$.
\item If $\{s_{n_1}, \dots, s_{n_p}\}$ is the context of $(s,i)$ (resp. co-context), then for each $k\in \{1, \dots, p\}$ the residual depth of $s$ at $n_k$ is less than $d_k$.
\end{itemize}
Then, we define an \textbf{interaction} of two skeletons $a$ and $b$ at depth $d$ as a pair $(s, i)\in Tr(a)\cap coTr(b)$ where the residual depth of $s$ at $i$ is less than $d$, which we write $(s, i)\in a\star_d b$.
\end{defi}

Notice that we use the same notation $\star$ both for natural numbers and skeletons. This should not generate any
confusion, since the definitions above coincide with the previous ones in the special case of ``atomic" skeletons:
if $n$ and $p$ are natural numbers, then obviously $s\in n \star_d p$ (according to the former definition) if and only if
$(s, 0)\in n[] \star_d p[]$ (according to the latter). In fact, we will sometimes in the sequel write $n$ for the atomic
skeleton with $n$ at the root and without children, \emph{i.e.} $n[]$.

\subsubsection{Simulation of visible pointer structures}

We introduce now our main tool, a reduction on skeletons which simulates visible pointer structures: if $n[\{d_1\}a_1, \dots, \{d_p\}a_p]$ and $b$ are skeletons ($n>0$),
we define the non-deterministic reduction relation $\leadsto$ on triples $(a, d, b)$, where $d$ is a depth (a natural number) and $a$ and $b$ are skeletons, by the following two cases:
\begin{eqnarray*}
(n[\{d_1\}a_1, \dots, \{d_p\}a_p],d, b) &\leadsto& (a_i,d_i-1,(n-1)[\{d_1\}a_1, \dots, \{d_p\}a_p, \{d\}b])\\
(n[\{d_1\}a_1, \dots, \{d_p\}a_p],d, b) &\leadsto& (b, d-1, (n-1)[\{d_1\}a_1, \dots, \{d_p\}a_p, \{d\}b])
\end{eqnarray*}
where $i\in \{1, \dots , p\}$, $d_i>0$ in the first case and $d>0$ in the second case. 

In order to prove our simulation result, we will make use of the following lemma.

\begin{lem}
Let $s$ be a pointed visible pointer structure and $a=n[\{d_1\}a_1, \dots, \{d_p\}a_p]$ a skeleton such that $(s, i)\in coTr(a)$. Then
if $s_j\rightarrow s_i$, $(s, j)\in Tr(a)$.
\label{lem_point}
\end{lem}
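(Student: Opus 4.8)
The plan is to deduce the statement from a single identity on views. Since $s_j$ points to $s_i$, the enabling relation of $\purea$ forces $\lambda_{\purea}(s_i)\neq\lambda_{\purea}(s_j)$; moreover the definitions of $\size$/$\cosize$, of context/co-context and of $Tr$/$coTr$ are all invariant under swapping the two polarities while simultaneously swapping P-views with O-views. So I would only treat the case where $s_i$ is an O-move and $s_j$ a P-move, the other case being obtained by dualising every view below. Here everything rests on
\[
\oview{s_{\leq j}} \;=\; \oview{s_{\leq i}}\,s_j .
\]
To prove it, write $s_{\leq j}=t\,s_i\,t'\,s_j$ with $t=s_{<i}$ and $t'$ the (possibly empty) segment lying strictly between $s_i$ and $s_j$: as $s_j$ is a P-move pointing to $s_i$, the O-view clause $\oview{u_1\,m\,u_2\,v}=\oview{u_1}\,m\,v$ (valid when $v$ is a P-move pointing to $m$) gives $\oview{s_{\leq j}}=\oview{t}\,s_i\,s_j$, while $\oview{s_{\leq i}}=\oview{t}\,s_i$ because $s_i$ is an O-move.

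From this I record two consequences. First, $|\oview{s_{\leq j}}| = |\oview{s_{\leq i}}|+1$. Second, the computation of $\oview{s_{\leq k}}$, if it reaches the occurrence $s_j$, next follows the pointer of $s_j$ and thus reaches $s_i$; hence $\{\,k : s_j\in\oview{s_{\leq k}}\,\}\subseteq\{\,k : s_i\in\oview{s_{\leq k}}\,\}$, and for $k$ in the left-hand set $\oview{s_{\leq i}}$ is an initial segment of $\oview{s_{\leq k}}$.

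It then remains to check the three clauses defining $(s,j)\in Tr(a)$. \emph{Residual size:} since $s_j$ is a P-move, $\size(s,j)$ is the maximum of $|\oview{s_{\leq k}}|-|\oview{s_{\leq j}}|+1$ over $k$ with $s_j\in\oview{s_{\leq k}}$; by the first consequence this equals $\big(|\oview{s_{\leq k}}|-|\oview{s_{\leq i}}|+1\big)-1$, and the parenthesised quantity is bounded by $\cosize(s,i)$ since, by the second consequence, it is one of the terms over which that maximum is taken; hence $\size(s,j)\leq\cosize(s,i)-1\leq 2n$. \emph{Context:} by the identity, the P-move occurrences of $\oview{s_{\leq j}}$ other than $s_j$ are exactly the P-move occurrences of $\oview{s_{\leq i}}$, and since the final move $s_i$ of $\oview{s_{\leq i}}$ is an O-move this set is exactly the co-context of $(s,i)$; hence the context of $(s,j)$ and the co-context of $(s,i)$ are the very same list of occurrences $s_{n_1},\dots,s_{n_p}$ (in particular the context of $(s,j)$ has exactly $p$ elements, as required). \emph{Children and residual depths:} the last two clauses for $(s,j)\in Tr(a)$ demand that $(s,n_k)\in coTr(a_k)$ and that the residual depth of $s$ at $n_k$ be below $d_k$; since the residual depth at an occurrence does not depend on a chosen starting point, and the $n_k$ are the same occurrences as for the co-context of $(s,i)$, these are verbatim the matching clauses of the hypothesis $(s,i)\in coTr(a)$. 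The remaining case ($s_i$ a P-move, $s_j$ an O-move) is handled identically, with $\pview{s_{\leq j}}=\pview{s_{\leq i}}\,s_j$ replacing the displayed identity.

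The proof is short and I do not anticipate a genuine obstacle; the delicate point is purely bookkeeping. One must keep the polarity and the chosen view aligned (O-views throughout when $s_i$ is an O-move, P-views throughout when it is a P-move), and use the reading of "context"/"co-context" as the candidate justifiers still visible to the opposing player --- the P-move (resp.\ O-move) occurrences of $\oview{s_{\leq i}}$ (resp.\ $\pview{s_{\leq i}}$), other than $s_i$ --- so that the one view identity transports the co-context of $(s,i)$ onto the context of $(s,j)$ and the $\cosize$ bound onto the $\size$ bound simultaneously. Once that alignment is pinned down, each of the three clauses is either a two-line length computation or a direct re-use of a hypothesis.
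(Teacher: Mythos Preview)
Your proof is correct and follows essentially the same approach as the paper's: reduce to the case where $s_i$ is an $O$-move, use that $s_j\to s_i$ gives $\oview{s_{\leq j}}=\oview{s_{\leq i}}\,s_j$, derive the size bound from $\cosize(s,i)\leq 2n+1$, and transport the co-context of $(s,i)$ onto the context of $(s,j)$. You are somewhat more explicit than the paper in two places---you spell out the inclusion $\{k:s_j\in\oview{s_{\leq k}}\}\subseteq\{k:s_i\in\oview{s_{\leq k}}\}$ needed for the size inequality, and you check the residual-depth clause separately---but the underlying argument is identical.
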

\begin{proof} 
Let us suppose without loss of generality that $s_i$ is an Opponent move; the other case can be obtained just by switching Player/Opponent and $P$-views/$O$-views
everywhere. Then $s_j$ being a Player move, we have to check first that $\size(s, j)\leq 2n$, \emph{i.e.}
\[
\max_{s_j \in \oview{s_{\leq k}}} |\oview{s_{\leq k}}| - |\oview{s_{\leq j}}| + 1 \leq 2n
\]
We use that $\cosize(s, i) \leq 2n+1$, \emph{i.e.}
\[
\max_{s_i \in \oview{s_{\leq k}}} |\oview{s_{\leq k}}| - |\oview{s_{\leq i}}| + 1 \leq 2n+1
\]
But $s_j \rightarrow s_i$, hence $|\oview{s_{\leq j}}| = |\oview{s_{\leq i}}| + 1$ and the inequality is obvious. We need now to examine
the context of $(s, j)$. Since $s_j$ is a Player move, it is defined as the set $\{s_{n_1}, \dots, s_{n_p}\}$ of Player moves appearing in
$\oview{s_{<j}}$, which is also the set of Player moves appearing in $\oview{s_{<i}}$ and therefore the co-context of $(s, i)$. But $(s, i)\in coTr(a)$,
hence for all $k\in \{1, \dots, p\}$ we have $(s, n_k)\in coTr(a_k)$ which is exactly what we needed.
\end{proof}

\begin{prop}[Simulation]
Let $(s,i)\in a\star_d b$, then if $s_{i+1}$ is defined, there exists $(a,d,b)\leadsto (a',d',b')$ such that $(s, i+1)\in a'\star_{d'} b'$.
\label{prop:simulvps}
\end{prop}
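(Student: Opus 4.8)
The plan is to do a case analysis on the move $s_{i+1}$: by the visibility hypothesis on $s$, the move $s_{i+1}$ is an $O$-move (so it points within $\pview{s_{\leq i}}$ if $s_i$ is a $P$-move, or dually) and hence — after the recollection that $(s,i)\in a\star_d b$ means $(s,i)\in Tr(a)\cap coTr(b)$ with $a = n[\{d_1\}a_1,\dots,\{d_p\}a_p]$ — there are two sub-cases. Either $s_{i+1}$ points to one of the moves $s_{n_k}$ in the context of $(s,i)$ (the moves $a_1,\dots,a_p$ keep track of), or $s_{i+1}$ points to $s_i$ itself. These correspond exactly to the two clauses of $\leadsto$: the first reduction $(a,d,b)\leadsto (a_k, d_k-1, (n-1)[\{d_1\}a_1,\dots,\{d_p\}a_p,\{d\}b])$, and the second $(a,d,b)\leadsto (b,d-1,(n-1)[\{d_1\}a_1,\dots,\{d_p\}a_p,\{d\}b])$. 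So first I would fix $a = n[\{d_1\}a_1,\dots,\{d_p\}a_p]$, observe that the assumption forces $n>0$ (since the residual size at $i$ is at least $2$, any move having size at least one $P$-view step), and split according to where $s_{i+1}$ points.

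In the first sub-case ($s_{i+1}\to s_{n_k}$ for some $k$): I claim $(s,i+1)\in Tr(a_k)$, with depth parameter $d_k-1$. We know $(s,n_k)\in coTr(a_k)$ from the definition of $Tr(a)$ (resp. $coTr(a)$), so Lemma~\ref{lem_point} applied with $i := n_k$, $j := i+1$ gives $(s,i+1)\in Tr(a_k)$. The residual-depth constraint: the residual depth of $s$ at $i+1$ is one less than that at $n_k$ (since $s_{i+1}\to s_{n_k}$ adds exactly one link to any chain from $s_{i+1}$), and the residual depth at $n_k$ is $<d_k$ by the third clause of the trace definition, hence the residual depth at $i+1$ is $<d_k-1$; this is exactly the depth $d_k - 1$ appearing in the reduced triple, so $d_k>0$ is guaranteed and the side condition of the first $\leadsto$-clause is met. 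Dually, one must check $(s,i+1)\in coTr((n-1)[\dots,\{d\}b])$: the co-context of $(s,i+1)$ is the old co-context of $(s,i)$ together with $s_i$ itself — this is where the new child $\{d\}b$ of $b$ enters — so I verify $(s,i)\in coTr(b)$ (given) handles the new child, the residual depth of $s$ at $i$ is $<d$ (given, since $(s,i)\in a\star_d b$), and the old children $a_j$ still record the old co-context of $(s,i)$, which after the point-move to $s_{n_k}$ has shrunk to a subset — so those clauses survive. The bound $\cosize(s,i+1)\leq 2(n-1)+1 = 2n-1$ needs the key computation that passing through $s_{n_k}$ decreases the relevant co-view length by at least $2$; this is the analogue of the size computation in the proof of Lemma~\ref{lem_point}.

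The second sub-case ($s_{i+1}\to s_i$) is entirely analogous but simpler, since there is no "jump": here $(s,i+1)\in Tr(b)$ follows from $(s,i)\in coTr(b)$ by Lemma~\ref{lem_point} directly, the residual depth drops by one so $d-1$ is the right parameter (forcing $d>0$, the side condition of the second clause), and the $coTr$-membership of $(s,i+1)$ in the enlarged skeleton $(n-1)[\dots,\{d\}b]$ is checked as above, with $s_i$ now recorded by the fresh child $\{d\}b$. I expect the main obstacle to be the size/co-size bookkeeping — carefully relating $\size(s,i+1)$ and $\cosize(s,i+1)$ to $\size(s,i)$, $\cosize(s,i)$ and the corresponding quantities at the pointed-to move, and checking that the "$+1$"s and "$\leq 2n$" versus "$\leq 2n-1$" all line up — since the view-length computations are delicate and depend on whether $s_i$ and $s_{i+1}$ are $O$- or $P$-moves; the structural/pointer part of the argument is routine once the right sub-case is chosen.
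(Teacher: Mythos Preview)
Your approach is the same as the paper's: split on whether $s_{i+1}$ points to $s_i$ or to a context element $s_{n_k}$, invoke Lemma~\ref{lem_point} for the $Tr$-membership, and verify directly that $(s,i+1)$ lies in $coTr$ of the enlarged skeleton $(n-1)[\{d_1\}a_1,\dots,\{d_p\}a_p,\{d\}b]$.

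There is, however, a terminological slip that muddles the $coTr$ verification. The set you must examine is the \emph{co-context} of $(s,i+1)$, and this equals the \emph{context} of $(s,i)$ together with $s_i$ --- not the co-context of $(s,i)$ as you write. (The co-context of $(s,i+1)$ depends only on the appropriate view of $s_{\leq i}$, hence not on where $s_{i+1}$ points; it is identical in both sub-cases and does not ``shrink''.) With this correction the match-up is exact: the children $a_1,\dots,a_p$ handle $s_{n_1},\dots,s_{n_p}$ via $(s,n_j)\in coTr(a_j)$, which comes from $(s,i)\in Tr(a)$, and the fresh child $\{d\}b$ handles $s_i$ via $(s,i)\in coTr(b)$. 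Likewise the bound $\cosize(s,i+1)\leq 2(n-1)+1$ follows uniformly in both sub-cases from $\size(s,i)\leq 2n$ and the fact that the relevant view grows by exactly one at step $i+1$; there is no separate ``passing through $s_{n_k}$'' computation. Once you straighten this out, the bookkeeping you anticipate as delicate is in fact routine.
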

\begin{proof}
Suppose $a = n[\{d_1\}a_1, \dots, \{d_p\}a_p]$.
Let $\{s_{n_1}, \dots, s_{n_p}\}$ be the context of $(s, i)$. By visibility, $s_{i+1}$ must either point to $s_i$ or to an element of the context. Two cases:
\begin{itemize}
\item If $s_{i+1} \rightarrow s_i$, then we claim that $(s, i+1) \in b\star_{d-1} (n-1)[\{d_1\}a_1, \dots, \{d_p\}a_p, \{d\}b]$, \emph{i.e} $(s, i+1) \in Tr(b)$,
$(s, i+1) \in coTr((n-1)[\{d_1\}a_1, \dots, \{d_p\}a_p, \{d\}b])$ and the depth of $s$ relative to $i+1$ is at most $d-1$. For the first part, we use
that $(s, i)\in a\star_d b$: in particular, $(s, i)\in coTr(b)$ and since $s_{i+1} \rightarrow s_i$ this implies by Lemma \ref{lem_point} that $(s, i+1)\in Tr(b)$.
For the second part, we must first check that $\cosize(s, i+1) \leq 2(n-1) + 1$. Let us suppose without loss of generality that $s_i$ is an Opponent move, all the
reasoning below can be adapted by switching Player/Opponent and $P$-views/$O$-views everywhere. We want to prove:
\[
\cosize(s,i+1)  = \max_{s_{i+1} \in \pview{s_{\leq j}}} |\pview{s_{\leq j}}| - |\pview{s_{\leq i+1}}| + 1 \leq 2(n-1) + 1
\]
But since $(s, i) \in Tr(a)$, we already know:
\[
\size(s, i) = \max_{s_i \in \pview{s_{\leq j}}} |\pview{s_{\leq j}}| - |\pview{s_{\leq i}}| + 1 \leq 2n
\]
Thus we only need to remark that $|\pview{s_{\leq i+1}}| = |\pview{s_{\leq i}}| + 1$ since $s_{i+1}$ is a Player move.
Now, we must examine the co-context of $(s, i+1)$, but by definition of $P$-view it is $\{s_{n_1}, \dots, s_{n_p}, s_i\}$ where $\{s_{n_1}, \dots, s_{n_p}\}$
is the context of $(s, i)$. Since $(s,i) \in Tr(n[a_1, \dots, a_p])$ we have as required $(s,n_k)\in coTr(a_k)$ for each $k\in \{1, \dots, p\}$
and $(s,i)\in coTr(b)$ because $(s, i)\in a\star_d b$. For the third part, we have to prove that the depth of $s$ relative to $i+1$ is at most $d-1$, but
it is obvious since the depth relative to $i$ is at most $d$ and $s_{i+1}\rightarrow s_i$.

\item Otherwise, we have $s_{i+1} \rightarrow s_{n_j}$ for $j\in \{1, \dots, p\}$. Then, we claim
that $(s, i+1) \in a_j \star_{d_i - 1} (n-1)[\{d_1\}a_1, \dots, \{d_p\}a_p, \{d\}b]$. We do have
$(s, i+1) \in Tr(a_j)$ because $(s, i)\in Tr(n[\{d_1\}a_1, \dots, \{d_p\}a_p])$, thus $(s, n_j)\in coTr(a_j)$ and $(s, i+1) \in Tr(a_j)$ by Lemma \ref{lem_point}.
It remains to show that $(s, i+1) \in coTr((n-1)[\{d_1\}a_1, \dots, \{d_p\}a_p, \{d\}b])$ and that the depth of $s$ relative to $i+1$ is at most $d_1 - 1$, but
the proofs are exactly the same as in the previous case.\qedhere
\end{itemize}
\end{proof}

\noindent Before going on to the study of skeletons, let us give a last simplification. If 
$a = n[\{d_1\}t_1, \dots, \{d_p\}t_q]$ and $b$ are skeletons, then $a\cdot_d b$ will denote the skeleton obtained by appending $b$ as a
new son of the root of $a$ with label $d$, \emph{i.e.} $n[\{d_1\}t_1, \dots, \{d_p\}t_q, \{d\}b]$. Consider the following
non-deterministic rewriting rule on skeletons:
\[
n[\{d_1\}a_1, \dots, \{d_p\}a_p] \leadsto a_i \cdot_{d_i - 1} (n-1)[\{d_1\}a_1, \dots, \{d_p\}a_p]
\]
Both rewriting rules on triples $(a, d, b)$ are actually instances of this reduction, by the isomorphism 
$(a, d, b) \mapsto a \cdot_d b$. We leave the obvious verification to the reader. Taking this into account, 
all that remains to study is this reduction on skeletons illustrated in Figures \ref{rewrite} and \ref{fig_examplereduction}, 
and analyzed in the next section. 

If $\norm(a)$ denotes the length of the longest reduction sequence starting from a skeleton $a$, we have the following property.

\begin{prop}
Let $n, p\geq 0$, $d\geq 2$, then $N_d(n,p) \leq \norm(n[\{d\}p])+1$.
\label{equiv}
\end{prop}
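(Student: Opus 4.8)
The plan is to connect the two quantities $N_d(n,p)$ and $\norm(n[\{d\}p])$ by going through the reduction on triples $(a,d,b)$ introduced just above, and the simulation result (Proposition~\ref{prop:simulvps}). Recall that $N_d(n,p)$ bounds the length of any visible pointer structure in $n\star_d p$, which by the remark following the definition of interaction equals the set of $s$ with $(s,0)\in n[]\star_d p[]$. The key observation is that a visible pointer structure $s$ of length $\ell$ in $n\star_d p$ gives rise, by iterating Proposition~\ref{prop:simulvps}, to a reduction sequence on triples of length $\ell - 1$ (one reduction step per new move $s_{i+1}$), and that under the isomorphism $(a,d,b)\mapsto a\cdot_d b$ this becomes a reduction sequence on skeletons of the same length, starting from $n[]\cdot_d p[] = n[\{d\}p]$.

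First I would make precise that $(s,0)\in n\star_d p$ (skeleton sense) means $(s,0)\in n[]\star_d p[]$, which by the identity-of-notations remark is the same as $s\in n\star_d p$ (natural-number sense). So let $s$ be a visible pointer structure in $n\star_d p$ with $|s| = N_d(n,p)$ realizing the maximum. Then I would apply Proposition~\ref{prop:simulvps} repeatedly: starting from $(s,0)\in n[]\star_d p[]$, I get $(n[],d,p[])\leadsto(a_1,d_1,b_1)$ with $(s,1)\in a_1\star_{d_1}b_1$; then again $(a_1,d_1,b_1)\leadsto(a_2,d_2,b_2)$ with $(s,2)\in a_2\star_{d_2}b_2$; and so on, as long as $s_{i+1}$ is defined, i.e. for $i = 0,1,\dots,|s|-2$. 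This yields a reduction sequence of length $|s|-1 = N_d(n,p)-1$ on triples. Transporting through the isomorphism $(a,d,b)\mapsto a\cdot_d b$ — under which, as noted in the excerpt, both $\leadsto$-rules on triples become instances of the single skeleton rewriting rule — I obtain a reduction sequence on skeletons of length $N_d(n,p)-1$ starting from $n[]\cdot_d p[] = n[\{d\}p]$. Hence $\norm(n[\{d\}p]) \geq N_d(n,p)-1$, which rearranges to the claimed inequality $N_d(n,p)\leq \norm(n[\{d\}p])+1$.

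I expect the main subtlety to be bookkeeping rather than any deep obstacle: one must check that the hypotheses of Proposition~\ref{prop:simulvps} stay satisfied along the iteration — but this is immediate, since its conclusion $(s,i+1)\in a'\star_{d'}b'$ is exactly the hypothesis $(s,(i+1))\in a\star_d b$ needed for the next step. One should also be slightly careful that the first move $s_0$ of a visible pointer structure in $n\star_d p$ genuinely has residual size $\leq 2n$ and residual co-size $\leq 2p+1$ and residual depth $< d$, so that indeed $(s,0)\in n[]\star_d p[]$; this follows by unwinding the definitions of $\size$, $\cosize$, residual depth and the set $n\star_d p$ (at $i=0$ the context and co-context are empty, so the only conditions are the size/co-size/depth ones). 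Finally, the edge case where $s$ is a single move (length $1$) gives an empty reduction sequence and the bound $N_d(n,p)\leq 1 \leq \norm(n[\{d\}p])+1$ trivially; and when $N_d(n,p)=0$ (no moves at all) the inequality is vacuous. No genuinely hard step arises: the content is entirely carried by the already-proved Simulation proposition.
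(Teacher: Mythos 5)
Your proof is correct and follows exactly the route the paper takes: the paper's own proof simply says the result is obvious from the Simulation proposition, adding $1$ for the initial move not accounted for by the skeleton reduction, and your write-up is a careful elaboration of that same iteration-plus-transport argument. No discrepancies.
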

\begin{proof} 
Obvious from Proposition \ref{prop:simulvps}, adding $1$ for the initial move which is not accounted for by the reduction on skeletons. 
\end{proof}

We postpone the analysis of the reduction of skeletons to the next section. However with the results proved there, we get
the following result:

\begin{thm}
For fixed $d\geq 3$, we have
$N_d(n,p) = 2_{d-2}^{\Theta(n\log(p))}$.
\label{th:boundvps}
\end{thm}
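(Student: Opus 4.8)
The plan is to reduce the claim entirely to the combinatorics of the skeleton rewriting system $\leadsto$, using the machinery already built. By Proposition \ref{equiv} we have $N_d(n,p) \le \norm(n[\{d\}p]) + 1$, so the upper bound $N_d(n,p) = 2_{d-2}^{O(n\log p)}$ will follow once we show $\norm(n[\{d\}p]) = 2_{d-2}^{O(n\log p)}$. This is exactly the content of the analysis of the reduction on skeletons that is announced for Section \ref{sec:bs}; concretely, I expect that section to establish a bound of the shape $\norm(a) \le$ (tower of height roughly $\depth(a)-1$ or $\depth(a)-2$) in the root-size of $a$ and the number of nodes of $a$, specialised here to the two-node skeleton $n[\{d\}p]$ whose relevant parameters are $n$, $p$ and depth data bounded by $d$. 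The $\log p$ (rather than $p$) appears because in $n[\{d\}p]$ the parameter $p$ sits at the leaf, one level below the root, so it only enters the innermost exponent of the tower additively inside a logarithm — a feature I would read off from the recurrence governing $\norm$, which multiplies branch-counts across levels while summing sizes within a level. So the first and main step is: invoke the Section \ref{sec:bs} bound on $\norm$ for skeletons of bounded depth to get $N_d(n,p) \le 2_{d-2}^{O(n\log p)}$.

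The second step is the matching lower bound $N_d(n,p) = 2_{d-2}^{\Omega(n\log p)}$. Here I cannot go purely through Proposition \ref{equiv}, since that inequality goes the wrong way; instead I would exhibit, for each $d \ge 3$ and each $n,p$, an explicit visible pointer structure in $n\star_d p$ of length $2_{d-2}^{\Omega(n\log p)}$. The cleanest route is to transfer a lower bound already obtained for the skeleton reduction: Section \ref{sec:bs} should exhibit skeletons $a$ (of depth $\le d$, root-size $O(n)$, with a leaf carrying $p$) admitting a reduction sequence of length $2_{d-2}^{\Omega(n\log p)}$, and the Simulation Proposition \ref{prop:simulvps} should admit a converse — that a reduction sequence of $\leadsto$ can be realised by an actual visible pointer structure. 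If the excerpt's Section \ref{sec:bs} provides such a realisation (a "definability" direction for skeletons), I would quote it directly; otherwise I would build the witness play by hand, nesting $d-2$ levels of the standard exponential-blowup interaction, each inner level duplicated by the level above, with the leaf of the nesting instantiated so that the bottom level runs for $\Theta(p^{\Omega(n)})$ steps, giving the $n\log p$ in the innermost exponent.

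Combining the two steps yields $c_1\, 2_{d-2}^{n\log p} \le N_d(n,p) \le c_2\, 2_{d-2}^{n\log p}$ for suitable constants and large enough $n,p$, which is precisely the statement $N_d(n,p) = 2_{d-2}^{\Theta(n\log p)}$ in the notation of Section \ref{sec:prelim} (with $h(x) = 2_{d-2}^x$ and $g(n,p) = n\log p$). The hard part will be the lower bound: getting the tower height to be exactly $d-2$ and not $d-3$ or $d-1$ requires being careful about how the depth parameter is consumed by the two $\leadsto$-rules (each step decrements a depth label) and how the ambient-arena depth $d$ relates to the depths appearing inside the skeleton $n[\{d\}p]$; and getting $\log p$ rather than $p$ in the exponent, in both directions, hinges on the precise placement of $p$ at depth one in the skeleton and a matching careful construction of the witness play. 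The upper bound, by contrast, should be essentially a one-line corollary of Proposition \ref{equiv} together with the standalone skeleton analysis.
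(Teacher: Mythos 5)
Your upper bound is exactly the paper's argument: Proposition \ref{equiv} reduces it to $\norm(n[\{d\}p])$, and the specialised skeleton bound (Theorem \ref{th:gamesituations}, $\norm(n[\{d\}p]) \leq 2_{d-3}^{2\frac{p^{n+1}-1}{p-1}-1} = 2_{d-2}^{O(n\log p)}$) finishes it. Nothing to add there.

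For the lower bound, your preferred route will not work as stated, and you should be aware of why. There is no converse to Proposition \ref{prop:simulvps}: the simulation only says that every step of a visible pointer structure is matched by a $\leadsto$-step, not that every $\leadsto$-reduction sequence is realised by a play. Indeed the paper's logical order is the opposite of what you propose: the lower bound on $\norm(n[\{d\}p])$ mentioned in the closing remark of Section \ref{sec:bs} is itself \emph{derived from} the pointer-structure lower bound of Theorem \ref{th:boundvps}, so trying to import a skeleton lower bound into the games world would be circular. Your fallback --- build the witness play directly --- is the right move and is what the paper does, but the device that makes it tractable is worth naming: rather than constructing a pointer structure move by move, one writes down an explicit $\lambda$-term, namely $(\lambda \lpair{x, y, z_1, \dots, z_d}.\ x^n(y)\ z_1\ \dots\ z_d)$ applied to $\lpair{\church{p}_{d+1}, \church{2}_d, \dots, \church{2}_0}$, which is $\beta$-equivalent to $\church{2_{d+1}^{p^n}}_0$; since a passive interaction computing a Church numeral $\church{N}_0$ must have length at least $N$, the underlying pointer structure has length at least $2_{d+1}^{p^n} = 2_{d+2}^{n\log p}$, and one then only has to bookkeep the sizes of the two strategies ($n+d+3$ and $p+d+2$) and the arena depth ($d+3$) to land in $(n+d+3)\star_{d+4}(p+d+2)$. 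Two details your sketch glosses over and that do require care: the reindexing only covers $d\geq 4$, so $d=3$ needs a separate witness (the paper uses $\intr{\church{p}_0}$ against $\intr{\lambda x.\ x^n(\id_o)}$, giving length $p^n$); and the $\log p$ in the innermost exponent comes precisely from placing $\church{p}$ at the top of the iteration tower (so the innermost level contributes $p^n = 2^{n\log p}$) rather than from any structural feature of the skeleton $n[\{d\}p]$. With those points filled in, your plan matches the paper's proof.
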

\begin{proof}
\emph{Upper bound.} Follows from Proposition \ref{equiv} and Theorem \ref{th:gamesituations}.

\emph{Lower bound.} To construct the example providing the lower bound, it is convenient to consider the extension $\Lambda_\times$
of $\Lambda$ with finite product types $\Pi_{1 \leq i \leq n} A_i$. Tupling of $\Gamma \vdash M_i : A_i$ for $1\leq i \leq n$ is
written  $\lpair{M_1, \dots, M_n}$. For simplicity we use $\lambda \lpair{x_1, \dots, x_n}^{\Pi_{1\leq i \leq n} A_i}. M$ as
syntactic sugar for $\lambda x^{\Pi_{1\leq i \leq n} A_i}.~M[\pi_i x/x_i]$, where $\pi_i : \Pi_{1\leq i \leq n} A_i \to A_i$ is the
corresponding projection. Products are interpreted in the games model following standard lines \cite{hogames}.
In $\Lambda_\times$ (and $\Lambda$), we define higher types for Church integers by setting $A_{-2} = o$ and $A_{n+1} = A_n \to A_n$.
For $n, p\in \mathbb{N}$, we write $\church{n}_p$ for the Church integer for $n$ of type $A_p$. For $A$ a type, if $M:A\to A$ and $N:A$,
we write $M^n(N)$ for the $n$-th iteration of $M$ applied to $N$. Then for $n, p, d\in \mathbb{N}$, we define:
\[
(\lambda \lpair{x, y, z_1, \dots, z_d}.~x^n(y)~z_1~\dots~z_d)~\lpair{\church{p}_{d+1}, \church{2}_d, \church{2}_{d-1}, \dots, \church{2}_0}
\]
By elementary calculations on $\lambda$-terms, we know that this is $\beta$-equivalent to $\church{2_{d+1}^{p^n}}_0$. So, taking a maximal 
(necessarily passive) interaction:
\[
u \in \intr{\lpair{\church{p}_{d+1}, \church{2}_d, \church{2}_{d-1}, \dots, \church{2}_0, \id_o}}\parallel \intr{\lambda \lpair{x, y, z_1, \dots, z_d, i}.~x^n(y)~z_1~\dots~z_d~i}
\]
we know that $u$ must have length at least $2_{d+1}^{p^n}$. Inspecting these strategies, we see that the left hand side one
has size $p+d+2$ and the right hand side one has size $n+d+3$, and that they interact in an arena $\intr{\Pi_{i=0}^{d+2} A_{d+1-i}}$
of depth $d+3$. It follows that the underlying pointer structure of $u$ (of length at least $2_{d+1}^{p^n}$) is in $(n+d+3)\star_{d+4} (p+d+2)$.
Therefore,
\[
N_{d+4}(n+d+3, p+d+2) \geq 2_{d+1}^{p^n}
\]
So for $d\geq 4, n\geq d+3, p\geq d+2$ we have $N_d(n,p) \geq 2_{d-2}^{(n-d-3)\log(p-d-2)}$, providing the lower bound. Although
this example only proves the other bound for $d\geq 4$ it also holds for $d = 3$: this is proved using the same reasoning on the 
maximal interaction between strategies $\intr{\church{p}_0}$ and $\intr{\lambda x.~x^n(\id_o)}$, having length at least $p^n$.
\end{proof}

The strength of this result is that being a theorem about interactions of strategies in game semantics, its scope includes any
programming language whose terms can be interpreted as bounded strategies. Its weakness however, is that it only applies
syntactically to game situations. 
In order to increase the generality of our study and give exact bounds to the linear head reduction of arbitrary simply-typed
$\lambda$-terms, we will in Section \ref{sec:lhr} detail a direct connection between linear head reduction and reduction of 
skeletons. Before that, as announced we focus Section \ref{sec:bs} on the analysis of skeletons.

\section{Skeletons and their complexity analysis}
\label{sec:bs}

In the previous section we proved a simulation result of \emph{plays} in the sense of Hyland-Ong games into a reduction on
simple combinatorial objects, \emph{interaction skeletons}. In the present section we investigate the properties of
this reduction independently of game semantics or syntax, proving among other things Theorem \ref{th:gamesituations}
used in the previous section.

As announced in the introduction, the rest of this paper -- starting from here -- is essentially self-contained. We start this section
by recalling the definition of interaction skeletons and investigating their basic properties. Then, we will prove our main
result about the length of their reduction.

\subsection{Skeletons and their basic properties}

\subsubsection{Skeletons and their dynamics}

\textbf{Interaction skeletons}, or \emph{skeletons} for short, are finite trees of natural numbers, whose nodes and edges are labeled
by natural numbers. To denote these finite trees, we use the notation illustrated below:
\[
n[\{d_1\} a_1, \dots, \{d_p\} a_p] = 
\raisebox{20pt}{\xymatrix{
&n      \ar@{-}[dl]_{d_1}
        \ar@{-}[dr]^{d_p}\\
a_1&\dots&a_p
}}
\]
Each natural number $n$ can be seen as an atomic skeleton $n[]$ without subtrees, still denoted by $n$. That should
never cause any confusion.
Given a skeleton $a$, we define:
\begin{itemize}
\item Its \textbf{order} $\ord(a)$, the maximal edge label in $a$,
\item Its \textbf{maximum} $\max(a)$, the maximal node label in $a$,
\item Its \textbf{depth} $\depth(a)$, the maximal depth of a node in $a$, the root having depth $1$.
\end{itemize}

We will also use the notation $a\cdot_d b$ for the skeleton $a$ with a new child $b$ added to the root of $a$, with
edge label $d$. Formally if $a = n[\{d_1\}a_1, \dots, \{d_p\}a_p]$:
\[
a \cdot_d b = n[\{d_1\} a_1, \dots, \{d_p\} a_p, \{d\} b]
\]

With that in place, we define the reduction on skeletons by:
\[
n[\{d_1\}a_1, \dots, \{d_p\}a_p] \leadsto a_i \cdot_{d_i - 1} (n-1)[\{d_1\}a_1, \dots, \{d_p\}a_p]
\]
which is allowed whenever $n, d_i \geq 1$ so that no index ever becomes negative. The reduction is illustrated in
Figure \ref{rewrite}, and performed on an example in Figure \ref{fig_examplereduction} (where the node selected for
the next reduction is highlighted). It is important to insist that this reduction is only ever performed in head position: $n$
needs to be the actual root of the tree for the reduction to be allowed. We do not know which properties of
this reduction are preserved in the generalized system where reduction can occur everywhere. 

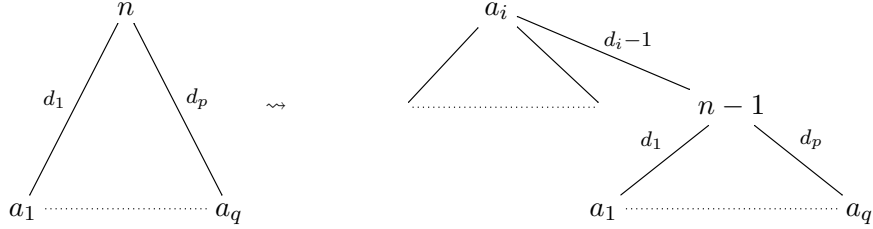
\begin{figure}
\[
\xymatrix{
&n\ar@{-}[ddr]^{d_p}
  \ar@{-}[ddl]_{d_1}&&&&a_i\ar@{-}[dl]
                    \ar@{-}[dr]
                \ar@{-}[drr]^{d_i-1}\\
                &&\ar@{}[r]|{\leadsto}&&\ar@{.}[rr]&&&n-1
                                        \ar@{-}[dr]^{d_p}
                                        \ar@{-}[dl]_{d_1}\\
a_1\ar@{.}[rr]&&a_q&&&&a_1\ar@{.}[rr]&&a_q
}
\]
\caption{Rewriting rule on skeletons}
\label{rewrite}
\end{figure}

\begin{figure}
\begin{center}
\includegraphics[scale=.25]{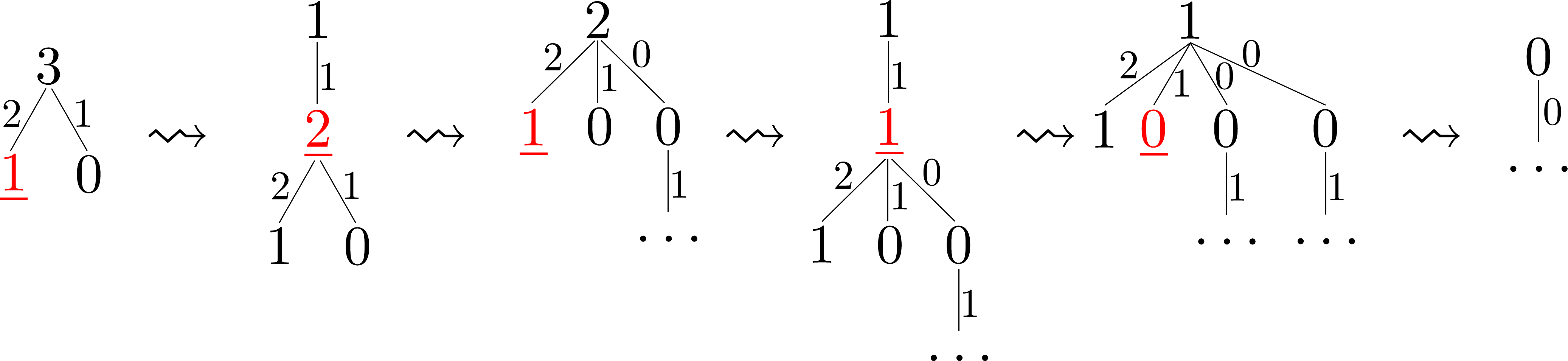}
\end{center}
\caption{Example reduction sequence on interaction skeletons}
\label{fig_examplereduction}
\end{figure}

Let us write $\norm(a)$ for the \textbf{norm} of a skeleton $a$, \emph{i.e.} the length of its longest
reduction sequence. We will show later that it is always finite; in the meantime for definiteness we define
it as a member of $\mathbb{N} \cup \{+\infty\}$.

\subsubsection{Embedding lemma}

The norm of a skeleton is unchanged by permutation of subtrees, or merging of identical subtrees, and is only
increased by an increase of labels.
If $a = n[\{d_1\}a_1, \dots, \{d_p\}a_p]$ and $a' = n'[\{d'_1\}a'_1, \dots, \{d'_{p'}\}a'_{p'}]$, we say that $a$
\textbf{embeds} in $a'$, written $a\embeds a'$, if $n \leq n'$ and for any $i\in \{1, \dots, p\}$ there exists $j\in \{1, \dots, p'\}$ such that
$d_i\leq d'_j$ and $a_i \embeds a'_j$. Then we have:

\begin{lem}[Embedding lemma]
For any skeletons $a$ and $a'$ with $a \embeds a'$, $\norm(a) \leq \norm(a')$.
\label{lem:embedsbs}
\end{lem}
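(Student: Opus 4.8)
The statement is a monotonicity property: if $a \embeds a'$ then $\norm(a) \leq \norm(a')$. The natural approach is a simulation argument: show that every reduction step $a \leadsto b$ from $a$ can be matched by a reduction step $a' \leadsto b'$ from $a'$ such that $b \embeds b'$. Once such a simulation lemma is established, the result follows by a straightforward induction on the length of a reduction sequence out of $a$ (and an appeal to the fact, to be proved later, that norms are finite — but since the lemma only claims an inequality in $\mathbb{N}\cup\{+\infty\}$, even the infinite case is covered: any reduction sequence from $a$ of length $\ell$ lifts to one from $a'$ of length $\ell$, so $\norm(a') \geq \norm(a)$).

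**Key steps.** First I would carefully restate the embedding relation and note its basic closure properties: it is reflexive and transitive, it is preserved under adding a child (if $a \embeds a'$ and $d \leq d'$ and $b \embeds b'$ then $a \cdot_d b \embeds a' \cdot_{d'} b'$), and it is preserved under decrementing matching labels. Then I would prove the core simulation claim. Suppose $a = n[\{d_1\}a_1,\dots,\{d_p\}a_p]$ with $n \geq 1$ and some $d_i \geq 1$, so that $a \leadsto a_i \cdot_{d_i-1} (n-1)[\{d_1\}a_1,\dots,\{d_p\}a_p]$. Since $a \embeds a'$, write $a' = n'[\{d'_1\}a'_1,\dots,\{d'_{p'}\}a'_{p'}]$ with $n \leq n'$, and pick $j$ with $d_i \leq d'_j$ and $a_i \embeds a'_j$. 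Because $n' \geq n \geq 1$ and $d'_j \geq d_i \geq 1$, the reduction $a' \leadsto a'_j \cdot_{d'_j - 1} (n'-1)[\{d'_1\}a'_1,\dots,\{d'_{p'}\}a'_{p'}]$ is legal. I then need to check that $a_i \cdot_{d_i-1} (n-1)[\{d_1\}a_1,\dots,\{d_p\}a_p] \;\embeds\; a'_j \cdot_{d'_j-1} (n'-1)[\{d'_1\}a'_1,\dots,\{d'_{p'}\}a'_{p'}]$. This splits into: (i) $a_i \embeds a'_j$, which we have; (ii) $d_i - 1 \leq d'_j - 1$, immediate; (iii) $(n-1)[\dots] \embeds (n'-1)[\dots]$, which holds because $n-1 \leq n'-1$ and the child structure is unchanged — the embedding of the children witnesses $a \embeds a'$ restricted to the subtrees works verbatim. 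Then the closure property for $\cdot_d$ assembles these into the required embedding.

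**Main obstacle.** The argument is essentially bookkeeping, and the one subtle point is condition (iii): one must make sure that the embedding data relating the children of $a$ to the children of $a'$, used to witness $a \embeds a'$, is exactly reusable to witness $(n-1)[\{d_1\}a_1,\dots,\{d_p\}a_p] \embeds (n'-1)[\{d'_1\}a'_1,\dots,\{d'_{p'}\}a'_{p'}]$. This is true since the definition of $\embeds$ depends on the children only through the same $d_i \leq d'_{j(i)}$ and $a_i \embeds a'_{j(i)}$ conditions, which are untouched by decrementing the root. A minor wrinkle is that the assignment $i \mapsto j(i)$ need not be injective (several children of $a$ may embed into the same child of $a'$); this causes no problem, as $\embeds$ never demanded injectivity. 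There is also the degenerate case $n = 0$ or all $d_i = 0$: then $a$ has no reduction step, so the simulation claim is vacuous for that side and nothing is needed.

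**Writing it up.**
\begin{proof}
We first record that $\embeds$ is reflexive and transitive, and that it is a congruence for adding children: if $a \embeds a'$, $d \leq d'$, and $b \embeds b'$, then $a \cdot_d b \embeds a' \cdot_d{}' b'$ — here writing $a \cdot_d b$ and $a' \cdot_{d'} b'$ — directly from the definition, using the witnesses for $a \embeds a'$ together with the new pair $(d, d')$, $(b, b')$. We also note that $n[\{d_1\}a_1,\dots,\{d_p\}a_p] \embeds n'[\{d'_1\}a'_1,\dots,\{d'_{p'}\}a'_{p'}]$ with $n \leq n'$ implies $(n-1)[\{d_1\}a_1,\dots,\{d_p\}a_p] \embeds (n'-1)[\{d'_1\}a'_1,\dots,\{d'_{p'}\}a'_{p'}]$, since the embedding conditions on the subtrees are unchanged and $n - 1 \leq n' - 1$.

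The main point is the following simulation property: if $a \embeds a'$ and $a \leadsto b$, then there is $b'$ with $a' \leadsto b'$ and $b \embeds b'$. Write $a = n[\{d_1\}a_1,\dots,\{d_p\}a_p]$; the reduction step picks some $i$ with $n \geq 1$ and $d_i \geq 1$, producing $b = a_i \cdot_{d_i - 1} (n-1)[\{d_1\}a_1,\dots,\{d_p\}a_p]$. Write $a' = n'[\{d'_1\}a'_1,\dots,\{d'_{p'}\}a'_{p'}]$; since $a \embeds a'$ we have $n \leq n'$ and there is $j$ with $d_i \leq d'_j$ and $a_i \embeds a'_j$. As $n' \geq n \geq 1$ and $d'_j \geq d_i \geq 1$, the step $a' \leadsto b'$ is legal for $b' = a'_j \cdot_{d'_j - 1} (n'-1)[\{d'_1\}a'_1,\dots,\{d'_{p'}\}a'_{p'}]$. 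Now $a_i \embeds a'_j$, $d_i - 1 \leq d'_j - 1$, and $(n-1)[\{d_1\}a_1,\dots,\{d_p\}a_p] \embeds (n'-1)[\{d'_1\}a'_1,\dots,\{d'_{p'}\}a'_{p'}]$ by the observations above; by the congruence property, $b \embeds b'$.

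Finally, given any reduction sequence $a = c_0 \leadsto c_1 \leadsto \dots \leadsto c_\ell$ of length $\ell$, repeated application of the simulation property yields a reduction sequence $a' = c'_0 \leadsto c'_1 \leadsto \dots \leadsto c'_\ell$ of the same length with $c_k \embeds c'_k$ for all $k$. Hence $\norm(a') \geq \ell$ for every such $\ell$, and therefore $\norm(a') \geq \norm(a)$ in $\mathbb{N} \cup \{+\infty\}$.
\end{proof}
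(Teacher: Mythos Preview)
Your proof is correct and follows essentially the same approach as the paper: establish that $\embeds$ is a simulation for $\leadsto$ (if $a \embeds a'$ and $a \leadsto b$ then $a' \leadsto b'$ for some $b'$ with $b \embeds b'$), then conclude by lifting any reduction sequence of $a$ to one of $a'$ of the same length. The paper's version is terser but the argument is identical.
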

\begin{proof} 
Suppose $a \embeds a'$, and $a \leadsto b$. Write $a = n[\{d_1\} a_1, \dots, \{d_p\}a_p]$, and suppose
$b = a_i \cdot_{d_i-1} (n-1)[\{d_1\}a_1, \dots, \{d_p\}a_p]$. Likewise, write $a' = n'[\{d'_1\}a'_1, \dots, \{d'_{p'}\}a'_{p'}]$.
Since $a \embeds a'$ we have $n'\geq n \geq 1$ and there is $j\in \{1, \dots, p'\}$ such that $d'_j\geq d_i \geq 1$ and $a_i \embeds a'_j$. Set
$b' = a'_j \cdot (n'-1)[\{d'_1\}a'_1, \dots, \{d'_{p'}\}a'_{p'}]$, it is straightforward to check that $b \embeds b'$. Therefore,
$\embeds$ is a simulation, which implies that if $a \embeds a'$ we have $\norm(a) \leq \norm(a')$.
\end{proof}

From this it follows that permuting subtrees in a skeleton does not affect the possible reductions in any way.
Perhaps more surprisingly, it shows that two identical subtrees can be merged without any effect on the possible reductions: the
number of copies of identical subtrees does not matter. Following this idea, we are going to show that any skeleton
embeds into a simple thread-like one, and this only increases the length of possible reductions.

\begin{defi}
Let $d, o, m\geq 1$ be natural numbers. The \emph{thread-like} skeleton $\tbs(d, o, m)$ is:
\begin{eqnarray*}
\tbs(1, o, m) &=& m\\
\tbs(d+1, o, m) &=& m[\{o\}\tbs(d, o, m)]
\end{eqnarray*}
\end{defi}

From the definition, we have that $\depth(\tbs(d, o, m)) = d$, $\max(\tbs(d, o, m)) = m$ and $\ord(\tbs(d, o, m)) = o$. We also have:

\begin{prop}
If $a$ has depth $d$, order $o$ and maximum $m$, then $\norm(a) \leq \norm(\tbs(d, o, m))$.
\label{prop_merge}
\end{prop}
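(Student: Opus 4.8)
The plan is to combine the Embedding Lemma (Lemma~\ref{lem:embedsbs}) with a simple observation: any skeleton $a$ of depth $d$, order $o$ and maximum $m$ embeds into the thread-like skeleton $\tbs(d, o, m)$. Once that embedding is established, Lemma~\ref{lem:embedsbs} immediately gives $\norm(a) \leq \norm(\tbs(d, o, m))$, which is exactly the statement.

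First I would prove the embedding $a \embeds \tbs(d, o, m)$ by induction on the depth $d$ of $a$. Write $a = n[\{d_1\}a_1, \dots, \{d_p\}a_p]$. Since $\max(a) = m$ we have $n \leq m$, which matches the root label of $\tbs(d, o, m)$. For the base case $d = 1$, $a$ has no children, so there is nothing more to check and $a = n \embeds m = \tbs(1, o, m)$. For the inductive step, each child $a_k$ has depth at most $d - 1$, order at most $o$ and maximum at most $m$; by the induction hypothesis $a_k \embeds \tbs(d-1, o, m)$. Moreover the edge label $d_k$ satisfies $d_k \leq \ord(a) = o$, which is the label of the unique edge of $\tbs(d, o, m) = m[\{o\}\tbs(d-1, o, m)]$. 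So for every child $a_k$ of $a$ we can take the single child $\tbs(d-1, o, m)$ of $\tbs(d, o, m)$ as witness, with $d_k \leq o$ and $a_k \embeds \tbs(d-1, o, m)$. This is precisely the definition of $\embeds$, so $a \embeds \tbs(d, o, m)$.

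The remaining ingredient is purely a matter of bookkeeping: one has to be slightly careful that the depth, order and maximum of each subtree $a_k$ are indeed bounded by $d-1$, $o$ and $m$ respectively, which is immediate from the definitions ($\depth$ and $\max$ are monotone under taking subtrees, and removing the root edges can only decrease the maximal edge label). I do not anticipate a genuine obstacle here; the only subtlety is handling the degenerate cases — e.g. if $a$ has depth strictly less than $d$, one still gets $a \embeds \tbs(d, o, m)$ since $\tbs$ is monotone in its depth argument, or alternatively one simply notes the statement is stated with $\depth(a) = d$ exactly. Concluding, combining $a \embeds \tbs(d, o, m)$ with Lemma~\ref{lem:embedsbs} yields $\norm(a) \leq \norm(\tbs(d,o,m))$, as required.
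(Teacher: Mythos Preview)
Your proposal is correct and follows exactly the paper's approach: the paper's proof simply states that $a \embeds \tbs(d, o, m)$ is obvious and then invokes Lemma~\ref{lem:embedsbs}. You have merely spelled out the induction establishing that embedding, which is a faithful expansion of what the paper leaves implicit.
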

\begin{proof} 
It is obvious that $a \embeds \tbs(d, o, m)$, therefore the result follows from Lemma \ref{lem:embedsbs}.
\end{proof}

\subsubsection{Constructions on skeletons}

If $(a_i)_{1 \leq i \leq n}$ is a finite family of skeletons, then writing
$
a_i = n_i[\{d_{i, 1}\}b_{i, 1}, \dots, \{d_{i, p_i}\}b_{i, p_i}]
$,
we define:
\begin{eqnarray*}
\bigsqcup_{i=1}^n a_i &=& (\max_{1\leq i \leq n} n_i)\cdot [\{d_{i, j}\} b_{i, j} \mid 1\leq i \leq n~\&~1\leq j\leq p_i]\\
\sum_{i=1}^n a_i &=& (\sum_{i=1}^n n_i)\cdot [\{d_{i,j}\} b_{i, j} \mid 1\leq i \leq n~\&~1\leq j\leq p_i]
\end{eqnarray*}
so, they either take the maximum or the sum of the roots, and simply append all the subtrees of the $a_i$s.
In the binary case, we write as usual $+$ for the sum.

\begin{lem}
We have the following embeddings:
\begin{itemize}
\item If $(a_i)_{1 \leq i \leq n}$, $(b_i)_{1 \leq i \leq n}$ are finite families of skeletons such that for all $1\leq i \leq n$,
we have $a_i \embeds b_i$, then
\begin{mathpar}
\bigsqcup_{i=1}^n a_i \embeds \bigsqcup_{i=1}^n b_i \and
\sum_{i=1}^n a_i \embeds  \sum_{i=1}^n b_i
\end{mathpar}
\item If $(a_i)_{1\leq i \leq n}$ is a finite family of skeletons and $b$ is a skeleton, then
\begin{mathpar}
\bigsqcup_{i=1}^n (a_i\cdot b) \embeds (\bigsqcup_{i=1}^n a_i )\cdot b \and
\sum_{i=1}^n (a_i \cdot b) \embeds (\sum_{i=1}^n a_i)\cdot b
\end{mathpar}
\item If $a, b, c$ are skeletons and $d\in \mathbb{N}$, then:
\[
a + b\cdot_d c \embeds (a+b)\cdot_d c
\]
\end{itemize}
\label{lem:op_embeddings}
\end{lem}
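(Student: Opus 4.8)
The plan is to prove the three items by directly unwinding the definition of $\embeds$: I expect no induction beyond the trivial one showing that $\embeds$ is reflexive, and I will also use freely that $\embeds$ is insensitive to permuting the subtrees of a skeleton in either argument (immediate from the definition). The one conceptual point to keep in mind is that $\embeds$ matches \emph{each} labelled child of the left skeleton to \emph{some} labelled child of the right one, so it tolerates duplication of subtrees on the left — this is exactly what makes the second item go through.

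\emph{First item.} I would write $a_i = n_i[\{d_{i,1}\}b_{i,1}, \dots, \{d_{i,p_i}\}b_{i,p_i}]$ and $b_i = n'_i[\{d'_{i,1}\}b'_{i,1}, \dots, \{d'_{i,q_i}\}b'_{i,q_i}]$, so that $a_i\embeds b_i$ unpacks to: $n_i\leq n'_i$, and for each $j$ there is $k$ with $d_{i,j}\leq d'_{i,k}$ and $b_{i,j}\embeds b'_{i,k}$. For $\bigsqcup$, the root $\max_i n_i$ of $\bigsqcup_i a_i$ is $\leq \max_i n'_i$, the root of $\bigsqcup_i b_i$; and every labelled child of $\bigsqcup_i a_i$ is some $\{d_{i,j}\}b_{i,j}$, which is dominated (in label and under $\embeds$) by $\{d'_{i,k}\}b'_{i,k}$, itself occurring among the children of $\bigsqcup_i b_i$. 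Hence $\bigsqcup_i a_i\embeds\bigsqcup_i b_i$. For $\sum$ the argument on children is verbatim the same, and $\sum_i n_i\leq \sum_i n'_i$ since $n_i\leq n'_i$ termwise.

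\emph{Second item.} Writing $a_i = n_i[\ell_i]$ with $\ell_i$ the list of labelled children of $a_i$, the skeleton $a_i\cdot b$ has root $n_i$ and labelled children $\ell_i$ followed by a single copy of $b$ (with the common fixed edge label). Consequently $\bigsqcup_i(a_i\cdot b)$ has root $\max_i n_i$ and children obtained by concatenating all the $\ell_i$ together with $n$ copies of $b$, whereas $(\bigsqcup_i a_i)\cdot b$ has the same root and children obtained by concatenating all the $\ell_i$ together with one copy of $b$. Sending each child inherited from some $\ell_i$ to itself and each of the $n$ copies of $b$ to the unique copy of $b$ on the right (both via reflexivity of $\embeds$ and equality of edge labels) exhibits the required embedding. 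The $\sum$ case is identical, the common root value being $\sum_i n_i$ in place of $\max_i n_i$.

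\emph{Third item.} Here I expect an outright equality. Writing $a = n_a[\ell_a]$ and $b = n_b[\ell_b]$, both $a + b\cdot_d c$ and $(a+b)\cdot_d c$ have root $n_a + n_b$ and labelled children $\ell_a$ followed by $\ell_b$ followed by $\{d\}c$; so $a + b\cdot_d c = (a+b)\cdot_d c$, and the claimed embedding follows from reflexivity of $\embeds$. I do not foresee any real obstacle in any of this; the only thing to be careful about throughout is the bookkeeping of the multiset of labelled children under $\bigsqcup$, $\sum$ and $\cdot_d$.
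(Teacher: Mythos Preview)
Your proposal is correct and is exactly the kind of direct unfolding of the definition of $\embeds$ that the paper has in mind; the paper's own proof is simply ``Direct from the definitions.'' Your observation that the third item is in fact an equality (up to the order of children, which $\embeds$ ignores) is also correct.
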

\begin{proof} 
Direct from the definitions.
\end{proof}

\subsection{Upper bounds}

We calculate upper bounds to the length of possible reductions on skeletons. This is done by adapting
a technique used by Schwichtenberg \cite{schw91} and Beckmann \cite{beck} to bound  the length of possible
$\beta$-reduction sequences on simply-typed $\lambda$-terms.

The idea of the proof is to define an inductive predicate $\sststile{\rho}{\alpha}$ on terms/skeletons whose
proofs/inhabitants combine aspects of a syntax tree and of a reduction tree --- witnesses of this predicate are called
\emph{expanded reduction trees} by Beckmann \cite{beck}. Their mixed nature will allow us to define a transformation
gradually eliminating their \emph{syntactic} (or \emph{static}) nodes, yielding an alternative
expanded reduction tree for the term/skeleton under study, whose height is more easily controlled.

\begin{defi}
The predicate $\sststile{\rho}{\alpha}$ (where $\rho, \alpha$ range over natural numbers) is defined on skeletons in the following inductive way.
\begin{itemize}
\item \textsc{Base.} $\sststile{\rho}{\alpha} 0[\{d_1\} a_1, \dots, \{d_p\} a_p]$
\item \textsc{Red.} Suppose $a = n[\{d_1\}a_1, \dots, \{d_p\} a_p]$ for $n\geq 1$. Then if for all $a'$ such that $a\leadsto a'$ we have $\sststile{\rho}{\alpha} a'$ and if
we also have $\sststile{\rho}{\alpha}\max(n-1, 0)[\{d_1\}a_1, \dots, \{d_p\} a_p]$, then $\sststile{\rho}{\alpha+1} a$.
\item \textsc{Cut.} If $\sststile{\rho}{\alpha} a$, $\sststile{\rho}{\beta} b$ and $d\leq \rho$, then $\sststile{\rho}{\alpha+\beta} a \cdot_d b$.
\end{itemize}
\end{defi}

\begin{defi}
A \textbf{context-skeleton} $a()$ is a finite tree whose edges are labeled by natural numbers, and whose nodes are labeled either by natural numbers, or by the variable $x$, with the constraint that
all edges leading to $x$ must be labeled by the same number $d$; $d$ is called the \textbf{type} of $x$ in $a()$.
We denote by $a(b)$ the result of substituting all occurrences of $x$ in $a()$ by $b$. We denote by $a(\emptyset)$ the skeleton obtained by deleting in $a$
all occurrences of $x$, along with the edges leading to them.
\end{defi}

\begin{lem}[Monotonicity]
If $\sststile{\rho}{\alpha} a$, then $\sststile{\rho'}{\alpha'} a$ for all $\alpha \leq \alpha'$ and $\rho \leq \rho'$, where
the witness trees have the same number of occurrences of \textsc{Cut}.
\label{monotonicity}
\end{lem}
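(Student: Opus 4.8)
The plan is to prove the statement by a straightforward induction on the derivation witnessing $\sststile{\rho}{\alpha} a$, carrying along as an explicit invariant the clause about \textsc{Cut}-nodes: the derivation we build for $\sststile{\rho'}{\alpha'} a$ will have exactly the same number of \textsc{Cut}-nodes as the given one. The observation that makes this go through is that each rule of the predicate is parametric in the bound $\rho$ and weakly monotone in the height index $\alpha$, so in every case it will suffice to replay the very same rule at the new parameters; this leaves the shape of the derivation, hence the \textsc{Cut}-count, untouched.

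For the \textsc{Base} case I would simply re-apply \textsc{Base} with the parameters $\rho', \alpha'$; both derivations carry no \textsc{Cut}-node. For \textsc{Red}, the conclusion has height $\alpha = \alpha_0 + 1$; since $\alpha' \geq \alpha \geq 1$ I can write $\alpha' = \alpha_0' + 1$ with $\alpha_0' \geq \alpha_0$, apply the induction hypothesis to each premise (both the sub-derivations for the reducts $a'$ with $a \leadsto a'$ and the one for $\max(n-1,0)[\{d_1\}a_1,\dots,\{d_p\}a_p]$) at targets $\rho' \geq \rho$ and $\alpha_0' \geq \alpha_0$, and then close with \textsc{Red} again. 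Since \textsc{Red} contributes no \textsc{Cut}-node and the invariant holds on every premise, it holds at the conclusion.

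The only step that needs a moment's thought is \textsc{Cut}, where the conclusion is $b \cdot_d c$ with height $\alpha = \alpha_b + \alpha_c$ and side condition $d \leq \rho$. Here I must split the larger budget $\alpha'$ as $\alpha_b' + \alpha_c'$ with $\alpha_b' \geq \alpha_b$ and $\alpha_c' \geq \alpha_c$; since $\alpha' \geq \alpha_b + \alpha_c$ this is always possible, and I would simply absorb all the slack into one premise, say $\alpha_b' := \alpha' - \alpha_c$ and $\alpha_c' := \alpha_c$. The induction hypothesis then yields witnesses of $\sststile{\rho'}{\alpha_b'} b$ and $\sststile{\rho'}{\alpha_c'} c$ with unchanged \textsc{Cut}-counts, and from $d \leq \rho \leq \rho'$ the rule \textsc{Cut} applies again; it adds exactly one \textsc{Cut}-node, matching the original.

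I do not expect a genuine obstacle here: the whole argument is bookkeeping. The two points to be careful about are that $\alpha'$ is indeed a successor whenever the derivation ends in \textsc{Red} (guaranteed by $\alpha' \geq \alpha$), and that the additive slack $\alpha' - \alpha$ in the \textsc{Cut} case can be loaded entirely onto a single premise without violating either lower bound — both are immediate from $\alpha \leq \alpha'$ and monotonicity of $+$.
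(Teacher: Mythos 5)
Your proof is correct and is exactly the induction the paper has in mind (the paper's own proof reads only ``Straightforward by induction on the derivation''); your case analysis, including the splitting of the slack in the \textsc{Cut} case and the bookkeeping of \textsc{Cut}-node counts, fills in precisely those routine details.
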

\begin{proof} 
Straightforward by induction on the derivation for $\sststile{\rho}{\alpha} a$.
\end{proof}

\begin{lem}[Permutation lemma]
If $a$ is obtained from $a'$ by permuting some subtrees in $a$, then for all $\rho, \alpha$,  $\sststile{\rho}{\alpha} a$ iff
$\sststile{\rho}{\alpha} a'$.
\label{permutation}
\end{lem}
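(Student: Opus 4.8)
The plan is to prove the equivalence by induction on the derivation of $\sststile{\rho}{\alpha} a$. Write $a \sim a'$ for ``$a$ is obtained from $a'$ by permuting some subtrees''; this is an equivalence relation (the congruence generated by transpositions of sibling subtrees, at the root or deeper), so by symmetry it suffices to establish the single implication: if $\sststile{\rho}{\alpha} a$ and $a \sim a'$, then $\sststile{\rho}{\alpha} a'$. The induction hypothesis will be applied to the immediate subderivations, against any $\sim$-variant of the skeleton they establish. Throughout, I will use the two basic facts that $\sim$ preserves the root label of a skeleton and that, since the head reduction $\leadsto$ depends only on the root label and on the multiset of labelled children, $a \sim a'$ implies that the reducts of $a$ and of $a'$ are in label-preserving bijection up to $\sim$ (this is the content of the remark following Lemma~\ref{lem:embedsbs}).

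The \textsc{Base} and \textsc{Red} cases are then routine. For \textsc{Base}, $a = 0[\{d_1\}a_1, \dots, \{d_p\}a_p]$, and any $a' \sim a$ has root label $0$, so $\sststile{\rho}{\alpha} a'$ by \textsc{Base}. For \textsc{Red}, write $a = n[\{d_1\}a_1, \dots, \{d_p\}a_p]$ with $n \geq 1$ and $\alpha = \gamma + 1$; the premises give $\sststile{\rho}{\gamma} b$ for every $b$ with $a \leadsto b$, and $\sststile{\rho}{\gamma} \max(n-1,0)[\{d_1\}a_1,\dots,\{d_p\}a_p]$. Any $a' \sim a$ can be written $a' = n[\{d'_1\}a'_1, \dots, \{d'_p\}a'_p]$ with, for some permutation $\pi$, $d'_j = d_{\pi(j)}$ and $a'_j \sim a_{\pi(j)}$. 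Given $a' \leadsto b'$, say $b' = a'_j \cdot_{d'_j - 1} (n-1)[\{d'_1\}a'_1,\dots,\{d'_p\}a'_p]$ with $d'_j \geq 1$, the corresponding reduction $a \leadsto b := a_{\pi(j)} \cdot_{d_{\pi(j)}-1} (n-1)[\{d_1\}a_1,\dots,\{d_p\}a_p]$ is legal and satisfies $b \sim b'$; applying the induction hypothesis to the subderivation of $\sststile{\rho}{\gamma} b$ gives $\sststile{\rho}{\gamma} b'$, and likewise $\sststile{\rho}{\gamma}\max(n-1,0)[\{d'_1\}a'_1,\dots,\{d'_p\}a'_p]$ follows from the subderivation for $\max(n-1,0)[\{d_1\}a_1,\dots]$ since the two are $\sim$-related. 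Hence $\sststile{\rho}{\gamma+1} a'$ by \textsc{Red}.

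The case requiring a little care is \textsc{Cut}, where $a = c \cdot_d e$ with $\sststile{\rho}{\alpha_1} c$, $\sststile{\rho}{\alpha_2} e$, $d \leq \rho$ and $\alpha = \alpha_1 + \alpha_2$; concretely $a = n[\{d_1\}c_1,\dots,\{d_q\}c_q,\{d\}e]$ and $c = n[\{d_1\}c_1,\dots,\{d_q\}c_q]$. The key observation is that any $a' \sim a$ again decomposes as $a' = c' \cdot_d e'$ with $c' \sim c$ and $e' \sim e$: the relation $\sim$ induces a bijection between the labelled top-level subtrees of $a$ and those of $a'$ matching $\sim$-related subtrees carrying equal edge labels, so one takes for $e'$ the top-level subtree of $a'$ paired with $e$ (retaining the edge label $d$) and lets $c'$ be $a'$ with $e'$ removed. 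The induction hypothesis applied to the subderivations of $\sststile{\rho}{\alpha_1} c$ and $\sststile{\rho}{\alpha_2} e$ yields $\sststile{\rho}{\alpha_1} c'$ and $\sststile{\rho}{\alpha_2} e'$, and since $d \leq \rho$ still holds, \textsc{Cut} gives $\sststile{\rho}{\alpha_1 + \alpha_2} a'$. The only real obstacle is precisely this bookkeeping in the \textsc{Cut} case: one must track which top-level subtree was ``cut off'' through the permutation, so that its edge label is preserved and the side condition $d \leq \rho$ is carried over verbatim; once that matching is fixed, everything reduces to unfolding the definitions. (One also notes in passing that the witness trees built this way use the same number of \textsc{Cut} instances, which is what will matter later.)
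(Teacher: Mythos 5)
Your proof is correct and follows exactly the route the paper takes: the paper's proof is simply ``straightforward by induction on the derivation for $\sststile{\rho}{\alpha} a$'', and your argument is that induction carried out in full, with the right bookkeeping in the \textsc{Cut} case. Nothing further is needed.
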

\begin{proof} 
Straightforward by induction on the derivation for $\sststile{\rho}{\alpha} a$.
\end{proof}

\begin{lem}[Null substitution lemma]
If $\sststile{\rho}{\alpha} a(\emptyset)$ and the type of $x$ in $a$ is $0$, then for all $b$ we still have $\sststile{\rho}{\alpha} a(b)$. Moreover, the witness includes as many \textsc{Cut} rules
as for $\sststile{\rho}{\alpha} a(\emptyset)$.
\label{first_substitution}
\end{lem}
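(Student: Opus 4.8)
The plan is to proceed by induction on the derivation of $\sststile{\rho}{\alpha} a(\emptyset)$, exploiting the fact that a hole of type $0$ is \emph{inert}: since every edge leading to an $x$-node of $a()$ carries the label $0$, such a subtree can never be the child selected by a reduction step (which requires a positive edge label) nor pushed towards the root, so whatever is plugged into it plays no role in the dynamics captured by $\sststile{\rho}{\alpha}$. Concretely, the combinatorial facts I will isolate first are: (i) $a(\emptyset)$ and $a(b)$ have the same root, which is a number node (it must be a number, since $a(\emptyset)$ is a non-empty skeleton because $\sststile{\rho}{\alpha} a(\emptyset)$ holds); (ii) the reducts of $a(b)$ are exactly the $a'(b)$, where $a'()$ ranges over a family of context-skeletons -- still with $x$ of type $0$, and strictly simpler -- indexed in the very same way as the reducts $a'(\emptyset)$ of $a(\emptyset)$; (iii) any decomposition $a(\emptyset) = g\cdot_d h$ at the root of $a(\emptyset)$ lifts, up to permutation of subtrees, to $a() = a_1()\cdot_d a_2()$ with $a_1(\emptyset) = g$, $a_2(\emptyset) = h$, both $a_1(),a_2()$ having $x$ of type $0$. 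All three hold because deleting -- resp.\ substituting $b$ into -- the $x$-nodes changes neither the root label nor the top-level children carrying a positive edge label, and the reduction rule only moves or decrements material along positive edges.

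Granting these facts, the induction on the last rule of the derivation of $\sststile{\rho}{\alpha} a(\emptyset)$ is routine. For \textsc{Base}, the root of $a$ is $0$, hence so is the root of $a(b)$, and \textsc{Base} applies to $a(b)$ directly (zero \textsc{Cut}s on each side). For \textsc{Red}, the root $m$ of $a$ is $\geq 1$, so \textsc{Red} is applicable to $a(b)$; by (ii) each reduct of $a(b)$ has the form $a'(b)$ with $\sststile{\rho}{\alpha-1} a'(\emptyset)$ a subderivation, so the induction hypothesis gives $\sststile{\rho}{\alpha-1} a'(b)$ with the same number of \textsc{Cut}s, and likewise for the $\max(m-1,0)[\dots]$ premise (handled via the context-skeleton obtained by decrementing the root of $a()$); applying \textsc{Red} concludes, the \textsc{Cut}-count being the sum over premises and thus preserved. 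For \textsc{Cut}, write $a(\emptyset) = g\cdot_d h$ with $\sststile{\rho}{\beta} g$, $\sststile{\rho}{\gamma} h$, $\alpha=\beta+\gamma$, $d\leq\rho$; by (iii) and the Permutation lemma (Lemma \ref{permutation}) we obtain $a() = a_1()\cdot_d a_2()$ with $a_1(\emptyset)=g$, $a_2(\emptyset)=h$; the induction hypothesis yields $\sststile{\rho}{\beta} a_1(b)$ and $\sststile{\rho}{\gamma} a_2(b)$ with, respectively, as many \textsc{Cut}s as before, and \textsc{Cut} -- legal since $d\leq\rho$ -- then gives $\sststile{\rho}{\alpha} a_1(b)\cdot_d a_2(b)$, which equals $a(b)$ up to permutation, with exactly one more \textsc{Cut} than the sum, i.e.\ as many as in the derivation of $\sststile{\rho}{\alpha} a(\emptyset)$.

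The only genuine work lies in establishing (ii) and (iii) cleanly, i.e.\ in setting up the correspondence between reducts and root-decompositions of $a(\emptyset)$ and of $a(b)$ through intermediate context-skeletons; the point to watch is that both the reduction rule and the \textsc{Cut} rule act at the root and single out one top-level child, which forces an explicit bookkeeping argument identifying the child of the root of $a()$ involved (necessarily a non-$x$ child, as $x$-children sit at $0$-edges), together with uses of the Permutation lemma to reconcile the linear notation $\cdot_d$ with the unordered tree structure. Once this is in place the claim about the \textsc{Cut}-count is immediate, since no case adds or removes \textsc{Cut} occurrences beyond those dictated by the rule being mirrored.
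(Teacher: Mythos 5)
Your proof is correct and follows essentially the same route as the paper's: induction on the derivation of $\sststile{\rho}{\alpha} a(\emptyset)$, using the key fact that a $0$-labelled edge can never be the one selected by a reduction step, so the reducts of $a(b)$ are exactly the $b$-instances of the reducts of $a(\emptyset)$, and handling \textsc{Cut} by lifting the root decomposition to context-skeletons (the paper phrases this by normalizing $x$ to a child of the root and folding the root-level $b$ into the recursive call via $a'_1\cdot_0 x$, but this is the same bookkeeping as your points (ii)--(iii)). The \textsc{Cut}-count accounting is also as in the paper.
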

\begin{proof}
We prove by induction on derivations $\sststile{\rho}{\alpha} a$
that the property above holds for all context-skeleton $a'$ such that the type of $x$ in $a'$ is $0$ and $a'(\emptyset) = a$.
\begin{itemize}
\item \textsc{Base.} The root of $a$ is $0$, hence the result is trivial.
\item \textsc{Red.} Suppose $a'$ has the form $n[\{d_1\}a'_1, \dots, \{d_p\}a'_p, \{0\}x]$, where $a'_1, \dots, a'_p$ 
possibly include occurrences of $x$ (the case where $x$ appears as a son of the root encompasses the other)
and $a_i = a'_i(\emptyset)$. The premises
of \textsc{Red} are then that for $1 \leq i \leq p$ such that
$d_i\geq 1$, $\sststile{\rho}{\alpha-1} a_i \cdot_{d_i-1} (n-1)[\{d_1\}a_1, \dots \{d_p\}a_p]$
and $\sststile{\rho}{\alpha-1} \max(n-1, 0)[\{d_1\}a_1, \dots \{d_p\}a_p]$. The IH on these premises
give witnesses for the two following properties:
\begin{eqnarray}
\sststile{\rho}{\alpha-1} (a'_i \cdot_{d_i-1} (n-1)[\{d_1\}a'_1, \dots, \{d_p\}a'_p, \{0\}x])(b)\label{eq1}
\end{eqnarray}
\begin{eqnarray}
\sststile{\rho}{\alpha-1} (\max(n-1, 0)[\{d_1\}a'_1, \dots \{d_p\}a'_p, \{0\}x])(b)\label{eq2}
\end{eqnarray}
This covers all the possible reductions of $a'(b)$, thus by \textsc{Red} we
have $\sststile{\rho}{(\alpha-1)+1} a'(b)$ as required.
\item \textsc{Cut.} Let us suppose $\sststile{\rho}{\alpha+\gamma} a$ is obtained by \textsc{Cut}, hence $a$ has the 
form $a_1 \cdot_{d_1} a_2$. Let us assume that $a'$ has the form $(a'_1 \cdot_{d'} a'_2) \cdot_0 x$ with $a'_1(\emptyset) = a_1$
and $a'_2(\emptyset) = a_2$, since again the case where $x$ is a child of the root of $a'$ encompasses the other.
The premises of \textsc{Cut} are then
$\sststile{\rho}{\alpha} a_1$ and $\sststile{\rho}{\gamma} a_2$, and $d'\leq \rho$. We also
have $(a'_1 \cdot_0 x)(\emptyset) = a_1$, therefore
the IH on $\sststile{\rho}{\alpha} a_1$ along with $\sststile{\rho}{\beta} b$ and $0\leq \rho+1$ implies that
$\sststile{\rho}{\alpha} a'_1(b) \cdot_0 b$. But by IH we also have $\sststile{\rho}{\gamma}a'_2(b)$, hence by \textsc{Cut}:
\[
\sststile{\rho}{\alpha + \gamma} (a'_1(b) \cdot_0 b) \cdot_{d'} a'_2(b)
\]
Which was what was required for $(a'_1(b) \cdot_{d'} a'_2(b)) \cdot_0 b$, thus it suffices since trees are considered up
to permutation.\qedhere
\end{itemize}
\end{proof}

\begin{lem}[Main substitution lemma]
If $\sststile{\rho}{\alpha} a(\emptyset)$, $\sststile{\rho}{\beta} b$ and $d \leq \rho + 1$ (where $d$ is the type of $x$ in $a$), then $\sststile{\rho}{\alpha(\beta+1)} a(b)$
\label{main_substitution}
\end{lem}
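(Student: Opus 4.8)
The statement is the analogue, for skeletons, of Beckmann's main substitution lemma for expanded reduction trees in the $\beta$-reduction setting. The plan is to proceed by induction on the derivation witnessing $\sststile{\rho}{\alpha} a(\emptyset)$, simultaneously over all context-skeletons $a$ with the type of $x$ in $a$ equal to some $d \leq \rho+1$ and $a(\emptyset)$ equal to the skeleton under study. The base case ($a(\emptyset)$ has root $0$) is immediate: $a(b)$ also has root $0$ since substitution at a leaf (or at a deeper node — but the root label is untouched) does not change the root, so \textsc{Base} applies and $0 \leq \alpha(\beta+1)$; one also records that the \textsc{Red} case, as usual in these arguments, is where the factor $\beta+1$ comes from while \textsc{Cut} is where it gets multiplied.

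**Key steps.** First I would handle the \textsc{Cut} case, which is the easy multiplicative step: if $a(\emptyset) = a_1 \cdot_{d'} a_2$ with premises $\sststile{\rho}{\alpha_1} a_1$, $\sststile{\rho}{\alpha_2} a_2$ ($\alpha = \alpha_1 + \alpha_2$, $d' \leq \rho$) and $a = (a'_1 \cdot_{d'} a'_2) \cdot_d x$ (the case where $x$ lies under the root being subsumed), I apply the IH to each premise — noting $x$ has the same type $d \leq \rho+1$ in $a'_1$ and in $a'_2$ — to get $\sststile{\rho}{\alpha_1(\beta+1)} a'_1(b)$ and $\sststile{\rho}{\alpha_2(\beta+1)} a'_2(b)$, then \textsc{Cut} gives $\sststile{\rho}{\alpha_1(\beta+1) + \alpha_2(\beta+1)} a'_1(b) \cdot_{d'} a'_2(b)$ and $\alpha_1(\beta+1) + \alpha_2(\beta+1) = \alpha(\beta+1)$, reassembling $a(b)$ up to permutation. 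The real work is the \textsc{Red} case: $a(\emptyset) = n[\{d_1\}a_1, \dots, \{d_p\}a_p]$ with $n \geq 1$, $\alpha = \alpha'+1$, and $a = n[\{d_1\}a'_1, \dots, \{d_p\}a'_p, \{d\}x]$ where the $a'_i$ may themselves contain $x$. The premises of \textsc{Red} give $\sststile{\rho}{\alpha'}$ on each reduct $a_i \cdot_{d_i-1} (n-1)[\dots]$ (when $d_i \geq 1$) and on $\max(n-1,0)[\dots]$, each of which is of the form $c(\emptyset)$ for a context-skeleton $c$; the IH yields $\sststile{\rho}{\alpha'(\beta+1)}$ for the corresponding $c(b)$. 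These cover the reducts of $a(b)$ obtained by contracting one of the original sons. The delicate point is the \emph{new} reduct: contracting the son $\{d\}x = \{d\}b$. Here I would need to produce $\sststile{\rho}{\alpha'(\beta+1)+\beta}\, b \cdot_{d-1} (n-1)[\{d_1\}a'_1,\dots,\{d_p\}a'_p,\{d\}b]$ — available because, on the one hand, $\sststile{\rho}{\beta} b$, and on the other hand $(n-1)[\{d_1\}a'_1, \dots, \{d_p\}a'_p, \{d\}x]$ is a context-skeleton whose $\emptyset$-instance is a reduct-shape covered by the \textsc{Red} premises, giving it rank $\alpha'(\beta+1)$ under substitution by the IH; a single \textsc{Cut} (legal since $d-1 \leq \rho$) combines them, using $d-1 \leq \rho$ which follows from $d \leq \rho+1$. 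Applying \textsc{Red} to $a(b)$ then gives $\sststile{\rho}{\alpha'(\beta+1)+\beta+1} a(b)$, and $\alpha'(\beta+1) + \beta + 1 = (\alpha'+1)(\beta+1) = \alpha(\beta+1)$, as required.

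**Main obstacle.** The principal difficulty is bookkeeping in the \textsc{Red} case: correctly enumerating all reductions of $a(b)$ — those descending from original sons versus the one contracting $b$ itself — and checking that the \textsc{Cut} assembling the $b$-contraction reduct is licensed, which is exactly where the hypothesis $d \leq \rho+1$ (hence $d-1 \leq \rho$) is used; this is the one place the bound on the type of $x$ is essential, mirroring how Beckmann's lemma needs the substituted variable to have level at most $\rho+1$. A secondary nuisance is the usual one: the case analysis on whether $x$ occurs as an immediate child of the root or strictly below, which as in Lemma~\ref{first_substitution} is handled by observing that the "child of the root" configuration is the general one up to the permutation invariance (Lemma~\ref{permutation}), so no genuine extra case arises. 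One should also invoke Monotonicity (Lemma~\ref{monotonicity}) freely to align the indices $\alpha, \rho$ of the various sub-derivations before combining them.
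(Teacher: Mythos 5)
Your overall strategy --- induction on the derivation of $\sststile{\rho}{\alpha} a(\emptyset)$, generalized over all context-skeletons $a$ with $x$ of type $d \leq \rho+1$ --- is exactly the paper's, and your \textsc{Red} case is essentially the paper's argument: the reducts of $a(b)$ coming from the original sons are covered by the IH on the \textsc{Red} premises, the new reduct contracting the son $\{d\}b$ is assembled by one \textsc{Cut} at label $d-1 \leq \rho$ from $\sststile{\rho}{\beta} b$ and the IH applied to the context $(n-1)[\{d_1\}a'_1,\dots,\{d_p\}a'_p,\{d\}x]$, and the arithmetic $\alpha'(\beta+1)+\beta+1 = \alpha(\beta+1)$ closes the case. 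You also correctly identify $d-1\leq\rho$ as the one place the hypothesis $d\leq\rho+1$ is used in that case.

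There is, however, a genuine gap in your \textsc{Cut} case. You set $a = (a'_1 \cdot_{d'} a'_2) \cdot_d x$, i.e.\ you (rightly) take the configuration where $x$ occurs as a child of the root to be the general one --- but then your assembly, namely IH on $a'_1$ and on $a'_2$ followed by a \textsc{Cut} at $d'$, produces only $a'_1(b) \cdot_{d'} a'_2(b)$. The target $a(b) = (a'_1(b) \cdot_{d'} a'_2(b)) \cdot_d b$ carries an additional child $b$ attached to the root with edge label $d$, and no permutation recovers it. The obvious repair --- one more \textsc{Cut} to graft $b$ at label $d$ --- is not licensed, since \textsc{Cut} requires the edge label to be $\leq \rho$ while $d$ may equal $\rho+1$; this is precisely the boundary the lemma is designed to cross. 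The paper's resolution is to fold the root-level occurrence into the first component before invoking the induction hypothesis: since $(a'_1 \cdot_d x)(\emptyset) = a_1$, the IH applied to the premise $\sststile{\rho}{\alpha_1} a_1$ with context-skeleton $a'_1 \cdot_d x$ directly yields $\sststile{\rho}{\alpha_1(\beta+1)}\, a'_1(b) \cdot_d b$, after which a single \textsc{Cut} at $d' \leq \rho$ with $\sststile{\rho}{\alpha_2(\beta+1)} a'_2(b)$ gives $(a'_1(b)\cdot_d b)\cdot_{d'} a'_2(b)$, which is $a(b)$ up to permutation. Without this move your proof only establishes the lemma in the degenerate case where $x$ does not occur as a child of the root of $a$.
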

\begin{proof}
We prove by induction on derivations $\sststile{\rho}{\alpha} a$ that the property above holds for all context-skeleton $a'$
such that the type of $x$ in $a'$ is $d\leq \rho+1$, and such that $a = a'(\emptyset)$.
\begin{itemize}
\item \textsc{Base.} The root of $a$ is $0$, hence the result is trivial.
\item \textsc{Red.} Suppose $a'$ has the form $n[\{d_1\}a'_1, \dots, \{d_p\}a'_p, \{d\}x]$, where $a_1 = a'_1(\emptyset), \dots, 
a_p = a'_p(\emptyset)$ (the case where $x$ appears as a son of the root encompasses the other). 
The premises of \textsc{Red} are that for $1 \leq i \leq p$ such
that $d_i\geq 1$, $\sststile{\rho}{\alpha-1} a_i \cdot_{d_i-1} (n-1)[\{d_1\}a_1, \dots \{d_p\}a_p]$
and $\sststile{\rho}{\alpha-1} (n-1)[\{d_1\}a_1, \dots \{d_p\}a_p]$. The IH on these premises give
witnesses for the two following properties:
\begin{eqnarray}
\sststile{\rho}{(\alpha-1)(\beta+1)} (a'_i \cdot_{d_i-1} (n-1)[\{d_1\}a'_1, \dots, \{d_p\}a'_p, \{d\}x])(b)\label{eq1}
\end{eqnarray}
\begin{eqnarray}
\sststile{\rho}{(\alpha-1)(\beta+1)} ((n-1)[\{d_1\}a'_1, \dots \{d_p\}a'_p, \{d\}x])(b)\label{eq2}
\end{eqnarray}
By hypothesis we have $\sststile{\rho}{\beta} b$, hence by \textsc{Cut} (since $d-1\leq \rho$), we have:
\begin{eqnarray}
\sststile{\rho}{(\alpha-1)(\beta+1) + \beta} b \cdot_{d-1} (n-1)[\{d_1\}a'_1(b), \dots \{d_p\}a'_p(b), \{d\}b]\label{eq3}
\end{eqnarray}
Using (\ref{eq1}) for all $i\in \{1, \dots, p\}$, (\ref{eq2}) (adjusted to $\sststile{\rho}{(\alpha-1)(\beta+1)+\beta}$ by Lemma \ref{monotonicity}) and (\ref{eq3}) we deduce by \textsc{Red} that
\[
\sststile{\rho}{(\alpha-1)(\beta+1)+ \beta + 1} n[\{d_1\}a'_1(b), \dots, \{d_p\}a'_p(b), \{d\}b]
\]
Which is what was required.
\item \textsc{Cut.} Let us suppose $\sststile{\rho}{\alpha+\gamma} a$ is obtained by \textsc{Cut}, hence $a$
has the form $a_1 \cdot_{d'} a_2$. Let us suppose that $a'$ has the form $(a'_1 \cdot_{d'} a'_2) \cdot_d x$ with
$a_1 = a'_1(\emptyset)$ and $a_2 = a'_2(\emptyset)$, since once again the case where $x$ is a child of the root
of $a'$ encompasses the other. The premises of \textsc{Cut} are then
$\sststile{\rho}{\alpha} a_1$ and $\sststile{\rho}{\gamma} a_2$, and $d'\leq \rho$. We also
have $(a'_1 \cdot_d x)(\emptyset) = a_1$, therefore
the IH on $\sststile{\rho}{\alpha} a_1$ along with $\sststile{\rho}{\beta} b$ and $d\leq \rho+1$ implies that
$\sststile{\rho}{\alpha(\beta+1)} a'_1(b) \cdot_d b$. But by IH we also have $\sststile{\rho}{\gamma(\beta+1) }a'_2(b)$,
hence by \textsc{Cut}:
\[
\sststile{\rho}{\alpha(\beta+1) + \gamma(\beta+1)} (a'_1(b) \cdot_d b) \cdot_{d'} a'_2(b)
\]
Which is what was required, up to permutation.\qedhere
\end{itemize}
\end{proof}

\noindent The following lemma is the core of the proof, allowing to eliminate instances of the \textsc{Cut} rule in the expanded
head reduction tree.

\begin{lem}[Cut elimination lemma]
Suppose $\sststile{\rho+1}{\alpha} a$. Then if $\alpha = 0$, $\sststile{\rho}{0} a$. Otherwise, $\sststile{\rho}{2^{\alpha-1}} a$.
\label{lem_cutelimination}
\end{lem}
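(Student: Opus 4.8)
The plan is to argue by induction on the derivation of $\sststile{\rho+1}{\alpha} a$, in the style of the cut-elimination arguments of \cite{schw91,beck}. In every case I will first produce a derivation witnessing $\sststile{\rho}{\alpha'} a$ for some $\alpha'$ with $\alpha' = 0$ when $\alpha = 0$ and $\alpha' \le 2^{\alpha-1}$ otherwise, and then invoke the Monotonicity lemma (Lemma \ref{monotonicity}) to pad $\alpha'$ up to the exact index $2^{\alpha-1}$ (or leave it at $0$) demanded by the statement.

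The \textsc{Base} case is immediate: the root of $a$ is $0$, so $\sststile{\rho}{\beta} a$ holds by \textsc{Base} for every $\beta$, in particular for $\beta = 0$ or $\beta = 2^{\alpha-1}$. For the \textsc{Red} case we have $\alpha \ge 1$ (since \textsc{Red} strictly increments the index), with $a = n[\{d_1\}a_1,\dots,\{d_p\}a_p]$, $n\ge 1$, and premises $\sststile{\rho+1}{\alpha-1}a'$ for every $a \leadsto a'$ together with $\sststile{\rho+1}{\alpha-1}\max(n-1,0)[\{d_1\}a_1,\dots,\{d_p\}a_p]$. If $\alpha - 1 = 0$, the induction hypothesis gives $\sststile{\rho}{0}$ of all these, whence $\sststile{\rho}{1} a$ by \textsc{Red}, and $1 = 2^{\alpha-1}$. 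If $\alpha - 1 \ge 1$, the induction hypothesis gives $\sststile{\rho}{2^{\alpha-2}}$ of all of them, whence $\sststile{\rho}{2^{\alpha-2}+1} a$ by \textsc{Red}, and $2^{\alpha-2}+1 \le 2^{\alpha-1}$ because $\alpha-1 \ge 1$; Lemma \ref{monotonicity} then raises the index to $2^{\alpha-1}$.

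The \textsc{Cut} case is the heart of the argument. Here $a = a_1 \cdot_d a_2$, $\alpha = \alpha_1 + \alpha_2$, and the premises are $\sststile{\rho+1}{\alpha_1}a_1$, $\sststile{\rho+1}{\alpha_2}a_2$ with $d \le \rho+1$. The crucial point is that $d$ may be exactly $\rho+1$, so we cannot in general re-apply \textsc{Cut} at level $\rho$; instead we absorb this cut via the Main substitution lemma (Lemma \ref{main_substitution}), regarding $a$ as $(a_1 \cdot_d x)(a_2)$ — a context-skeleton in which $x$ has type $d \le \rho+1$ and $(a_1\cdot_d x)(\emptyset) = a_1$. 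The induction hypothesis yields $\sststile{\rho}{\alpha_i'}a_i$ with $\alpha_i' = 0$ if $\alpha_i = 0$ and $\alpha_i' = 2^{\alpha_i-1}$ otherwise, so Lemma \ref{main_substitution} gives $\sststile{\rho}{\alpha_1'(\alpha_2'+1)} a$. It remains to check $\alpha_1'(\alpha_2'+1) \le 2^{\alpha-1}$ (and $=0$ when $\alpha = 0$): this holds trivially when $\alpha_1 = 0$ (the value is $0$), equals $2^{\alpha_1-1} = 2^{\alpha-1}$ when $\alpha_2 = 0 < \alpha_1$, and when $\alpha_1,\alpha_2 \ge 1$ follows from $2^{\alpha_2-1}+1 \le 2^{\alpha_2}$, giving $2^{\alpha_1-1}(2^{\alpha_2-1}+1) \le 2^{\alpha_1-1}\cdot 2^{\alpha_2} = 2^{\alpha-1}$; a last use of Lemma \ref{monotonicity} brings the index up to $2^{\alpha-1}$. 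The main obstacle is exactly this case: one has to notice that the borderline cuts of type $\rho+1$ cannot be pushed down by a naive re-application of \textsc{Cut}, and that the extra factor $\beta+1$ introduced by the substitution lemma is precisely what is absorbed when the additive index $\alpha_1+\alpha_2$ becomes the exponential $2^{\alpha-1}$.
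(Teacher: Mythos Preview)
Your proof is correct and follows essentially the same route as the paper: induction on the derivation of $\sststile{\rho+1}{\alpha} a$, with the \textsc{Cut} case handled by viewing $a_1\cdot_d a_2$ as $(a_1\cdot_d x)(a_2)$ and invoking the Main substitution lemma (Lemma~\ref{main_substitution}) to absorb the potentially too-large cut type $d\le\rho+1$, followed by the arithmetic check $2^{\alpha_1-1}(2^{\alpha_2-1}+1)\le 2^{\alpha-1}$ and Monotonicity. Your unified treatment of the \textsc{Cut} subcases via $\alpha_i'$ is in fact slightly cleaner than the paper's explicit case split.
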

\begin{proof} 
By induction on the tree witness for $\sststile{\rho+1}{\alpha} a$.
\begin{itemize}
\item \textsc{Base.} Trivial.
\item \textsc{Red.} Suppose $a=n[\{d_1\}a_1, \dots, \{d_p\}a_p]$. 
The premises of \textsc{Red} are $\sststile{\rho + 1}{\alpha-1} a_i \cdot_{d_i - 1} (n-1)[\{d_1\}a_1, \dots, \{d_p\}a_p]$ for all
$i\in \{1, \dots, p\}$ and $\sststile{\rho+1}{\alpha-1} (n-1)[\{d_1\}a_1, \dots, \{d_p\}a_p]$. If $\alpha \geq 2$, then it follows by
IH that $\sststile{\rho}{2^{\alpha-2}} a_i \cdot_{d_i - 1} (n-1)[\{d_1\}a_1, \dots, \{d_p\}a_p]$ and
$\sststile{\rho}{2^{\alpha-2}}(n-1)[\{d_1\}a_1, \dots, \{d_p\}a_p]$, which implies by \textsc{Red} and Lemma \ref{monotonicity} that
$\sststile{\rho}{2^{\alpha-1}} a$. If $\alpha = 1$, then the premises of \textsc{Red} are $\sststile{\rho + 1}{0} a_i \cdot_{d_i - 1}
(n-1)[\{d_1\}a_1, \dots, \{d_p\}a_p]$ for all $i\in \{1, \dots, p\}$ and $\sststile{\rho+1}{0} (n-1)[\{d_1\}a_1, \dots, \{d_p\}a_p]$. By induction
hypothesis this is still true with $\rho$ instead of $\rho+1$, thus by \textsc{Red} we have $\sststile{\rho}{1} [\{d_1\}a_1, \dots, \{d_p\}a_p]$
which is what we needed to prove.
\item \textsc{Cut.} Suppose $a= a_1 \cdot_d a_2$, the premises of \textsc{Cut} are $\sststile{\rho+1}{\alpha} a_1$, $\sststile{\rho+1}{\beta} a_2$ and $d\leq \rho+1$. If $\alpha, \beta \geq 1$
then by IH it follows that $\sststile{\rho}{2^{\alpha-1}} a_1$ and $\sststile{\rho}{2^{\beta-1}} a_2$, in particular if we define a context-skeleton $a'_1 = a_1 \cdot_d x$ we have
$\sststile{\rho}{2^{\alpha-1}} a'_1(\emptyset)$, hence by the substitution lemma (since $d\leq \rho+1$) we have
$\sststile{\rho}{2^{\alpha-1} (2^{\beta-1}+1)} a'_1(a_2) = a_1 \cdot_d a_2 = a$, thus $\sststile{\rho}{2^{\alpha+\beta-1}} a$ thanks to Lemma \ref{monotonicity} (since it is always true
than $2^{\alpha+\beta-1} \geq 2^{\alpha-1} (2^{\beta-1}+1)$).
If $\alpha = 0$ then by IH we have $\sststile{\rho}{0} a_1$ and $\sststile{\rho}{\beta'} a_2$. We use then the substitution lemma (since $d\leq \rho+1$) to get $\sststile{\rho}{0}(a_1 \cdot_d a_2)$, which
is stronger that what was required whatever was the value of $\beta$. The last remaining case is when $\alpha=1$ and $\beta=0$, then by IH $\sststile{\rho}{1} a_1$ and $\sststile{\rho}{0} a_2$,
thus by the substitution lemma we have as required $\sststile{\rho}{1}(a_1 \cdot_d a_2)$.\qedhere
\end{itemize}
\end{proof}

\noindent The lemma above allows us to transform any expanded reduction tree
into a purely dynamic one (using only rules \textsc{Base} and \textsc{Red}\footnote{In fact at this point
the expanded reduction tree can still contain cuts of order $0$, eliminated later.}). Now, we show how 
an expanded reduction tree can be automatically inferred for any skeleton.

\begin{lem}[Recomposition lemma]
Let $a$ be a skeleton such that $\ord(a) \geq 1$. Let $\nodes(a)$ denote the multiset of labels of nodes in $a$. Then,
$\sststile{\ord(a) - 1}{\Pi_{n \in \nodes(a)} (n+1)} a$.
\label{lem_recomposition}
\end{lem}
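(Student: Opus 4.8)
The plan is to establish a slightly stronger claim by induction and then weaken it. Writing $P(a)$ for $\Pi_{n\in\nodes(a)}(n+1)$, I would show that $\sststile{\rho}{P(a)-1} a$ holds for \emph{every} skeleton $a$ and every $\rho$ with $\rho+1\geq\ord(a)$; the lemma is then immediate by taking $\rho=\ord(a)-1$ (legitimate since $\ord(a)\geq 1$) and applying Lemma~\ref{monotonicity} to relax the index from $P(a)-1$ to $P(a)$. The reason to carry the extra $-1$ is arithmetic: the witness for a tree will be assembled from the witnesses of its subtrees with the Main substitution lemma (Lemma~\ref{main_substitution}), which multiplies the index by $\beta+1$ for each subtree carrying a witness of index $\beta$; with the $-1$ in place this factor becomes exactly $P(a_i)$, so the subtree contributions multiply together without slack.

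I would first observe, by a trivial sub-induction on $n$, that $\sststile{\rho}{n} n[]$ for every $n$ and every $\rho$: the case $n=0$ is \textsc{Base}, and for $n\geq 1$ the rule \textsc{Red} applies with its only premise being $\sststile{\rho}{n-1}(n-1)[]$ (there are no reducts, since $n[]$ has no child). The main induction is then on the structure of $a$. A leaf $n[]$ is covered by the observation, since $P(n[])-1=n$. For $a=n[\{d_1\}a_1,\dots,\{d_p\}a_p]$, fix $\rho$ with $\rho+1\geq\ord(a)$; then $\ord(a_i)\leq\ord(a)\leq\rho+1$, so the induction hypothesis gives $\sststile{\rho}{P(a_i)-1} a_i$ for each $i$. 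Starting from $\sststile{\rho}{n} n[]$, I attach $a_1,\dots,a_p$ one at a time: at step $i$ I apply Lemma~\ref{main_substitution} to the context-skeleton $n[\{d_1\}a_1,\dots,\{d_{i-1}\}a_{i-1},\{d_i\}x]$ (whose single occurrence of $x$ has type $d_i\leq\ord(a)\leq\rho+1$), fed with the witness built so far for $n[\{d_1\}a_1,\dots,\{d_{i-1}\}a_{i-1}]$ and with $\sststile{\rho}{P(a_i)-1} a_i$; this multiplies the index by $(P(a_i)-1)+1=P(a_i)$. After $p$ steps I obtain $\sststile{\rho}{\,n\cdot\prod_{i=1}^p P(a_i)\,} a$. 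Since $\nodes(a)$ is the root label $n$ together with the $\nodes(a_i)$'s, $P(a)=(n+1)\prod_{i=1}^p P(a_i)$, so $n\cdot\prod_{i=1}^p P(a_i)\leq P(a)-1$ (using $\prod_{i=1}^p P(a_i)\geq 1$); Lemma~\ref{monotonicity} then yields $\sststile{\rho}{P(a)-1} a$, closing the induction.

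The step I expect to be the real obstacle is pinning down this exact constant. Two naive strategies fail: building $a$ by peeling one unit off the root with \textsc{Red} and re-grafting the exposed subtree with \textsc{Cut} at each stage overshoots by an additive $\Theta(n)$ (the $+1$ of \textsc{Red} accumulates), while substituting in the subtrees with the un-strengthened bound $P(a_i)$ overshoots by a multiplicative factor; both are already visible on $n[\{1\}b]$ with $b$ nontrivial. What makes the clean proof possible is, on the one hand, that Lemma~\ref{main_substitution} accepts edge labels up to $\rho+1=\ord(a)$, so the subtrees can be grafted in by substitution even when $d_i$ attains $\ord(a)$ (a case \textsc{Cut}, which needs $d\leq\rho$, cannot treat) and no \textsc{Red} step at the root is ever needed; and, on the other hand, that proving the $(P-1)$-version is exactly what collapses the substitution cost $\beta+1$ into the single factor $P(a_i)$. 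The rest is routine index bookkeeping with Lemmas~\ref{monotonicity} and~\ref{main_substitution}.
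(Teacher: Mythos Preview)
Your proof is correct and follows essentially the same approach as the paper: build the skeleton from atomic pieces by repeated applications of the Main substitution lemma (Lemma~\ref{main_substitution}), with the admissible rule $\sststile{\rho}{n} n[]$ serving as the base case. The paper's proof is a two-line sketch (``apply Lemma~\ref{main_substitution} once for each node''), whereas you spell out the structural induction and make explicit the $P(a)-1$ strengthening that keeps the arithmetic tight; but these are presentational differences, not a different route.
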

\begin{proof} 
First, let us show that the following rule \textsc{Base'} is admissible, for any $\alpha$ and $\rho$.
\[
\sststile{\rho}{\alpha + n} n
\]
If $n=0$ this is exactly \textsc{Base}. Otherwise we apply \textsc{Red}. There is no possible reduction, so the only thing we have
to prove is $\sststile{\rho}{\alpha + n - 1} (n-1)[]$, which is by IH.

The lemma follows by applying Lemma \ref{main_substitution} once for each node.
\end{proof}

Now, we show how to deduce from a purely dynamic expanded reduction tree a bound to the length of possible reductions.

\begin{lem}[Bound lemma]
Let $a$ be a skeleton, then if $\sststile{0}{\alpha} a$, $\norm(a) \leq \alpha$.
\label{bound}
\end{lem}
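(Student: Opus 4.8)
The plan is to prove the Bound lemma by induction on the structure of a witness tree for $\sststile{0}{\alpha} a$. Since $\rho = 0$, the \textsc{Cut} rule can only be applied with an edge label $d \leq 0$, i.e.\ $d = 0$; so the witness is \emph{almost} purely dynamic, built from \textsc{Base}, \textsc{Red}, and \textsc{Cut} of order $0$. The quantity $\norm(a)$ we want to bound counts steps of the head reduction $\leadsto$; each such step is exactly what the \textsc{Red} rule's first premise quantifies over, so the shape of the argument is essentially: each \textsc{Red} node in the witness contributes (at most) one unit to the length of any reduction sequence, and the $+1$ in its index $\alpha+1$ pays for it. The role of the order-$0$ \textsc{Cut} rule is that it does not obstruct the passage of a reduction step from the composite skeleton $a \cdot_0 b$ to its components; but because the $\cdot_0$ edge can never be contracted (the reduction requires $d_i \geq 1$), a head reduction of $a\cdot_0 b$ acts either entirely within $a$ or entirely within $b$, which matches the additivity $\alpha + \beta$ in the \textsc{Cut} index.

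In detail, I would argue by induction on the derivation of $\sststile{0}{\alpha} a$. For \textsc{Base}, $a$ has root $0$, so no head reduction step is possible ($n = 0$ forbids the rule), hence $\norm(a) = 0 \leq \alpha$. For \textsc{Red}, write $a = n[\{d_1\}a_1, \dots, \{d_p\}a_p]$ with $n \geq 1$ and index $\alpha + 1$; the premises give $\sststile{0}{\alpha} a'$ for every $a'$ with $a \leadsto a'$, and also $\sststile{0}{\alpha}\max(n-1,0)[\{d_1\}a_1,\dots,\{d_p\}a_p]$. Any reduction sequence from $a$ of length $\norm(a)$, if nonempty, starts with some $a \leadsto a'$ followed by a sequence from $a'$ of length $\le \norm(a')$; by the IH $\norm(a') \leq \alpha$, so $\norm(a) \leq 1 + \alpha$ as required. (The extra premise on $\max(n-1,0)[\dots]$ is not needed here — it is used elsewhere in the cut-elimination machinery — though one could also invoke it to handle the degenerate bookkeeping.) For \textsc{Cut}, $a = a_1 \cdot_0 a_2$ with $\sststile{0}{\alpha_1} a_1$, $\sststile{0}{\alpha_2} a_2$, and index $\alpha_1 + \alpha_2$. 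Here the key observation is that any single head reduction step of $a_1 \cdot_0 a_2$ either contracts the root of $a_1$ (if $a_1 = m[\dots]$ with $m\ge 1$) and leaves the $\{0\}a_2$ child untouched — so it restricts to a head reduction of $a_1$ — or else $a_1$'s root is $0$, in which case by \textsc{Base}-type reasoning no step is possible and $a_1\cdot_0 a_2$ behaves as $a_2$ with some static extra leaves. Either way a maximal reduction sequence of $a_1\cdot_0 a_2$ projects to reduction sequences of $a_1$ and of $a_2$ whose lengths sum to $\norm(a_1\cdot_0 a_2)$, giving $\norm(a) \leq \norm(a_1) + \norm(a_2) \leq \alpha_1 + \alpha_2$ by the IH.

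The main obstacle I anticipate is the \textsc{Cut} case, specifically pinning down precisely how head reduction interacts with a $\cdot_0$ edge: one must be careful that contracting the root of $a_1 \cdot_0 a_2$ really does produce exactly $(a_1 \leadsto a_1') \cdot_0 a_2$ up to the permutation convention on subtrees, and that no reduction step can ever ``mix'' the two sides or contract the $\{0\}$ edge (it cannot, since the rule demands $d_i \ge 1$). Once this bookkeeping is settled — and it is genuinely routine given that subtrees are taken up to permutation and the reduction is head-only — the additivity of lengths under $\cdot_0$ drops out, and the induction closes. I would phrase the $\cdot_0$ analysis as a short preliminary observation before the induction to keep the three cases clean.
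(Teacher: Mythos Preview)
The anticipated obstacle in your \textsc{Cut} case is real, and your dismissal of it as ``genuinely routine'' is where the proposal breaks. You claim that contracting the root of $a_1 \cdot_0 a_2$ produces $(a_1')\cdot_0 a_2$ up to permutation, where $a_1 \leadsto a_1'$. It does not. Write $a_1 = n[\{d_1\}c_1, \dots, \{d_p\}c_p]$; the reduct of $a_1 \cdot_0 a_2$ along child $i$ is
\[
c_i \cdot_{d_i-1}\bigl((n-1)[\{d_1\}c_1, \dots, \{d_p\}c_p, \{0\}a_2]\bigr),
\]
so $a_2$ hangs off the $(n{-}1)$ node, one level below the new root. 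By contrast $a_1' \cdot_0 a_2$, with $a_1' = c_i \cdot_{d_i-1}(n{-}1)[\{d_1\}c_1,\dots,\{d_p\}c_p]$, attaches $a_2$ as a $\{0\}$-child of the root of $c_i$ itself. These are different trees; no permutation of siblings identifies them. Hence the reduct is not of the form $(-)\cdot_0 a_2$, and you cannot continue the induction on that shape. Your fallback description---that reductions ``project'' to sequences of $a_1$ and of $a_2$ whose lengths sum---is also off: $a_2$ is never reduced (its incoming edge label stays $0$ throughout), so the $\norm(a_2)$ summand is spurious. The correct fact is the stronger $\norm(a_1 \cdot_0 a_2) = \norm(a_1)$, but proving it requires a genuine simulation argument tracking how the dangling $\{0\}a_2$ subtree migrates and duplicates under iterated reduction, not a one-step shape check.

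The paper avoids this by a different decomposition: it first shows that any witness for $\sststile{0}{\alpha} a$ can be taken \textsc{Cut}-free. In the \textsc{Cut} case $a = a_1 \cdot_0 a_2$, with (by induction) a \textsc{Cut}-free witness for $\sststile{0}{\alpha} a_1$, one forms the context-skeleton $a_1 \cdot_0 x$ (type of $x$ is $0$) and applies the Null substitution lemma (Lemma~\ref{first_substitution}), which gives $\sststile{0}{\alpha} a_1 \cdot_0 a_2$ with the same number of \textsc{Cut}s, namely none; monotonicity (Lemma~\ref{monotonicity}) then yields $\sststile{0}{\alpha+\beta}$. Once cuts are eliminated, only your \textsc{Base} and \textsc{Red} cases remain, and those you handled correctly.
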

\begin{proof} 
First of all we prove that if there is a witness for $\sststile{0}{\alpha} a$, then it can be supposed \textsc{Cut}-free. We reason by induction
on $\sststile{0}{\alpha} a$.
\begin{itemize}
\item \textsc{Base.} The rule has no premise, so the witness tree for $\sststile{0}{\alpha} a$ is already \textsc{Cut}-free.
\item \textsc{Red.} By IH, \textsc{Cut} can be eliminated in the premises of $\sststile{0}{\alpha+1} a$. Therefore by $\textsc{Red}$,
there is a \textsc{Cut}-free witness for $\sststile{0}{\alpha+1} a$.
\item \textsc{Cut.} Suppose we have $\sststile{0}{\alpha + \beta} a \cdot_0 b$ by \textsc{Cut}, whose premises are $\sststile{0}{\alpha} a$
and $\sststile{0}{\beta} b$. By IH, we can assume the witness trees for $\sststile{0}{\alpha} a$ and $\sststile{0}{\beta} b$
to be \textsc{Cut}-free. Let us form the context-skeleton $a() = a \cdot_0 x$. Then by Lemma \ref{first_substitution} we have
$\sststile{0}{\alpha} a(b) = a \cdot_0 b$, and by Lemma \ref{monotonicity} that $\sststile{0}{\alpha + \beta} a \cdot_0 b$. Since the witness
tree for $\sststile{0}{\alpha} a(\emptyset)$ is \textsc{Cut}-free, so are the witness trees for $\sststile{0}{\alpha} a \cdot_0 b$ and
$\sststile{0}{\alpha + \beta} a \cdot_0 b$.
\end{itemize}

Then, we prove the lemma by induction on the \textsc{Cut}-free witness tree for $\sststile{0}{\alpha} a$:
\begin{itemize}
\item \textsc{Base.} Necessarily, the root of $a$ is $0$, thus $\norm(a) = 0$; there is nothing to prove.
\item \textsc{Red.} The premises of $\sststile{0}{\alpha} a$ include in particular that for all $a'$ such that $a\leadsto a'$, we have $\sststile{0}{\alpha-1} a'$. By IH, this
means that for all such $a'$ we have $\norm(a') \leq \alpha-1$, hence $\norm(a) \leq \alpha$.\qedhere
\end{itemize}
\end{proof}

\noindent From all this, it is possible to give a first upper bound by using the recomposition lemma, then iterating the cut elimination lemma. However, we will first prove here a refined version of the
cut elimination lemma when $\rho=1$, which will allow to decrease by one the height of the tower of exponentials. First, we need the following adaptation of the substitution lemma:
\begin{lem}[Base substitution lemma]
If $\sststile{0}{\alpha} a(\emptyset)$, $\sststile{0}{\beta} b$ and the type of $x$ in $a$ is $1$, then $\sststile{0}{\alpha+\beta} a(b)$.
\label{specialized_substitution}
\end{lem}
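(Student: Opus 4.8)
The plan is to follow the pattern of the Main substitution lemma (Lemma~\ref{main_substitution}), proving by induction on the derivation of $\sststile{0}{\alpha} a$ the slightly generalized statement: for every context-skeleton $a'$ in which $x$ has type~$1$ and $a'(\emptyset) = a$, and every $b$ with $\sststile{0}{\beta} b$, one has $\sststile{0}{\alpha+\beta} a'(b)$. The reason we can do better than Lemma~\ref{main_substitution} — additive instead of multiplicative — is the following observation, special to $\rho = 0$: a subtree dangling from an edge labelled~$0$ is permanently inert, since no reduction can fire there and neither \textsc{Red} nor \textsc{Cut} ever inspects it. Formally, applying Lemma~\ref{first_substitution} to the context-skeleton $c \cdot_0 y$ with $y$ of type~$0$ (so that $(c \cdot_0 y)(\emptyset) = c$) yields: $\sststile{0}{\gamma} c$ implies $\sststile{0}{\gamma} c \cdot_0 c'$ for any skeleton $c'$. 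I will call this the \emph{$0$-edge principle}; it does most of the work.

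The \textsc{Base} case is immediate, since the root of $a'(b)$ is $0$. In the \textsc{Red} case, writing $a = n[\{d_1\}a_1,\dots,\{d_p\}a_p]$ (so $\alpha \geq 1$) and treating, up to permutation of subtrees (Lemma~\ref{permutation}), the representative situation $a' = n[\{d_1\}a'_1, \dots, \{d_p\}a'_p, \{1\}x]$ with $a'_i(\emptyset) = a_i$, I re-derive $\sststile{0}{\alpha+\beta} a'(b)$ by \textsc{Red}. All reducts of $a'(b)$ obtained by firing a child $a'_i(b)$, as well as the decremented-root skeleton, are of the form $c(b)$ where $c$ is a context whose $\emptyset$-value is exactly a \textsc{Red}-premise of $\sststile{0}{\alpha} a$, so the induction hypothesis gives them exponent $(\alpha-1)+\beta$. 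The single remaining reduct, obtained by firing the substituted child $b$, is $b \cdot_0 \big((n-1)[\dots,\{1\}b]\big)$, and here the $0$-edge principle together with $\sststile{0}{\beta} b$ gives exponent $\beta \leq (\alpha-1)+\beta$, which suffices after Lemma~\ref{monotonicity}. In the \textsc{Cut} case, $a = a_1 \cdot_{d'} a_2$ with premises $\sststile{0}{\alpha_1} a_1$, $\sststile{0}{\alpha_2} a_2$ and $\alpha = \alpha_1 + \alpha_2$ forces $d' = 0$; again up to permutation, $a' = c_1 \cdot_0 c_2$ with $c_i(\emptyset) = a_i$. The induction hypothesis on $\sststile{0}{\alpha_1} a_1$ gives $\sststile{0}{\alpha_1 + \beta} c_1(b)$, and the $0$-edge principle attaches $c_2(b)$ for free, yielding $\sststile{0}{\alpha_1+\beta} a'(b)$, hence $\sststile{0}{\alpha+\beta} a'(b)$ by Lemma~\ref{monotonicity}; note the induction hypothesis on $a_2$ is not even needed, as the $c_2$ part is inert.

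The crux — and the only place where the hypothesis ``$x$ has type~$1$'' is used — is the \textsc{Red} reduct that fires the substituted copy of $b$. In Lemma~\ref{main_substitution} this step is handled by a \textsc{Cut} against $\sststile{0}{\beta}b$, which costs a multiplicative factor of roughly $\beta$ and is responsible for the weaker bound there; but since the fired edge has label exactly~$1$, it decreases to~$0$, so after firing $b$ the entire remainder of the skeleton becomes inert and no further copy of $b$ can ever be activated, so the \textsc{Cut} can be replaced, via Lemma~\ref{first_substitution}, by a cost-free attachment. The remaining bookkeeping — occurrences of $x$ nested inside the $a'_i$ or several root occurrences, always subsumed by the representative case up to permutation — is routine, exactly as in Lemmas~\ref{first_substitution} and~\ref{main_substitution}, and I do not anticipate any difficulty there.
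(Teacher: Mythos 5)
Your proof is correct and follows essentially the same route as the paper's: the same induction on the derivation, with the reduct that fires the substituted copy of $b$ handled by Lemma~\ref{first_substitution} (your ``$0$-edge principle'' is exactly the paper's use of that lemma at the now-$0$-labelled edge), and the \textsc{Cut} case likewise discharged by attaching the second component at a $0$-edge rather than re-cutting. No gaps.
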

\begin{proof} 
We prove by induction on derivations $\sststile{0}{\alpha} a(\emptyset)$ that the property above holds for all context-arena
$a'$ such that the type of $x$ in $a'$ is $1$ and $a = a'(\emptyset)$.
\begin{itemize}
\item \textsc{Base.} The root of $a$ is $0$, hence the result is trivial.
\item \textsc{Red.} Suppose $a'$ has the form $n[\{d_1\}a'_1, \dots, \{d_p\}a'_p, \{1\}x]$, with 
$a_i = a'_i(\emptyset)$ (the case where $x$ appears as a son of the root encompasses the other).
The premises of \textsc{Red} are then that for $1 \leq i \leq p$ such that
$d_i\geq 1$, $\sststile{0}{\alpha-1} a_i \cdot_{d_i-1} (n-1)[\{d_1\}a_1, \dots \{d_p\}a_p]$
and $\sststile{0}{\alpha-1} (n-1)[\{d_1\}a_1, \dots \{d_p\}a_p]$. The IH on these premises give witnesses
for the two following properties:
\begin{eqnarray}
\sststile{0}{\alpha-1+\beta} (a'_i \cdot_{d_i-1} (n-1)[\{d_1\}a'_1, \dots, \{d_p\}a'_p, \{1\}x])(b)\label{eq4}
\end{eqnarray}
\begin{eqnarray}
\sststile{0}{\alpha-1 +\beta} ((n-1)[\{d_1\}a'_1, \dots \{d_p\}a'_p, \{1\}x])(b)\label{eq5}
\end{eqnarray}
By hypothesis we have $\sststile{0}{\beta} b$, hence by Lemma \ref{first_substitution} we have
\begin{eqnarray}
\sststile{0}{\beta} b \cdot_{0} (n-1)[\{d_1\}a'_1(b), \dots \{d_p\}a'_p(b), \{1\}b]\label{eq6}
\end{eqnarray}
Hence, using (\ref{eq4}) for all $i\in \{1, \dots, p\}$, (\ref{eq5}) and (\ref{eq6}) (adjusted to $\sststile{0}{\alpha-1+\beta}$ 
by Lemma \ref{monotonicity}) we deduce by \textsc{Red} that
\[
\sststile{0}{\alpha + \beta} n[\{d_1\}a'_1(b), \dots, \{d_p\}a'_p(b), \{1\}b]
\]
Which is what was required.
\item \textsc{Cut.} Let us suppose $\sststile{0}{\alpha+\gamma} a$ is obtained by \textsc{Cut}, 
hence $a$ has the form $a_1 \cdot_0 a_2$. Let us suppose
that $a'$ has the form $(a'_1 \cdot_0 a'_2) \cdot_1 x$, with $a_i = a'_i(\emptyset)$, again the case
where $x$ is a child of the root encompasses the other. The premises of \textsc{Cut} are then
$\sststile{0}{\alpha} a_1$ and $\sststile{0}{\gamma} a_2$. We also
have $(a'_1 \cdot_1 x)(\emptyset) = a_1$, therefore
the IH on $\sststile{0}{\alpha} a_1$ along with $\sststile{0}{\beta} b$ implies that
$\sststile{0}{\alpha+\beta} a'_1(b) \cdot_1 b$ and all that remains is to substitute $a'_2(b)$ in $(a'_1(b) \cdot_1 b) \cdot_0 x$.
But since the type of $x$ is $0$, Lemma \ref{first_substitution} implies
that $\sststile{0}{\alpha+\beta} (a'_1(b) \cdot_1 b) \cdot_{0} a'_2(b)$, which concludes
since trees are considered up to permutation.\qedhere
\end{itemize}
\end{proof}

\begin{lem}[Base cut elimination lemma]
If $\sststile{1}{\alpha} a$, then $\sststile{0}{\alpha} a$.
\label{specialized_cutelim}
\end{lem}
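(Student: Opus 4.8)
The plan is to mirror the proof of the \emph{Cut elimination lemma} (Lemma~\ref{lem_cutelimination}) in the special case $\rho = 0$, reasoning by induction on the witness tree for $\sststile{1}{\alpha}a$. The crucial change --- and the reason we now obtain the sharp bound $\sststile{0}{\alpha}a$, with no case split on $\alpha$ and no exponential --- is that in the \textsc{Cut} case I would invoke the additive substitution lemmas (Lemma~\ref{specialized_substitution} for cuts of order $1$, Lemma~\ref{first_substitution} for cuts of order $0$) rather than the multiplicative \emph{Main substitution lemma} (Lemma~\ref{main_substitution}) that was used in Lemma~\ref{lem_cutelimination}.

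The \textsc{Base} case is immediate: $\sststile{0}{\alpha}0[\{d_1\}a_1, \dots, \{d_p\}a_p]$ holds directly by \textsc{Base}. For \textsc{Red}, with $a = n[\{d_1\}a_1, \dots, \{d_p\}a_p]$, $n \geq 1$, and $\alpha = \alpha'+1$, the premises are of the form $\sststile{1}{\alpha'}(\dots)$ for the various one-step reducts of $a$ and for $\max(n-1,0)[\{d_1\}a_1, \dots, \{d_p\}a_p]$; applying the induction hypothesis to each of them and then re-applying \textsc{Red} yields $\sststile{0}{\alpha'+1}a = \sststile{0}{\alpha}a$. Neither of these two cases involves any real work.

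The heart of the argument is the \textsc{Cut} case. Here $a = a_1 \cdot_d a_2$ with premises $\sststile{1}{\alpha_1}a_1$, $\sststile{1}{\alpha_2}a_2$, $d \leq 1$, and $\alpha = \alpha_1 + \alpha_2$; the induction hypothesis gives $\sststile{0}{\alpha_1}a_1$ and $\sststile{0}{\alpha_2}a_2$. I would split on the value of $d$. If $d = 1$, take the context-skeleton $a_1() = a_1 \cdot_1 x$ (so $x$ has type $1$ and $a_1(\emptyset) = a_1$); then Lemma~\ref{specialized_substitution} applied to $\sststile{0}{\alpha_1}a_1(\emptyset)$ and $\sststile{0}{\alpha_2}a_2$ gives directly $\sststile{0}{\alpha_1+\alpha_2}a_1 \cdot_1 a_2 = \sststile{0}{\alpha}a$. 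If $d = 0$, take instead $a_1() = a_1 \cdot_0 x$ (so $x$ has type $0$); Lemma~\ref{first_substitution} gives $\sststile{0}{\alpha_1}a_1 \cdot_0 a_2$, and Lemma~\ref{monotonicity} then bumps the index up to $\alpha_1 + \alpha_2 = \alpha$. In either subcase we conclude $\sststile{0}{\alpha}a$ (up to permutation of subtrees, Lemma~\ref{permutation}), which closes the induction.

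The only step that needs care is precisely this choice of substitution lemma in the \textsc{Cut} case: invoking Lemma~\ref{main_substitution} instead would reintroduce the tower of exponentials of Lemma~\ref{lem_cutelimination}, so the whole point of isolating this refined lemma is that cuts of order $\leq 1$ can be absorbed at additive, rather than multiplicative, cost. Everything else is a routine transcription of the earlier cut-elimination argument.
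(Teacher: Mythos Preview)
Your proposal is correct and follows essentially the same structure as the paper's proof: induction on the witness tree, with the \textsc{Base} and \textsc{Red} cases handled identically, and the \textsc{Cut} case split on whether $d=0$ or $d=1$, invoking Lemma~\ref{specialized_substitution} for $d=1$. The only minor difference is in the $d=0$ subcase: the paper simply re-applies the \textsc{Cut} rule (which is legal since $d=0\leq\rho=0$) to the two hypotheses obtained by IH, whereas you take a slightly longer route via Lemma~\ref{first_substitution} and Monotonicity; both are valid, but the paper's version is more direct.
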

\begin{proof} 
By induction on the witness tree for $\sststile{1}{\alpha} a$.
\begin{itemize}
\item \textsc{Base.} Trivial.
\item \textsc{Red.} Suppose $a$ has the form $n[\{d_1\}a_1, \dots, \{d_p\}a_p]$. The premises of \textsc{Red} are that for all $i\in \{1, \dots, p\}$ we have
$\sststile{1}{\alpha-1} a_1 \cdot_{d_i - 1} (n-1)\{d_1\}a_1, \dots, \{d_p\}a_p]$ and $\sststile{1}{\alpha-1} (n-1)\{d_1\}a_1, \dots, \{d_p\}a_p]$. The result is then
trivial by IH and \textsc{Red}.
\item \textsc{Cut.} Suppose $a = a_1 \cdot_d a_2$ with $d\leq 1$, the premises of \textsc{Cut} are that $\sststile{1}{\alpha} a_1$ and $\sststile{1}{\beta} a_2$. If $d=0$, then the result
is trivial by the IH and \textsc{Cut}. If $d=1$, we just apply Lemma \ref{specialized_substitution} instead of \textsc{Cut}.\qedhere
\end{itemize}
\end{proof}

\noindent From all of this put together, we deduce the main theorem of this section.

\begin{thm}[Upper bound]
If $\ord(a), \depthbs(a), \max(a) \geq 1$,
$\norm(a) \leq 2_{\ord(a)-1}^{\depthbs(a)\log(\max(a) + 1)}$.
\label{thm_upbs}
\end{thm}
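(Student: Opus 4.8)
The plan is to combine the structural lemmas developed in this section in the following pipeline, reducing the general case to the thread-like skeletons $\tbs(d,o,m)$ and then estimating their norm. First I would apply Proposition~\ref{prop_merge}: since $a$ has depth $d = \depthbs(a)$, order $o = \ord(a)$, and maximum $m = \max(a)$, we have $\norm(a) \leq \norm(\tbs(d,o,m))$, so it suffices to bound $\norm(\tbs(d,o,m))$. For the thread-like skeleton, the multiset of node labels is $\nodes(\tbs(d,o,m)) = \{m, m, \dots, m\}$ with $d$ copies, so $\Pi_{n \in \nodes(\tbs(d,o,m))}(n+1) = (m+1)^d$. Assuming $o \geq 1$, the Recomposition lemma (Lemma~\ref{lem_recomposition}) gives $\sststile{o-1}{(m+1)^d} \tbs(d,o,m)$.

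Next I would iterate the cut-elimination lemma to bring the first parameter $\rho$ down to $0$. Starting from $\sststile{o-1}{(m+1)^d} \tbs(d,o,m)$, each application of Lemma~\ref{lem_cutelimination} decreases $\rho$ by one while sending the ordinal bound $\alpha$ to $2^{\alpha-1} \leq 2^\alpha$. Applying it $o-2$ times (i.e. until $\rho = 1$), we obtain $\sststile{1}{\beta} \tbs(d,o,m)$ where $\beta \leq 2_{o-2}^{(m+1)^d}$ — here I would need to track carefully that iterating $\alpha \mapsto 2^{\alpha-1}$ starting from $(m+1)^d$ gives, after $o-2$ steps, a tower of height $o-2$ topped by roughly $(m+1)^d$; rewriting $(m+1)^d = 2^{d\log(m+1)}$ absorbs one more exponential so that $2_{o-2}^{(m+1)^d} = 2_{o-1}^{d\log(m+1)}$. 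Then the refined \emph{base} cut-elimination lemma (Lemma~\ref{specialized_cutelim}) converts $\sststile{1}{\beta} \tbs(d,o,m)$ directly into $\sststile{0}{\beta} \tbs(d,o,m)$ without paying an extra exponential — this is precisely the trick that saves one level in the tower, as announced before Lemma~\ref{specialized_substitution}. Finally, the Bound lemma (Lemma~\ref{bound}) yields $\norm(\tbs(d,o,m)) \leq \beta \leq 2_{o-1}^{d\log(m+1)} = 2_{\ord(a)-1}^{\depthbs(a)\log(\max(a)+1)}$, which is the claimed bound.

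The main obstacle I anticipate is purely bookkeeping: getting the exponential tower height exactly right. One has to be careful about (i) how many times Lemma~\ref{lem_cutelimination} is applied — the order $\ord(a)$ may be $0$ or $1$, in which case the cut-elimination phase is empty or reduces to a single application of Lemma~\ref{specialized_cutelim}, and the hypothesis $\ord(a) \geq 1$ in the statement must be used to make $o-1$ well-defined; (ii) the loss of a $-1$ in the exponent at each step, which is harmless for the asymptotic bound but must be checked not to underflow; and (iii) the identity $(m+1)^d = 2^{d\log(m+1)}$, which is what lets the ``$+1$'' in the tower subscript $2_{\ord(a)-1}$ (rather than $2_{\ord(a)-2}$) be compensated by folding $(m+1)^d$ into one more exponentiation. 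The refined base case (Lemmas~\ref{specialized_substitution} and~\ref{specialized_cutelim}) is exactly what makes the subscript $\ord(a)-1$ rather than $\ord(a)$ achievable, so invoking it at the right moment — after the general cut-elimination has brought $\rho$ down to $1$, not before — is the delicate point. Modulo this arithmetic, every ingredient is already in place and the proof is a short assembly of Propositions and Lemmas~\ref{prop_merge}, \ref{lem_recomposition}, \ref{lem_cutelimination}, \ref{specialized_cutelim}, and~\ref{bound}.
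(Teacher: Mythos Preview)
Your proposal is correct and follows essentially the same pipeline as the paper's proof: reduce to the thread-like skeleton via Proposition~\ref{prop_merge}, apply the Recomposition lemma to get $\sststile{o-1}{(m+1)^d}$, iterate the Cut elimination lemma $o-2$ times down to $\rho=1$, use the base cut-elimination lemma (Lemma~\ref{specialized_cutelim}) to reach $\rho=0$ without an extra exponential, and conclude with the Bound lemma. Your bookkeeping discussion about the tower arithmetic and the role of the identity $(m+1)^d = 2^{d\log(m+1)}$ is accurate and matches exactly how the paper handles it.
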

\begin{proof} 
Let us set $a' = \tbs(\depth(a), \ord(a), \max(a))$. By definition, we have $\depth(a') = \depth(a)$, $\ord(a') = \ord(a)$ and $\max(a') = \max(a)$.
Moreover by Proposition \ref{prop_merge}, we have:
\[
\norm(a) \leq \norm(a')
\]

Write $o = \ord(a')$, $d = \depth(a')$, $m = \max(a')$. By Lemma \ref{lem_recomposition}, we have 
$\sststile{o - 1}{\Pi_{n \in \nodes(a')} (n+1)} a'$.
But all nodes in $a'$ have the same label $m$ and there are $d$ nodes, therefore $\sststile{o - 1}{(m+1)^d} a'$.
By $o-2$ applications of Lemma \ref{lem_cutelimination} it follows that $\sststile{1}{2_{o - 1}^{d\log(m+1)}} a'$,
so $\sststile{0}{2_{o - 1}^{d\log(m+1)}} a'$ as well by Lemma \ref{specialized_cutelim}. Finally,
$\norm(a) \leq \norm(a') \leq 2_{o-1}^{d\log(m+1)}$ by Lemma \ref{bound}.
\end{proof}

This upper bound is optimal, since it yields bounds on linear head reduction that we will prove optimal
in the next section.

\subsection{On game situations}
Before we conclude this section, let us mention a specialized form of our result of special importance to the previous section.

\begin{thm}
If $n, p \geq 1$ and $d\geq 3$, then $\norm(n[\{d\}p]) \leq 2_{d-3}^{2\frac{p^{n+1} - 1}{p-1} - 1}$.
\label{th:gamesituations}
\end{thm}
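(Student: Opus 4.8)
The plan is to specialize the machinery developed for the general upper bound (Theorem~\ref{thm_upbs}) to the thread-like skeleton $\tbs(d, d, ?)$, but with a sharper accounting. The skeleton $n[\{d\}p]$ has depth $2$, order $d$, and its node multiset is $\{n, p\}$. First I would observe that by Lemma~\ref{lem_recomposition} we immediately get $\sststile{d-1}{(n+1)(p+1)} n[\{d\}p]$, since $\ord(n[\{d\}p]) = d \geq 1$ and $\nodes(n[\{d\}p]) = \{n, p\}$. This is already a valid expanded reduction tree at cut-rank $d-1$; the task is then to iterate cut elimination down to rank $0$ and read off the norm via Lemma~\ref{bound}.

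The naive route applies Lemma~\ref{lem_cutelimination} repeatedly $d-2$ times to drop from rank $d-1$ to rank $1$, then Lemma~\ref{specialized_cutelim} to reach rank $0$; each application of Lemma~\ref{lem_cutelimination} roughly exponentiates the ordinal, giving a tower of height $d-2$ or $d-3$ over something like $(n+1)(p+1)$. But the claimed bound has a tower of height only $d-3$ over $2\frac{p^{n+1}-1}{p-1} - 1 = 2(1 + p + p^2 + \dots + p^n) - 1$, which is essentially $2p^n$ for large $p$ — markedly smaller than $2^{(n+1)(p+1)}$. So the key refinement must be a better analysis of the \emph{first} cut elimination step, or equivalently a better starting ordinal at rank $d-2$. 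I expect the idea is: rather than recomposing both nodes at once, exploit the structure — the node $n$ is at the root and the node $p$ is a leaf at edge-label $d$. Recompose only the leaf $p$ (giving a witness at rank $d-1$ with a smaller ordinal depending on $p$), then handle the root by the admissible rule \textsc{Base'} from the proof of Lemma~\ref{lem_recomposition}, which is free in the ordinal up to an additive $n$. A careful bookkeeping of how \textsc{Red} unfolds $n[\{d\}p]$ — it reduces to $p \cdot_{d-1} (n-1)[\{d\}p]$, and then to further skeletons of the form obtained by repeatedly appending copies of $p$ at decreasing edge-labels — should show that the relevant ordinal at rank $1$ is exactly (or bounded by) $2\frac{p^{n+1}-1}{p-1} - 1$ before the final drop to rank $0$, explaining both the shift from $d-2$ to $d-3$ in the tower height and the geometric-series exponent.

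Concretely, I would: (1) reduce to the thread-like skeleton via Proposition~\ref{prop_merge}, noting $\tbs(2, d, \max(n,p))$ or working directly with $n[\{d\}p]$ since it is already thread-like; (2) establish a sharpened recomposition giving $\sststile{d-2}{\alpha_0} n[\{d\}p]$ with $\alpha_0$ of order $p^n$ rather than $2^{np}$, by inducting on $n$ and tracking how \textsc{Red} at the root peels off the factor involving $p$ via the main substitution lemma (Lemma~\ref{main_substitution}) applied only to the leaf; the geometric sum arises because substituting $b$ with witness ordinal $\beta$ into a context multiplies by $\beta+1$, and iterating $n$ times a substitution of "cost $2p$-ish" yields $\prod$ telescoping to $2\frac{p^{n+1}-1}{p-1}$; (3) apply Lemma~\ref{lem_cutelimination} exactly $d-3$ times (valid since $d \geq 3$) to descend from rank $d-2$ to rank $1$, turning $\alpha_0$ into $2_{d-3}^{\alpha_0 - 1}$ up to the $\log$/exponent adjustments inside the tower; (4) apply Lemma~\ref{specialized_cutelim} to reach rank $0$ at the same ordinal; (5) conclude $\norm(n[\{d\}p]) \leq 2_{d-3}^{2\frac{p^{n+1}-1}{p-1} - 1}$ by Lemma~\ref{bound}.

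The main obstacle will be step~(2): getting the \emph{exact} ordinal $2\frac{p^{n+1}-1}{p-1} - 1$ rather than a loose bound. This requires carefully choosing the context-skeleton decompositions so that the main substitution lemma is invoked with the tightest possible $\beta$ at each of the $n$ stages, and verifying that the \textsc{Red}-unfolding of $n[\{d\}p]$ genuinely telescopes — in particular that appending successive copies of $p$ at edge-labels $d-1, d-2, \dots$ down to $1$ (which is where the order-$d$ structure of the single edge is consumed) contributes the geometric series rather than a cruder product. One must also double-check the off-by-one in the edge-label type bound $d \leq \rho+1$ so that the recomposition is legitimately at rank $d-2$ and not $d-1$; this is exactly what saves one level of exponentiation and must be justified with care. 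The rest (steps 1, 3, 4, 5) is routine assembly of lemmas already proved.
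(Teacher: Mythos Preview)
Your approach is essentially the paper's: establish $\sststile{d-2}{\alpha_0} n[\{d\}p]$ with $\alpha_0 = 1 + \sum_{k=1}^n 2p^k = 2\frac{p^{n+1}-1}{p-1}-1$ by induction on $n$, then apply Lemma~\ref{lem_cutelimination} exactly $d-3$ times, Lemma~\ref{specialized_cutelim}, and Lemma~\ref{bound}. Your identification of the crucial rank shift to $d-2$ (using that the single edge has type $d \leq (d-2)+1$ in the main substitution lemma) is exactly the point.

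One clarification on your step~(2): the telescoping does \emph{not} come from edge labels descending $d-1, d-2, \dots, 1$. The inductive step is simpler. The only reduct of $n[\{d\}p]$ is $p[\{d-1\}(n-1)[\{d\}p]]$; you have $\sststile{d-2}{p} p$ by iterated \textsc{Base'}, and the inductive hypothesis gives a witness for $(n-1)[\{d\}p]$ with ordinal $1+\sum_{k=1}^{n-1}2p^k$. Lemma~\ref{main_substitution} (with $\beta = 1+\sum_{k=1}^{n-1}2p^k$, $\alpha = p$, type $d-1 \leq (d-2)+1$) then yields $\sststile{d-2}{p(2+\sum_{k=1}^{n-1}2p^k)}$ for the reduct, and one \textsc{Red} closes the induction with ordinal $1 + 2p + \sum_{k=2}^{n}2p^k$. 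The geometric series arises from this single multiplication-by-$p$ per inductive step, not from consuming the edge order.
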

\begin{proof}
We show by induction on $n\geq 1$ that $\sststile{d-2}{1 + \Sigma_{k=1}^n 2p^k} n[\{d\}p]$.
For $n=1$, we need to show that $\sststile{d-2}{1 + 2p} 1[\{d\}p]$. By \textsc{Red}, this amounts to
$\sststile{d-2}{2p} 0[\{d\}p]$ (true by \textsc{Base}) and:
\[
\sststile{d-2}{2p}{p[\{d-1\}0[\{d\}p]]}
\]
This follows from Lemma \ref{main_substitution} from $\sststile{d-2}{p} p$ and $\sststile{d-2}{1} 0[\{d\}p]$, where
the former follows from $p$ instances of \textsc{Red} and one \textsc{Base} and the latter follows from \textsc{Base}.

Now, suppose $\sststile{d-2}{1+ \Sigma_{k=1}^{n-1} 2p^k} (n-1)[\{d\}p]$. The skeleton $n[\{d\}p]$ has
one possible reduction:
\[
n[\{d\}p] \leadsto p[\{d-1\} (n-1)[\{d\}p]]
\]
We know by $p$ instances of \textsc{Red} and one of \textsc{Base} that $\sststile{d-2}{p} p$. Therefore by Lemma \ref{main_substitution}:
\[
\sststile{d-2}{p(2 + \Sigma_{k=1}^{n-1} 2p^k)} p[\{d-1\} (n-1)[\{d\}p]]
\]
We also have $\sststile{d-2}{p(2 + \Sigma_{k=1}^{n-1} 2p^k)} (n-1)[\{d\}p]$
by IH and Lemma \ref{monotonicity}. Therefore by \textsc{Red}, $\sststile{d-2}{1 + \Sigma_{k=1}^n 2p^k} n[\{d\}p]$,
which concludes the induction. 

From this, we use as before $d-3$ times Lemma \ref{lem_cutelimination}, Lemma \ref{specialized_cutelim} and Lemma \ref{bound}, we deduce:
\[
\norm(n[\{d\}p]) \leq 2_{d-3}^{1 + \Sigma_{k=1}^n 2p^k}
\]
And finally $1 + \Sigma_{k=1}^n 2p^k = 2\frac{p^{n+1} - 1}{p-1} - 1$, yielding the announced result.
\end{proof}

This result is particularly relevant in \emph{game situations}: when studying the reduction length of one $\eta$-expanded
Böhm tree applied to another. It also provides the answer to the question raised in the previous section about
the possible length of bounded visible pointer structures --- and hence of interactions between bounded strategies.

\begin{rem}
We finish this section by a few remarks on the above result:
\begin{itemize}
\item For $d=2$, it is easy to see that $\norm(n[\{2\}p]) = 2n$.
\item For $d=3$, experiments with an implementation of skeletons and their reductions suggest that, for $n\geq 0$ and
$p\geq 2$, $\norm(n[\{3\}p]) = 2\frac{p^n-1}{p-1}$.
This quantity is $\Theta(2^{(n-1)\log(p)})$ whereas our general bound predicts $\Theta(2^{n\log(p)})$. They differ 
but do match up to an exponential, being both of the form $2^{\Theta(n\log(p))}$. 
\end{itemize}
In fact for any $d\geq 3$ we have $\norm(n[\{d\}p]) = 2_{d-2}^{\Theta(n\log(p))}$.
The upper bound is our theorem above, and the lower bound is provided by the reduction on skeletons corresponding
to the visible pointer structures used in the proof of Theorem \ref{th:boundvps}. So, in this sense our result is optimal on game
situations, just as the upper bound of Theorem \ref{thm_upbs} will appear later to be optimal in the general situation.
\end{rem}

\section{Skeletons and linear head reduction}
\label{sec:lhr}

Although it is generally understood that game-theoretic interaction (underlying interaction skeletons) has a strong
operational content, the game-theoretic toolbox lacks results making this formal. One notable exception is the
result of Danos, Herbelin and Regnier \cite{dhr} already mentioned, which describes a step-by-step
correspondence between the linear head reduction sequence of a game situation $M~N_1~\dots~N_n$
(where $M, N_1, \dots, N_n$ are $\beta$-normal and $\eta$-long) and the interaction
of the corresponding strategies. Along with Theorem \ref{th:gamesituations}, this connection suffices to immediately deduce
an (optimal) upper bound for the length of reduction sequences on game situations.
However, this reasoning has two drawbacks. Firstly it is rather indirect: the link it provides between linear head
reduction and interaction skeletons, two relatively simple combinatorial objects, is obfuscated by the 
variety of mathematical notions involved. Indeed this connection requires elaborate 
semantic notions such as visible strategies and pointer structures, and third party results such as the (very technical)
result of \cite{dhr}. Secondly it only covers game situations, and it is not clear how to obtain from that general
results on arbitrary terms.

In this final section we address these two points and proceed to analyse the direct connection between interaction
skeletons and syntactic reduction. This study culminates in optimal upper bounds to the length of linear head reduction
sequence on arbitrary simply-typed $\lambda$-terms. This requires us, on the one hand, to construct a generalization
of game situations whose reduction follows the combinatorics of interaction skeletons and, on the other hand, to show that
one can \emph{compile} arbitrary terms into these generalized situations in a way allowing us to obtain our upper bounds.
%
%
%
%
%

In Subsection \ref{subsec:lhr} we give the definition of linear head reduction and prove some basic properties. In
Subsection \ref{subsec:ggs} we define and study generalized game situations, and in Subsection \ref{subsec:simggs} 
we prove the technical core of this section: the fact that lhr on generalized game situations can be simulated within
interaction skeletons. Finally, Subsections \ref{subsec:bounds} and \ref{subsec:general} are devoted to dealing
respectively with $\eta$-expansion and with $\lambda$-lifting in order to compile arbitrary terms to generalized
game situations and deduce our results.

\subsection{Linear head reduction}
\label{subsec:lhr}

We start by recalling the definition of linear head reduction and proving some basic properties that
are folklore, but to our knowledge unpublished under this formulation. Our notion of linear head reduction
follows \cite{dhr}. We use
it rather than the more elegant approach of Accattoli \cite{DBLP:conf/rta/Accattoli12} because we believe it yields a more direct
relationship with games. Indeed the \emph{multiplicative} reductions of Accattoli's calculus have no counterpart in
games/skeletons, which only take into account the variable substitutions.

\subsubsection{Definition of linear head reduction}

This work focuses strongly on \emph{linear substitution}, for which only one variable occurrence is substituted at a time.
In this situation, it is convenient to have a distinguished notation for particular \emph{occurrences} of variables.
We will use the notations $x_0, x_1, \dots$ to denote
particular occurrences of the same variable $x$ in a term $M$. When in need of additional variable identifiers, we will use
$x^1, x^2, \dots$. Sometimes, we will still denote occurrences of $x$ by just $x$ when their index is not relevant.
If $x_0$ is a specific
occurrence of $x$, we will use $M[N/x_0]$ for the substitution of $x_0$ by $N$, leaving all other occurrences of $x$ unchanged.

Intuitively, lhr proceeds as follows. We first locate the head variable occurrence, \emph{i.e.} the leftmost variable
occurrence in the term $M$. Then we locate the abstraction, if any, that binds this variable. Then we locate (again if it exists) the
subterm $N$ of $M$ in argument position for that abstraction, and we substitute the head occurrence by $N$. We touch neither the other
occurrences of $x$ nor the redex. It is worth noting that locating the argument subterm can be delicate, as it is not necessarily part
of a $\beta$-redex. For instance in $(\lambda y^A.~(\lambda x^B.~x_0 M)) N_1 N_2$,
we want to replace $x_0$ by $N_2$, even though $N_2$ is not directly applied to $\lambda x^B.~x_0 M$. Therefore, the notion
of redex will be generalized. 

Note that a term is necessarily of the form $\daimon~M_1~\dots~M_n, x_0~M_1~\dots~M_n, \lambda x.~M$ or $(\lambda x.~M)~M_1~\dots~M_n$. This
will be used quite extensively to define and reason on lhr. The \textbf{length} of a term $M$ is the number of characters in $M$,
\emph{i.e.} $\length(\daimon) = 1, \length(x_0) = 1, \length(\lambda x.~M) = \length(M) + 1, \length(M_1~M_2) = \length(M_1) + \length(M_2)$.
Its \textbf{height} is $\h(\daimon) = 0, \h(x_0) = 1, \h(\lambda x.~M) = \h(M), \h(M_1~M_2) = \max(\h(M_1), \h(M_2) + 1)$.

\begin{defi}
Given a term $M$, we define its set of \textbf{prime redexes}. They are written as pairs $(\lambda x, N)$ where
$N$ is a subterm of $M$, and $\lambda x$ is used to denote the (if it exists, necessarily unique by Barendregt's convention) subterm of $M$ of the form
$\lambda x.~N'$. We define the prime redexes of $M$ by induction on its length, distinguishing several cases depending on the form of $M$.
\begin{itemize}
\item If $\daimon~M_1~\dots~M_n$ has no prime redex,
\item If $x_0~M_1~\dots~M_n$ has no prime redex,
\item The prime redexes of $\lambda x.~M'$ are those of $M'$,
\item The prime redexes of $(\lambda x.~M')~M_1~\dots~M_n$, are $(\lambda x, M_1)$ plus those of $M'~M_2~\dots~M_n$.
\end{itemize}
\end{defi}

The \textbf{head occurrence} of a term $M$ is the leftmost occurrence of a variable or constant in $M$. 
If $(\lambda x, N)$ is a prime redex of $M$ whose head occurrence is an occurrence $x_0$ of the variable $x$,
then the \textbf{linear head reduct} of $M$ is $M' = M[N/x_0]$. We write $M {\to_\lhr} M'$.

\begin{exa}
As an example, we give the lhr sequence of the term $(\lambda f.~\lambda x.~f~(f~x))~(\lambda y.~y)~\daimon$.
\begin{eqnarray*}
(\lambda f.~\lambda x.~f~(f~x))~(\lambda y.~y)~\daimon   
&\to_\lhr& (\lambda f.~\lambda x.~(\lambda z.~z)~(f~x))~(\lambda y.~y)~\daimon\\
&\to_\lhr& (\lambda f.~\lambda x.~(\lambda z.~f~x)~(f~x))~(\lambda y.~y)~\daimon\\
&\to_\lhr& (\lambda f.~\lambda x.~(\lambda z.~(\lambda u.~u)~x)~(f~x))~(\lambda y.~y)~\daimon\\
&\to_\lhr& (\lambda f.~\lambda x.~(\lambda z.~(\lambda u.~x)~x)~(f~x))~(\lambda y.~y)~\daimon\\
&\to_\lhr& (\lambda f.~\lambda x.~(\lambda z.~(\lambda u.~\daimon)~x)~(f~x))~(\lambda y.~y)~\daimon
\end{eqnarray*}
At this point the reduction stops since the head occurrence is a constant.
\label{ex_lhr}
\end{exa}

Given a term $M$, we overload the notation $\norm$ and write $\norm(M)$ for the length of the lhr sequence of $M$. 
It is straightforward to see that lhr is compatible with $\beta$-reduction,
in the sense that if $M \to_\lhr M'$ we have $M \equiv_\beta M'$.
Since redexes for lhr are not necessarily $\beta$-redexes, it will be necessary to consider the
following generalization of redexes: 

\begin{defi}[Generalized redex]
The \textbf{generalized redexes} of a term $M$ are the prime redexes of all subterms of $M$.
%
In particular, all prime redexes are generalized redexes.
\end{defi}

\begin{exa}
Consider the following $\lambda$-term:
\[
M = (\lambda x.~x)~((\lambda y.~(\lambda z.~u))~v~w)
\]
The only prime redex of $M$ is $(\lambda x, (\lambda y.~(\lambda z.~u))~v~w)$. The two other generalized
redexes are $(\lambda y, v)$, which is also a $\beta$-redex, and $(\lambda z, w)$, which is not.
\end{exa}

\subsubsection{Relating $\lhr$ with $\beta$-reduction}
\label{sec:betalhr}
Before we go on to relating $\lhr$ with $\beta$-reduction, let us include some preliminary investigations on its basic
properties, and in particular how it relates to $\beta$-reduction.
Consider $\lambda$-terms temporarily extended by a term $\ensquare{M}$, for each term $M$.
Those boxes can be opened when in head position.
Formally, a head context is:
\[
H[] = [] \mid H[]~N \mid \lambda x.~H[]
\]
The additional reduction rule for opening boxes is then (where $H$ is a head context):
\[
H[\ensquare{M}] \leadsto_{pop} H[M]
\]

We extend lhr in the presence of $\square$ by defining the prime redexes of $M$ by induction on its length,
as follows:
\begin{itemize}
\item $\daimon~M_1~\dots~M_n$, $x_0~M_1~\dots~M_n$ and $\ensquare{M'}~M_1~\dots~M_n$ have no prime redex,
\item The prime redexes of $\lambda x.~M'$ are those of $M'$,
\item The prime redexes of $(\lambda x.~M')~M_1~\dots~M_n$ are $(\lambda x, M_1)$ plus those of $M'~M_2~\dots~M_n$.
\end{itemize}
Writing $\square^{-1}(M)$ for the term obtained from $M$ by removing all boxes, we see immediately that the boxes can
only block prime redexes, so prime redexes of $M$ are always in $\square^{-1}(M)$.

Writing $M\to_{\square^{-1}} \square^{-1}(M)$, we also notes that within this calculus, $\beta$-reduction on standard terms
decomposes in two steps:
\begin{enumerate}[label=\arabic*.]
\item[1.] \emph{Substitution ($\beta\square$)} $(\lambda x.~M) N \leadsto_{\beta\square} M[\ensquare{N}/x]$
\item[2.] \emph{Unboxing. ($\square^{-1}$)} In one step, we remove all boxes in $M$.
\end{enumerate}
Intuitively, boxes introduce a \emph{delay} in the processing of $\beta$-reduction, and this delay corresponds exactly
to the delay of lhr with respect to head $\beta$-reduction. Here, we exploit this fact to compare the
two of them.
In this Section \ref{sec:betalhr} (but only in this section), terms are in this extended language unless otherwise specified.
If $M$ does not contain any boxes, we say it is a \textbf{standard} $\lambda$-term.

\begin{lem}
If $M, M'$ are standard terms such that $M \to_{\lhr} M'$, $N$ is a standard term, and the head occurrence $x_0$ of $M$ is
not an occurrence of $y$, then $M[\ensquare{N}/y] \to_{\lhr} M'[\ensquare{N}/y]$.
\label{lem:lhrsubstsq}
\end{lem}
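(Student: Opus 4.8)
The statement is a substitutivity lemma for linear head reduction: substituting a boxed standard term $\ensquare{N}$ for a variable $y$ that is \emph{not} the head occurrence commutes with a single lhr step. The plan is to proceed by induction on the length of $M$, following the case analysis on the shape of $M$ that is built into the definition of prime redexes ($M$ is one of $\daimon~M_1\dots M_n$, $x_0~M_1\dots M_n$, $\lambda x.~M'$, or $(\lambda x.~M')~M_1\dots M_n$). Since the head occurrence $x_0$ is assumed not to be an occurrence of $y$, the first two cases are either vacuous (no prime redex, so no lhr step $M\to_\lhr M'$ is possible) or — in the $x_0~M_1\dots M_n$ case — also vacuous for the same reason. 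Abstraction is immediate by the induction hypothesis since prime redexes of $\lambda x.~M'$ are exactly those of $M'$, and head occurrence and substitution both commute with the outer $\lambda$.

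The real work is the application case $M = (\lambda x.~M')~M_1~\dots~M_n$. Here I would first note that by Barendregt's convention $x \neq y$ and $x \notin \fv(N)$, so substitution distributes: $M[\ensquare{N}/y] = (\lambda x.~M'[\ensquare{N}/y])~M_1[\ensquare{N}/y]~\dots~M_n[\ensquare{N}/y]$. The prime redexes of $M$ are $(\lambda x, M_1)$ together with those of $M'~M_2~\dots~M_n$. The lhr step on $M$ uses the prime redex $(\lambda x', N')$ whose head occurrence equals the head occurrence $x_0$ of $M$ — which is the head occurrence of $M'$ (recursively, of $M'~M_2~\dots~M_n$). Two sub-cases: (i) if the operative prime redex is $(\lambda x, M_1)$, i.e. $x_0$ is the head occurrence bound by the outer $\lambda x$, then $M' = \lambda x.\ (\text{something with head }x_0)$ is not literally the shape, rather $x_0$ occurs in head position inside $M'$; the reduct replaces $x_0$ by $M_1$, and after applying $[\ensquare{N}/y]$ the corresponding step on $M[\ensquare{N}/y]$ replaces that same $x_0$ by $M_1[\ensquare{N}/y]$ — these agree because $x_0 \neq y$ means the head occurrence is untouched by the substitution, and the substituted-in term is exactly the $[\ensquare{N}/y]$-image of $M_1$. (ii) Otherwise the operative prime redex comes from $M'~M_2~\dots~M_n$, and I apply the induction hypothesis to this shorter term (noting its head occurrence is still $x_0$, hence still not $y$), then re-wrap with $\lambda x.$ and the argument $M_1$ (resp.\ fold $M_2\dots M_n$ back into $M'$), using again that $[\ensquare{N}/y]$ commutes with forming the application $(\lambda x.-)M_1$.

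Throughout, the two facts doing the work are: (a) substituting for $y\neq x_0$ leaves the head occurrence of $M$ in place and does not create or destroy any prime redex structure — boxes only \emph{block} prime redexes and we are adding a box, but at an occurrence that is not in head position, so the head prime redex is unaffected; and (b) the subterm $N'$ selected by a prime redex satisfies $N'[\ensquare{N}/y]$ being the subterm selected by the "same" prime redex in $M[\ensquare{N}/y]$. Point (b) requires a small bookkeeping argument that substitution commutes with the inductive identification of prime redexes, which I would fold into the induction rather than isolate.

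I expect the main obstacle to be purely notational: tracking that "the head occurrence of $M$" and "the prime redex whose head occurrence it is" transport correctly under $[\ensquare{N}/y]$, especially in sub-case (i) where the relevant $\lambda x$ is the \emph{outer} binder and one must see that $M[N/x_0][\ensquare{N'}/y]$ on the nose equals $M[\ensquare{N'}/y][M_1[\ensquare{N'}/y]/x_0]$ — a commutation of two substitutions at disjoint occurrences. There is no conceptual difficulty, only the need to be careful that nothing collides, which Barendregt's convention guarantees.
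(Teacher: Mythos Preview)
Your proposal is correct and follows exactly the approach the paper takes: the paper's proof is the single line ``Straightforward by induction on the length of $M$,'' and you have spelled out precisely that induction with the expected case split on the shape of $M$ and the two sub-cases in the application case. The only thing to note is that the paper regards all the bookkeeping you carefully isolate (commutation of the two substitutions, preservation of the head occurrence and prime-redex structure under $[\ensquare{N}/y]$) as routine enough not to warrant comment.
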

\begin{proof}
Straightforward by induction on the length of $M$.
\end{proof}

\begin{lem}
Suppose $M$ is a standard term with a head occurrence $x_0$ of a variable $x$, and that $M \to_{\beta\square} M'$.
Then for any standard term $N$ such that no free variable of $N$ is bound in $M$, we also have
$M[N/x_0] \to_{\beta\square} M'[N/x_0]$.
\label{lem:betasquaresubst}
\end{lem}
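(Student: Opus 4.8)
The plan is to argue by induction on $\length(M)$, distinguishing the four possible shapes of $M$ recalled earlier and, within each, the position of the $\beta\square$-redex contracted in $M \to_{\beta\square} M'$. Throughout, the head occurrence $x_0$ is carried along as the head occurrence of whichever immediate subterm holds the spine, and one checks that the single-occurrence substitution $[N/x_0]$ leaves the contracted redex in place (up to propagating it into the reduct). The side condition that no free variable of $N$ is bound in $M$ is exactly what makes all the substitutions below capture-avoiding and commuting.

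First the routine cases. If $M = \daimon~M_1 \dots M_n$ the head occurrence is the constant $\daimon$, contradicting the hypothesis, so this case is vacuous. If $M = \lambda z.~M_0$, the $\beta\square$-step is one on $M_0$, the head occurrence of $M$ is that of $M_0$, and since $z$ is bound in $M$ we have $z \notin \fv(N)$; the induction hypothesis applied to the shorter term $M_0$ gives $M_0[N/x_0] \to_{\beta\square} M_0'[N/x_0]$, and wrapping both sides under $\lambda z$ yields the claim. If $M = x_0~M_1 \dots M_n$ (head a variable), any contracted redex sits inside some $M_i$ and is untouched by $[N/x_0]$, while $M[N/x_0] = N~M_1 \dots M_n$, so the reduction persists; and if $M = (\lambda z.~M_0)~M_1 \dots M_n$ with the contracted redex strictly inside $M_0$ or inside some $M_i$, one concludes either by the induction hypothesis on $M_0$ or, for the $M_i$ case, directly, since $[N/x_0]$ acts only on the spine in $M_0$.

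The crucial case is $M = (\lambda z.~M_0)~M_1 \dots M_n$ with the contracted redex being the head redex $(\lambda z.~M_0)~M_1$, so $M' = M_0[\ensquare{M_1}/z]~M_2 \dots M_n$ and $x_0$ is the head occurrence of $M_0$. As $z$ is bound in $M$, $z \notin \fv(N)$, hence $M[N/x_0] = (\lambda z.~M_0[N/x_0])~M_1 \dots M_n$, and contracting its head redex gives $M_0[N/x_0][\ensquare{M_1}/z]~M_2 \dots M_n$. It then remains to check the commutation $M_0[N/x_0][\ensquare{M_1}/z] = M_0[\ensquare{M_1}/z][N/x_0]$, whose right-hand side is $M'[N/x_0]$; this is a routine substitution lemma using $z \notin \fv(N)$ and the fact that the occurrence $x_0$ survives, as its residual, in $M_0[\ensquare{M_1}/z]$. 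The one genuinely delicate point — and the main obstacle — is the subcase where the head-redex binder $z$ coincides with $x$, so that the head occurrence $x_0$ is itself consumed by the step, becoming the box $\ensquare{M_1}$ in $M'$: there one must read $M'[N/x_0]$ as replacing that residual box by $N$, after which the commutation above still holds and the argument goes through; alternatively, if the lemma is only invoked with $x$ distinct from the head-redex binder — which is the situation in its intended use — this subcase does not arise. Either way, the only real work is the careful bookkeeping of occurrences and residuals through the two substitutions; everything else is unfolding the definitions of $\beta\square$ and of linear-head substitution.
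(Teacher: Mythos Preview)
Your approach—induction on $\length(M)$ with case analysis on the shape of $M$ and on the position of the contracted $\beta\square$-redex—is exactly the paper's (which records only ``Straightforward by induction on $M$''), and your case-by-case treatment is correct.

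The subtlety you flag, where the contracted redex has binder equal to $x$ so that the occurrence $x_0$ is consumed by the step, is a genuine gap in the lemma as literally stated; the paper simply does not mention it. Of your two proposed resolutions, the second is the right one: in the lemma's only use (inside the proof of Lemma~\ref{lem:besquare_lhr}) it is applied to the subterm $M'$ of $(\lambda x.~M')~M_1\dots M_n$, so $x$ is free in the term to which the lemma is applied, and by Barendregt's convention it is then distinct from every binder occurring there—the problematic subcase cannot arise. Your first resolution, reinterpreting $M'[N/x_0]$ as replacing the residual box, is not a clean reading of the notation: take $M=(\lambda x.~x)~P$, so $M'=\ensquare{P}$ contains no occurrence named $x_0$, while $M[N/x_0]=(\lambda x.~N)~P \to_{\beta\square} N$, and no sensible reading of $M'[N/x_0]$ yields $N$ in general. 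Drop that option and simply record the implicit freeness assumption on $x$.
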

\begin{proof}
Straightforward by induction on $M$. 
\end{proof}

Write $\to_{\lhr \vee pop}$ for exactly one reduction step of, either $\lhr$, or $pop$.

\begin{lem}
If $M$ is a standard $\lambda$-term with
$M \to_{\beta\square} N$ and $M \to_{\lhr} M'$, then there is $N \to_{\lhr\vee pop} N'$ such that
$M' \to_{\beta\square}^* N'$. Moreover, if $M \to_{\beta\square} N$ is not a weak head reduction (so if it not a head reduction or
operates under a lambda), none of the $M' \to_{\beta\square}^* N'$ is a weak head reduction.
\label{lem:besquare_lhr}
\end{lem}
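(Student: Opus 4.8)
The plan is to proceed by induction on the length of $M$, following the case analysis on the shape of the head of $M$, and using the decomposition of terms as $\daimon~M_1~\dots~M_n$, $x_0~M_1~\dots~M_n$, $\lambda x.~M''$, or $(\lambda x.~M'')~M_1~\dots~M_n$. The subtlety is that the $\beta\square$-redex contracted in $M \to_{\beta\square} N$ may be disjoint from, nested inside, or interact with, the prime redex used for the lhr step $M \to_{\lhr} M'$; so I would organize the proof around \emph{where} the contracted $\beta\square$-redex sits relative to the head occurrence and its binding prime redex.

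First, I would dispose of the easy cases. If $M = \lambda x.~M''$, the reduction simply commutes with the abstraction and the result follows from the induction hypothesis. If $M$ has the form $(\lambda x.~M'')~M_1~\dots~M_n$ and the contracted $\beta\square$-redex \emph{is} the head redex $(\lambda x.~M'')~M_1$, then $N = (M''[\ensquare{M_1}/x])~M_2~\dots~M_n$; here one needs to track what the lhr step on $M$ becomes on $N$, and this is precisely where Lemma~\ref{lem:lhrsubstsq} (commutation of lhr with boxed substitution, when the head occurrence is not the substituted variable) or Lemma~\ref{lem:betasquaresubst} (commutation of $\beta\square$ with linear substitution) comes in. If the head occurrence $x_0$ of $M$ happens to be an occurrence of $x$ — the variable just boxed — then the lhr step on $M$ replaced $x_0$ by (a prefix subterm arising from) $M_1$, and on $N$ the corresponding occurrence is now $\ensquare{M_1}$ in head position, so $N \to_{pop} N'$ and one checks $M' \to_{\beta\square}^* N'$ using Lemma~\ref{lem:betasquaresubst}; this is the case that forces the $pop$ alternative in $\to_{\lhr \vee pop}$, and also the case that may require \emph{several} $\beta\square$-steps $M' \to_{\beta\square}^* N'$ rather than one.

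The main work is the case where the $\beta\square$-redex lies strictly inside one of the arguments $M_i$ (or inside $M''$ without being the head redex), so that the lhr step and the $\beta\square$-step are in "parallel" positions. Here I would peel off the head structure, apply the induction hypothesis to the smaller term (e.g. $M''~M_2~\dots~M_n$ in the application case, noting the prime redexes of $M$ are $(\lambda x, M_1)$ plus those of $M''~M_2~\dots~M_n$), and then re-assemble, using Lemma~\ref{lem:lhrsubstsq} to push the resulting lhr step back through the boxed substitutions that the outer structure introduces, and Lemma~\ref{lem:betasquaresubst} symmetrically. The "moreover" clause — that if $M \to_{\beta\square} N$ operates under a lambda or is not a head reduction then none of the steps $M' \to_{\beta\square}^* N'$ is a weak head reduction — should be maintained as an invariant through every case: in the disjoint cases the contracted redex stays in the same structural position (under a lambda, or to the right of the head), and in the cases where we recursively descend, the inductive statement already carries this refinement, so it is preserved by the re-assembly. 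I expect the bookkeeping of this invariant, together with the case where the head occurrence is the freshly boxed variable (requiring $pop$ and possibly a multi-step $\beta\square$ trailing reduction), to be the main obstacle; the rest is a routine, if somewhat tedious, structural induction leaning on the two substitution lemmas.
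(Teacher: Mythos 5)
Your plan matches the paper's proof essentially step for step: induction on the length of $M$, case analysis on where the contracted $\beta\square$-redex sits relative to the head occurrence and its prime redex, the $pop$ step arising exactly when the head occurrence is the freshly boxed variable, Lemmas~\ref{lem:lhrsubstsq} and~\ref{lem:betasquaresubst} deployed in the same places, and the weak-head invariant threaded through the induction. The only minor misattribution is that the genuinely multi-step case of $M' \to_{\beta\square}^*N'$ is not the $pop$ case (one head $\beta\square$-step suffices there) but the case where the $\beta\square$-step occurs inside the argument $M_1$ that the lhr step has just linearly copied, forcing two parallel $\beta\square$-steps.
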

\begin{proof}
By induction on the length of $M$, detailing only the case $M = (\lambda x.~M')~M_1~\dots~M_n$.
\begin{itemize}
\item If $M = (\lambda x.~M')~M_1~\dots~M_n$ where the head occurrence $x_0$ is an occurrence of $x$ and the $\beta\square$-reduction is:
\[
(\lambda x.~M')~M_1~\dots~M_n \to_{\beta\square} M'[\ensquare{M_1}/x]~M_2~\dots~M_n
\]
In $M'[\ensquare{M_1}/x]~M_2~\dots~M_n$, one copy of $\ensquare{M_1}$ has replaced the occurrence $x_0$, so there is $\ensquare{M_1}$ in
head position. It follows that:
\[
M'[\ensquare{M_1}/x]~M_2~\dots~M_n \to_{pop} M'[M_1/x_0][\ensquare{M_1}/x]~M_2~\dots~M_n
\]
But we also have
$(\lambda x.~M'[M_1/x_0])~M_1~\dots~M_n \to_{\beta\square} M'[M_1/x_0][\ensquare{M_1}/x]~M_2~\dots~M_n$,
so we set $N' = M'[M_1/x_0][\ensquare{M_1}/x]~M_2~\dots~M_n$.
\item If $M = (\lambda x.~M')~M_1~\dots~M_n$ whose head occurrence $x_0$ is an occurrence of $x$, and the $\beta\square$-reduction
is inside $M'$. Then no free variable in $M_1$ can be bound in $\lambda x.~M'$ by Barendregt's convention. Therefore,
by Lemma \ref{lem:betasquaresubst}, we have
\[
M'[M_1/x_0] \to_{\beta\square} M''[M_1/x_0]
\]
therefore $(\lambda x.~M'[M_1/x_0])~M_1~\dots~M_n \to_{\beta\square} (\lambda x.~M''[M_1/x_0])~M_1~\dots~M_n$.
\item If $M = (\lambda x.~M')~M_1~\dots~M_n$ where the head occurrence $x_0$ is an occurrence of $x$, and the $\beta\square$-reduction is
$M_1 \to_{\beta\square} M'_1$. Then we have:
\[
(\lambda x.~M'[M_1/x_0])~M_1~\dots~M_n \to_{\beta\square^2} (\lambda x.~M'[M'_1/x_0])~M'_1~\dots~M_n
\]
providing the necessary commutation.
\item If $M = (\lambda x.~M')~M_1~\dots~M_n$ where the head occurrence $x_0$ is an occurrence of $x$, and the $\beta\square$-reduction
is $M_i \to_{\beta\square} M'_i$ with $i \geq 2$, then the commutation is trivial.
\item If $M = (\lambda y.~M')~M_1~\dots~M_n$ where the head occurrence $x_0$ is not an occurrence of $y$, and the $\beta\square$-reduction is:
\[
(\lambda y.~M')~M_1~\dots~M_n \to_{\beta\square} M'[\ensquare{M_1}/y]~M_2~\dots~M_n
\]
Necessarily, we have
$M'~M_2~\dots~M_n \to_{\lhr} M''~M_2~\dots~M_n$.
By Lemma \ref{lem:lhrsubstsq}, it follows that
$(M'~M_2~\dots~M_n)[\ensquare{M_1}/y] \to_{\lhr} (M''~M_2~\dots~M_n)[\ensquare{M_1}/y]$.
Since $y$ can only appear in $M'$ and $M''$ by Barendregt's convention, it follows that:
\[
M'[\ensquare{M_1}/y]~M_2~\dots~M_n \to_{\lhr} M''[\ensquare{M_1}/y]~M_2~\dots~M_n
\]
which provides the required commutation.
\item If $M = (\lambda y.~M')~M_1~\dots~M_n$ where the head occurrence $x_0$ is not an occurrence of $y$, and the $\beta\square$-reduction
is $M' \to_{\beta\square} M''$, it follows directly from IH.
\item If $M = (\lambda y.~M')~M_1~\dots~M_n$ where the head occurrence $x_0$ is not an occurrence of $y$, and the $\beta\square$-reduction
is $M_1 \to_{\beta\square} M'_1$, trivial.
\item If $M = (\lambda y.~M')~M_1~\dots~M_n$ where the head occurrence $x_0$ is not an occurrence of $y$, and the $\beta\square$-reduction
is $M_i \to_{\beta\square} M'_i$ with $i\geq 2$. By definition of $\lhr$:
\[
M'~M_2~\dots~M_n \to_{\lhr} M''~M_2~\dots~M_n
\]
but we also have $M'~M_2~\dots~M_n \to_{\beta\square} M'~M_2~\dots~M'_i~\dots~M_n$. By IH, there is
$
M'~M_2~\dots~M'_i~\dots~M_n \to_{\lhr} N
$
such that $M''~M_2~\dots~M_n \to_{\beta\square}^* N$. But necessarily, $N$ must have the form $S~M_2~\dots~M'_i~\dots~M_n$
since we have $M'~M_2~\dots~M'_i~\dots~M_n \to_{\lhr} N$. So, $M''~M_2~\dots~M_n \to_{\beta\square}^* S~M_2~\dots~M'_i~\dots~M_n$.
But the original $\beta\square$-reduction was not a weak head reduction, so by IH none of the
reductions in $\to_{\beta\square}^*$ is. So they must all either operate inside of $M''$ or $M_i$: in other words
$M'' \to_{\beta\square}^* S$ and $M_i \to_{\beta\square}^* M'_i$. It follows that:
$(\lambda y.~M'')~M_1~\dots~M_n \to_{\beta\square}^* (\lambda y.~S)~M_1~\dots~M'_i~\dots~M_n$,
providing the required commutation.\qedhere
\end{itemize}
\end{proof}

\begin{lem}
Suppose $M \to_{\lhr \vee pop} N$, $M \to_{\square^{-1}} M'$ and $N \to_{\square^{-1}} N'$. Then,
\begin{itemize}
\item If $M \to_{pop} N$, then $M' = N'$.
\item If $M \to_{\lhr} N$, then $M' \to_{\lhr} N'$.
\end{itemize}
\label{lem:unboxing}
\end{lem}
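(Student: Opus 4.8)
The plan is to prove both items by a routine structural induction, the first being essentially trivial. A $\to_{pop}$ step merely deletes a single box (sitting under a head context), whereas $\to_{\square^{-1}}$ deletes \emph{all} boxes; hence $\square^{-1}(M)$ and $\square^{-1}(N)$ coincide. Formally this is an immediate induction on the head context $H$ with $M = H[\ensquare{P}]$ and $N = H[P]$, or simply the identity $\square^{-1} = \square^{-1}\circ{\to_{pop}}$ on terms.

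For the second item I would first isolate the two facts that drive the argument, both immediate from the definitions. First, unboxing commutes with single-occurrence substitution: $\square^{-1}(M[P/x_0]) = \square^{-1}(M)[\square^{-1}(P)/x_0]$, where on the right $x_0$ denotes the image in $\square^{-1}(M)$ of the occurrence $x_0$ of $M$ --- unboxing never erases a variable occurrence, so this image is well defined even when $x_0$ was hidden inside a box. Second, $\square^{-1}$ preserves the $\lambda$-binding skeleton and the head occurrence; consequently, since boxes can only block prime redexes (as already noted), whenever a prime redex of $M$ binds the head occurrence of $M$, the corresponding prime redex of $\square^{-1}(M)$ binds the head occurrence of $\square^{-1}(M)$, with argument the $\square^{-1}$-image of the original argument.

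The second item then follows by induction on $\length(M)$, following the case split of the definition of prime redexes. If $M$ has the form $\daimon~M_1\dots M_n$, $x_0~M_1\dots M_n$ or $\ensquare{M^*}~M_1\dots M_n$ it has no prime redex, hence no $\to_\lhr$ step, so there is nothing to prove. If $M = \lambda x.~M^*$, then $M^* \to_\lhr N^*$ with $N = \lambda x.~N^*$; as the prime redexes and head occurrence of $\lambda x.~M^*$ are those of $M^*$, the IH together with $\square^{-1}(\lambda x.~M^*) = \lambda x.~\square^{-1}(M^*)$ gives $\square^{-1}(M) \to_\lhr \square^{-1}(N)$. If $M = (\lambda x.~M^*)~M_1\dots M_n$, let $x_0$ be the head occurrence and $(\lambda y, P)$ the firing prime redex, so $N = M[P/x_0]$. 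When $y=x$, we have $P = M_1$ and $x_0$ is an occurrence of $x$ lying in $M^*$; by the facts above, $\square^{-1}(M) = (\lambda x.~\square^{-1}(M^*))~\square^{-1}(M_1)\dots\square^{-1}(M_n)$ has head occurrence $x_0$, still an occurrence of $x$, with firing prime redex $(\lambda x, \square^{-1}(M_1))$, so $\square^{-1}(M) \to_\lhr \square^{-1}(M)[\square^{-1}(M_1)/x_0] = \square^{-1}(M[M_1/x_0]) = \square^{-1}(N)$ by the commutation fact. When $y \neq x$, the redex $(\lambda y, P)$ is a prime redex of $M^*~M_2\dots M_n$, which has the same head occurrence $x_0$ (lying in $M^*$); thus $M^*~M_2\dots M_n \to_\lhr M^*[P/x_0]~M_2\dots M_n$ and $N = (\lambda x.~M^*[P/x_0])~M_1\dots M_n$. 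Since $\length(M^*~M_2\dots M_n) < \length(M)$, the IH yields $\square^{-1}(M^*)~\square^{-1}(M_2)\dots \to_\lhr \square^{-1}(M^*)[\square^{-1}(P)/x_0]~\square^{-1}(M_2)\dots$ via a prime redex binding $x_0$ with argument $\square^{-1}(P)$. This is also a prime redex of $\square^{-1}(M)$ (whose prime redexes are $(\lambda x, \square^{-1}(M_1))$ together with those of $\square^{-1}(M^*)~\square^{-1}(M_2)\dots$) and still binds $x_0 \neq x$, so $\square^{-1}(M) \to_\lhr (\lambda x.~\square^{-1}(M^*)[\square^{-1}(P)/x_0])~\square^{-1}(M_1)\dots = \square^{-1}(N)$, again by the commutation fact.

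The only point calling for care --- and the one I expect to be the main obstacle to a fully formal write-up --- is the occurrence-and-binder bookkeeping across $\square^{-1}$: in particular, that the head occurrence of $M$, which in the extended calculus may sit inside a box, nonetheless survives unboxing as the head occurrence of $\square^{-1}(M)$, and that single-occurrence substitution at such a (possibly boxed) occurrence is handled uniformly by the commutation identity. Once those are pinned down, each case above is a one-line check.
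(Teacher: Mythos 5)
Your proof is correct, and it fleshes out exactly the routine induction the paper has in mind — the paper's own proof of this lemma is simply the word ``Straightforward.'' The two facts you isolate (unboxing commutes with single-occurrence substitution, and unboxing preserves the head occurrence and maps prime redexes to prime redexes) are precisely the right bookkeeping to make the ``straightforward'' claim precise.
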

\begin{proof}
Straightforward.
\end{proof}

\begin{prop}
Linear head reduction terminates on standard terms.
\label{prop:lhr_term}
\end{prop}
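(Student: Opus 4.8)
The plan is to argue by contradiction. Suppose there were an infinite linear head reduction sequence $M_0 \to_{\lhr} M_1 \to_{\lhr} M_2 \to_{\lhr} \cdots$ of standard terms. I would extract from it an infinite $\beta$-reduction sequence issued from $M_0$, which is impossible since simply-typed $\lambda$-terms are strongly normalising. All the work takes place in the extended calculus with boxes. The only facts about $\square^{-1}$ I would use are those already recalled: the decomposition of $\beta$ into $\beta\square$ followed by unboxing shows that $\square^{-1}$ turns each $\beta\square$-step into exactly one $\beta$-step, and Lemma \ref{lem:unboxing} shows it turns each $pop$-step into an identity (and each $\lhr$-step into an $\lhr$-step). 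Consequently, any infinite reduction built from $\to_{\beta\square}$- and $\to_{pop}$-steps that contains infinitely many $\beta\square$-steps projects, through $\square^{-1}$, to an infinite $\beta$-reduction of the standard term $\square^{-1}$ of its source.

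First I would convert the lhr sequence into such a box computation. Since $M_0$ is standard we have $\square^{-1}(M_0) = M_0$. I would construct a reduction $M_0 \to^+ P_1 \to^+ P_2 \to^+ \cdots$, each arrow a nonempty sequence of $\to_{\beta\square}$- and $\to_{pop}$-steps, in which the segment from $P_i$ to $P_{i+1}$ realises the step $M_i \to_{\lhr} M_{i+1}$: to perform the linear substitution of the head occurrence prescribed by a prime redex $(\lambda x, N)$, one fires the associated $\beta\square$-redex, which copies $\ensquare{N}$ at every occurrence of $x$, and then applies $pop$ once $\ensquare{N}$ has surfaced at the head. Organising this requires Lemmas \ref{lem:lhrsubstsq} and \ref{lem:betasquaresubst}, which let one propagate the linear substitutions already performed through the boxed copies, and the strip lemma \ref{lem:besquare_lhr}, which lets one commute the auxiliary $\beta\square$-steps past the lhr-steps; the clause of that lemma about non-weak-head reductions is exactly what guarantees the auxiliary steps never disturb the head computation. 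Because each of the infinitely many lhr steps contributes at least one step of the box computation, the resulting computation is infinite.

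It then remains to see that infinitely many of its steps are $\beta\square$-steps. This is a counting argument: each $pop$-step strictly decreases the number of box markers occurring in the term, and only $\beta\square$-steps can create box markers; so if only finitely many $\beta\square$-steps occurred, past the last one the computation would be an infinite sequence of $pop$-steps, strictly decreasing a fixed finite quantity, which is absurd. Hence there are infinitely many $\beta\square$-steps, and projecting the computation through $\square^{-1}$ as above produces an infinite $\beta$-reduction of $M_0$, contradicting strong normalisation of the simply-typed $\lambda$-calculus. Therefore no infinite lhr sequence exists, i.e.\ linear head reduction terminates on standard terms.

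The delicate point is the construction of the second paragraph. Linear head reduction keeps spent redexes in place and reuses their arguments one occurrence at a time, whereas $\beta\square$ copies every occurrence at once (boxed) and $pop$ reveals them one at a time; the correspondence between the two is therefore not a literal step-by-step simulation but a commutation argument that must be set up carefully around the strip lemma \ref{lem:besquare_lhr} --- which is the reason that lemma tracks which $\beta\square$-steps are weak head reductions.
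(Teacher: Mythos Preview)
Your overall plan—reduce termination of $\lhr$ to strong normalisation of $\beta$—is the paper's plan too, and your counting argument about $pop$ is used as well. The gap is in your second paragraph: the chain you describe, with each segment consisting only of $\to_{\beta\square}$- and $\to_{pop}$-steps, cannot be assembled from the cited lemmas. Lemma~\ref{lem:besquare_lhr} commutes one $\beta\square$-step past one $\lhr$-step to yield an $\lhr\vee pop$-step on the other side, not a $\beta\square\vee pop$-step; and it is stated only for a \emph{standard} source $M$, so once a $\beta\square$-step has introduced boxes you cannot reapply it. So the lemmas give no direct simulation of $\to_{\lhr}$ by $\to_{\beta\square}\cup\to_{pop}$, and your ``each lhr step contributes at least one step of the box computation'' is unsupported as stated.

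The paper organises the same ingredients differently: it proves that the property ``admits an infinite $\lhr$-sequence'' is preserved under a single $\to_{\beta}$ step, and then observes that a $\beta$-normal term has no prime redex. Concretely, given infinite $M_0 \to_{\lhr} M_1 \to_{\lhr} \cdots$ and one step $M_0 \to_{\beta\square} M'$, iterating Lemma~\ref{lem:besquare_lhr} (each $M_i$ is standard, so the hypothesis is met at every stage) produces an infinite chain $M' = M'_0 \to_{\lhr\vee pop} M'_1 \to_{\lhr\vee pop} \cdots$; your length-counting argument shows it contains infinitely many $\lhr$-steps; unboxing via Lemma~\ref{lem:unboxing} then gives an infinite $\lhr$-chain from $\square^{-1}(M')$, which is exactly the one-step $\beta$-reduct of $M_0$. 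Iterating along a $\beta$-reduction to normal form yields the contradiction. So the pieces you identified are the right ones, but they assemble into ``infinite $\lhr$ is $\beta$-stable'' rather than into a direct $\beta\square\cup pop$ simulation of $\lhr$.
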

\begin{proof}
Note first that from Lemma \ref{lem:besquare_lhr} it follows that whenever $M\to_{\beta\square}^* N$ and
$M \to_{\lhr} M'$, then there exists $N'$ such that $N \to_{\lhr\vee pop} N'$ and $M' \to_{\beta\square}^* N'$.

Suppose standard $M$ has an infinite reduction chain:
\[
M = M_0 \to_{\lhr} M_1 \to_{\lhr} M_2 \to_{\lhr} \dots
\]
and $M\to_{\beta\square} M'$. Then, by iterating the argument above, we get an infinite chain:
\[
M' = M'_0 \to_{\lhr \vee pop} M'_1 \to_{\lhr \vee pop} M'_2 \to_{\lhr \vee pop} \dots
\]
Note that in this chain, there cannot be an infinite succession of $pop$, because each $pop$ strictly decreases
the length of terms. Therefore, there are an infinite number of $\to_{\lhr}$ in this sequence. Now, define
$M' \to_{\square^{-1}} M''$. Likewise for each $i\in \mathbb{N}$, define $M'_i \to_{\square^{-1}} M''_i$. By Lemma \ref{lem:unboxing},
if $M'_i \to_{pop} M'_{i+1}$ we have $M''_i = M''_{i+1}$, whereas if $M'_i \to_{\lhr} M'_{i+1}$, we
still have $M''_i \to_{\lhr} M''_{i+1}$; it follows that there is a chain:
\[
M'' = M''_0 \to_{\lhr \vee =} M''_1 \to_{\lhr \vee =} M''_2 \to_{\lhr \vee =} \dots
\]
Since there is an infinite number of $\lhr$ in the sequence of $M'$, there is an infinite number of $\lhr$ in this sequence. Moreover
since $M\to_{\beta\square} M' \to_{\square^{-1}} M''$, we have $M\to_{\beta} M''$. Therefore, we have proved that if the $\lhr$ reduction
chain of $M$ is infinite and $M\to_{\beta} M'$, then the $\lhr$ reduction chain of $M'$ is infinite as well. But by normalization
of $\beta$-reduction, we know that $M \to_{\beta}^* K$, where $K$ is $\beta$-normal. But if $K$ has no $\beta$-redex it has no prime
redexes either, so $K$ has no $\lhr$ reduction; absurd. Therefore, the $\lhr$ reduction chain of $M$ was finite.
\end{proof}

\begin{lem}
Let $M$ be a standard term with a head occurrence $x_0$, such that there is no prime redex $(\lambda x, N)$ within $M$.
Let $S$ be another term, and $y$ a free variable in $M$. Then, there is still no prime redex $(\lambda x, N)$ in $M[S/y]$.
\label{lem:subst_pr}
\end{lem}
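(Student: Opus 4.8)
The plan is to argue by induction on the length of $M$, following the four-way case analysis on the shape of $M$ underlying the definition of prime redexes, in the same spirit as Lemmas~\ref{lem:lhrsubstsq} and~\ref{lem:betasquaresubst}. Before entering the induction, I would make one simplification: if $\lambda x$ does not occur as a binder in $M$, then by Barendregt's convention it occurs neither in $S$ (since $x$ already occurs, at least free as $x_0$, in $M$) nor in $M[S/y]$, so trivially there is no prime redex $(\lambda x, N)$ in $M[S/y]$ and we are done. Hence I may assume $\lambda x$ occurs in $M$; then $x$ is bound in $M$, so $x \neq y$ (as $y$ is free in $M$), and consequently the substitution $[S/y]$ touches neither the subterm $\lambda x.~(-)$ nor any occurrence of $x$ — in particular the head occurrence of $M[S/y]$ is still $x_0$.

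The induction then goes through the usual four cases. If $M = \daimon~M_1~\dots~M_n$ the head occurrence is a constant, contradicting the hypothesis, so this case is vacuous. If $M = x_0~M_1~\dots~M_n$ with $x_0$ the head occurrence, then $M[S/y] = x_0~M_1[S/y]~\dots~M_n[S/y]$ (using $x \neq y$) is again a variable applied to arguments, hence has no prime redex at all. If $M = \lambda z.~M'$, then $M[S/y] = \lambda z.~M'[S/y]$ has the same prime redexes as $M'[S/y]$, and we apply the induction hypothesis to $M'$ — whose head occurrence is still $x_0$, whose prime redexes are exactly those of $M$ (so none is of the form $(\lambda x, N)$), and in which $y$ is still free. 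Finally, if $M = (\lambda z.~M')~M_1~\dots~M_n$ with $n \geq 1$, the prime redexes of $M$ are $(\lambda z, M_1)$ together with those of $M'~M_2~\dots~M_n$; the hypothesis forces $z \neq x$ (otherwise $(\lambda z, M_1)=(\lambda x, M_1)$ would be a prime redex) and forces $M'~M_2~\dots~M_n$ to have no prime redex $(\lambda x, N)$. The prime redexes of $M[S/y] = (\lambda z.~M'[S/y])~M_1[S/y]~\dots~M_n[S/y]$ are $(\lambda z, M_1[S/y])$, not of the form $(\lambda x, N)$ since $z \neq x$, together with the prime redexes of $(M'~M_2~\dots~M_n)[S/y]$; for the latter I would split on whether $y$ is free in $M'~M_2~\dots~M_n$ (then conclude by the induction hypothesis, its head occurrence being still $x_0$) or free only in $M_1$ (then $M'~M_2~\dots~M_n$ is unchanged and we conclude directly from the hypothesis on $M$).

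I do not expect a serious obstacle: the content is the bookkeeping, namely (i) the Barendregt reasoning guaranteeing $x \neq y$ and the invariance of $\lambda x$ and $x_0$ under the substitution, and (ii) in the application case, noticing that $z = x$ is ruled out by the hypotheses and splitting according to where the free variable $y$ occurs, relative to the already-fired part $M'~M_2~\dots~M_n$ versus the immediate argument $M_1$. If $S$ is allowed to contain boxes this changes nothing, since boxes can only block prime redexes and never create one.
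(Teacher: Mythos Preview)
Your proposal is correct and follows exactly the approach the paper indicates, namely a straightforward induction on the length of $M$ through the four-way case analysis defining prime redexes. The paper gives no further detail beyond ``Straightforward by induction on the length of $M$'', so your write-up is simply a fleshed-out version of that same argument.
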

\begin{proof}
Straightforward by induction on the length of $M$. 
\end{proof}

\begin{lem}
If $M$ is standard, normal for $\lhr$ and $M\to_{\beta} M'$, then $M'$ is normal for $\lhr$.
\label{lem:beta_pres_lhrnorm}
\end{lem}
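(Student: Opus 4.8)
The plan is to prove, by induction on $\length(M)$, the following strengthening: if $M$ is standard and normal for $\lhr$, and $M \to_\beta M'$, then $M'$ is normal for $\lhr$ \emph{and} the head occurrence of $M'$ is an occurrence of the same variable as the head occurrence of $M$ (or both are $\daimon$). The head-occurrence clause is what makes the induction go through: without it, the case where the contracted redex sits under the head $\lambda$-abstraction cannot be closed. Recall that $M$ fails to be $\lhr$-normal exactly when some prime redex $(\lambda z, N)$ of $M$ is \emph{anchored}, meaning that the head occurrence of $M$ is an occurrence of $z$; so $\lhr$-normality asserts that no prime redex of $M$ is anchored.

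First I would dispatch the easy shapes. If $M = \daimon~M_1\cdots M_n$ or $M = x_0~M_1\cdots M_n$, then $M$ has no prime redex and every $\beta$-step occurs strictly inside some $M_i$, so the head occurrence and the (empty) set of prime redexes are unchanged. If $M = \lambda x.~M_0$, the $\beta$-step is inside $M_0$ and both data of $M$ coincide with those of $M_0$, so the induction hypothesis on $M_0$ suffices. The interesting case is $M = (\lambda x.~M_0)~M_1\cdots M_n$ with $n \geq 1$. Since $(\lambda x, M_1)$ is a prime redex of $M$ and the head occurrence of $M$ is that of $M_0$, $\lhr$-normality of $M$ forces the head occurrence of $M_0$ not to be an occurrence of $x$; moreover the prime redexes of $M_0~M_2\cdots M_n$ form a subset of those of $M$ with the same head occurrence, so $M_0~M_2\cdots M_n$ is again $\lhr$-normal. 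Now I split on the position of the contracted redex. If it is the head redex $(\lambda x.~M_0)~M_1$, then $M' = M_0[M_1/x]~M_2\cdots M_n = (M_0~M_2\cdots M_n)[M_1/x]$ by Barendregt's convention ($x$ occurs, if at all, only inside $M_0$); applying Lemma \ref{lem:subst_pr} to the $\lhr$-normal term $M_0~M_2\cdots M_n$ and the free variable $x$ --- the sub-cases where $x$ does not occur in $M_0$, or where the head occurrence of $M_0$ is $\daimon$, being trivial --- we get that $M'$ has no prime redex anchored at the head occurrence of $M_0$, which, being distinct from $x$, is untouched by the substitution and is still the head occurrence of $M'$; hence $M'$ is $\lhr$-normal with the same head variable. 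If the redex lies strictly inside $M_1$, then $M_1$ enters $M$ only through the unanchored prime redex $(\lambda x, M_1)$, so nothing relevant changes. Finally, if the redex lies inside $M_0$ or inside some $M_i$ with $i \geq 2$, it is a $\beta$-step on the strictly shorter $\lhr$-normal term $M_0~M_2\cdots M_n$, so the induction hypothesis yields that its reduct is $\lhr$-normal with the same head variable; therefore in $M'$ the redex $(\lambda x, M_1)$ is still unanchored (the head of the reduct of $M_0$ is not an occurrence of $x$) and the remaining prime redexes are those of an $\lhr$-normal term, so $M'$ is $\lhr$-normal with head occurrence still that of $M_0$. The lemma follows by forgetting the head-occurrence clause.

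I expect the main obstacle to be purely conceptual: seeing that the head-occurrence clause must be threaded through the induction, so that the ``redex under the head $\lambda$'' case reduces to the induction hypothesis instead of to a direct analysis of how $\beta$-reduction acts along the head path of a term. Once the statement is strengthened in this way, every case is routine; the only other technical point is the Barendregt-convention bookkeeping that lets the head-redex case be read off Lemma \ref{lem:subst_pr}.
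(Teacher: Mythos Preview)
Your proof is correct and follows the same overall route as the paper: induction on $\length(M)$, with Lemma~\ref{lem:subst_pr} handling the head-redex subcase. Your explicit strengthening of the induction hypothesis to carry the head-variable clause is in fact more careful than the paper's presentation, which dismisses the non-head-redex subcases as ``trivial''; that clause is exactly what is needed to close the subcase where the $\beta$-redex lies inside $M_0$ (so that the prime redex $(\lambda x,M_1)$ stays unanchored), and your write-up makes this explicit where the paper leaves it implicit.
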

\begin{proof}
By induction on the length of $M$. We only detail $M = (\lambda x.~M')~M_1~\dots~M_n$, the other cases being trivial.
Several subcases arise depending on the location of the $\beta$-reduction; the only non-trivial case
is $M \to_{\beta} M'[M_1/x]~M_2~\dots~M_n$. If the head occurrence of $M$ is a constant $\daimon$,
it is still the head occurrence of $M'[M_1/x]~M_2~\dots~M_n$, which is therefore $\lhr$-normal. Otherwise it is a variable
occurrence $y_0$, such that there is no prime redex $(\lambda y, N)$ in $M$. It follows by definition of prime redexes
that there is no prime redex either in $M'~M_2~\dots~M_n$. By Lemma \ref{lem:subst_pr}, there is still no prime
redex in $(M'~M_2~\dots~M_n)[M_1/x] = M'[M_1/x]~M_2~\dots~M_n$, but its head occurrence has not changed, so it is $\lhr$-normal.
\end{proof}

\begin{prop}
If a standard $M \to_{\beta} M'$, then $\norm(M) \geq \norm(M')$.
\label{prop:betanorm}
\end{prop}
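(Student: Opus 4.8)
The plan is to reuse the box-calculus machinery of Section~\ref{sec:betalhr} and to refine the argument in the proof of Proposition~\ref{prop:lhr_term} so as to keep track of lengths. Since $M$ is standard, the step $M \to_\beta M'$ decomposes as $M \to_{\beta\square} Q$ followed by $Q \to_{\square^{-1}} M'$. Let $M = M_0 \to_\lhr M_1 \to_\lhr \dots \to_\lhr M_k$ be the lhr sequence of $M$, which is finite by Proposition~\ref{prop:lhr_term}, so that $k = \norm(M)$ and $M_k$ is $\lhr$-normal. All the terms appearing below are standard, since $\beta$ and $\lhr$ preserve box-freeness and $\square^{-1}$ eliminates boxes.

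First I would iterate Lemma~\ref{lem:besquare_lhr}, in the $\to_{\beta\square}^*$-form recalled at the start of the proof of Proposition~\ref{prop:lhr_term}, along this sequence, starting from $M_0 = M$ and $Q_0 = Q$: this produces $Q_0 \to_{\lhr\vee pop} Q_1 \to_{\lhr\vee pop} \dots \to_{\lhr\vee pop} Q_k$ with $M_i \to_{\beta\square}^* Q_i$ for every $i$, and with $M_k \to_{\beta\square}^* Q_k$ in particular. Next, setting $R_i = \square^{-1}(Q_i)$ (so $R_0 = \square^{-1}(Q) = M'$), Lemma~\ref{lem:unboxing} turns each $pop$-step of the $Q$-sequence into an equality $R_i = R_{i+1}$ and each genuine $\lhr$-step into $R_i \to_\lhr R_{i+1}$; collapsing the equalities yields an lhr reduction $M' \to_\lhr \dots \to_\lhr R_k$ of some length $m \le k$.

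To conclude that $\norm(M') = m \le k = \norm(M)$, it then suffices to show that $R_k$ is $\lhr$-normal, so that the reduction just built really is the (full) lhr sequence of $M'$. This last point is the delicate one, as it is what fixes the \emph{direction} of the inequality --- without it one would only get $\norm(M') \ge m$. To prove it I would observe that $M_k \to_{\beta\square}^* Q_k$ entails $M_k = \square^{-1}(M_k) \to_\beta^* \square^{-1}(Q_k) = R_k$, a routine induction on the number of $\beta\square$-steps using that $A \to_{\beta\square} B$ implies $\square^{-1}(A) \to_\beta \square^{-1}(B)$; then, applying Lemma~\ref{lem:beta_pres_lhrnorm} inductively along this $\beta$-reduction and using that $M_k$ is $\lhr$-normal, $\lhr$-normality propagates to $R_k$. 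The remaining verifications (standardness of the various terms, and that collapsing equalities in a $\to_{\lhr\vee=}$-chain to an $\lhr$-normal term gives exactly the lhr sequence) are immediate, so the bulk of the work is just bookkeeping on top of the already-established box calculus.
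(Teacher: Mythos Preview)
Your proposal is correct and follows essentially the same route as the paper: decompose $\beta$ as $\beta\square$ followed by $\square^{-1}$, lift the lhr sequence of $M$ through $\beta\square^*$ via Lemma~\ref{lem:besquare_lhr}, unbox via Lemma~\ref{lem:unboxing}, and use Lemma~\ref{lem:beta_pres_lhrnorm} to see the endpoint is $\lhr$-normal. One small slip: not \emph{all} terms are standard---the $Q_i$ carry boxes introduced by $\beta\square$---but this is harmless since you never rely on their standardness.
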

\begin{proof}
We decompose $M\to_{\beta\square} M'' \to_{\square^{-1}} M'$. Take a maximal $\lhr$ sequence:
\[
M = M_0 \to_{\lhr} M_1 \to_{\lhr} \dots \to_{\lhr} M_n
\]
by Lemma \ref{lem:besquare_lhr}, there is a corresponding sequence of the same length:
\[
M'' = M''_0 \to_{\lhr \vee pop} M''_1 \to_{\lhr \vee pop} \dots \to_{\lhr \vee pop} M''_n
\]
with for all $i\in \{0, \dots, n\}$, $M_i \to_{\beta\square}^* M''_i$. Since $M_n$ is maximal, this means that
its head occurrence does not appear in a prime redex, or is a constant. By unboxing, we get:
\[
M' = M'_0 \to_{\lhr \vee =} M'_1 \to_{\lhr \vee =} \dots \to_{\lhr \vee =} M'_n
\]
We have $M_n \to_{\beta}^* M'_n$ by construction, so by Lemma \ref{lem:beta_pres_lhrnorm} we know that $M'_n$ is $\lhr$-normal as well,
so this chain is maximal.
So, the $\lhr$ sequence of $M'$ is the subsequence of
$M'_0 \to_{\lhr \vee =} M'_1 \to_{\lhr \vee =} \dots \to_{\lhr \vee =} M'_n$ keeping only the $\lhr$ steps, so
its length is less or equal than $n$.
\end{proof}

\subsection{Generalized game situations}
\label{subsec:ggs}

In this section, we detail the situation where the connection between lhr and skeletons is the most direct. 

That involves two aspects. Firstly, 
since the game-theoretic interpretation of terms is invariant under $\eta$-expansion, it is clear from the start that terms
whose dynamics match game-theoretic interaction should be $\eta$-expanded. Accordingly, we will give a new notion of non
$\beta$-normal $\eta$-long terms adapted to lhr. Secondly, game-theoretic interaction is also \emph{local}: strategies
involved in the interpretation
of a term will only communicate with other strategies with whom they share a redex, whereas substitution is non-local
and can span over multiple redexes. Accordingly, we introduce a notion called \emph{local scope} ensuring that the 
information flow of lhr is local. With the help of these two conditions we define \emph{generalized game situations},
that we prove to elegantly connect to interaction skeletons.
 
We now investigate these two aspects, in turn.

\subsubsection{Generalized $\eta$-long terms}

Non $\beta$-normal $\eta$-long terms are often defined as those for which any further $\eta$-expansion creates new $\beta$-redexes. Here of
course, since we work with generalized $\beta$-redexes, we will use instead the following definition:

\begin{defi}
A term $M$ is \textbf{$\eta$-long} if whenever $M \to_\eta M'$, then $M'$ has more generalized $\beta$-redexes than $M$.
\end{defi}

We now prove that $\eta$-long terms are preserved by lhr. For that, we will make use of the following lemmas
allowing us to compose and decompose $\eta$-long terms.

\begin{lem}
If $M$ is $\eta$-long, so are its subterms.
\label{lem:etalong_subterm}
\end{lem}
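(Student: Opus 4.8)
The plan is to prove the contrapositive by induction on the structure of $M$: if some subterm $N$ of $M$ fails to be $\eta$-long, then so does $M$. Since "is a subterm of" is transitive and every proper subterm of $M$ lies inside one of the immediate subterms of $M$, it suffices to treat the case where $N$ is an \emph{immediate} subterm of $M$; the nested case then follows by applying the induction hypothesis to that immediate subterm. So I would case on the shape of $M$ --- namely $\lambda y.\,P$, \; $x_0\,M_1\cdots M_n$, \; $\daimon\,M_1\cdots M_n$, or $(\lambda x.\,M')\,M_1\cdots M_n$ --- the immediate subterms being respectively among $P$, among the $M_i$, among the $M_i$, and among $M'$ together with the $M_i$.

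The engine of the argument is a decomposition of generalized redexes along the chosen occurrence. Writing $M = C[N]$ for the one‑hole context exhibiting this immediate subterm, I would show that the generalized redexes of $M$ are exactly those of $N$ together with the prime redexes of the subterms of $M$ not contained inside $N$; moreover this second family is determined by $C$ and by the \emph{head shape} of $N$ alone (i.e.\ by how the inductive clauses for prime redexes unwind along the leftmost generalized spine), hence is insensitive to any $\eta$-expansion performed strictly below the head of $N$. Consequently, given a witnessing expansion $N \to_\eta N_1$ that creates no new generalized redex \emph{of $N$}, performing exactly that expansion in place yields $M \to_\eta C[N_1]$ whose generalized redexes are those of $N_1$ together with the \emph{unchanged} context family; so no new generalized redex of $M$ is created and $M$ is not $\eta$-long, as required. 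The base cases $M = x_0$, $M = \daimon$ are vacuous (no subterm to expand), and the case $M = \lambda y.\,P$ is immediate since the generalized redexes of $\lambda y.\,P$ are those of $P$ enlarged only by a further copy of the prime redexes of $P$, which are already generalized redexes of $P$.

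The main obstacle is the case where the witnessing expansion of $N$ acts \emph{at the head} of $N$, equivalently along the leftmost spine of $M$: then replacing the head $P$ of $N$ by $\lambda x.\,P\,x$ genuinely reshapes the spines of the subterms of $M$ that contain $N$, so "insensitivity" no longer applies directly. Here the delicate step is to match, prime redex for prime redex, the spine of $N_1$ against the spines of the relevant enclosing subterms of $C[N_1]$ --- exploiting the recursive clause for the prime redexes of $(\lambda x.\,M')\,M_1\cdots M_n$, and in particular the way a prime redex "jumps" into the first argument --- in order to check that the generalized redexes created along the spine of $M$ by a head $\eta$-expansion are accounted for already by the generalized redexes it creates inside $N$. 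I expect this spine bookkeeping to be the only real difficulty; everything else reduces to routine unwinding of the definitions of prime and generalized redex.
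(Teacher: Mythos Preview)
Your plan coincides with the paper's one-line argument (``an $\eta$-expansion in a subterm $N$ of $M$ would create a new generalized redex in $M$ that necessarily has to be in $N$''), but the case you flag as the main obstacle is not just delicate: it fails, and with it the lemma as literally stated. Take $y:o\to o$ and $M=(\lambda x^{o}.\,y)\,\daimon_o\,\daimon_o$, with $N=y$ an immediate subterm in your decomposition. Then $N\to_\eta\lambda u.\,y\,u$ creates no generalized redex, so $N$ is not $\eta$-long. Yet $M$ \emph{is} $\eta$-long: each of the three possible expansions (at $y$, at $\lambda x.\,y$, at $(\lambda x.\,y)\,\daimon_o$) produces a second generalized redex; for instance expanding at $y$ gives $(\lambda x.\,\lambda u.\,y\,u)\,\daimon_o\,\daimon_o$ with the new prime redex $(\lambda u,\daimon_o)$, which lives on $M$'s spine but has no counterpart inside the expanded $N$. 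The ``spine bookkeeping'' you anticipate cannot be carried out because the fresh $\lambda u$ picks up an argument supplied by $M$ that simply does not exist in $N$.

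What saves the paper is its usage, not its proof. The lemma is invoked only once, to deduce that each $M_i$ is $\eta$-long when $M=x_0\,M_1\cdots M_n$ is. In that shape every $M_i$ is in argument position, hence never on the spine of any enclosing subterm; an $\eta$-expansion inside $M_i$ can then affect only prime redexes of subterms of $M_i$, so the generalized redexes of $M$ change exactly as those of $M_i$, and your contrapositive goes through. The trouble case in your decomposition is precisely the body $M'$ in $(\lambda x.\,M')\,M_1\cdots M_n$ with $n\geq 2$. If you restrict the statement to subterms in argument position (or, equivalently, drop $M'$ from your list of immediate subterms for that clause), your argument is correct and sufficient for the paper's needs; the unrestricted claim is not.
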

\begin{proof}
Straightforward by induction on $M$, since an $\eta$-expansion in a subterm $N$ of $M$ would
create a new generalized redex in $M$ that necessarily has to be in $N$.
\end{proof}

\begin{lem}
A term $(\lambda x.~M)~M_1~\dots~M_n$ is $\eta$-long, iff so are $M~M_2~\dots~M_n$ and $M_1$.
\label{lem:etalong_decrec}
\end{lem}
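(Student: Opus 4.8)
The plan is to reduce the statement to a single bookkeeping identity on generalized redexes. Write $\mathrm{GR}(T)$ for the set of generalized redexes of a term $T$. I would first establish the \emph{decomposition}
\[
\mathrm{GR}((\lambda x.~M)~M_1~\dots~M_n) = \{(\lambda x, M_1)\}~\sqcup~\mathrm{GR}(M_1)~\sqcup~\mathrm{GR}(M~M_2~\dots~M_n)
\]
as a disjoint union, by directly unfolding the definitions. The subterms of $(\lambda x.~M)~M_1~\dots~M_n$ are its head-spine prefixes $(\lambda x.~M)~M_1~\dots~M_j$ ($1\le j\le n$), the abstraction $\lambda x.~M$, and the subterms of $M, M_1,\dots,M_n$; by the recursive clause for prime redexes, $\mathrm{PR}((\lambda x.~M)~M_1~\dots~M_j) = \{(\lambda x, M_1)\}\cup\mathrm{PR}(M~M_2~\dots~M_j)$; and the subterms of $M~M_2~\dots~M_n$ are its head-spine prefixes $M~M_2~\dots~M_j$ together with the subterms of $M, M_2,\dots,M_n$. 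Collecting the prime redexes of all subterms on each side yields the three stated pieces. Disjointness is where Barendregt's convention enters: $\lambda x$ occurs neither in $M$ (since $\lambda x.~M\supsetneq M$) nor in any $M_i$ (since $x$ is bound), so $(\lambda x, M_1)$ lies in neither of the other two sets; and the bound variables of $M_1$ are disjoint from those occurring in $M, M_2,\dots,M_n$, so $\mathrm{GR}(M_1)$ and $\mathrm{GR}(M~M_2~\dots~M_n)$ are disjoint. In particular $|\mathrm{GR}((\lambda x.~M)~M_1~\dots~M_n)| = |\mathrm{GR}(M~M_2~\dots~M_n)| + |\mathrm{GR}(M_1)| + 1$, and the same counting identity keeps holding after any $\eta$-expansion performed inside $M_1$, inside $M$, inside some $M_i$ ($i\ge 2$), or at the root, since such a step commutes with the decomposition.

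Next I would classify the $\eta$-expansion positions of $N := (\lambda x.~M)~M_1~\dots~M_n$ into three kinds: (i) strictly inside $M_1$; (ii) strictly inside $M$, strictly inside some $M_i$ with $i\ge 2$, or at the root of $N$; (iii) at the abstraction $\lambda x.~M$, or at an intermediate head-spine prefix $(\lambda x.~M)~M_1~\dots~M_k$ with $1\le k<n$. Using the decomposition: a kind-(i) step is exactly an $\eta$-expansion of $M_1$, and increases $|\mathrm{GR}(N)|$ iff it increases $|\mathrm{GR}(M_1)|$; a kind-(ii) step matches an $\eta$-expansion of $M~M_2~\dots~M_n$ (the same position when it lies inside $M$ or an $M_i$, the analogous root step otherwise), and increases $|\mathrm{GR}(N)|$ iff the matching step increases $|\mathrm{GR}(M~M_2~\dots~M_n)|$; and a kind-(iii) step \emph{always} increases $|\mathrm{GR}(N)|$, by exactly one --- a short computation with the decomposition shows the fresh abstraction $\lambda z$ contributes one new prime redex, any generalized redexes it displaces from an argument position reappear attached to $\lambda z$, and nothing else changes in cardinality. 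By the same token, the intermediate head-spine $\eta$-expansions of $M~M_2~\dots~M_n$ itself increase $|\mathrm{GR}(M~M_2~\dots~M_n)|$ by one.

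Both implications then fall out. If $M_1$ and $M~M_2~\dots~M_n$ are $\eta$-long, every kind-(i) step increases $|\mathrm{GR}(N)|$ because $M_1$ is $\eta$-long, every kind-(ii) step because $M~M_2~\dots~M_n$ is $\eta$-long, and every kind-(iii) step automatically; so $N$ is $\eta$-long. Conversely, if $N$ is $\eta$-long then each $\eta$-expansion of $M_1$ lifts to a kind-(i) step and hence increases $|\mathrm{GR}(M_1)|$, so $M_1$ is $\eta$-long; and each $\eta$-expansion of $M~M_2~\dots~M_n$ either lifts to a kind-(ii) step --- hence increases $|\mathrm{GR}(M~M_2~\dots~M_n)|$ --- or is an intermediate head-spine step, which increases $|\mathrm{GR}(M~M_2~\dots~M_n)|$ unconditionally; so $M~M_2~\dots~M_n$ is $\eta$-long.

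The delicate point is the first paragraph: pinning down the decomposition together with its disjointness, and checking that it is stable under $\eta$-expansions away from the head spine. The ``exactly one'' count for kind-(iii) steps also deserves care, but it reduces to the elementary fact that the prime-redex structure of an application depends only on which of its head-spine positions are abstractions, not on the argument terms themselves --- so substituting a fresh variable for an argument (as a kind-(iii) step effectively does for the argument it pushes inward) changes no cardinalities beyond discarding that argument's own generalized redexes, which are then recovered on the new abstraction. Everything after the first paragraph is a routine case analysis.
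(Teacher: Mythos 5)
Your proof is correct and follows the same route the paper intends: the paper's proof is simply ``straightforward by cases on the location of an $\eta$-expansion,'' and your argument is exactly that case analysis, made explicit via the disjoint-union decomposition of generalized redexes and the observation that head-spine expansions always add exactly one. The only thing to tighten is the phrase ``strictly inside,'' which should be read as including the roots of $M$, $M_1$ and the $M_i$ so that the classification of positions is exhaustive.
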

\begin{proof}
Straightforward by cases on the location of an $\eta$-expansion.
\end{proof}

\begin{lem}
If $\Gamma \vdash M : A \to B$ and $\Gamma \vdash N:A$ are $\eta$-long, so is $M~N$.
\label{lem:etalong_app}
\end{lem}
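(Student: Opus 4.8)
The plan is to reduce the statement to the two previous lemmas. Given $\eta$-long terms $\Gamma \vdash M : A \to B$ and $\Gamma \vdash N : A$, I want to show $M~N$ is $\eta$-long, i.e.\ that any single $\eta$-expansion $M~N \to_\eta P$ strictly increases the number of generalized $\beta$-redexes. The first step is a case analysis on where the $\eta$-expansion happens. There are three possibilities: (i) the expansion is inside $M$, (ii) the expansion is inside $N$, or (iii) the expansion is at the root, turning $M~N$ into $(\lambda z.~M~N~z)$ for a fresh $z$ of appropriate type (this is possible precisely when $B = C \to D$ for some $C, D$).

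For cases (i) and (ii), the strategy is: since $M$ (resp.\ $N$) is $\eta$-long, the expansion creates a new generalized redex $(\lambda x', K)$ inside $M$ (resp.\ inside $N$); because the generalized redexes of $M~N$ are exactly the prime redexes of all its subterms, and $M$ and $N$ are themselves subterms of $M~N$, this new redex is also a generalized redex of the expanded term $P$, and it was not present before. Moreover no generalized redex is destroyed by an $\eta$-expansion deep inside a subterm. Hence the count strictly increases. Here I would invoke Lemma \ref{lem:etalong_subterm} only implicitly; the real content is just that generalized redexes of $M~N$ are inherited from $M$ and $N$, which is immediate from the definition of generalized redex as "prime redexes of all subterms."

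For case (iii), the expanded term is $\lambda z.~(M~N~z)$. Its generalized redexes are those of $M~N~z$, which include all those of $M~N$ (since $M~N$ is a subterm), plus possibly new ones. I need to exhibit at least one genuinely new generalized redex. The natural candidate comes from the head structure of $M$: since $M : A\to B$ is $\eta$-long and $B = C\to D$, one expects $M$ to be, after unfolding its head, of a shape that makes $M~N~z$ contain a new prime redex involving $z$ — this is where I would lean on the characterization of $\eta$-long terms and on Lemma \ref{lem:etalong_decrec} applied after observing the head form of $M$. The main obstacle is precisely this case: one has to argue carefully about the shape of $M$ (it is of the form $\lambda$-headed or variable/constant-headed, fully applied or not) and track how applying it to $N$ and then $z$ produces a new prime redex that the mere application $M~N$ did not already have. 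A clean way is to note that if $M~N$ were already such that $z$ could be absorbed without a new redex, then $M~N$ itself would not be $\eta$-long at its own root — but $M~N$'s $\eta$-longness at the root is not hypothesized, so instead I should argue directly from $M$ being $\eta$-long: $M \to_\eta \lambda z.~M~z$ creates a new generalized redex in $M$, and the same redex (or its substituted image) reappears in $\lambda z.~M~N~z$ after the outer application to $N$, using that applications and the outer $\lambda z$ only add subterms and never remove prime redexes. I expect that once the head form of $M$ is made explicit, this bookkeeping is routine, and the lemma follows.
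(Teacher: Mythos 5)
Your decomposition — a case analysis on where the single $\eta$-expansion of $M~N$ occurs — is genuinely different from the paper's proof, which is an induction on the length of $M$ by head form: the head-variable/head-constant case is vacuous (such a term of arrow type is never $\eta$-long, since its root expansion creates no new generalized redex), the case $M=\lambda x.\,M'$ is discharged by Lemma \ref{lem:etalong_decrec} together with Lemma \ref{lem:etalong_subterm} applied to $M'$, and the case $M=(\lambda x.\,M')\,M_1\dots M_k$ follows from Lemma \ref{lem:etalong_decrec} and the induction hypothesis applied to the shorter term $M'\,M_2\dots M_k$. Before getting to the real problem, note a smaller imprecision in your cases (i)–(ii): the generalized redexes of $M~N$ are \emph{not} all inherited from $M$ and $N$; the prime redexes of $M~N$ itself may contain one additional redex $(\lambda y_1, N)$ matching $N$ against the first spine $\lambda$ of $M$ not already consumed by $M$'s own arguments, and this redex belongs to no proper subterm. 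One must check it is not lost when expanding inside $M$ — true, but it is an argument, not an immediate consequence of the definition.

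The genuine gap is in your case (iii). The new generalized redex of $\lambda z.\,M~N~z$ has to be $(\lambda y_2, z)$ where $\lambda y_2$ is the \emph{second} unmatched spine $\lambda$ of $M$, because the first one, $\lambda y_1$, is consumed by $N$. The root expansion $M\to_\eta \lambda z.\,M~z$ only witnesses the existence of $\lambda y_1$: the redex it creates is $(\lambda y_1, z)$, and its ``substituted image'' in $M~N~z$ is $(\lambda y_1, N)$, which is already a generalized redex of $M~N$ and so contributes nothing new to the count. So the specific argument you propose does not close this case. What you actually need is that an $\eta$-long term of type $A\to C\to D$ has at least \emph{two} unmatched spine $\lambda$'s, and establishing that is exactly an induction on the head form of $M$ through Lemmas \ref{lem:etalong_subterm} and \ref{lem:etalong_decrec} — i.e., the paper's proof. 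You correctly identify case (iii) as the obstacle, but the bookkeeping you defer is not routine residue of your case split; it is the entire content of the lemma, and once you do it you will have reproduced the paper's induction rather than supplemented your own argument.
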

\begin{proof}
Straightforward by induction on the length of $M$.
\end{proof}

Using the lemmas above we prove the following substitution lemma, crucial in proving the stability of
$\eta$-long terms by lhr.

\begin{lem}
If $\Gamma \vdash M: A$ is $\eta$-long and with head occurrence $x_0$ of a variable $x:B$, and $\Gamma\vdash N : B$ is $\eta$-long, then
$M[N/x_0]$ is $\eta$-long.
\label{lem:etalong_subst}
\end{lem}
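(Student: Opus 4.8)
The plan is to argue by induction on the length of $M$, splitting on the shape $M$ must have given that its head occurrence $x_0$ is a variable occurrence. Peeling the term apart from the head, $M$ is necessarily of one of three forms: $x_0~M_1~\dots~M_n$; or $\lambda y.~M'$ where $M'$ has head occurrence $x_0$; or $(\lambda z.~M')~M_1~\dots~M_n$ with $n\geq 1$ where $M'$ has head occurrence $x_0$. In the last two cases, since $(x:B)\in\Gamma$ forces $x\neq y$ and $x\neq z$ by Barendregt's convention, the occurrence $x_0$ genuinely lies inside $M'$, and the substitution $[N/x_0]$ affects only that sub-term.

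For the first case, $M[N/x_0]=N~M_1~\dots~M_n$; by Lemma~\ref{lem:etalong_subterm} each $M_i$ is $\eta$-long, $N$ is $\eta$-long by hypothesis, and well-typedness makes each partial application of arrow type, so a short sub-induction on the number of arguments using Lemma~\ref{lem:etalong_app} shows $N~M_1~\dots~M_n$ is $\eta$-long. For $M=\lambda y.~M'$: Lemma~\ref{lem:etalong_subterm} gives that $M'$ is $\eta$-long, and since $M'$ is shorter than $M$, has head occurrence $x_0$, and is still typed in a context containing $x:B$, the induction hypothesis yields that $M'[N/x_0]$ is $\eta$-long; it then remains to note that $\lambda y.~P$ is $\eta$-long as soon as $P$ is --- immediate, because the generalized $\beta$-redexes of $\lambda y.~P$ are exactly those of $P$, and the only further $\eta$-redex, namely $\lambda y.~P\to_\eta\lambda w.~(\lambda y.~P)\,w$ at the root, creates a fresh $\beta$-redex. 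For $M=(\lambda z.~M')~M_1~\dots~M_n$: Lemma~\ref{lem:etalong_decrec} gives that $M'~M_2~\dots~M_n$ and $M_1$ are $\eta$-long; the term $M'~M_2~\dots~M_n$ has head occurrence $x_0$ and is strictly shorter than $M$ (we drop at least the leading $\lambda z$ and the argument $M_1$), so by the induction hypothesis $(M'~M_2~\dots~M_n)[N/x_0]=(M'[N/x_0])~M_2~\dots~M_n$ is $\eta$-long, and together with $M_1$ being $\eta$-long the converse direction of Lemma~\ref{lem:etalong_decrec} gives that $M[N/x_0]=(\lambda z.~M'[N/x_0])~M_1~\dots~M_n$ is $\eta$-long.

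I do not expect a deep obstacle: the real content --- how generalized $\beta$-redexes are created or lost under composition and decomposition --- is already carried by Lemmas~\ref{lem:etalong_subterm}, \ref{lem:etalong_decrec} and, above all, \ref{lem:etalong_app}. The only points requiring genuine care here are that the three-way case split is exhaustive and that the substitution really only rewrites the sub-term containing $x_0$, that the length strictly decreases in the two inductive cases, and the small self-contained observation that $\lambda$-abstraction preserves $\eta$-longness, which is not literally among the stated decomposition lemmas.
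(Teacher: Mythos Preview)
Your proposal is correct and follows essentially the same approach as the paper: induction on the length of $M$ with the same three-way case split, invoking Lemma~\ref{lem:etalong_subterm} and Lemma~\ref{lem:etalong_app} in the variable case, the induction hypothesis in the abstraction case, and Lemma~\ref{lem:etalong_decrec} together with the induction hypothesis in the $(\lambda z.~M')~M_1~\dots~M_n$ case. Your version is simply more explicit where the paper is terse --- in particular you spell out why $\lambda y.~P$ is $\eta$-long when $P$ is, which the paper leaves implicit in its ``follows directly from the IH''.
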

\begin{proof}
By induction on the length of $M$.
\begin{itemize}
\item If $M = x_0~M_1~\dots~M_n$, then necessarily each $M_i$ is $\eta$-long by Lemma \ref{lem:etalong_subterm}. 
It follows by Lemma \ref{lem:etalong_app} that $N~M_1~\dots~M_n$ is $\eta$-long as well.
\item If $M = \lambda y.~M'$, it follows directly from the IH.
\item If $M = (\lambda y.~M')~M_1~\dots~M_n$, it follows directly from IH and Lemma \ref{lem:etalong_decrec}. \qedhere
%
\end{itemize}
\end{proof}

\begin{lem}
If $M$ is $\eta$-long and $M\to_{\lhr} M'$, then $M'$ is $\eta$-long.
\label{lem:lhr_pres_etalong}
\end{lem}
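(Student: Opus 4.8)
The plan is to prove the lemma by induction on the length $\length(M)$, following the case analysis on the shape of $M$ that also underlies the definition of $\lhr$. If $M = \daimon~M_1~\dots~M_n$ or $M = x_0~M_1~\dots~M_n$ there is no prime redex, so $M$ is $\lhr$-normal and there is nothing to prove. If $M = \lambda x.~P$, then $M \to_\lhr \lambda x.~P'$ with $P \to_\lhr P'$; by Lemma \ref{lem:etalong_subterm} the subterm $P$ is $\eta$-long, the induction hypothesis gives that $P'$ is $\eta$-long, and I would conclude with the elementary observation (immediate from the definitions: the generalized redexes of $\lambda x.~Q$ are exactly those of $Q$, and the $\eta$-expansion $\lambda x.~Q \to_\eta \lambda z.~(\lambda x.~Q)~z$ always introduces the new prime redex $(\lambda x, z)$) that $\lambda x.~Q$ is $\eta$-long as soon as $Q$ is.

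The interesting case is $M = (\lambda x^B.~P)~M_1~\dots~M_n$. By Lemma \ref{lem:etalong_decrec}, $M$ being $\eta$-long is equivalent to $P~M_2~\dots~M_n$ and $M_1$ both being $\eta$-long, and the head occurrence $x_0$ of $M$ coincides with the head occurrence of $P~M_2~\dots~M_n$, which lives inside $P$. I would split on whether $x_0$ is an occurrence of the bound variable $x$.

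If $x_0$ is an occurrence of $x$, then $\lhr$ uses the prime redex $(\lambda x, M_1)$, so $M \to_\lhr M[M_1/x_0] = (\lambda x.~P[M_1/x_0])~M_1~\dots~M_n$. Here I would invoke the substitution Lemma \ref{lem:etalong_subst}: $P~M_2~\dots~M_n$ is $\eta$-long with head occurrence $x_0$ of the variable $x : B$, and $M_1 : B$ is $\eta$-long, hence $(P~M_2~\dots~M_n)[M_1/x_0]$ is $\eta$-long; by Barendregt's convention $x$ does not occur in $M_2, \dots, M_n$, so this term is exactly $P[M_1/x_0]~M_2~\dots~M_n$. Since $M_1$ is unchanged and still $\eta$-long, Lemma \ref{lem:etalong_decrec} yields that $M' = (\lambda x.~P[M_1/x_0])~M_1~\dots~M_n$ is $\eta$-long. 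If instead $x_0$ is not an occurrence of $x$, then by definition the prime redex realising the $\lhr$ step of $M$ is a prime redex of $P~M_2~\dots~M_n$; writing it $(\lambda z, N)$, we obtain simultaneously $M \to_\lhr (\lambda x.~P[N/x_0])~M_1~\dots~M_n$ and $P~M_2~\dots~M_n \to_\lhr P[N/x_0]~M_2~\dots~M_n$ (again using that $x_0$ occurs in $P$). Since $P~M_2~\dots~M_n$ is $\eta$-long by Lemma \ref{lem:etalong_decrec} and is strictly shorter than $M$, the induction hypothesis gives that $P[N/x_0]~M_2~\dots~M_n$ is $\eta$-long; together with $M_1$, Lemma \ref{lem:etalong_decrec} again gives the claim.

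I expect the only real friction to be notational: keeping careful track that the replaced occurrence $x_0$ always lies inside $P$, so that substituting into $P~M_2~\dots~M_n$ commutes cleanly with the application structure, and checking the freshness side-conditions of Lemma \ref{lem:etalong_subst} via Barendregt's convention. All the genuinely substantive content — composing and decomposing $\eta$-long terms, and the behaviour of $\eta$-longness under linear substitution — is already packaged into Lemmas \ref{lem:etalong_subterm}, \ref{lem:etalong_decrec}, \ref{lem:etalong_app} and \ref{lem:etalong_subst}, so the argument itself should be a short structured induction.
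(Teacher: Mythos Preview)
Your proposal is correct and follows essentially the same induction-on-length argument as the paper, with the same case split on whether the head occurrence is the bound variable $x$ and the same reliance on Lemmas \ref{lem:etalong_subterm}, \ref{lem:etalong_decrec} and \ref{lem:etalong_subst}. The only cosmetic difference is that in the subcase where the head occurrence is $x_0$ of $x$, the paper applies Lemma \ref{lem:etalong_subst} directly to the whole term $M$ (obtaining that $M[M_1/x_0]$ is $\eta$-long in one step), whereas you first decompose via Lemma \ref{lem:etalong_decrec}, apply the substitution lemma to $P~M_2~\dots~M_n$, and then recompose; both routes are valid and equally short.
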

\begin{proof}
By induction on the length of $M$. We skip all the trivial cases, and only detail $M = (\lambda x.~M')~M_1~\dots~M_n$.
Then, two subcases:
\begin{itemize}
\item If the head occurrence of $M$ is an occurrence $x_0$ of $x$, then we have:
$
M \to_{\lhr} M[M_1/x_0]
$.
But since $M$ is $\eta$-long, $M_1$ is $\eta$-long as well by Lemma \ref{lem:etalong_subterm}.
So, by Lemma \ref{lem:etalong_subst} we have that $M[M_1/x_0]$ is $\eta$-long as well.
\item If the head occurrence of $M$ is an occurrence $y_0$ of a variable or constant other than $x$, then if
$M \to_{\lhr} M[N/y_0]$, for some subterm $N$ of $M$, and we must have as well:
\[
M'~M_2~\dots~M_n \to_{\lhr} M'[N/y_0]~M_2~\dots~M_n
\]
by IH, $M'[N/y_0]~M_2~\dots~M_n$ is $\eta$-long. Finally, it follows from Lemma \ref{lem:etalong_decrec} that
$M_1$ is $\eta$-long and that recomposing $(\lambda x.~M'[N/y_0])~M_1~\dots~M_n$ yields an $\eta$-long term. \qedhere
\end{itemize}
\end{proof}

\subsubsection{Local scope}
Now that we have a notion of $\eta$-long term stable under composition, let us consider the syntactic counterpart of the second aspect
of skeletons: that their reduction is local. It is not clear at first what \emph{local} means in this context: just like
a game situation consists in two $\eta$-long normal forms interacting, a generalized game situation will consist in a ``tree'' of $\eta$-long
normal forms. Let us start with this slightly naive definition:

\begin{defi}
A term $M$ is \textbf{strongly locally scoped} (abbreviated sls) iff for any generalized redex $(\lambda x, N)$ in $M$, $N$ is closed.
\end{defi}

Unfortunately, sls terms do not quite fit for a syntactic counterpart of skeletons: they are not
preserved by lhr. It is easy to find a counter-example, for instance:
\[
(\lambda y.~(\lambda x.~x~y)~(\lambda z.~z))~\daimon \to_\lhr (\lambda y.~(\lambda x.~(\lambda z.~z)~y)~(\lambda z.~z))~\daimon
\]
Here, a new generalized redex $(\lambda z, y)$ is formed where $y$ is obviously not closed. Therefore, we must make skeletons 
correspond instead
with a generalization of sls terms preserved by lhr. This generalization comes from the observation that
in the right hand side term above, the violation of strong local scope is mitigated by the fact that the violating variable $y$ is part
of a generalized redex --- so its value is somehow already provided by an environment. Hence the following definition:

\begin{defi}
A variable $x$ in $M$ is \textbf{active} iff it is a free variable or if there is a generalized redex $(\lambda x, N)$ in $M$. It is \textbf{passive}
otherwise.
A term $M$ is \textbf{locally scoped} (abbreviated ls) if for any generalized redex $(\lambda x, N)$ in $M$ all the free variables in $N$ are active in $M$. 
\end{defi}

Local scope will be sufficient to ensure that the interpretation to skeletons is a simulation,
but the correspondence between terms and skeletons will be tighter for sls terms: the tree structure
of the skeleton will match the tree structure of nested generalized redexes.

Of course, we now need to prove that locally scoped terms are preserved by lhr. However, this is still not true!
Indeed, consider the following reduction:
\[
\lambda y.~(\lambda x.~x~y)~(\lambda z.~z) \to_\lhr \lambda y.~(\lambda x.~(\lambda z.~z)~y)~(\lambda z.~z)
\]
The left hand side term is (strongly) locally scoped, but the right hand side term is not because $y$ is passive, but appears
in the argument of a generalized redex. However, the problem disappears if we apply the two terms above to a constant
$\daimon$. In general, we will show that \emph{closed} locally scoped terms of \emph{ground type} are closed under lhr.
This may seem like a big restriction but it is not: an arbitrary term can be made closed and of ground type without changing
its possible reduction sequences significantly, by replacing its free variables with constants and applying it to as many constants as
required.

To prove stability of local scope by lhr, we start by stability under substitution.

\begin{lem}
If $M$ is a locally scoped term of ground type with head occurrence $x_0$ of a variable $x$,
and $N$ is a locally scoped term, then $M[N/x_0]$ is locally scoped.
\label{lem:locscope_subst}
\end{lem}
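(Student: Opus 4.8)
The plan is to argue by induction on the length of $M$. Since $M$ has ground type and its head occurrence $x_0$ is a variable occurrence, $M$ is necessarily of the form $x_0~M_1~\dots~M_n$ or $(\lambda y.~M')~M_1~\dots~M_n$, where in the latter case the head occurrence of $M$ is the head occurrence of $M'$. The key structural observation underlying the whole argument is that, because $x_0$ is the \emph{leftmost} leaf of $M$, it is also the leftmost leaf (hence the head occurrence) of every subterm of $M$ containing it; by the shape of prime redexes it follows that $x_0$ never lies inside the argument $R$ of a generalized redex $(\lambda v, R)$ of $M$. Consequently, substituting $N$ for the single occurrence $x_0$ cannot destroy any generalized redex of $M$: it can only import the generalized redexes of $N$ and create a few new ones localized around the position of $x_0$.

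In the case $M = x_0~M_1~\dots~M_n$ (the subcase $n=0$ being immediate since then $M[N/x_0] = N$), we have $M[N/x_0] = N~M_1~\dots~M_n$, whose generalized redexes split into three families. First, the generalized redexes of $N$: these cause no problem, because $N$ is locally scoped and every variable active in $N$ stays active in $N~M_1~\dots~M_n$ --- a free variable of $N$ is not captured (Barendregt's convention), and a generalized redex of $N$ survives. Second, the generalized redexes of some $M_i$: these are generalized redexes of $M$, so by local scope of $M$ their arguments' free variables are active in $M$; since $M = x_0~M_1~\dots~M_n$ has no binders of its own, such a variable is either free in $M$ (hence still free in $M[N/x_0]$) or bound by a generalized redex inside some $M_j$ (which survives). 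Third, the genuinely new generalized redexes created by applying $N$ to the $M_i$'s; unwinding the definition of prime redex, these all have the form $(\lambda w, M_i)$, and here I use that $\fv(M_i) \subseteq \fv(M)$ (all binders of $M$ sit inside the $M_j$'s), so these variables are free in $M$ and remain free, hence active, in $M[N/x_0]$.

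In the case $M = (\lambda y.~M')~M_1~\dots~M_n$, I first establish the decomposition fact that $M'~M_2~\dots~M_n$ is locally scoped: its generalized redexes are among those of $M$ (the only one lost is $(\lambda y, M_1)$), and active-in-$M$ transfers to active-in-$M'~M_2~\dots~M_n$, the only delicate point being that $y$, which in $M$ was active solely through the redex $(\lambda y, M_1)$, becomes a genuine free variable of $M'~M_2~\dots~M_n$. Since $M'~M_2~\dots~M_n$ is shorter, of ground type, and still has head occurrence $x_0$, the induction hypothesis yields that $(M'~M_2~\dots~M_n)[N/x_0] = M'[N/x_0]~M_2~\dots~M_n$ is locally scoped. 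I then recompose: the generalized redexes of $M[N/x_0] = (\lambda y.~M'[N/x_0])~M_1~\dots~M_n$ are $(\lambda y, M_1)$ --- for which $\fv(M_1)$ is active in $M$ (since $(\lambda y, M_1)$ is a generalized redex of the locally scoped term $M$), hence active in $M[N/x_0]$ --- together with the generalized redexes of $M'[N/x_0]~M_2~\dots~M_n$ (controlled by the induction hypothesis, transporting active-ness back through the reinstated binder $\lambda y$) and the generalized redexes of $M_1$ (which are generalized redexes of $M$, hence controlled by local scope of $M$, and untouched by the substitution).

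The routine but genuinely delicate part --- which I expect to be the main obstacle --- is the repeated bookkeeping of how free variables and generalized redexes behave under the single-occurrence substitution $[N/x_0]$ and under the reassociation of application spines built into the definition of prime redex: in each place where an assertion ``\dots\ is active in \dots'' must be transported, one unfolds \emph{active} into \emph{free} versus \emph{bound by a generalized redex} and checks that the witness survives. Everything hinges on the structural fact isolated in the first paragraph: $x_0$ occupies head position, so it never sits inside a redex argument, and the only binders lying above it are themselves part of generalized redexes (the successive $\lambda y$'s peeled off in the inductive step).
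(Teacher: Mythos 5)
Your proof is correct and follows essentially the same route as the paper's: induction on the length of $M$ (using that a ground-type term with a variable head occurrence is either $x_0~M_1\dots M_n$ or $(\lambda y.\,M')\,M_1\dots M_n$), a direct analysis of the three families of generalized redexes of $N~M_1\dots M_n$ in the first case, and decompose--apply IH--recompose in the second. The paper merely compresses the second case to ``follows directly from IH''; your expanded bookkeeping (including the observation that the head occurrence never sits inside a redex argument) is the intended content of that step.
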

\begin{proof}
By induction on the length of $M$ (of ground type, so not an abstraction).
\begin{itemize}
\item If $M = x_0~M_1~\dots~M_n$, then the generalized redexes of $N~M_1~\dots~M_n$ are those of $N, M_1, \dots, M_n$ (that are
of the form $(\lambda y, S)$ where all free variables in $S$ are active, by definition of local scope), with possibly
the addition of generalized redexes of the form $(\lambda z, M_i)$. Free variables in $M_i$ are free in $N~M_1~\dots~M_n$,
so they are active.
\item If $M = (\lambda y.~M')~M_1~\dots~M_n$, it follows directly from IH. \qedhere
\end{itemize}
\end{proof}

\begin{lem}
If $\vdash M: o$ is locally scoped and $M \to_{\lhr} M'$, then $M'$ is locally scoped.
\label{lem:pres_loc_scope}
\end{lem}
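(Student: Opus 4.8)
The plan is to reduce the statement, via an easy analysis of the shape of $M$, to the substitution lemma for local scope (Lemma~\ref{lem:locscope_subst}). Since $M$ is closed and of ground type $o$, it is neither an abstraction nor of the form $x_0~M_1~\dots~M_n$; so either $M = \daimon~M_1~\dots~M_n$, in which case $M$ has no prime redex, no $\lhr$ step exists, and the statement holds vacuously, or $M = (\lambda x.~M')~M_1~\dots~M_n$ with $n\ge 1$. In the latter case, by definition of $\lhr$ the head occurrence of $M$ is an occurrence $z_0$ of a variable $z$, there is a (unique, by Barendregt's convention) prime redex $(\lambda z, N)$ of $M$ whose abstraction binds the variable of $z_0$, and $M \to_\lhr M[N/z_0]$. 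Note that this description covers uniformly the two situations $z = x$ (where $N = M_1$) and $z \ne x$ (where $(\lambda z, N)$ is a prime redex occurring deeper along the head spine of $M$).

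The technical heart is then the claim that the argument $N$ of this prime redex is itself locally scoped. The key observation is that $N$ occurs in $M$ as the first argument of the abstraction $\lambda z$ along the head spine, hence in \emph{argument} position, and that every generalized redex of $N$ is a generalized redex of $M$. So let $(\lambda w, T)$ be a generalized redex of $N$ and $v$ a free variable of $T$; since $M$ is locally scoped, $v$ is active in $M$. If the binder of $v$ lies outside $N$, then $v$ is free, hence active, in $N$. Otherwise the activity of $v$ in $M$ is witnessed by a generalized redex $(\lambda v, U)$, i.e.\ the abstraction $\lambda v$ occurs applied to $U$ somewhere in $M$; that abstraction is unique (Barendregt) and lies inside $N$, and because $N$ itself occurs in argument position this application must occur within $N$, so $(\lambda v, U)$ is a generalized redex of $N$ and $v$ is active in $N$. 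In all cases $v$ is active in $N$, so $N$ is locally scoped.

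Granting this, we are done: $M$ is locally scoped and of ground type $o$, its head occurrence $z_0$ is an occurrence of $z$, and $N$ is locally scoped, so Lemma~\ref{lem:locscope_subst} applied to the substitution $M[N/z_0]$ yields that $M' = M[N/z_0]$ is locally scoped.

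The main obstacle is precisely the sub-claim that $N$ is locally scoped: it requires careful bookkeeping of the interplay between the \emph{active}/\emph{passive} classification of variables in a term and in its subterms — exactly the subtlety that doomed the naive notion of strong local scope. It is also where the hypotheses pay off: closedness rules out a free head variable, and ground type rules out a leading abstraction (and, should one prefer to treat the case $z \ne x$ by recursion into the residual body $M'~M_2~\dots~M_n$ and recomposition rather than by the uniform argument above, ground type is what guarantees that this residual term is again of type $o$).
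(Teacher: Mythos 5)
Your proof is correct, and it reaches the same key lemma as the paper (Lemma~\ref{lem:locscope_subst}) by a somewhat different route. The paper proceeds by induction on the length of $M$: in the case $M = (\lambda y.~N)~M_1~\dots~M_n$ with $y$ not binding the head occurrence it recurses into $N~M_2~\dots~M_n$ and recomposes, and only when it reaches the abstraction that actually binds the head occurrence does it invoke Lemma~\ref{lem:locscope_subst}, applied \emph{locally} to $N~M_2~\dots~M_n$ with argument $M_1$. You instead apply Lemma~\ref{lem:locscope_subst} once, \emph{globally}, to the whole term $M$ with the fired prime-redex argument $N$, which dispenses with the induction and with the (implicit, unproved in the paper) step that local scope is preserved under recomposition $(\lambda y.~P)~M_1~\dots~M_n$ from $P~M_2~\dots~M_n$. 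The price is your sub-claim that $N$ itself is locally scoped, and your justification of it is sound: generalized redexes of $N$ are generalized redexes of $M$, and conversely a generalized redex of $M$ whose abstraction sits inside the argument-position subterm $N$ must be a prime redex of a subterm of $N$, since prime redexes of the enclosing partial applications only ever involve head-spine abstractions. (In fact, since $M$ is closed, every top-level spine argument --- in particular $N$ --- is closed, so your first sub-case, where the binder of $v$ lies outside $N$, is vacuous here; this does no harm.) Note that the paper needs essentially the same observation anyway to justify its unargued assertion that ``$M_1$ is ls, otherwise $M$ could not be'', so your version arguably makes explicit a point the paper glosses over, in exchange for localizing all the work in the single claim about $N$.
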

\begin{proof}
By induction on the length of $M$, writing $(\lambda x, S)$ for the prime redex fired in $M \to_{\lhr} M'$. We only detail the
non-trivial cases.
\begin{itemize}
\item If $M = (\lambda y.~N)~M_1~\dots~M_n$ with $y\neq x$, then it direct that
$N~M_2~\dots~M_n$ is still ls. By IH, $N[S/x_0]~M_2~\dots~M_n$ is ls as well, and from that follows
that $M[S/x_0]$ is ls.
\item If $M = (\lambda x.~N)~M_1~\dots~M_n$, where $x_0$ is an occurrence of $x$. Then, necessarily
$N~M_2~\dots~M_n$ is ls as well. Likewise, $M_1$ is ls, otherwise $M$ could not be. Therefore by
Lemma \ref{lem:locscope_subst}, $N[M_1/x_0]~M_2~\dots~M_n$ is ls, and from that follows that $M'$ is ls.\qedhere
\end{itemize}
\end{proof}

\noindent We say that a term $M$ is a \textbf{generalized game situation} if it is closed, of ground type, and both $\eta$-long and locally scoped. 
By Lemmas \ref{lem:lhr_pres_etalong} and \ref{lem:pres_loc_scope}, we know that generalized game situations are preserved by linear
head reduction.

\subsection{Simulation of generalized game situations} 
\label{subsec:simggs}
We start this subsection by showing how one can associate
a skeleton with any generalized game situation -- in fact with any term, although this connection will only yield
a simulation for generalized game situations.

\begin{defi}
Let $\Gamma \vdash M:A$ be a term, with a \textbf{bs-environment} $\rho$, being defined as a partial function mapping each
variable $x$ of $\Gamma$ on which it is defined to a skeleton $\rho(x)$. Then the skeleton $\intr{M}_{\rho}$ is defined
by induction on the length of $M$, as follows:
\[
\begin{array}{rclcl}
\intr{\daimon~M_1~\dots~M_n}_\rho &=& 0\\
\intr{x_0~M_1~\dots~M_n}_\rho &=&  1 + \bigsqcup_{i=1}^n \intr{M_i}_{\rho} && \txt{\raisebox{8pt}{if $\rho(x)$ undefined}}\\
\intr{x_0~M_1~\dots~M_n}_{\rho} &=& (1 + \bigsqcup_{i=1}^n \intr{M_i}_{\rho})\cdot_{\lv(x)+1} \rho(x) && \raisebox{2.5pt}{\txt{if $\rho(x)$ defined}}\\
\intr{\lambda x.~M}_{\rho} &=& \intr{M}_{\rho}\\
\intr{(\lambda x.~M)~M_1~\dots~M_n}_\rho &=& \intr{M~M_2~\dots~M_n}_{\rho \cup \{x\mapsto \intr{M_1}_\rho\}}
\end{array}
\]
We write $\intr{M}$ for $\intr{M}_\emptyset$.
\label{def:intr}
\end{defi}

\subsubsection{Simulation of generalized game situations}

We now prove that $\intr{-}$ is a simulation.
Because $\intr{-}$ over-approximates terms it will not directly relate
$\to_\lhr$ and $\leadsto$. Rather, we will have that if $M\leadsto M'$, then there is $a$ such that 
$\intr{M} \leadsto a \hookleftarrow \intr{M'}$. This relaxed simulation will suffice for our purposes
since by Lemma \ref{lem:embedsbs} it implies that $\norm(\intr{M'}) \leq \norm(a)$.
We show in Figure \ref{fig:exsimulation} the skeletons corresponding with all $\lhr$-reducts of the
term $(\lambda f^{o\to o}.~\lambda x^o. f~(f~x))~(\lambda y^o.~y)~\daimon_o$ from Example \ref{ex_lhr},
with explicit typing.

\begin{figure}
\begin{center}
\includegraphics[scale=0.25]{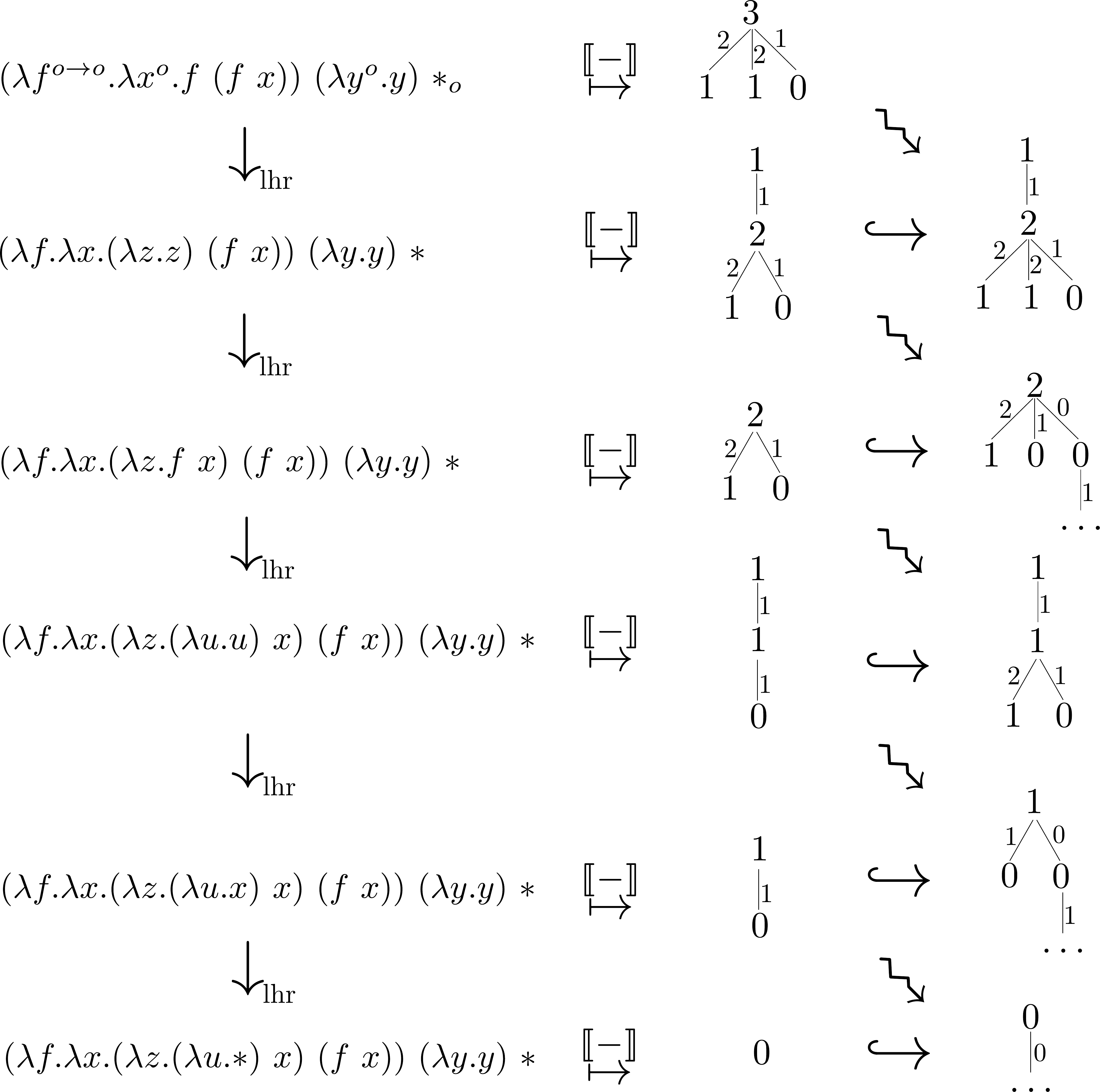}
\end{center}
\caption{Simulation of the reduction of Example \ref{ex_lhr} in skeletons}
\label{fig:exsimulation}
\end{figure}

We now aim to prove our simulation result for generalized game situations.

\begin{defi}
Let $M$ be a term, and $\rho, \rho'$ two bs-environments. We write $\rho \sim_M \rho'$
iff $\dom(\rho)\cap \fv(M) = \dom(\rho')\cap \fv(M)$, and that for all $x\in \dom(\rho)\cap \fv(M)$,
$\rho(x) = \rho'(x)$.
\end{defi}

\begin{lem}
If $M$ is a term, $\rho$ and $\rho'$ are such that $\rho \sim_M \rho'$, then
$\intr{M}_{\rho'} = \intr{M}_{\rho}$.
\label{lem:simeq}
\end{lem}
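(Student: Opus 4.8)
The plan is to prove the statement by a straightforward induction on the length of $M$, following the exact case analysis used in Definition \ref{def:intr}, and observing in each case that $\intr{M}_\rho$ only ever inspects $\rho$ on free variables of $M$ (or of appropriate subterms), so that the hypothesis $\rho \sim_M \rho'$ propagates to all recursive calls.

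First I would dispose of the leaf cases. For $M = \daimon~M_1~\dots~M_n$ both sides are $0$, so there is nothing to prove. For $M = x_0~M_1~\dots~M_n$, note that $x \in \fv(M)$, so $\rho \sim_M \rho'$ gives that $\rho(x)$ is defined iff $\rho'(x)$ is, and if so $\rho(x) = \rho'(x)$; thus the same clause of the definition applies to both, and it suffices to handle the common subexpression $\bigsqcup_{i=1}^n \intr{M_i}_\rho$. Since $\fv(M_i) \subseteq \fv(M)$ for each $i$, we have $\rho \sim_{M_i} \rho'$, so the induction hypothesis gives $\intr{M_i}_\rho = \intr{M_i}_{\rho'}$, and the two skeletons coincide (the factor $\rho(x) = \rho'(x)$ being literally equal in the defined case).

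Next the two structural cases. For $M = \lambda x.~N$, we have $\intr{M}_\rho = \intr{N}_\rho$; since $\fv(N) \subseteq \fv(M) \cup \{x\}$ but the definition of $\intr{N}_\rho$ only rebinds $x$ when $N$ itself reaches an application headed by $\lambda x$, and in any case $\rho \sim_M \rho'$ together with $\alpha$-conversion (Barendregt's convention) ensures $\rho \sim_N \rho'$ — here one should be slightly careful and note that $x$ is bound in $M$, hence not free, so any potential discrepancy of $\rho,\rho'$ at $x$ is irrelevant to $\intr{N}$ as one checks by a sub-induction, or more simply one may first reduce to the case where $\rho,\rho'$ are undefined at bound variables. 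For $M = (\lambda x.~N)~M_1~\dots~M_n$, we have $\intr{M}_\rho = \intr{N~M_2~\dots~M_n}_{\rho \cup \{x \mapsto \intr{M_1}_\rho\}}$. By the induction hypothesis applied to $M_1$ (whose free variables are free in $M$), $\intr{M_1}_\rho = \intr{M_1}_{\rho'}$, so the two extended environments $\rho \cup \{x \mapsto \intr{M_1}_\rho\}$ and $\rho' \cup \{x \mapsto \intr{M_1}_{\rho'}\}$ agree at $x$; they also agree on $\fv(N~M_2~\dots~M_n) \setminus \{x\} \subseteq \fv(M)$ because $\rho \sim_M \rho'$. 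Hence the extended environments are $\sim_{N~M_2~\dots~M_n}$-related, and the induction hypothesis finishes the case.

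I do not expect any real obstacle here: the only delicate point is the bookkeeping around bound variables in the $\lambda$-abstraction and application cases — making precise that the values of $\rho$ and $\rho'$ at a variable bound in $M$ never affect $\intr{M}_\rho$. This is handled cleanly either by a preliminary normalization (restricting $\rho,\rho'$ to variables of the ambient context $\Gamma$, which by Barendregt's convention are disjoint from the bound variables of $M$) or by strengthening the induction hypothesis to range over all subterms with their locally extended environments, exactly as the recursion in Definition \ref{def:intr} does. Either way the argument is routine.
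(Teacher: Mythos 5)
Your proof is correct and follows exactly the route the paper takes: the paper's own proof is simply "straightforward by induction on the length of $M$", and your case analysis (following Definition \ref{def:intr}, with the only delicate point being that the environments' values at bound variables are irrelevant) is precisely the elaboration of that induction.
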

\begin{proof}
Straightforward by induction on the length of $M$.
\end{proof}

\begin{lem}
Suppose $\Gamma \vdash M:A$ is such that $(x:B)\in \Gamma$ does not appear in any subterm $N$, for any generalized redex $(\lambda y, N)$ in $M$.
Let $\rho$ be a bs-environment, and a skeleton $a$. Then, we have
$
\intr{M}_{\rho \cup \{x \mapsto a\}} \embeds \intr{M}_\rho \cdot_{\lv(x)+1} a
$.
\label{lem:seplc}
\end{lem}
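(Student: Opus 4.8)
The plan is to induct on the length of $M$, splitting on its four possible shapes: $\daimon~M_1~\dots~M_n$, $z_0~M_1~\dots~M_n$ (with $z_0$ an occurrence of an arbitrary variable $z$, possibly $z=x$), $\lambda y.~M'$, and $(\lambda y.~M')~M_1~\dots~M_n$. Besides the induction hypothesis, I would use a handful of facts about $\embeds$ that are immediate from its definition but were not isolated before: that $\embeds$ is a preorder, that $-\cdot_d-$, $+$ and $\bigsqcup$ are monotone for $\embeds$ (the latter two also being instances of Lemma~\ref{lem:op_embeddings}), that $s \embeds s\cdot_d t$, and that two identical children can be merged, $s\cdot_d t\cdot_d t \embeds s\cdot_d t$; together with the identity $1 + (s\cdot_d t) = (1+s)\cdot_d t$ and the fact that skeletons are considered up to permutation of subtrees; and finally Lemma~\ref{lem:simeq}, to discard environment bindings for variables not free in the term under consideration. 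One may assume $x\notin\dom(\rho)$. Note once and for all that the type of $x$ --- hence $\lv(x)$ --- is the same in $M$ and in all its subterms, and that every generalized redex of a subterm of $M$ (and, in the last case below, of the ``spine'' $M'~M_2~\dots~M_n$) is again a generalized redex of $M$, so that the hypothesis on $x$ is inherited all along the induction.

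The cases $M = \daimon~M_1~\dots~M_n$ (both sides are $0$, resp. $0\cdot_{\lv(x)+1}a$, and $0\embeds 0\cdot_{\lv(x)+1}a$) and $M = \lambda y.~M'$ (direct from the induction hypothesis, since $\intr{\lambda y.~M'}_\rho = \intr{M'}_\rho$ and $\lv(x)$ is unchanged) are immediate. For $M = z_0~M_1~\dots~M_n$ with $z\neq x$, the induction hypothesis gives $\intr{M_i}_{\rho\cup\{x\mapsto a\}}\embeds \intr{M_i}_\rho\cdot_{\lv(x)+1}a$ for each $i$ --- trivially so when $x\notin\fv(M_i)$, using Lemma~\ref{lem:simeq} and $s\embeds s\cdot_d t$. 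Pushing this through $\bigsqcup_i(-)$, then through $1 + (-)$, and --- when $\rho(z)$ is defined --- through $-\cdot_{\lv(z)+1}\rho(z)$, by monotonicity of $\embeds$, Lemma~\ref{lem:op_embeddings}, the identity above, and permutation of subtrees, yields exactly $\intr{M}_\rho\cdot_{\lv(x)+1}a$. The case $M = z_0~M_1~\dots~M_n$ where $z=x$ (so the head occurrence is an occurrence of the distinguished variable) runs through the same computation, except that it produces one extra outermost $\cdot_{\lv(x)+1}a$ on the left; I would then collapse the two resulting copies of $\{\lv(x)+1\}a$ by $s\cdot_{\lv(x)+1}a\cdot_{\lv(x)+1}a \embeds s\cdot_{\lv(x)+1}a$.

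The substantive case is $M = (\lambda y.~M')~M_1~\dots~M_n$. Since $(\lambda y, M_1)$ is a generalized redex of $M$, the hypothesis forces $x\notin\fv(M_1)$, so by Lemma~\ref{lem:simeq} we have $\intr{M_1}_{\rho\cup\{x\mapsto a\}} = \intr{M_1}_\rho$; write $b$ for this skeleton. Unfolding $\intr{-}$ on both sides then gives $\intr{M}_{\rho\cup\{x\mapsto a\}} = \intr{M'~M_2~\dots~M_n}_{(\rho\cup\{y\mapsto b\})\cup\{x\mapsto a\}}$ and $\intr{M}_\rho = \intr{M'~M_2~\dots~M_n}_{\rho\cup\{y\mapsto b\}}$. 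Since $M'~M_2~\dots~M_n$ is strictly shorter than $M$ and (as noted) carries no generalized redex with $x$ in its argument, the induction hypothesis applied to it with environment $\rho\cup\{y\mapsto b\}$ and skeleton $a$ gives $\intr{M'~M_2~\dots~M_n}_{(\rho\cup\{y\mapsto b\})\cup\{x\mapsto a\}} \embeds \intr{M'~M_2~\dots~M_n}_{\rho\cup\{y\mapsto b\}}\cdot_{\lv(x)+1}a$, which is exactly $\intr{M}_\rho\cdot_{\lv(x)+1}a$, concluding this case.

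I expect the only real difficulty to be in this last case --- specifically, recognising that ``$x$ absent from every generalized-redex argument'' is exactly the condition that makes the newly created environment entry $\intr{M_1}_\rho$ insensitive to the extension $x\mapsto a$, which is what lets $\intr{-}$ commute with the decomposition of $(\lambda y.~M')~M_1~\dots~M_n$ into $M'~M_2~\dots~M_n$ under an extended environment. Everything else --- checking that the hypothesis is inherited along the induction, and the manipulation of the tree combinators $\bigsqcup$, $+$ and $\cdot_d$ in the variable cases --- is routine once the elementary $\embeds$-facts listed in the first paragraph are available.
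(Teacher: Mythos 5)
Your proof is correct and follows essentially the same route as the paper's: induction on the length of $M$, with the application case handled by observing that $x\notin\fv(M_1)$ because $(\lambda y, M_1)$ is a generalized redex, invoking Lemma~\ref{lem:simeq} to get $\intr{M_1}_{\rho\cup\{x\mapsto a\}}=\intr{M_1}_\rho$, and then applying the induction hypothesis to the spine $M'~M_2~\dots~M_n$ under the extended environment. The only difference is that you spell out the routine cases the paper dismisses as direct, including the case where the head occurrence is $x$ itself, which you correctly resolve by merging the two identical children labelled $a$.
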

\begin{proof}
By induction on the length of $M$, with $\rho' = \rho \cup \{x \mapsto a\}$. Most cases follow directly
by unfolding the definition of $\intr{-}$, IH, Lemma \ref{lem:op_embeddings} and basic manipulations on embeddings. 

We detail the one non-trivial case where $M = (\lambda y.~M')~M_1~\dots~M_n$. Then:
\begin{eqnarray*}
\intr{(\lambda y.~M')~M_1~\dots~M_n}_{\rho \cup \{x \mapsto a\}} &=& \intr{M'~M_2~\dots~M_n}_{\rho \cup \{x\mapsto a\} \cup \{y \mapsto \intr{M_1}_{\rho \cup \{x \mapsto a\}}\}}\\
&=&\intr{M'~M_2~\dots~M_n}_{\rho \cup \{x\mapsto a\} \cup \{y \mapsto \intr{M_1}_{\rho}\}}\\
&\embeds& \intr{M'~M_2~\dots~M_n}_{\rho\cup \{y \mapsto \intr{M_1}_{\rho}\}} \cdot_{\lv(x)+1} a\\
&=& \intr{(\lambda x.~M')~M_1~\dots~M_n}_{\rho} \cdot_{\lv(x)+1} a
\end{eqnarray*}
where the second equation exploits Lemma \ref{lem:simeq} and the fact that $x$ does not appear in $M_1$ (since $M_1$ is in a generalized
redex $(\lambda y, M_1)$); the embedding is by IH, and the last equality is by definition.
\end{proof}

\begin{lem}
Let $\Gamma \vdash M:A_1 \to \dots \to A_n \to o$ be a locally scoped $\eta$-long term, and
$\Gamma \vdash N_i : A_i$ be terms. Finally, let $\rho$ be a bs-environment on $M~N_1~\dots~N_n$. Then:
\[
\intr{M~N_1~\dots~N_n}_{\rho} \embeds \intr{M}_\rho \cdot [\{\lv(N_i)+1\} \intr{N_i}_{\rho} \mid 1 \leq i \leq n]
\]
\label{lem:appbs}
\end{lem}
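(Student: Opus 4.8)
The plan is to prove Lemma~\ref{lem:appbs} by induction on $n$, reducing the multi-argument case to the single-argument case, and then to prove the single-argument case by induction on the length of $M$, following the structure of Definition~\ref{def:intr}. The single-argument statement to establish first is: if $\Gamma \vdash M : A \to B$ is locally scoped and $\eta$-long and $\Gamma \vdash N : A$, then $\intr{M~N}_\rho \embeds \intr{M}_\rho \cdot_{\lv(N)+1} \intr{N}_\rho$. Given this, the general case follows by writing $M~N_1~\dots~N_n = (M~N_1)~N_2~\dots~N_n$, applying the induction hypothesis to $M~N_1$ (which is still locally scoped and $\eta$-long by Lemmas~\ref{lem:etalong_app} and---for local scope---a straightforward check, since application introduces no new generalized redex whose argument could violate scope) to get $\intr{M~N_1~\dots~N_n}_\rho \embeds \intr{M~N_1}_\rho \cdot [\{\lv(N_i)+1\}\intr{N_i}_\rho \mid 2 \leq i \leq n]$, then applying the single-argument case to $M~N_1$ together with the third embedding of Lemma~\ref{lem:op_embeddings} ($a + b\cdot_d c \embeds (a+b)\cdot_d c$, generalized to the iterated $\bigsqcup$/append form) to push $\intr{N_1}_\rho$ into the list.

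For the single-argument case, I would case on the head shape of $M$, using that $M$ has arrow type so it is not $\daimon~M_1~\dots~M_n$ applied to enough arguments to reach ground type---actually $M$ could be $\daimon~M_1~\dots~M_k$, $x_0~M_1~\dots~M_k$, $\lambda y.~M'$, or $(\lambda y.~M')~M_1~\dots~M_k$. In the $\daimon$ case $\intr{M~N}_\rho = 0 = \intr{M}_\rho$ and any skeleton embeds after appending is vacuous---wait, $0 \cdot_d \intr{N}_\rho$ does not embed into $0$; but here $M~N = \daimon~M_1~\dots~M_k~N$ still has head $\daimon$, so $\intr{M~N}_\rho = 0$ and we need $0 \embeds 0 \cdot_d \intr{N}_\rho$, which holds since $\embeds$ only requires the root inequality $0 \leq 0$ and allows the right side to have extra children. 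In the head-variable case $M = x_0~M_1~\dots~M_k$, we have $\intr{M~N}_\rho = (1 + \bigsqcup_{i=1}^k \intr{M_i}_\rho + \{\text{something for }N\})$, and I must match the edge label: $N$ enters $\intr{M~N}_\rho$ with label $\lv(N)+1$, which is exactly what appears on the right-hand side; the potential $\rho(x)$ term is carried along identically on both sides, so the embedding reduces to Lemma~\ref{lem:op_embeddings}. The abstraction case $M = \lambda y.~M'$ unfolds directly: $\intr{(\lambda y.~M')~N}_\rho = \intr{M'}_{\rho \cup \{y \mapsto \intr{N}_\rho\}}$, and here is where $\eta$-longness and local scope are essential---Lemma~\ref{lem:seplc} applies because local scope of $M~N$ (hence, after the $\beta$-redex-like step, $N$ is in a generalized redex $(\lambda y, N)$ so $y$'s argument is handled) ensures $y = $ the bound variable does not appear in the argument of any generalized redex inside $M'$ other than through the environment, giving $\intr{M'}_{\rho \cup \{y \mapsto \intr{N}_\rho\}} \embeds \intr{M'}_\rho \cdot_{\lv(y)+1} \intr{N}_\rho = \intr{\lambda y.~M'}_\rho \cdot_{\lv(N)+1} \intr{N}_\rho$, using $\lv(y) = \lv(N)$. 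Finally the case $M = (\lambda y.~M')~M_1~\dots~M_k$ uses that $M~N = (\lambda y.~M')~M_1~\dots~M_k~N$, so $\intr{M~N}_\rho = \intr{M'~M_2~\dots~M_k~N}_{\rho \cup \{y \mapsto \intr{M_1}_\rho\}}$, and one applies the induction hypothesis on $M'~M_2~\dots~M_k$ (shorter, and still locally scoped $\eta$-long by Lemma~\ref{lem:etalong_decrec}) to extract $\intr{N}$, then reassembles using Lemma~\ref{lem:op_embeddings}.

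The main obstacle I anticipate is bookkeeping the hypotheses so that Lemma~\ref{lem:seplc} is genuinely applicable in the abstraction case: one needs that the bound variable $y$ of $\lambda y.~M'$ does not occur in the argument of any generalized redex in $M'$, and this is exactly where local scope of the whole term $M~N_1~\dots~N_n$ is used---indeed $(\lambda y, N)$ (or $(\lambda y, N_i)$) is itself a generalized redex once $M = \lambda y.~M'$ is applied, and local scope forces the free variables of its argument to be active, while $\eta$-longness controls the shape of $M'$. Care is also needed with $\rho \sim_M \rho'$ (Lemma~\ref{lem:simeq}) to commute environment extensions past subterms not containing the relevant variable, exactly as in the proof of Lemma~\ref{lem:seplc}. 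Beyond this, the remaining steps are routine unfolding of Definition~\ref{def:intr} plus the algebraic embedding inequalities of Lemma~\ref{lem:op_embeddings}, applied in the order: distribute $\bigsqcup$ over $\cdot$, then use $a + b\cdot_d c \embeds (a+b)\cdot_d c$ to slide arguments into the child list.
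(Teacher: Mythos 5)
There is a genuine gap in your outer induction on $n$. You rewrite $M~N_1~\dots~N_n$ as $(M~N_1)~N_2~\dots~N_n$ and apply the inductive hypothesis with $M~N_1$ in the role of the head term, but the lemma's hypotheses require that head term to be locally scoped and $\eta$-long, and $M~N_1$ is in general neither. Lemma~\ref{lem:etalong_app} needs \emph{both} $M$ and $N_1$ to be $\eta$-long, whereas the $N_i$ here are arbitrary terms; and local scope can fail outright, since when $M$ is an abstraction the term $M~N_1$ acquires the new prime redex $(\lambda x, N_1)$, whose argument $N_1$ may contain variables that are passive in $M~N_1$ (again, nothing is assumed about the $N_i$). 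This is exactly why the paper runs a single induction on the length of $M$ carrying the whole argument vector: in the case $M = \lambda x.~M'$ it recurses on the \emph{shorter, still locally scoped and $\eta$-long} term $M'$ applied to $N_2, \dots, N_p$, absorbing $N_1$ into the bs-environment rather than into the head term, and then peels it back out with Lemmas~\ref{lem:simeq} and~\ref{lem:seplc}. Your inner induction in the abstraction and nested-redex cases is essentially this argument, so the fix is to drop the outer induction on $n$ entirely and let the length induction handle all arguments at once.

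A secondary problem: your head-variable case is not actually correct as argued. From Definition~\ref{def:intr}, $\intr{x_0~M_1~\dots~M_k~N}_\rho$ has root $1 + \max(\max_i \max(\intr{M_i}_\rho), \max(\intr{N}_\rho))$, while the right-hand side $(1 + \bigsqcup_i \intr{M_i}_\rho)\cdot_{\lv(N)+1}\intr{N}_\rho$ has root $1 + \max_i \max(\intr{M_i}_\rho)$; the required root inequality for $\embeds$ fails whenever $\intr{N}_\rho$ has the larger root (e.g.\ $k=0$). The case is saved only because it is vacuous: a term $x_0~M_1~\dots~M_k$ of arrow type is never $\eta$-long, since the outer $\eta$-expansion $\lambda y.~x_0~M_1~\dots~M_k~y_0$ creates no new generalized redex (and likewise for the $\daimon$-headed case). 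You should say this explicitly rather than claim the embedding "reduces to Lemma~\ref{lem:op_embeddings}". Your justification of Lemma~\ref{lem:seplc} via "local scope of $M~N_1~\dots~N_n$" is also off target --- that term is not assumed locally scoped; what is needed, and what the paper uses, is that $\lambda y.~M'$ itself is locally scoped, so its bound variable $y$ is passive and hence cannot occur free in the argument of any generalized redex of $M'$.
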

\begin{proof}
By induction on the length of $M$, detailing non-trivial cases.
\begin{itemize}
\item If $M = \lambda x.~M'$, we calculate:
\begin{eqnarray*}
\intr{(\lambda x.~M')~N_1~\dots~N_p}_{\rho}     &=& \intr{M'~N_2~\dots~N_p}_{\rho \cup \{x \mapsto \intr{N_1}_{\rho}\}}\\
                                                &\embeds& \intr{M'}_{\rho \cup \{x \mapsto \intr{N_1}_{\rho}\}} \cdot [\{\lv(N_i)+1\} \intr{N_i}_{\rho \cup \{x \mapsto \intr{N_2}_{\rho}\}}\mid 2\leq i \leq p]\\
&=& \intr{M'}_{\rho \cup \{x \mapsto \intr{N_1}_{\rho}\}} \cdot [\{\lv(N_i)+1\} \intr{N_i}_{\rho}\mid 2 \leq i \leq p]\\
&\embeds& \intr{M'}_{\rho} \cdot_{\lv(x)+1} \intr{N_1}_{\rho} \cdot [\{\lv(N_i)+1\} \intr{N_i}_{\rho} \mid 2\leq i \leq p]\\
&=& \intr{M'}_{\rho} \cdot [\{\lv(N_i)+1\} \intr{N_i}_{\rho} \mid 1\leq i \leq p]
\end{eqnarray*}
The first equality is by definition, the first embedding by IH, the second equality by Lemma \ref{lem:simeq}
since $x$ cannot appear in $N_i$, the second embedding by Lemma \ref{lem:seplc} since $\lambda x.~M'$ is ls,
so $x$, being passive, cannot appear in $N$ for any generalized redex $(\lambda z, N)$ of $M'$. Finally, the last equation is by
definition.
\item If $M = (\lambda x.~M')~M_1~\dots~M_p$, we calculate:
\begin{eqnarray*}
&&\intr{(\lambda x.~M')~M_1~\dots~M_p~N_1~\dots~N_n}_{\rho}\\
&=& \intr{M'~M_2~\dots~M_p~N_1~\dots~N_n}_{\rho \cup \{x \mapsto \intr{M_1}_{\rho}\}}\\
&\embeds& \intr{M'~M_2~\dots~M_p}_{\rho \cup \{x \mapsto \intr{M_1}_{\rho}\}} \cdot \\&&[\{\lv(N_i)+1\}\intr{N_i}_{\rho \cup \{x \mapsto \intr{M_1}_{\rho}\}}\mid 1 \leq i \leq p]\\
&=& \intr{(\lambda x.~M')~M_1~\dots~M_p}_{\rho} \cdot [\{\lv(N_i)+1\}\intr{N_i}_{\rho}\mid 1 \leq i \leq p]
\end{eqnarray*}
The first equality is by definition, the embedding by IH, the last equation by Definition \ref{def:intr} and
Lemma \ref{lem:simeq}, since $x$ does not appear in $N_i$.\qedhere
\end{itemize}
\end{proof}

\noindent In order to prove our simulation result between lhr on terms and skeletons, we will have to generalize
it to a connection between skeletons and terms-with-environments, or closures. Therefore, we will make use of the
following notions.

\begin{defi}
The notions of \emph{closure} and \emph{environments} are defined by mutual induction:
\begin{itemize}
\item A \textbf{closure} is $(\Gamma \vdash M : A, \sigma)$ with $\Gamma \vdash M : A$ a term, and
$\sigma$ an \textbf{environment} on $\Gamma$.
\item An \textbf{environment} $\sigma$ on $\Gamma$ is a partial function which, when defined, maps a variable $(x: A)$
to a closure $(\Gamma \vdash N : A, \tau)$. We write $\sigma_1(x) = N$ and $\sigma_2(x) = \tau$.
\end{itemize}
When $(\Gamma \vdash M : A, \sigma)$ is a closure, we will use the shorthand notation $M^\sigma$ and leave $\Gamma, A$ implicit.
An environment $\sigma$ for $\Gamma$ is \textbf{flat} if for each $(x:A)\in \Gamma$,
we have that $\sigma_2(x) \subseteq \sigma$.

Standard notions on terms directly translate to environments: $\sigma$ on $\Gamma$ is \textbf{$\eta$-long} iff for all $(x:A)\in \Gamma$,
if $\sigma(x)$ is defined then $\sigma_1(x)$ is $\eta$-long and $\sigma_2(x)$ is $\eta$-long. It is \textbf{locally scoped} iff
for all $(x:A)\in \Gamma$, if $\sigma(x)$ is defined then $\sigma_1(x)$ and $\sigma_2(x)$ are locally scoped.
\end{defi}

Using environments, we can generalize lhr on terms to a linear reduction on closures.

\begin{lem}
Define the following reduction:
\begin{mathpar}
\inferrule
        { }
        {x_i~M_1~\dots~M_n \to_{\sigma} \sigma_1(x)~M_1~\dots~M_n}
\and
\inferrule
        {M \to_{\sigma} M'}
        {\lambda y.~M \to_{\sigma} \lambda y.~M'}
\and
\inferrule
        {M'~M_2~\dots~M_n \to_{\sigma \cup \{y \mapsto M_1^\sigma\}} M''~M_2~\dots~M_n}
        {(\lambda y.~M')~M_1~\dots~M_n \to_{\sigma} (\lambda y.~M'')~M_1~\dots~M_n}
\end{mathpar}
Then, we have $M\to_{\emptyset} M'$ iff $M \to_{\lhr} M'$.
\label{lem:openlhr}
\end{lem}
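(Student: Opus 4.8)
The two relations $\to_{\emptyset}$ and $\to_{\lhr}$ are both deterministic: on either side the shape of the term determines which clause, if any, applies, and on the $\lhr$ side Barendregt's convention gives the uniqueness of the binder $\lambda x$ attached to a prime redex with head occurrence $x_0$. Hence it suffices to prove that $M \to_{\emptyset} M'$ implies $M \to_{\lhr} M'$ and conversely; the two notions of reducibility then coincide and the reducts agree. A naive induction on $\length(M)$ does not close, since the last clause defining $\to_{\sigma}$ recurses on $M'~M_2~\dots~M_n$ with the enlarged environment $\sigma \cup \{y \mapsto M_1^{\sigma}\}$, so one never stays at $\sigma = \emptyset$. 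The plan is therefore to generalise to arbitrary environments.

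First I would note that the three rules only ever consult $\sigma_1$, never $\sigma_2$, so for this lemma an environment may be regarded simply as a partial map from variables to subterms. In any run started from $\emptyset$, the environment at a closure $M^{\sigma}$ records exactly the prime redexes $(\lambda y, N_y)$ that were peeled off while descending the head path; to such an $M^{\sigma}$ I associate the genuine $\lambda$-term $M^{+}_{\sigma}$ obtained by re-inserting those peeled redexes at the positions they came from, so that $M^{+}_{\emptyset} = M$ and $M$ is the head-path residue of $M^{+}_{\sigma}$. The statement to prove by induction on $\length(M)$ is then: $M \to_{\sigma} M'$ iff $M^{+}_{\sigma} \to_{\lhr} {M'}^{+}_{\sigma}$, together with the bookkeeping invariant that $M$ and $M^{+}_{\sigma}$ have the same head occurrence $x_0$ and that the prime redex of $M^{+}_{\sigma}$ whose head occurrence is $x_0$ is $(\lambda x, \sigma_1(x))$ precisely when $\sigma(x)$ is defined, while otherwise $M^{+}_{\sigma}$ has no prime redex with that head occurrence.

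The induction splits on the shape of $M$. For $\daimon~M_1~\dots~M_n$, and for $x_0~M_1~\dots~M_n$ with $\sigma(x)$ undefined, neither side reduces, on the syntactic side using the clauses of the definition of prime redexes stating that such terms have none, together with the bookkeeping invariant for $M^{+}_{\sigma}$. For $x_0~M_1~\dots~M_n$ with $\sigma(x)$ defined, the base rule yields $\sigma_1(x)~M_1~\dots~M_n$, while on the syntactic side $(\lambda x, \sigma_1(x))$ is the prime redex of $M^{+}_{\sigma}$ firing at $x_0$, so $\lhr$ performs the matching substitution. The case $\lambda y.~M$ is immediate from the IH, since abstraction is a head context for $\lhr$ and matches the second rule. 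The main work is the case $M = (\lambda y.~M')~M_1~\dots~M_n$: the third rule relates $M \to_{\sigma} (\lambda y.~M'')~M_1~\dots~M_n$ to the step $M'~M_2~\dots~M_n \to_{\sigma \cup \{y \mapsto M_1^{\sigma}\}} M''~M_2~\dots~M_n$ on a strictly shorter term; using the clause ``the prime redexes of $(\lambda x.~M')~M_1~\dots~M_n$ are $(\lambda x, M_1)$ plus those of $M'~M_2~\dots~M_n$'' one checks that $(M'~M_2~\dots~M_n)^{+}_{\sigma \cup \{y \mapsto M_1^{\sigma}\}}$ is exactly $M^{+}_{\sigma}$, so the IH supplies the $\lhr$ step, and re-inserting the outer redex $(\lambda y.~-)~M_1~\dots~M_n$ — which by Barendregt's convention captures no free variable of the $M_i$ — recombines it into the required step on $M^{+}_{\sigma}$. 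Specialising to $\sigma = \emptyset$, where $M^{+}_{\emptyset} = M$, gives the lemma.

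I expect the main obstacle to be pinning down the reconstruction $M^{+}_{\sigma}$ precisely and verifying, in the prime-redex case, the identity $M^{+}_{\sigma} = (M'~M_2~\dots~M_n)^{+}_{\sigma \cup \{y \mapsto M_1^{\sigma}\}}$ along with the propagation of the bookkeeping invariant through the recomposition $(\lambda y.~M'')~M_1~\dots~M_n$ — that is, checking that peeling a prime redex off the head path into the environment is exactly undone by $(-)^{+}$, with head occurrences and the set of fireable prime redexes tracked correctly. Everything else is a routine unfolding of the definitions.
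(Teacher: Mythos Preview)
Your intuition is right --- a closure $(M,\sigma)$ is what remains after peeling prime redexes off a bigger term, and $\to_\sigma$ on the closure should mirror $\to_{\lhr}$ on that bigger term --- but the reconstruction $M^+_\sigma$ cannot be made well-defined from $(M,\sigma)$ alone so that your key identity $M^+_\sigma = (M'~M_2~\dots~M_n)^+_{\sigma\cup\{y\mapsto M_1^\sigma\}}$ holds. Already at $\sigma=\emptyset$, $M=(\lambda y.~M')~M_1~\dots~M_n$: on the right you must re-insert the single peeled redex $(\lambda y,M_1)$ into $M'~M_2~\dots~M_n$, and any canonical re-insertion yields $(\lambda y.~M'~M_2~\dots~M_n)~M_1$, which is \emph{not} $M$. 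The two terms do share head occurrence and prime redexes, so one could try to work modulo that equivalence, but then the inducted statement and the bookkeeping invariant must be rewritten, and the argument becomes heavier than what it is proving. You flag exactly this as the ``main obstacle''; it is a genuine gap, not just a detail.

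The paper sidesteps the reconstruction with a sharper generalisation, proved by induction on $\length(M)$ for \emph{all} $\sigma$: (i) if the head occurrence of $M$ is bound in $M$, then $M\to_\sigma M'$ iff $M\to_{\lhr} M'$; (ii) if the head occurrence $x_0$ is free in $M$ and $\sigma(x)$ is defined, then $M\to_\sigma M[\sigma_1(x)/x_0]$; (iii) otherwise both relations halt. The point is that the fireable prime redex of $M$, if any, binds the head variable \emph{inside} $M$, so it is independent of $\sigma$; the environment matters only once the head variable has become free. In the case $M=(\lambda y.~M')~M_1~\dots~M_n$ one recurses on $M'~M_2~\dots~M_n$ with $\sigma'=\sigma\cup\{y\mapsto M_1^\sigma\}$: if the head is bound there, (i) applies on both sides; if the head is $y_0$ (free there, bound in $M$), (ii) produces exactly the substitution of $M_1$ matching the prime redex $(\lambda y,M_1)$; if the head is free in $M$, (ii)/(iii) propagate unchanged since $\sigma'$ and $\sigma$ agree on it. Specialising to $\sigma=\emptyset$, only cases (i) and (iii) can occur, which gives the lemma.
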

\begin{proof}
We prove that for all $\sigma$:
\begin{itemize}
\item If the head occurrence of $M$ is bound in $M$, then $M \to_{\lhr} M'$ iff $M\to_{\sigma} M'$,
\item If the head occurrence $x_0$ of $M$ is an occurrence of a variable $x$ free in $M$ on which $\sigma$ is defined, then
$
M \to_{\sigma} M[\sigma_1(x)/x_0]
$.
\item If the head occurrence of $M$ is a $\daimon$ or an occurrence $x_0$ of $x$ on which $\sigma$ is undefined, then both $\to_{\lhr}$ and $\to_{\sigma}$ halt.
\end{itemize}
This is done by a straightforward induction on the length of $M$.
\end{proof}

Environments have a tree structure that corresponds closely to the tree structure of interaction skeletons. The following
definition associates a skeleton with any environment.

\begin{defi}
If $\sigma$ is an environment on $\Gamma$, then we define a bs-environment $\intr{\sigma}$ which for any $(x:A) \in \Gamma$,
\emph{(1)} is undefined if $\sigma(x)$ is undefined, \emph{(2)} associates $\intr{\sigma_1(x)}_{\intr{\sigma_2(x)}}$ otherwise.
\end{defi}

Finally, we are in position to prove our simulation result.

\begin{prop}
Let $\Gamma \vdash M, M':o$ be $\eta$-long ls terms, and $\sigma$ be an $\eta$-long ls flat environment on $\Gamma$ such that
$M \to_{\sigma} M'$. Then,
$
\intr{M}_{\intr{\sigma}} \to_{\bs} \hookleftarrow \intr{M'}_{\intr{\sigma}}
$.
So in particular, if $M \to_{\lhr} M'$ we have $\intr{M} \to_{\bs} \hookleftarrow \intr{M'}$.
\label{prop:intr_pres_red}
\end{prop}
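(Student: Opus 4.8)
The plan is to proceed by induction on $\length(M)$, following the three rules defining $\to_{\sigma}$ (Lemma~\ref{lem:openlhr}). Since $M:o$, $M$ is never an abstraction, so the only shapes that admit a reduction are $M = x_i~M_1~\dots~M_n$ with $x\in\dom(\sigma)$ (the case with $\sigma(x)$ undefined, and $M=\daimon~M_1~\dots~M_n$, being vacuous), and $M = (\lambda x.~M')~M_1~\dots~M_n$. The ``in particular'' clause follows by taking $\sigma=\emptyset$, which is trivially $\eta$-long, locally scoped and flat, with $\intr\emptyset$ the empty bs-environment, and invoking Lemma~\ref{lem:openlhr}.

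\emph{The head-variable case.} Suppose $M = x_i~M_1~\dots~M_n$ with $\sigma(x)$ defined, so $M' = \sigma_1(x)~M_1~\dots~M_n$. Write $a_j = \intr{M_j}_{\intr\sigma}$ and $c = \intr\sigma(x) = \intr{\sigma_1(x)}_{\intr{\sigma_2(x)}}$; then Definition~\ref{def:intr} gives $\intr{M}_{\intr\sigma} = \bigl(1 + \bigsqcup_{j=1}^n a_j\bigr)\cdot_{\lv(x)+1} c$. The edge leading to $c$ carries label $\lv(x)+1\geq 1$ and the root label is $\geq 1$, so the skeleton reduction $\to_{\bs}$ may fire on that edge, producing $b := c\cdot_{\lv(x)}\bigl((\bigsqcup_j a_j)\cdot_{\lv(x)+1} c\bigr)$. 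On the other side, $\sigma_1(x)$ is $\eta$-long and locally scoped (being an entry of the $\eta$-long ls environment $\sigma$), so Lemma~\ref{lem:appbs} applies; combined with flatness of $\sigma$ and Lemma~\ref{lem:simeq} (which, since $\sigma_2(x)\subseteq\sigma$ and $\sigma$ is defined on all variables free in $\sigma_1(x)$, lets us interpret $\sigma_1(x)$ indifferently under $\intr\sigma$ or $\intr{\sigma_2(x)}$, hence identify $\intr{\sigma_1(x)}_{\intr\sigma}$ with $c$) it yields $\intr{M'}_{\intr\sigma}\embeds c\cdot[\{\lv(M_j)+1\}a_j \mid 1\leq j\leq n]$. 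It then remains to verify $c\cdot[\{\lv(M_j)+1\}a_j]\embeds b$: both skeletons share the base $c$, the children coming from $c$ embed into their counterparts in $b$, and each remaining child $a_j$ (label $\lv(M_j)+1$) of the left-hand side embeds into the single extra child $(\bigsqcup_j a_j)\cdot_{\lv(x)+1}c$ of $b$, using $a_j\embeds\bigsqcup_j a_j$ and the label inequality $\lv(x)\geq\lv(M_j)+1$ — valid because $x$ has type $A_1\to\dots\to A_n\to o$, whose level dominates each $\lv(A_j)+1=\lv(M_j)+1$. Hence $\intr{M}_{\intr\sigma}\to_{\bs} b \hookleftarrow \intr{M'}_{\intr\sigma}$.

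\emph{The abstraction case, and the main obstacle.} If $M = (\lambda x.~M')~M_1~\dots~M_n$, the third rule of $\to_\sigma$ gives $M'~M_2~\dots~M_n \to_{\sigma'} M''~M_2~\dots~M_n$ with $\sigma' = \sigma\cup\{x\mapsto M_1^\sigma\}$ and $M\to_\sigma (\lambda x.~M'')~M_1~\dots~M_n$, which is our $M'$. Unfolding Definition~\ref{def:intr} and the definition of $\intr{-}$ on environments gives $\intr{\sigma'} = \intr\sigma\cup\{x\mapsto\intr{M_1}_{\intr\sigma}\}$, hence $\intr{M}_{\intr\sigma} = \intr{M'~M_2~\dots~M_n}_{\intr{\sigma'}}$ and $\intr{M'}_{\intr\sigma} = \intr{M''~M_2~\dots~M_n}_{\intr{\sigma'}}$, so one concludes by applying the induction hypothesis to the strictly shorter term $M'~M_2~\dots~M_n$. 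This requires recording that $M'~M_2~\dots~M_n$ and $M''~M_2~\dots~M_n$ are again $\eta$-long (Lemma~\ref{lem:etalong_decrec} applied to $M$, resp.\ to $M'$, both $\eta$-long by hypothesis) and locally scoped (by the analogous decomposition of local scope, using that $M$ and $M'$ are ls), and that $\sigma'$ is $\eta$-long, locally scoped and flat, which follows from $\sigma$ being so and $M_1$ being $\eta$-long and ls (again Lemma~\ref{lem:etalong_decrec}). I expect the delicate point to be entirely in the head-variable case: selecting the correct edge for $\to_{\bs}$, checking it is legal, and confirming that the over-approximation of $\intr{M'}_{\intr\sigma}$ produced by Lemma~\ref{lem:appbs} embeds into the reduct $b$ — the two non-formal ingredients there being $a_j\embeds\bigsqcup_j a_j$ together with $\lv(x)\geq\lv(M_j)+1$, and the flatness-driven identification $\intr{\sigma_1(x)}_{\intr\sigma}=\intr\sigma(x)$.
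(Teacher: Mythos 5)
Your proof is correct and follows essentially the same route as the paper: induction along the rules of $\to_\sigma$, dispatching the abstraction case via the environment extension $\sigma\cup\{x\mapsto M_1^\sigma\}$ and the identity $\intr{\sigma'}=\intr{\sigma}\cup\{x\mapsto\intr{M_1}_{\intr\sigma}\}$, and handling the head-variable case with Lemma~\ref{lem:appbs}, flatness plus Lemma~\ref{lem:simeq}, and the edge-label inequality $\lv(M_j)+1\leq\lv(x)$. You are in fact slightly more careful than the paper in writing the skeleton reduct with its duplicated child $\{\lv(x)+1\}c$ and checking the embedding into it explicitly.
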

\begin{proof}
By induction on the definition of $\to_{\sigma}$.
\begin{itemize}
\item If $x_0~M_1~\dots~M_n \to_{\sigma} \sigma_1(x)~M_1~\dots~M_n$. 
Since $\sigma_1(x)$ is locally scoped and $\eta$-long, by Lemma \ref{lem:appbs} we have:
\[
\intr{\sigma_1(x)~M_1~\dots~M_n}_{\intr{\sigma}} \embeds \intr{\sigma_1(x)}_{\intr{\sigma}}\cdot [\{\lv(M_i)+1\}\intr{M_i}_{\intr{\sigma}}\mid 1 \leq i \leq n]
\]
But we have $\sigma_2(x) \subseteq \sigma$ since $\sigma$ is flat, from which follows $\intr{\sigma_2(x)} \subseteq \intr{\sigma}$.
Moreover, $\fv(\sigma_1(x))\cap \dom(\sigma) = \fv(\sigma_1(x)) \cap \dom(\sigma_2)$ since both are defined on all free variables
of $\sigma_1(x)$. So by Lemma \ref{lem:simeq}, $\intr{\sigma_1(x)}_{\intr{\sigma}} = \intr{\sigma_1(x)}_{\intr{\sigma_2(x)}}$.
Finally, we have:
\[
\intr{\sigma_1(x)}_{\intr{\sigma_2(x)}}\cdot [\{\lv(M_i)+1\}\intr{M_i}_{\intr{\sigma}}\mid 1 \leq i \leq n]
\embeds 
\intr{\sigma_1(x)}_{\intr{\sigma_2(x)}}\cdot_{\lv(x)} (\bigsqcup_{i=1}^n \intr{M_i}_{\intr{\sigma}})
\]
Where the right hand side is a $\leadsto$-reduct of
\[
\intr{x_0~M_1~\dots~M_n}_{\intr{\sigma}} = (1 + \bigsqcup_{i = 1}^n \intr{M_i}_{\intr{\sigma}} ) \cdot_{\lv(x)+1} \intr{\sigma_1(x)}_{\intr{\sigma_2(x)}}
\]
\item If $\lambda y.~M \to_{\sigma} \lambda y.~M'$, then $M$ and $M'$ do not have ground type.
\item If $(\lambda y.~M)~M_1~\dots~M_n \to_{\sigma} (\lambda y.~M')~M_1~\dots~M_n$, then
note that $\sigma \cup \{y \mapsto M_1^\sigma\}$ is still flat (and $\eta$-long, locally scoped).
Then by IH we have:
\[
\intr{M~M_2~\dots~M_n}_{\intr{\sigma \cup \{y \mapsto M_1^\sigma\}}} \to_{\bs} \hookleftarrow 
\intr{M'~M_2~\dots~M_n}_{\intr{\sigma \cup \{y \mapsto M_1^\sigma\}}}
\]
But $\intr{\sigma \cup \{y \mapsto M_1^\sigma\}} = \intr{\sigma} \cup \{y \mapsto \intr{M_1}_{\intr{\sigma}}\}$ by definition
of interpretation of environments; the required reduction follows by def. of interpretation of terms.\qedhere
\end{itemize}
\end{proof}

\subsubsection{Relating generalized game situations and their interpretation}
To estimate lhr on generalized game situations, we need to define measures on terms that reflect the geometry of
the corresponding skeletons. So instead of the quantities traditionally used to evaluate the complexity of $\lambda$-terms
(like the height or length), we have two alternative quantities. 

\begin{defi}
The \textbf{depth} $\depth(M)$ of $M$ is defined by
induction on the length of $M$:
\begin{eqnarray*}
\depth(\daimon~M_1~\dots~M_n) &=& 1\\
\depth(x_0~M_1~\dots~M_n) &=& \max_{1 \leq i \leq n} \depth(M_i)\\
\depth(\lambda x.~M) &=& \depth(M)\\
\depth((\lambda x.~M)~M_1~\dots~M_n) &=& \max(\depth(M~M_2~\dots~M_n), \depth(M_1)+1)
\end{eqnarray*}
Likewise, the \textbf{local height} $\lh(M)$ of a term $M$ is defined by:
\begin{eqnarray*}
\lh(\daimon~M_1~\dots~M_n) &=& 0\\
\lh(x_0~M_1~\dots~M_n) &=& 1 + \max_{1\leq i \leq n} \lh(M_i)\\
\lh(\lambda x.~M) &=& \lh(M)\\
\lh((\lambda x.~M)~M_1~\dots~M_n) &=& \max(\lh(M~M_2~\dots~M_n), \lh(M_1))
\end{eqnarray*}
\end{defi}

We now aim to prove that these indeed reflect quantities on the corresponding skeletons. 
Lemma \ref{lem:intr_pres_depth}
deals with depth, Lemma \ref{lem:intr_pres_lh} with local height and Lemma \ref{lem:intr_pres_ord} with order.

\begin{lem}
If $\rho$ is a bs-environment, we set $\depth(\rho) = \max_{x\in \dom(\rho)} \depth(\rho(x))$. Then, for each strongly locally scoped term $M$
with a bs-environment $\rho$, we have:
\[
\depth(\intr{M}_\rho) \leq \max(\depth(M), \depth(\rho)+1)
\]
In particular, $\depth(\intr{M}) \leq \depth(M)$.
\label{lem:intr_pres_depth}
\end{lem}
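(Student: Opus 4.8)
The plan is to prove the inequality by induction on the length of $M$, following the case analysis of Definition~\ref{def:intr}. I would first record the elementary behaviour of $\depth$ under the three skeleton operations that appear in that definition: merging roots leaves the forest of subtrees unchanged up to permutation, so $\depth(\bigsqcup_{i} a_i) = \max_i \depth(a_i)$ and $\depth(1 + a) = \depth(a)$; while adding a new child raises depth by one, so $\depth(a \cdot_d b) = \max(\depth(a), \depth(b) + 1)$. I would also observe that strong local scope passes to the relevant smaller terms appearing in the recursive clauses ($M_i$, $M'~M_2~\dots~M_n$), since each generalized redex of such a term is a generalized redex of $M$.

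With this machinery, the cases $M = \daimon~M_1~\dots~M_n$ (where $\intr{M}_\rho = 0$ has depth $1 = \depth(M)$), $M = \lambda x.~M'$ (both sides unchanged), and $M = x_0~M_1~\dots~M_n$ are essentially bookkeeping: in the last, the induction hypothesis applied to each $M_i$ under the \emph{same} $\rho$ gives $\max_i \depth(\intr{M_i}_\rho) \leq \max(\depth(M), \depth(\rho)+1)$, and when $\rho(x)$ is defined the extra term $\depth(\rho(x))+1 \leq \depth(\rho)+1$ is also absorbed by the right-hand side.

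The delicate case --- and the one obstacle I anticipate --- is $M = (\lambda x.~M')~M_1~\dots~M_n$, where $\intr{M}_\rho = \intr{M'~M_2~\dots~M_n}_{\rho \cup \{x \mapsto \intr{M_1}_\rho\}}$. A naive application of the induction hypothesis to $M_1$ would only give $\depth(\intr{M_1}_\rho) \leq \depth(\rho)+1$, so that extending $\rho$ and then invoking the induction hypothesis on $M'~M_2~\dots~M_n$ would produce a spurious $\depth(\rho)+2$ term. This is exactly where strong local scope is needed: $(\lambda x, M_1)$ is a prime redex of $M$, hence $M_1$ is closed, so $\intr{M_1}_\rho = \intr{M_1}$ by Lemma~\ref{lem:simeq} and $\depth(\intr{M_1}) \leq \depth(M_1)$ by the induction hypothesis (the ``in particular'' clause on the shorter term $M_1$). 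Therefore $\depth(\rho \cup \{x \mapsto \intr{M_1}_\rho\}) \leq \max(\depth(\rho), \depth(M_1))$, and applying the induction hypothesis to $M'~M_2~\dots~M_n$ yields $\depth(\intr{M}_\rho) \leq \max(\depth(M'~M_2~\dots~M_n), \depth(\rho)+1, \depth(M_1)+1) = \max(\depth(M), \depth(\rho)+1)$, as wanted. Finally, the ``in particular'' statement is the special case $\rho = \emptyset$, using $\depth(\emptyset) = 0$ and the fact that $\depth(M) \geq 1$ always.
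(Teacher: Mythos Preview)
Your proposal is correct and follows essentially the same approach as the paper: induction on the length of $M$ along the clauses of Definition~\ref{def:intr}, with the crucial step in the $(\lambda x.~M')~M_1~\dots~M_n$ case being exactly the one you identify---using strong local scope to conclude that $M_1$ is closed, invoking Lemma~\ref{lem:simeq} to replace $\intr{M_1}_\rho$ by $\intr{M_1}_\emptyset$, and then applying the induction hypothesis to $M_1$ with the empty environment. Your preliminary observations (that $\depth$ commutes with $\bigsqcup$ and $1+{}$, and increments under $\cdot_d$) and your remark that strong local scope passes to $M_i$ and $M'~M_2~\dots~M_n$ are implicit in the paper's calculation but useful to state.
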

\begin{proof}
By induction on the length of $M$, detailing the non-trivial cases.
\begin{itemize}
\item If $M = x_0~M_1~\dots~M_n$ and $\rho$ is not defined on $x$, we have $\depth(M) = \max_{1 \leq i \leq n} \depth(M_i)$. On the other hand,
$
\intr{x_0~M_1~\dots~M_n}_{\rho} = (1 + \bigsqcup_{i=1}^n \intr{M_i}_{\rho})
$.
But then we have:
\begin{eqnarray*}
\depth(1 + \bigsqcup_{i=1}^n \intr{M_i}_{\rho}) 
                                                &=& \max_{1\leq i \leq n} \depth(\intr{M_i}_\rho)\\
                                                &\leq& \max_{1\leq i \leq n} \max(\depth(M_i),\depth(\rho)+1)\\
                                                &=& \max(\depth(M), \depth(\rho) + 1)
\end{eqnarray*}
where the first equality is by definition of $+$, depth, and $\bigsqcup$,
the inequality is by IH, and the last equality is by definition of maximum and $\depth$.
\item If $M = x_0~M_1~\dots~M_n$, $\rho(x)$ defined, we still have $\depth(M) = \max_{1 \leq i \leq n} \depth(M_i)$. On the other hand,
$\intr{x_0~M_1~\dots~M_n}_{\rho} = (1 + \bigsqcup_{i=1}^n \intr{M_i}_{\rho})\cdot_{\lv(x)+1} \rho(x)$.
But then:
\begin{eqnarray*}
\depth((1 + \bigsqcup_{i=1}^n \intr{M_i}_{\rho})\cdot_{\lv(x)+1} \rho(x)) 
&=& \max(\depth(1 + \bigsqcup_{i=1}^n \intr{M_i}_{\rho}), \depth(\rho(x)) + 1)\\
&\leq& \max(\max_{1\leq i \leq n} \depth(\intr{M_i}_\rho), \depth(\rho) + 1)\\
&\leq& \max(\max_{1\leq i \leq n} \depth(M_i),\depth(\rho)+1)
\end{eqnarray*}
where the first equality is by definition of $\cdot$, the first inequality is by definition of $+$, $\bigsqcup$ and $\depth(\rho)$, the second
inequality is by IH and definition of $\max$.
\item If $M = \lambda x.~M'$, then it directly follows from the IH.
\item If $M = (\lambda x.~M')~M_1~\dots~M_n$, then we have
\[
\depth(M) = \max(\depth(M'~M_2~\dots~M_n), \depth(M_1)+1)
\]
 We calculate:
\begin{eqnarray*}
&&\depth(\intr{(\lambda x.~M')~M_1~\dots~M_n}_\rho)\\
&=& \depth(\intr{M'~M_2~\dots~M_n}_{\rho \cup \{x \mapsto \intr{M_1}_{\rho}\}})\\
&\leq& \max(\depth(M'~M_2~\dots~M_n),\depth(\rho \cup \{x \mapsto \intr{M_1}_{\rho}\})+1)\\
&\leq& \max(\depth(M'~M_2~\dots~M_n),\max(\depth(\rho), \depth(\intr{M_1}_{\rho}))+1)\\
&=&\max(\depth(M'~M_2~\dots~M_n),\max(\depth(\rho), \depth(\intr{M_1}_\emptyset))+1)\\
&\leq& \max(\depth(M'~M_2~\dots~M_n),\max(\depth(\rho), \max(\depth(M_1), 1))+1)\\
&\leq& \max(\depth(M'~M_2~\dots~M_n), \depth(M_1) + 1, \depth(\rho)+1)\\
&=& \max(\depth((\lambda x.~M')~M_1~\dots~M_n), \depth(\rho) + 1)
\end{eqnarray*}
Where the first equality is by definition of interpretation, the second line is by IH, the third line is by
definition of $\depth$ on environments, the fourth line uses that since $M$ is strongly locally scoped $M_1$ must be closed,
therefore $\rho \simeq_{M_1} \emptyset$ hence by Lemma \ref{lem:simeq} we have $\intr{M_1}_{\rho} = \intr{M_1}_{\emptyset}$. The
fifth line is by IH on $M_1$, and the last two lines are by easy manipulations on maximums (using that depth is
always greater than one) and definition of depth.\qedhere
\end{itemize}
\end{proof}

\begin{lem}
If $\rho$ is a bs-environment, we set $\max(\rho) = \max_{x\in \dom(\rho)} \max(\rho(x))$. Then, for each term $M$ with a bs-environment
$\rho$, for any natural number $m\in \mathbb{N}$, we have:
\[
\max(m + \intr{M}_\rho) \leq \max(\lh(M) + m, \max(\rho))
\]
In particular, $\max(\intr{M}) \leq \lh(M)$.\qed
\label{lem:intr_pres_lh}
\end{lem}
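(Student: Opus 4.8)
The plan is to prove the stated inequality by induction on the length of $M$, following closely the structure of the proof of Lemma~\ref{lem:intr_pres_depth}; the auxiliary parameter $m$ here plays the role that the explicit ``$+1$'' played for depth, and is adjusted as we recurse into subterms. Two elementary facts about the maximal label of a skeleton will be used throughout: adding a constant $k$ to the root of a skeleton commutes with $\bigsqcup$ in the sense that $\max(k + \bigsqcup_{i} a_i) = \max_i \max(k + a_i)$, since $\bigsqcup$ merely takes the maximum of the roots and concatenates the subtrees; and $\max(a \cdot_d b) = \max(\max(a), \max(b))$.

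The base case $M = \daimon~M_1~\dots~M_n$ is immediate, since $\intr{M}_\rho = 0$ and hence $\max(m + \intr{M}_\rho) = m = \lh(M) + m$. For the head-variable case $M = x_0~M_1~\dots~M_n$ with $\rho$ undefined on $x$, we have $m + \intr{M}_\rho = (m+1) + \bigsqcup_{i=1}^n \intr{M_i}_\rho$; the first fact above together with the induction hypothesis applied to each $M_i$ with parameter $m+1$ gives $\max(m + \intr{M}_\rho) = \max_i \max((m+1) + \intr{M_i}_\rho) \leq \max_i \max(\lh(M_i) + m + 1, \max(\rho)) = \max(\lh(M) + m, \max(\rho))$, using $\lh(M) = 1 + \max_i \lh(M_i)$. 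When $\rho(x)$ is defined, the interpretation appends a factor $\cdot_{\lv(x)+1}\rho(x)$, which by the second fact only contributes $\max(\rho(x)) \leq \max(\rho)$ on the right-hand side, leaving the bound unchanged. The case $M = \lambda x.~M'$ is immediate from the induction hypothesis, since $\intr{M}_\rho = \intr{M'}_\rho$ and $\lh(M) = \lh(M')$.

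The only case needing genuine care is $M = (\lambda x.~M')~M_1~\dots~M_n$. Here $\intr{M}_\rho = \intr{M'~M_2~\dots~M_n}_{\rho'}$ with $\rho' = \rho \cup \{x \mapsto \intr{M_1}_\rho\}$, and the induction hypothesis yields $\max(m + \intr{M}_\rho) \leq \max(\lh(M'~M_2~\dots~M_n) + m, \max(\rho'))$. The point is to bound $\max(\rho') = \max(\max(\rho), \max(\intr{M_1}_\rho))$: applying the induction hypothesis to $M_1$ with $m = 0$ gives $\max(\intr{M_1}_\rho) \leq \max(\lh(M_1), \max(\rho))$, so $\max(\rho') \leq \max(\lh(M_1), \max(\rho))$. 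Combining this with $m \geq 0$ and $\lh(M) = \max(\lh(M'~M_2~\dots~M_n), \lh(M_1))$ gives precisely $\max(m + \intr{M}_\rho) \leq \max(\lh(M) + m, \max(\rho))$. The ``in particular'' statement is the instance $m = 0$, $\rho = \emptyset$ (with the convention $\max(\emptyset) = 0$).

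The main obstacle is one of bookkeeping rather than of mathematical content: one must track that $m$ is incremented exactly at the head-variable clauses, where the interpretation adds $1$ to a root, and nowhere else, and that the reset to $m = 0$ when a subterm is pushed into the environment is what makes the $\max(\rho)$ term absorb the contribution of $\intr{M_1}_\rho$ rather than letting it accumulate. In contrast to Lemma~\ref{lem:intr_pres_depth}, no local-scope hypothesis is required here, because neither $\lh$ nor $\max$ is sensitive to the nesting depth of environments --- only to the labels themselves.
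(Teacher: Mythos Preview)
Your proof is correct and follows essentially the same approach as the paper's: induction on the length of $M$, with the same case analysis and the same two elementary facts about $\max$ interacting with $\bigsqcup$ and $\cdot_d$. Your explicit remark that no local-scope hypothesis is needed here (in contrast with Lemma~\ref{lem:intr_pres_depth}) is a correct and helpful observation that the paper leaves implicit.
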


\proof
By induction on the length of $M$, omitting basic manipulations of expressions.
\begin{itemize}
\item If $M = \daimon~M_1~\dots~M_n$, it is direct.
\item If $M = x_0~M_1~\dots~M_n$ with $\rho(x)$ undefined, we calculate:
\begin{eqnarray*}
\max(m + \intr{M}_{\rho})   
                        &=& \max_{1\leq i \leq n} ((m+1) + \max(\intr{M_i}_\rho))\\
                        &\leq& \max_{1\leq i \leq n} \max(\lh(M_i) + m+1, \max(\rho))\\
                        &=& \max(\lh(M) + m, \max(\rho))
\end{eqnarray*}
\item If $M = x_0~M_1~\dots~M_n$ and $\rho(x)$ is defined, we calculate:
\begin{eqnarray*}
\max(m + \intr{M}_{\rho})   
                        &=& \max((\bigsqcup_{i=1}^n (m+1) + \intr{M_i}_\rho)\cdot_{\lv(x)+1} \rho(x))\\
                        &\leq& \max(\max_{1\leq i \leq n} \max(m+1 + \intr{M_i}_\rho), \max(\rho))\\
                        &\leq& \max(\max_{1\leq i\leq n} \max(\lh(M_i) + m+1, \max(\rho)), \max(\rho))\\
                        &=& \max(\lh(M) + m, \max(\rho))
\end{eqnarray*}
\item If $M = (\lambda x.~M)~M_1~\dots~M_n$, then we calculate:
\begin{eqnarray*}
\max(m + \intr{M}_\rho) &=& \max(m + \intr{M~M_2~\dots~M_n}_{\rho \cup \{x\mapsto \intr{M_1}_\rho\}})\\
                        &\leq& \max(\lh(M~M_2~\dots~M_n) + m, \max(\rho \cup \{x\mapsto \intr{M_1}_\rho\}))\\
                        &=& \max(\lh(M~M_2~\dots~M_n) + m, \max(\rho), \max(\intr{M_1}_\rho))\\
                        &\leq& \max(\lh(M~M_2~\dots~M_n) + m, \max(\rho), \max(\lh(M_1), \max(\rho)))\\
                        &\leq& \max(\max(\lh(M~M_2~\dots~M_n), \lh(M_1)) + m, \max(\rho))\\
                        &=& \max(\lh(M) + m, \max(\rho))\rlap{\hbox to
                            186 pt{\hfill\qEd}}
\end{eqnarray*}
\end{itemize}

\begin{lem}
If $\Gamma \vdash M:A$ is a term with a bs-environment $\rho$, we define
$\ord(\rho) = \max_{x\in \dom(\rho)} \ord(\rho(x))$. Then, we have:
\[
\ord(\intr{M}_\rho) \leq \max(\ord(M), \ord(\rho), \max_{x\in \dom(\rho)} (\lv(x) + 1))
\]
In particular, $\ord(\intr{M}) \leq \ord(M)$.
\label{lem:intr_pres_ord}
\end{lem}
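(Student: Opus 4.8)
The plan is to prove the displayed inequality by induction on the length of $M$, following the case structure of Definition \ref{def:intr}, exactly parallel to the proofs of Lemmas \ref{lem:intr_pres_depth} and \ref{lem:intr_pres_lh}. The guiding observation is that $\ord$ of a skeleton depends only on its \emph{edge} labels, and that the operations used to build $\intr{M}_\rho$ affect edges in a controlled way: $\bigsqcup_i b_i$ (like adding a constant to the root) creates no new edge, so $\ord(\bigsqcup_i b_i) = \max_i \ord(b_i)$, whereas $b \cdot_d c$ creates exactly one new edge labelled $d$, so $\ord(b \cdot_d c) = \max(\ord(b), d, \ord(c))$. Two auxiliary facts about $\lambda$-terms, both immediate from the definitions of $\lv$ and $\ord$, will be used: (i) every subterm $N$ of $M$ satisfies $\lv(N) \leq \ord(M)$; and (ii) if $(\lambda x^A.~M')~M_1~\dots~M_n$ is a subterm of $M$, then each partial application $M'~M_2~\dots~M_k$ has the same type as the subterm $(\lambda x^A.~M')~M_1~\dots~M_k$ of $M$, so in particular $\ord(M'~M_2~\dots~M_n) \leq \ord(M)$, and moreover $\lv(x)+1 = \lv(A)+1 \leq \lv(\lambda x^A.~M') \leq \ord(M)$.

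The routine cases are then dispatched quickly. For $M = \daimon~M_1~\dots~M_n$, $\intr{M}_\rho = 0$ has no edges. For $M = \lambda x.~M'$, $\intr{M}_\rho = \intr{M'}_\rho$ and $\ord(M') \leq \ord(M)$, so the induction hypothesis suffices. For $M = x_0~M_1~\dots~M_n$ with $\rho(x)$ undefined, $\intr{M}_\rho = 1 + \bigsqcup_i \intr{M_i}_\rho$, hence $\ord(\intr{M}_\rho) = \max_i \ord(\intr{M_i}_\rho)$, and applying the induction hypothesis to each $M_i$ together with $\ord(M_i) \leq \ord(M)$ closes the case. For $M = x_0~M_1~\dots~M_n$ with $\rho(x)$ defined, $\intr{M}_\rho = (1 + \bigsqcup_i \intr{M_i}_\rho)\cdot_{\lv(x)+1}\rho(x)$, so $\ord(\intr{M}_\rho) = \max(\max_i \ord(\intr{M_i}_\rho),\, \lv(x)+1,\, \ord(\rho(x)))$; here $\ord(\rho(x)) \leq \ord(\rho)$, and $\lv(x)+1 \leq \max_{y\in\dom(\rho)}(\lv(y)+1)$ since $x \in \dom(\rho)$, so the induction hypothesis on the $M_i$ again suffices.

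The only case needing care is $M = (\lambda x^A.~M')~M_1~\dots~M_n$, where $\intr{M}_\rho = \intr{M'~M_2~\dots~M_n}_{\rho'}$ with $\rho' = \rho \cup \{x \mapsto \intr{M_1}_\rho\}$. I first apply the induction hypothesis to the strict subterm $M_1$, using $\ord(M_1) \leq \ord(M)$, to bound $\ord(\intr{M_1}_\rho)$ by the right-hand side of the claim for $M$; since $\ord(\rho') = \max(\ord(\rho), \ord(\intr{M_1}_\rho))$, the environment $\rho'$ is then controlled. Next I apply the induction hypothesis to $M'~M_2~\dots~M_n$ (strictly shorter than $M$) with environment $\rho'$, which bounds $\ord(\intr{M}_\rho)$ by $\max(\ord(M'~M_2~\dots~M_n),\, \ord(\rho'),\, \max_{y\in\dom(\rho')}(\lv(y)+1))$. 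I discharge the three summands in turn: $\ord(M'~M_2~\dots~M_n)\leq\ord(M)$ by remark (ii); $\ord(\rho')$ is already bounded as above; and $\max_{y\in\dom(\rho')}(\lv(y)+1) = \max(\max_{y\in\dom(\rho)}(\lv(y)+1),\, \lv(x)+1)$ with $\lv(x)+1 \leq \ord(M)$, again by remark (ii). Combining yields $\ord(\intr{M}_\rho) \leq \max(\ord(M), \ord(\rho), \max_{y\in\dom(\rho)}(\lv(y)+1))$, as required; the ``in particular'' statement is the instance $\rho = \emptyset$, where the last two terms are vacuous. The only real subtlety in the whole argument is the bookkeeping of the freshly-introduced edge label $\lv(x)+1$: one must correctly identify whether it stems from a head occurrence of an environment variable (so $x \in \dom(\rho)$ and it is absorbed by $\max_{y\in\dom(\rho)}(\lv(y)+1)$) or from a generalized redex being pushed into the environment (so $\lambda x^A.~M'$ is a subterm of $M$ and it is absorbed by $\ord(M)$).
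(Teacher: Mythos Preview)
Your proof is correct and follows essentially the same route as the paper: induction on the length of $M$ along the case structure of Definition~\ref{def:intr}, bounding the new edge labels $\lv(x)+1$ either via $\max_{y\in\dom(\rho)}(\lv(y)+1)$ (head variable in $\dom(\rho)$) or via $\ord(M)$ (variable introduced by a generalized redex). Your preliminary remarks (i) and (ii) just make explicit the facts about $\ord$ and $\lv$ that the paper uses tacitly in the final case.
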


\proof
By induction on the length of $M$, omitting some basic manipulations.
\begin{itemize}
\item If $M = x_0~M_1~\dots~M_n$, $\rho(x)$ undefined, then $\intr{M}_{\rho} = 1 + \bigsqcup_{i=1}^n \intr{M_i}_{\rho}$.
We calculate:
\begin{eqnarray*}
\ord(\intr{M}_{\rho})   
                        &=& \max_{1\leq i \leq n} \ord(\intr{M_i}_{\rho})\\
                        &\leq& \max_{1\leq i \leq n} \max(\ord(M_i), \ord(\rho), \max_{x\in \dom(\rho)} (\lv(x) + 1))\\
                        &=& \max(\max_{1\leq i \leq n} \ord(M_i), \ord(\rho), \max_{x\in \dom(\rho)} (\lv(x) + 1))\\
                        &\leq& \max(\ord(M), \ord(\rho), \max_{x\in \dom(\rho)} (\lv(x) + 1))
\end{eqnarray*}
\item If $M = x_0~M_1~\dots~M_n$, $\rho(x)$ defined, then 
$\intr{M}_{\rho} = (1 + \bigsqcup_{i=1}^n \intr{M_i}_{\rho})\cdot_{\lv(x)+1} \rho(x)$.
\begin{eqnarray*}
\ord(\intr{M}_\rho) &=& \max(\max_{1\leq i \leq n} \ord(\intr{M_i}_\rho), \lv(x)+1, \ord(\rho(x)))\\
&\leq& \max(\max_{1\leq i \leq n} \max(\ord(M_i), \ord(\rho), \max_{y\in \dom(\rho)} (\lv(y)+1)), \lv(x)+1, \ord(\rho))\\
&=& \max(\ord(M), \ord(\rho), \max_{y\in \dom(\rho)} (\lv(y) + 1))
\end{eqnarray*}
\item If $M = \lambda x.~M'$, then it is straightforward.
\item If $M = (\lambda x^B.~M')~M_1~\dots~M_n$, writing $\rho' = \rho \cup \{x\mapsto \intr{M_1}_\rho\}$:
\begin{eqnarray*}
\ord(\intr{M}_\rho)
 &=& \ord(\intr{M'~M_2~\dots~M_n}_{\rho \cup \{x\mapsto \intr{M_1}_\rho\}})\\
&\leq& \max(\ord(M'~M_2~\dots~M_n), \ord(\rho \cup \{x\mapsto \intr{M_1}_\rho\}), \max_{y\in \dom(\rho')} (\lv(y) + 1))\\
&=& \max(\ord(M'~M_2~\dots~M_n), \ord(\intr{M_1}_\rho), \ord(\rho), \max_{y\in \dom(\rho')} (\lv(y) + 1))\\
&\leq& \max(\ord(M'~M_2~\dots~M_n), \ord(M_1), \lv(x) + 1, \ord(\rho), \max_{y\in \dom(\rho)} \lv(y) + 1)\\
&=& \max(\ord(M), \ord(\rho), \max_{y\in \dom(\rho)} \lv(y) +
    1)\rlap{\hbox to 162 pt{\hfill\qEd}}
\end{eqnarray*}
\end{itemize}

\noindent We can now summarize the results of this subsection with the following proposition.

\begin{prop}
If $M$ is a strongly locally scoped, $\eta$-long term of ground type, then:
\[
\begin{array}{rclcrcl}
\depth(\intr{M}) &\leq& \depth(M) &~~~~~~~~~~& \max(\intr{M}) &\leq& \lh(M)\\
\ord(\intr{M}) &\leq& \ord(M) && \norm(\intr{M}) &\geq& \norm(M)
\end{array}
\]
\label{prop:ls_to_bs}
\end{prop}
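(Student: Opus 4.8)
The plan is to obtain the proposition by combining the structural lemmas of this subsection with the simulation result, with essentially no new computation. The three inequalities involving $\depth$, $\max$ and $\ord$ are precisely the ``in particular'' clauses obtained by instantiating Lemmas \ref{lem:intr_pres_depth}, \ref{lem:intr_pres_lh} and \ref{lem:intr_pres_ord} at the empty bs-environment $\rho = \emptyset$. Note that strong local scope is genuinely used only for the depth bound: it guarantees that the arguments of generalized redexes are closed, which is exactly what the proof of Lemma \ref{lem:intr_pres_depth} needs to keep the $\depth(\rho)+1$ summand from interfering; the bounds $\max(\intr{M}) \leq \lh(M)$ and $\ord(\intr{M}) \leq \ord(M)$ hold for arbitrary terms. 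So these three points are immediate.

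The substance of the statement is the inequality $\norm(\intr{M}) \geq \norm(M)$, and here I would argue by induction. Linear head reduction is deterministic and terminates on closed terms (Proposition \ref{prop:lhr_term}), so $\norm(M)$ is a well-defined natural number on which to induct. If $\norm(M) = 0$ there is nothing to prove. Otherwise $M \to_\lhr M'$, and $M'$ is the unique $\lhr$-reduct of $M$, so $\norm(M) = 1 + \norm(M')$. Applying the simulation Proposition \ref{prop:intr_pres_red} with the empty environment yields a skeleton $a$ with $\intr{M} \to_\bs a$ and $\intr{M'} \embeds a$; by the embedding Lemma \ref{lem:embedsbs}, $\norm(\intr{M'}) \leq \norm(a) \leq \norm(\intr{M}) - 1$. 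The induction hypothesis applied to $M'$ then gives $\norm(\intr{M}) \geq 1 + \norm(M') = \norm(M)$.

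The one subtlety --- and the place I would be careful --- is the bookkeeping of hypotheses across the induction step. Both Proposition \ref{prop:intr_pres_red} and the induction hypothesis require $M'$ to still be $\eta$-long and locally scoped; $\eta$-longness is preserved by Lemma \ref{lem:lhr_pres_etalong} and local scope by Lemma \ref{lem:pres_loc_scope} (the latter relying on $M$ being closed of ground type). Crucially, \emph{strong} local scope is not preserved by $\lhr$, so the induction must be carried out with the weaker plain local scope hypothesis. This is harmless: the other three inequalities of the proposition play no role inside the induction, and a strongly locally scoped term is in particular locally scoped. Concretely I would state the $\norm$-part as a lemma about generalized game situations (closed, ground type, $\eta$-long, locally scoped) and then observe that the hypotheses of the proposition --- strong local scope implying local scope --- place $M$ in that class.
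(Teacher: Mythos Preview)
Your proof is correct and follows essentially the same route as the paper: the three structural inequalities are read off from Lemmas \ref{lem:intr_pres_depth}, \ref{lem:intr_pres_lh}, \ref{lem:intr_pres_ord}, and the norm inequality is obtained by induction using the simulation Proposition \ref{prop:intr_pres_red}, the embedding Lemma \ref{lem:embedsbs}, and preservation of the $\eta$-long and locally scoped hypotheses (Lemmas \ref{lem:lhr_pres_etalong} and \ref{lem:pres_loc_scope}); your remark that the induction must be carried out under plain local scope rather than strong local scope is exactly what the paper does. The only cosmetic difference is that the paper inducts on $\norm(\intr{M})$ while you induct on $\norm(M)$; both are finite (by Theorem \ref{thm_upbs} and Proposition \ref{prop:lhr_term} respectively) and both arguments go through identically.
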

\begin{proof}
For depth, local height and order, it is a consequence respectively of Lemmas \ref{lem:intr_pres_depth}, \ref{lem:intr_pres_lh}
and \ref{lem:intr_pres_ord}. For the norm, we prove it on locally scoped (not strongly) terms, by induction on 
$\norm(\intr{M})$.
If $\norm(\intr{M}) = 0$, then $\norm(M) = 0$ as well by Proposition \ref{prop:intr_pres_red}.
If $\norm(M) = 0$, this is obvious. Otherwise we have $M \to_{\lhr} M'$. By Lemma \ref{lem:pres_loc_scope}, $M'$ is still
locally scoped. By Lemma \ref{lem:lhr_pres_etalong}, it is also still $\eta$-long. By Lemma \ref{lem:openlhr}, we have $M \to_{\emptyset} M'$.
By Proposition \ref{prop:intr_pres_red}, it follows that there is a skeleton $a$ such that:
$\intr{M}_{\emptyset} \to_{\bs} a \hookleftarrow \intr{M'}_{\emptyset}$.
By definition of norm and Lemma \ref{lem:embedsbs}, we have
$
\norm(\intr{M}_{\emptyset}) > \norm(a) \geq \norm(\intr{M'}_{\emptyset})
$.
But by IH, we know that $\norm(M') \leq \norm(\intr{M'}_{\emptyset})$. So,
$\norm(M) = \norm(M') + 1 \leq \norm(\intr{M'}_{\emptyset})+1 < \norm(\intr{M}_{\emptyset})+1$.
\end{proof}

\subsection{Bounds for strongly locally scoped terms} 
\label{subsec:bounds}
With Proposition \ref{prop:ls_to_bs} and Theorem \ref{thm_upbs} we can already deduce upper bounds for the
length of lhr on sls $\eta$-long terms. However it will turn out that $\eta$-expansion does not change the
asymptotic bounds, so we first deal with $\eta$-expansion and we will then formulate bounds for non necessarily
$\eta$-long sls terms.

\subsubsection{How $\eta$-expansion affects the norm}

\begin{lem}
If $M$ has a head occurrence $x_0$ of a variable $x$, and $M \to_{\eta} M'$, then for any term $N$, we have $M[N/x_0] \to_{\eta} M'[N/x_0]$.
\label{lem:comm_eta_subst}
\end{lem}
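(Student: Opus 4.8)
The plan is to prove this by induction on the length of $M$, following the same structural case analysis used throughout this section for proving commutation of $\lhr$ with substitution. The statement says: if $M$ has head occurrence $x_0$ of a variable $x$ and $M \to_\eta M'$, then $M[N/x_0] \to_\eta M'[N/x_0]$ for any term $N$. Since a term is necessarily of the form $\daimon~M_1~\dots~M_n$, $x_0~M_1~\dots~M_n$, $\lambda y.~M$ or $(\lambda y.~M')~M_1~\dots~M_n$, and since $M$ must have a \emph{variable} head occurrence, the $\daimon$ case is vacuous. The remaining cases are treated in turn.

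First I would handle $M = x_0~M_1~\dots~M_n$, where the head occurrence is exactly the distinguished $x_0$. An $\eta$-expansion $M \to_\eta M'$ happens either at the root of $M$ (producing $\lambda z.~x_0~M_1~\dots~M_n~z$ with $z$ fresh, but then $x_0$ substituted by $N$ still yields the $\eta$-expansion of $N~M_1~\dots~M_n$ at the root), or inside some $M_i$ (replacing $M_i$ by $M_i'$ with $M_i \to_\eta M_i'$), or -- note the head occurrence $x_0$ itself is a variable occurrence, not a subterm on which $\eta$ acts. In each subcase, substituting $N$ for $x_0$ in both $M$ and $M'$ preserves the $\eta$-redex contracted, since $x_0$ does not overlap with it (the $\eta$-expansion occurs either at the root, surrounding $x_0$, or strictly inside one of the $M_i$, which are untouched by the substitution $[N/x_0]$ except for the textual replacement of $x_0$); hence $M[N/x_0] \to_\eta M'[N/x_0]$. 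One should double-check freshness: the variable $z$ introduced by $\eta$-expansion at the root is not free in $M$, and by Barendregt's convention not captured by substituting $N$; if necessary one renames $z$ to avoid $\fv(N)$, which does not affect the $\eta$-step up to $\alpha$-equivalence.

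Next, $M = \lambda y.~M''$: the head occurrence of $M$ is that of $M''$, and $y \neq x$ since $x_0$ is an occurrence of $x$ distinct from the bound $y$. Any $\eta$-expansion of $M$ is either at the root (then $M' = \lambda y.~\lambda z.~M''~z$ and the claim follows trivially, the substitution commuting with the added abstraction) or inside $M''$, giving $M'' \to_\eta M'''$ with $M' = \lambda y.~M'''$; by IH on $M''$ we get $M''[N/x_0] \to_\eta M'''[N/x_0]$, hence $\lambda y.~M''[N/x_0] \to_\eta \lambda y.~M'''[N/x_0]$ as required. Finally, $M = (\lambda y.~M_0)~M_1~\dots~M_n$: the head occurrence of $M$ coincides with that of $M_0~M_2~\dots~M_n$. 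Here the $\eta$-expansion may occur at the root, inside $\lambda y.~M_0$, inside $M_1$, or inside some $M_i$ with $i \geq 2$. In each case the $\eta$-redex is disjoint from the occurrence $x_0$ (which lies in the head position, inside $M_0$ and not bound by $\lambda y$), so the substitution $[N/x_0]$ leaves the redex and its contractum intact, possibly appealing to the IH when the $\eta$-step happens inside the subterm containing $x_0$.

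The main obstacle is purely bookkeeping: one must be careful with the case where the $\eta$-expansion is performed \emph{at} the position of the head occurrence's enclosing subterm versus strictly inside it, and with the freshness side-conditions so that the fresh $\eta$-variable $z$ does not clash with $\fv(N)$ -- handled by $\alpha$-renaming, which is harmless since terms are considered up to $\alpha$-equivalence. No genuinely hard step arises: $\eta$-expansion and linear substitution act on disjoint parts of the term, so they commute, and the induction is routine.
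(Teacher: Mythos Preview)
Your approach is the same as the paper's, which records the proof simply as ``Direct by induction on $M$.'' Your case analysis fleshes this out correctly, with one cosmetic slip: in the case $M = \lambda y.~M''$, the $\eta$-expansion \emph{at the root of $M$} yields $M' = \lambda z.~(\lambda y.~M'')~z$, not $\lambda y.~\lambda z.~M''~z$ as you wrote (the latter is expansion at the root of $M''$, which falls under your ``inside $M''$'' case); the commutation with $[N/x_0]$ is equally trivial for the correct $M'$, so the argument is unaffected.
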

\begin{proof}
Direct by induction on $M$.
\end{proof}

\begin{lem}
If $(\lambda x, N)$ is a prime redex in $M~N$, then for any $N'$, $(\lambda x, N')$ is a prime redex of $M~N'$.
\label{lem:arg_replace}
\end{lem}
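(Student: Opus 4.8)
The plan is to prove this by induction on $\length(M)$, following a case analysis on the shape of the function part $M$. Recall that $M$ is necessarily of one of the forms $\daimon~M_1\dots M_k$, $y_0~M_1\dots M_k$, $\lambda y.~M_0$, or $(\lambda y.~M_0)~M_1\dots M_k$ with $k \geq 1$; and note that throughout, the symbol $N$ in the statement refers to the specific \emph{occurrence} of $N$ as the outermost argument of the application $M~N$, not merely to $N$ as a term.

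If $M$ is headed by $\daimon$ or a variable occurrence, then $M~N$ is again headed by $\daimon$ or a variable occurrence, so it has no prime redex at all; the hypothesis that $(\lambda x, N)$ is a prime redex of $M~N$ is then vacuous and there is nothing to prove. If $M = \lambda y.~M_0$, then $M~N = (\lambda y.~M_0)~N$, whose prime redexes are $(\lambda y, N)$ together with those of $M_0$. Since the occurrence of $N$ as the argument of the application lies outside $M_0$, it cannot be the argument component of any prime redex of $M_0$; hence $(\lambda x, N)$ being a prime redex of $M~N$ forces $x = y$. Then for any $N'$, the prime redexes of $M~N' = (\lambda y.~M_0)~N'$ include $(\lambda y, N') = (\lambda x, N')$, as required.

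Finally, if $M = (\lambda y.~M_0)~M_1\dots M_k$ with $k \geq 1$, then $M~N = (\lambda y.~M_0)~M_1\dots M_k~N$, whose prime redexes are $(\lambda y, M_1)$ together with those of $(M_0~M_2\dots M_k)~N$. The occurrence of $N$ as the last argument is distinct from the occurrence of $M_1$ as the first argument, so $(\lambda x, N) \neq (\lambda y, M_1)$, and therefore $(\lambda x, N)$ must be a prime redex of $(M_0~M_2\dots M_k)~N$. Since $\length(M_0~M_2\dots M_k) < \length(M)$, the induction hypothesis applies (with $M_0~M_2\dots M_k$ in the role of $M$) and yields that $(\lambda x, N')$ is a prime redex of $(M_0~M_2\dots M_k)~N'$; reading the inductive clause defining prime redexes backwards, $(\lambda x, N')$ is then a prime redex of $M~N' = (\lambda y.~M_0)~M_1\dots M_k~N'$.

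The main — and essentially only — subtlety is the bookkeeping on occurrences: one must keep track of the fact that the $N$ appearing as the argument of $M~N$ is a single identified occurrence sitting outside $M$, so that it can only ever be the argument component of the ``new'' prime redex contributed by the outermost application, never of a prime redex already present inside $M$. Once this point is made precise, as in the three cases above, every step is a direct reading of the inductive definition of prime redexes, so the whole argument is routine.
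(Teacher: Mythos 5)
Your proof is correct and matches the paper's intended argument: the paper simply states ``Direct by induction on $M$,'' and your case analysis on the shape of $M$, together with the careful tracking of $N$ as a specific occurrence lying outside $M$, is exactly the routine induction being alluded to.
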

\begin{proof}
Direct by induction on $M$. 
\end{proof}

\begin{lem}
Suppose $M \to_{\eta} M'$, with $(\lambda x, N)$ prime redex in $M~N$ and $x_0$ head occurrence of $M$, then
there is a term $M''$ such that
$
M'~N \to_{\lhr}^+ M''~N
$
and $M[N/x_0]~N \to_{\eta} M''~N$.
\label{lem:etalhrapp}
\end{lem}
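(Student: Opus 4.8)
The plan is to do a case analysis on the position of the $\eta$-redex witnessing $M \to_{\eta} M'$, using Lemmas \ref{lem:comm_eta_subst} and \ref{lem:arg_replace}. First I would unfold the hypotheses. Since $(\lambda x, N)$ is a prime redex of $M~N$ with $N$ the trailing argument, and $x_0$ (the head occurrence of $M$, necessarily an occurrence of $x$) sits in head position, the inductive definition of prime redexes forces $M$ to ``peel'', via its head generalized redexes, to an abstraction $\lambda x.~B_x$ whose head occurrence is $x_0$ and which receives $N$ as argument once $N$ is appended. Next I would observe that $\eta$-expansion never introduces a variable occurrence to the left of an existing one (the new $z$ in $\lambda z.~R~z$ is bound and rightmost), so the head occurrence of $M'$ is still $x_0$, an occurrence of $x$; hence $M'[N/x_0]$ makes sense, and I would set $M'' := M'[N/x_0]$. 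The second conclusion, $M[N/x_0]~N \to_{\eta} M''~N$, is then immediate from Lemma \ref{lem:comm_eta_subst} applied in the context $[\,]~N$. So the whole problem reduces to showing $M'~N \to_{\lhr}^+ M'[N/x_0]~N$.

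For that I would distinguish two cases, according to whether the $\eta$-redex $R \to_{\eta} \lambda z.~R~z$ acts on the ``firing'' abstraction $\lambda x.~B_x$ or not. If $R \neq \lambda x.~B_x$, a short structural check on where $R$ sits relative to the spine leading to $\lambda x.~B_x$ (whether $R$ is off that spine, an argument or abstraction we peel past, or a subterm inside $B_x$) shows that wrapping $R$ changes neither the head occurrence nor the prime redex $(\lambda x, N)$ of $M'~N$; one lhr step then gives $M'~N \to_{\lhr} (M'~N)[N/x_0] = M'[N/x_0]~N$. If instead $R = \lambda x.~B_x$ — which includes the outermost case $R = M = \lambda x.~B_x$ — then in $M'~N$ the firing abstraction is replaced by $\lambda z.~(\lambda x.~B_x)~z$ applied to $N$, and the firing prime redex splits into $(\lambda z, N)$ together with $(\lambda x, z)$ (the latter being a prime redex by Lemma \ref{lem:arg_replace}, applied to $N' = z$, or directly from the definition). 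Now two lhr steps suffice: fire $(\lambda x, z)$, which turns the head occurrence $x_0$ into an occurrence of $z$ landing in the scope of the new $\lambda z$; then fire $(\lambda z, N)$, which turns that occurrence into $N$. The net effect is again $(M'~N)[N/x_0] = M'[N/x_0]~N$. In both cases $M'~N \to_{\lhr}^+ M''~N$ with $M'' = M'[N/x_0]$, which together with the $\eta$-step obtained above closes the proof.

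The hard part is this last case — the $\eta$-expansion of the firing abstraction itself, in particular the top-level expansion $M \to_{\eta} \lambda y.~M~y$ — since it is precisely the reason the statement must be phrased with $\to_{\lhr}^+$ rather than a single step. Making it precise requires some care: one must keep the $\eta$-variable $z$ fresh; one must check that after the first lhr step the variable $z$ substituted for $x_0$ is (as intended) captured by the newly created $\lambda z$, so that it is again the head occurrence and the second step fires the prime redex $(\lambda z, N)$; and one must verify that no other prime redex of $M'~N$ is touched, so that the composite of the two substitutions is literally $M'[N/x_0]$. All of these are routine once the peeling structure of $M$ is made explicit, but they are where the bookkeeping concentrates.
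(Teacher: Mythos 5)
Your global setup is sound and agrees with what the paper effectively does: the head occurrence of $M'$ is still $x_0$, the right witness is $M''=M'[N/x_0]$, and the $\eta$-half is Lemma \ref{lem:comm_eta_subst}. The gap is in your case split for the lhr-half. You claim that whenever the expanded subterm $R$ is not literally the firing abstraction $\lambda x.~B_x$, the prime redex $(\lambda x,N)$ of $M'~N$ is untouched and one lhr step suffices. This is false. Take $M=(\lambda y^o.~\lambda x^o.~x_0)~\daimon_o : o\to o$ and the outermost expansion $M\to_\eta \lambda z^o.~M~z_0$. Here $R=M\neq \lambda x.~x_0$, yet the prime redexes of $M'~N=(\lambda z.~(\lambda y.~\lambda x.~x_0)~\daimon_o~z_0)~N$ are $(\lambda z,N)$, $(\lambda y,\daimon_o)$ and $(\lambda x,z_0)$: the fresh argument $z_0$ lands exactly in the slot of the argument stream that $N$ was going to fill for $\lambda x$, so the redex fired by the head occurrence is $(\lambda x,z_0)$, and one step puts a fresh occurrence of $z$ in head position, not $N$. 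Two steps are needed, i.e.\ your case-2 treatment. The same happens at proper subterms: in $M=(\lambda y.~(\lambda w.~\lambda x.~x_0)~c)~d$, expanding $R=(\lambda w.~\lambda x.~x_0)~c$ again turns $(\lambda x,N)$ into $(\lambda x,z_0)$. The correct dichotomy is not ``$R=\lambda x.~B_x$ or not'' but ``does the inserted $z_0$ occupy the stream position consumed by $\lambda x$'' --- equivalently, is $R$ a node of the head spine whose first unconsumed argument is the trailing $N$. Your enumeration in case 1 (off the spine, a peeled argument or abstraction, inside $B_x$) silently omits exactly the application prefixes of the spine, of which $M$ itself is always one. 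The lemma remains true in these instances; it is your argument that does not establish it.

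This is precisely why the paper proves the lemma by induction on the length of $M$, singling out the ``external'' expansion $M\to_\eta\lambda y.~M~y_0$ at the top of every inductive call and handling it, for arbitrary $M$ and not only bare abstractions, with Lemma \ref{lem:arg_replace} and two lhr steps; all internal expansions are pushed into the induction hypothesis, where the problematic positions resurface as external expansions of shorter terms. Your case-2 computation is essentially that external-case argument and is correct as far as it goes. To repair your proof, either adopt the paper's inductive structure, or replace your case criterion by the stream-position one and run the two-step argument uniformly for every spine node that is waiting on $N$.
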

\begin{proof}
By induction on the length of $M$.
If the $\eta$-expansion is external, \emph{i.e.} $M \to_{\eta} \lambda y.~M~y_0$, then by Lemma \ref{lem:arg_replace} we know that
$(\lambda x, y_0)$ is a prime redex in $\lambda y.~M~y_0$. It follows that we have 
$(\lambda y.M~y_0)~N \to_{\lhr} (\lambda y.~M[y_1/x_0]~y_0)~N$. But then, $y_1$ becomes the head occurrence, and $(\lambda y, N)$ is a prime redex, therefore:
\[
(\lambda y.~M[y_1/x_0]~y_0)~N \to_{\lhr} (\lambda y.~M[N/x_0]~y_0)~N
\]
and to conclude, we obviously have $M[N/x_0]~N \to_{\eta} (\lambda y.~M[N/x_0]~y_0)~N$.

If the $\eta$-expansion is internal, we reason by cases on the form of $M$.
\begin{itemize}
\item If $M = x_0~M_1~\dots~M_n$, then $M$ does not have any prime redex.
\item If $M = \lambda x.~M'$, then necessarily $x_0$ is an occurrence of $x$.
Then we have $M'\to_{\eta} M''$, and the prime redex $(\lambda x, N)$ in $M~N$ is still in
$(\lambda x.~M'')~N$. Moreover, the head occurrence of $M''$ is still $x_0$. Therefore,
$(\lambda x.~M'')~N \to_{\lhr} (\lambda x.~M''[N/x_0])~N$.
By Lemma \ref{lem:comm_eta_subst}, $M'[N/x_0] \to_{\eta} M''[N/x_0]$, so
$(\lambda x.~M'[N/x_0])~N \to_{\eta} (\lambda x.~M''[N/x_0])~N$.
\item If $M = (\lambda y.~M')~M_1~\dots~M_n$, then by cases on the location of the $\eta$-expansion:
\begin{itemize}
\item For $M \to_{\eta} (\lambda y.~M'')~M_1~\dots~M_n$, it follows directly from IH.
\item For the remaining cases, the same lhr reduction as in $M$ is possible and obviously
commutes with the $\eta$-expansion.\qedhere
\end{itemize}
\end{itemize}
\end{proof}

\begin{lem}
If $M \to_\eta N$ and $M \to_{\lhr} S$, then there is $T$ such that $N \to_{\lhr^+} T$ and $S \to_{\eta^+} T$.
\label{lem:loccomm_etalhr}
\end{lem}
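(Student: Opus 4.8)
\noindent The plan is to argue by induction on the length of $M$, splitting on the shape of $M$ --- which is one of $\daimon~M_1~\dots~M_n$, $x_0~M_1~\dots~M_n$ with free head occurrence, $\lambda y.~M'$, or $(\lambda x.~M')~M_1~\dots~M_n$ --- and, inside each case, on the position of the $\eta$-redex contracted by $M \to_\eta N$ (an \emph{external} expansion of a subterm sitting at the head, versus one buried in a strict subterm). The first two shapes are vacuous: such an $M$ has no prime redex, hence no $\lhr$-step, which contradicts the hypothesis $M \to_\lhr S$.

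For $M = \lambda y.~M'$ the $\lhr$-step of $M$ is literally an $\lhr$-step $M' \to_\lhr S'$ with $S = \lambda y.~S'$, since abstraction affects neither prime redexes nor the head occurrence. If the $\eta$-expansion is internal, $N = \lambda y.~M''$ with $M' \to_\eta M''$, and the induction hypothesis on $M'$ yields $T'$ with $M'' \to_{\lhr^+} T'$ and $S' \to_{\eta^+} T'$; take $T = \lambda y.~T'$. If it is the external expansion $M \to_\eta \lambda z.~(\lambda y.~M')~z$, the point is that the head occurrence of $M'$ is never an occurrence of $y$ (no $\lambda y$ lies inside $M'$ by Barendregt's convention), so the redex $(\lambda y, z)$ is not the one fired; the same $\lhr$-step of $M'$ fires inside $(\lambda y.~M')~z$, giving $N \to_\lhr \lambda z.~(\lambda y.~S')~z = T$ and $S \to_\eta T$.

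For $M = (\lambda x.~M')~M_1~\dots~M_n$ the decisive distinction is whether the head occurrence of $M$ (equivalently, of $M'$) is an occurrence $x_0$ of $x$ or is bound deeper in $M'$. In the ``bound deeper'' case, if the $\eta$-expansion occurs in $M'$ or in some $M_i$ with $i \ge 2$, then it and the $\lhr$-step both localise inside the strictly shorter term $M'~M_2~\dots~M_n$, and the result follows from the induction hypothesis after re-prepending the context $(\lambda x.-)~M_1$ (and, for an external expansion of $M$, the wrapper $\lambda z.(-)~z$), using that $\lhr$ and $\eta$ commute with such recompositions since appending arguments on the right or wrapping in $\lambda z.(-)~z$ preserves the relevant prime redexes and head occurrences; if instead the $\eta$-expansion is in $M_1$ or external, it is disjoint from the $\lhr$-step and the two commute in one step each. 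If the head occurrence is $x_0$ and the $\eta$-expansion is external or lies in some $M_i$ with $i \ge 2$, the two reductions again commute directly in one step each; if it lies in $M_1$, two $\eta$-steps on $S$ are needed, because the redex $(\lambda x, M_1)$ fired by $\to_\lhr$ leaves two copies of $M_1$ --- one substituted for $x_0$, one still in argument position --- both of which must be $\eta$-expanded to match $N$.

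The real difficulty is the last configuration: head occurrence $x_0$ of $x$, with the $\eta$-expansion inside $M'$, possibly exactly at the head so that it creates a fresh prime redex which $N$ must also contract before catching up with the substitution carried out by $M \to_\lhr S$. This is precisely what Lemma~\ref{lem:etalhrapp} was designed for: reading $M$ as $(\lambda x.~M')~M_1$ applied to the remaining arguments $M_2~\dots~M_n$, and applying that lemma to $\lambda x.~M' \to_\eta \lambda x.~M''$ with prime redex $(\lambda x, M_1)$ and head occurrence $x_0$, one obtains a term $P$ with $(\lambda x.~M'')~M_1 \to_{\lhr}^+ P~M_1$ and $(\lambda x.~M'[M_1/x_0])~M_1 \to_\eta P~M_1$; re-appending $M_2~\dots~M_n$ (and, if needed, the external wrapper) and invoking the same recomposition-commutation facts produces the required $T$. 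So the only genuine work is the bookkeeping of prime redexes and head occurrences under recomposition of contexts needed to feed Lemma~\ref{lem:etalhrapp}; once that is in place, every case reduces either to the induction hypothesis or to a one- or two-step commutation.
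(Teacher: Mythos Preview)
Your proof is correct and hits the same key idea as the paper --- an induction on $M$ with Lemma~\ref{lem:etalhrapp} handling the delicate case where the $\eta$-expansion sits in the function part and may interfere with the head redex --- but you organise the induction differently. The paper decomposes an application simply as $M_1\,M_2$ and applies the induction hypothesis to the genuine subterm $M_1$; you instead use the canonical shape $(\lambda x.\,M')\,M_1\,\dots\,M_n$ and apply the hypothesis to the spine $M'\,M_2\,\dots\,M_n$, which is \emph{not} a subterm of $M$.

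Both routes work, but they trade off in opposite places. The paper's binary decomposition makes the recomposition trivial (just post-compose with $-\,M_2$, which both $\to_\lhr$ and $\to_\eta$ respect), at the cost of a slightly less structured case analysis. Your decomposition yields a cleaner case split on whether the head occurrence is $x_0$, but the phrase ``re-prepending the context $(\lambda x.{-})\,M_1$'' hides real work: this is not a one-hole context, and to push the $\eta^+$-chain $S''\to_{\eta^+}T''$ through it you need to argue that $T''$ still has the shape $Q\,M_2\,\dots\,M_n$ (which it does, because the $\lhr^+$-chain from $N''$ only rewrites the head and both endpoints are applications, forcing every $\eta$-step to be internal and hence component-wise). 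That argument is routine but not ``immediate''; if you keep your decomposition, spell it out. Alternatively, adopting the paper's $M_1\,M_2$ split avoids the issue entirely.

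One small correction: in the $\lambda y.\,M'$ case you write ``the head occurrence of $M'$ is never an occurrence of $y$''. That is false as stated; what is true is that \emph{under the hypothesis} $M\to_\lhr S$, the head occurrence cannot be $y_0$, since the only binder for $y$ is the outer $\lambda y$ and it has no argument, so there is no prime redex for $y$.
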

\begin{proof}
By induction on $M$, detailing only non-trivial cases.
\begin{itemize}
\item If $M$ has the form $\lambda x.~M'$, then $M\to_\lhr S$ with $S = \lambda x.~S'$.
Since we have $M \to_\eta N$, several cases arise:
\begin{itemize}
\item If $N = \lambda y.~M~y_0$, it is immediate with $T = \lambda y.~(\lambda x. S')~y$.
\item If $M' \to_\eta N'$, IH provides $T'$ with $N' \to_{\lhr^+} T'$ and $S' \to_{\eta^+} T'$. 
Setting $T = \lambda x.~T'$, the commutation follows.
%
%
\end{itemize}
\item If $M$ has the form $M_1~M_2$. Three cases arise:
\begin{itemize}
\item If $M_1~M_2 \to_{\eta} \lambda x.~M_1~M_2~x$, then the prime redexes of $\lambda x.~M_1~M_2~x$ are included in those of
$M_1~M_2$, and the head occurrence is the same. It follows that if $M_1~M_2 \to_{\lhr} M'_1~M_2$, we have that
$\lambda x.~M_1~M_2~x \to_{\lhr} \lambda x.~M'_1~M_2~x$ as well. Setting $T = \lambda x.~M'_1~M_2~x$ yields the required diagram.
\item If $M_1~M_2 \to_{\eta} M'_1~M_2$ (so $M_1 \to_{\eta} M'_1$). Then, two cases arise.
\begin{itemize}
\item If the reduction $M_1~M_2 \to_{\lhr} S_1~M_2$ involves a prime redex $(\lambda y, M_2)$, and the head occurrence of $M_1$ is $y_0$.
Then by Lemma \ref{lem:etalhrapp},
there is a term $T'$ such that $M'_1~M_2 \to_{\lhr}^+ T'~M_2$
and $M_1[M_2/y_0]~M_2 \to_{\eta} T'~M_2$. Setting $T = T'~M_2$, we have the required commutation.
\item If the reduction $M_1~M_2 \to_{\lhr} S_1~M_2$ does not involve $M_2$, then we also have
$M_1 \to_{\eta} M'_1$, and $M_1 \to_{\lhr} S_1$. By IH, there is $T_1$ such that $M'_1 \to_{\lhr^+} T_1$ and
$S_1 \to_{\eta^+} T_1$. Setting $T = T_1 M_2$ gives the required commutation.
\end{itemize}
\item If $M_1~M_2 \to_{\eta} M_1~M'_2$, then two sub-cases arise:
\begin{itemize}
\item If $M_1~M_2 \to_{\lhr} S_1~M_2$ is obtained by firing a prime redex $(\lambda x, M_2)$ with $x_0$ head occurrence of $M_1$, then
setting $T = M_1[M'_2/x_0]$ yields $M_1~M'_2 \to_{\lhr} T$ and $S_1~M_2 = M_1[M_2/x_0]~M_2 \to_{\eta^2} M_1[M'_2/x_0]~M'_2$.
\item If $M_1~M_2 \to_{\lhr} S_1~M_2$ is obtained by firing a prime redex not involving $M_2$, then setting $T = S_1 M'_2$ immediately
yields the required commutation.\qedhere
\end{itemize}
\end{itemize}
\end{itemize}
\end{proof}

\begin{lem}
If $M\to_{\eta^n} N$ and $M\to_{\lhr^p} S$, then there are $T$, $n'\geq n$ and $p'\geq p$ such that $N\to_{\lhr^{p'}} T$ and $S\to_{\eta^{n'}} T$.
\label{lem:commstar_etalhr}
\end{lem}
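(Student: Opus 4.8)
The plan is to close the diagram by a tiling argument, repeatedly invoking the local commutation Lemma~\ref{lem:loccomm_etalhr} along the given reductions $M\to_{\eta^n} N$ and $M\to_{\lhr^p} S$. The subtle point is that a naive tiling need not terminate: each use of Lemma~\ref{lem:loccomm_etalhr} on a one-$\eta$/one-$\lhr$ corner produces, on its two output sides, reductions consisting of \emph{several} $\lhr$-steps and \emph{several} $\eta$-steps, so that an induction on $n$, on $p$, or lexicographically on $(n,p)$ (in either order) turns out to be circular. To get around this I will use that $\lhr$ is terminating (Proposition~\ref{prop:lhr_term}): since every term occurring below is a simply-typed $\lambda$-term, its $\lhr$-norm $\norm(\cdot)$ is finite, and $\norm(X')<\norm(X)$ whenever $X\to_{\lhr^+} X'$. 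I then prove the statement by well-founded induction on the triple $(\norm(N),n,p)$, ordered lexicographically, where $N$ is the term one must reduce with $\lhr$; the crucial observation is that in the recursive call doing the real work this primary component $\norm(N)$ strictly decreases, while in the remaining shallow calls it stays put and $n$ or $p$ decreases instead.

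First, if $p=0$ take $T=N$ (with $p'=0$, $n'=n$), and if $n=0$ take $T=S$ (with $p'=p$, $n'=0$); in both cases there is nothing to close. So assume $n,p\geq 1$. If $p\geq 2$, write $M\to_{\lhr} M'\to_{\lhr^{p-1}} S$ and apply the induction hypothesis to $M\to_{\eta^n} N$ and $M\to_{\lhr^1} M'$: its measure $(\norm(N),n,1)$ is strictly smaller since $1<p$, so we obtain $U$ with $N\to_{\lhr^a} U$ for some $a\geq 1$ and $M'\to_{\eta^{m}} U$ for some $m\geq n$. As $a\geq 1$ we have $\norm(U)<\norm(N)$, hence the induction hypothesis also applies to $M'\to_{\eta^{m}} U$ and $M'\to_{\lhr^{p-1}} S$ (its primary measure $\norm(U)$ being strictly smaller), yielding $T$ with $U\to_{\lhr^{b}} T$ for some $b\geq p-1$ and $S\to_{\eta^{m'}} T$ for some $m'\geq m\geq n$. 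Concatenating, $N\to_{\lhr^{a+b}} T$ with $a+b\geq p$ and $S\to_{\eta^{m'}} T$, which is what was required.

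If on the other hand $p=1$, write $M\to_{\eta} M_1\to_{\eta^{n-1}} N$ and apply Lemma~\ref{lem:loccomm_etalhr} to $M\to_{\eta} M_1$ and $M\to_{\lhr} S$: this gives $W$ with $M_1\to_{\lhr^a} W$ for some $a\geq 1$ and $S\to_{\eta^c} W$ for some $c\geq 1$. Now apply the induction hypothesis to $M_1\to_{\eta^{n-1}} N$ and $M_1\to_{\lhr^a} W$: its measure is $(\norm(N),n-1,a)$, which is strictly smaller than $(\norm(N),n,1)$ because $n-1<n$ --- this is exactly where it matters that the primary component is the norm of the term reduced by $\lhr$, here still $N$, rather than something affected by $a$. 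We obtain $T$ with $N\to_{\lhr^{a'}} T$ for some $a'\geq a\geq 1$ and $W\to_{\eta^{c'}} T$ for some $c'\geq n-1$; composing with $S\to_{\eta^c} W$ gives $S\to_{\eta^{c+c'}} T$ with $c+c'\geq n$, while $N\to_{\lhr^{a'}} T$ with $a'\geq 1=p$. This completes the induction.

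The main obstacle is, as flagged, the identification of the termination measure: the local lemma is genuinely a commutation between \emph{one} step on each side that may produce \emph{many} steps on each side, so neither of the two obvious counters is controlled under tiling, and the right move is to run the induction on the $\lhr$-norm of the $\eta$-reduct corner --- legitimate by strong normalisation of $\lhr$, and strictly decreased by the deep recursive call above --- with $n$ and $p$ serving only as secondary tie-breakers for the calls where the norm is unchanged. Everything else (that $\eta$-expansion and $\lhr$ preserve simple typing, so that $\norm$ is defined throughout, and the arithmetic of the step counts) is routine.
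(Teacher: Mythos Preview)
Your proof is correct and follows essentially the same approach as the paper: well-founded induction with $\norm(N)$ as the primary component and $n$ as a secondary tie-breaker, exploiting that the deep recursive call strictly decreases $\norm(N)$ while the shallow calls only touch $n$ (and, in your version, $p$). The only difference is cosmetic: the paper uses the pair $(\norm(N),n)$ and peels off one $\eta$-step and one $\lhr$-step simultaneously, while you add $p$ as a third component so as to first reduce to $p=1$ and then handle the local commutation; this extra component is harmless but not needed, since once $\norm(N)$ and $n$ are in place the argument goes through without tracking $p$.
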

\begin{proof}
We reason by induction on the lexicographic ordering of pairs $(\norm(N), n)$, using that $\norm(N)$ is finite by 
Proposition \ref{prop:lhr_term}. If $n=0$ or $p=0$, it is obvious.
Otherwise, we have $M \to_{\eta} N_1 \to_{\eta^{n-1}} N$ and $M\to_{\lhr} S_1 \to_{\lhr^{p-1}} S$. By Lemma \ref{lem:loccomm_etalhr}, there
is $N'_1$ such that $M_1 \to_{\eta^+} N'_1$ and $N_1 \to_{\lhr^+} N'_1$, let us write $M_1 \to_{\eta^q} N'_1$ and
$N_1 \to_{\lhr^r} N'_1$, with $r, q\geq 1$.
We represent the situation on the following diagram:
\[
\xymatrix{
M       \ar[r]^{\eta}
        \ar[d]_{\lhr}&
N_1     \ar[r]^{\eta^{n-1}}
        \ar[d]^{\lhr^r}&
N\\
S_1     \ar[r]^{\eta^q}
        \ar[d]_{\lhr^{p-1}}&
N'_1\\
S
}
\]
We have $n-1 < n$ and $\norm(N)$ is unchanged, therefore by IH, there is $N'$ as displayed here:
\[
\xymatrix{
M       \ar[r]^{\eta}
        \ar[d]_{\lhr}&
N_1     \ar[r]^{\eta^{n-1}}
        \ar[d]^{\lhr^r}&
N       \ar[d]^{\lhr^{r'}}\\
S_1     \ar[r]^{\eta^q}
        \ar[d]_{\lhr^{p-1}}&
N'_1    \ar[r]^{\eta^{n'}}&
N'\\
S
}
\]
with $r' \geq r$ and $n'\geq n-1$.
Finally, $\norm(N') < \norm(N)$, therefore by IH there is $T$ closing the diagram:
\[
\xymatrix{
M       \ar[r]^{\eta}
        \ar[d]_{\lhr}&
N_1     \ar[r]^{\eta^{n-1}}
        \ar[d]^{\lhr^r}&
N       \ar[d]^{\lhr^{r'}}\\
S_1     \ar[r]^{\eta^q}
        \ar[d]_{\lhr^{p-1}}&
N'_1    \ar[r]^{\eta^{n'}}&
N'      \ar[d]^{\lhr^{p'}}\\
S       \ar[rr]^{\eta^{n''}}&&
T
}
\]
with $p' \geq p-1$ hence $p'+r' \geq p-1 + r \geq p-1 + 1 \geq p$, and $n'' \geq q + n' \geq 1 + n-1 \geq n$.
\end{proof}

\begin{prop}
If $M\to_{\eta} N$, then $\norm(N) \geq \norm(M)$.
\label{prop:etanorm}
\end{prop}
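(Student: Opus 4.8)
The plan is to deduce this directly from the commutation property just established in Lemma~\ref{lem:commstar_etalhr}, so the argument is essentially a one-step diagram chase. First I would record that $\norm(M)$ is finite: $M$ is a standard term (it is in this part of the paper, and in any case $\eta$-expansion preserves standardness), so Proposition~\ref{prop:lhr_term} applies and linear head reduction from $M$ terminates. Write $k = \norm(M)$ and fix the maximal linear head reduction sequence $M \to_{\lhr}^k S$, where $S$ is $\lhr$-normal; this sequence is the unique $\lhr$ sequence of $M$, since the head occurrence and the prime redex binding it are uniquely determined.

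Next I would instantiate Lemma~\ref{lem:commstar_etalhr} with $n = 1$ (using $M \to_\eta N$) and $p = k$ (using $M \to_{\lhr}^k S$). This produces a term $T$ together with $n' \geq 1$ and $p' \geq k$ such that $N \to_{\lhr}^{p'} T$ and $S \to_{\eta}^{n'} T$. In particular $N$ admits a linear head reduction sequence of length $p'$, namely the prefix $N \to_{\lhr}^{p'} T$ of its (unique, by determinacy of $\lhr$) maximal $\lhr$ sequence; hence $\norm(N) \geq p' \geq k = \norm(M)$, which is the desired inequality. The auxiliary data $S \to_{\eta}^{n'} T$ are not needed for the conclusion and simply record the completed commutation square.

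I do not anticipate any real obstacle here: all the work has been front-loaded into Lemma~\ref{lem:commstar_etalhr} (and, before it, into Lemmas~\ref{lem:loccomm_etalhr} and \ref{lem:etalhrapp}). The only points worth spelling out in the write-up are that $\norm(M)$ is finite, so that "the maximal $\lhr$ sequence of $M$" is meaningful, and that a partial $\lhr$ reduction out of $N$ already bounds $\norm(N)$ from below because $\lhr$ is deterministic.
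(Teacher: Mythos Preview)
Your proposal is correct and follows essentially the same approach as the paper: both deduce the result directly from Lemma~\ref{lem:commstar_etalhr}. The paper phrases the argument as an induction on $\norm(M)$, peeling off one $\lhr$ step at a time, whereas you apply the commutation lemma once with the full maximal $\lhr$ sequence of length $k=\norm(M)$; your packaging is slightly more direct but the underlying idea is identical.
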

\begin{proof}
We show by induction on $\norm(M)$ that if $M\to_{\eta}^* N$, then $\norm(N) \geq \norm(M)$.
If $\norm(M) = 0$, this is obvious. Otherwise, there is $M \to_{\lhr} M'$. By Lemma \ref{lem:commstar_etalhr} there is $N'$
such that $N \to_{\lhr^+} N'$ and $M' \to_{\eta}^* N'$. But $\norm(M') < \norm(M)$, so by IH
we have $\norm(N') \geq \norm(M')$. Since $\norm(M) = \norm(M') + 1$ and $\norm(N) \geq \norm(N') + 1$, we have
$\norm(N) \geq \norm(M)$ as well.
\end{proof}

\subsubsection{How $\eta$-expansion affects other quantities on terms}

\begin{defi}
An $\eta$-expansion step $M \to_{\eta} M'$ is \textbf{restricted} if $M$ and $M'$ have the same number of generalized redexes. 
We write $M \to_{\eta_r} M'$.
\end{defi}

\begin{lem}
On any term $M$, the expansion $\to_{\eta_r}$ terminates.
\label{lem:etaterm}
\end{lem}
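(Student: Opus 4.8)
The plan is to exhibit a strictly decreasing $\mathbb{N}$-valued measure on $\to_{\eta_r}$. Say that an occurrence of a subterm $P$ in a term is \emph{exposed} if $P$ is not an abstraction and $P$ does not occur applied to an argument (i.e. $P$ is not the left immediate subterm of an application). The first step is to observe that a \emph{restricted} $\eta$-expansion can only expand an exposed occurrence. This follows from a direct inspection of the inductive definition of prime redexes: if $N$ occurs applied, as in $N~M_1~\dots~M_k$ with $k \geq 1$, then rewriting $N$ to $\lambda x.~N~x$ produces the fresh prime redex $(\lambda x, M_1)$, so the set of generalized redexes strictly grows; and if $N$ is an abstraction $\lambda z.~N'$, then $(\lambda z.~N')~x$ contains the fresh prime redex $(\lambda z, x)$ (the binder $\lambda z$ was not applied before), which again strictly grows the set. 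In either case the step is not restricted, since wrapping $N$ inside $\lambda x.~[\,\cdot\,]~x$ leaves all other syntax in place: the only prime redexes that can be affected at all are those carried by the three new subterms $\lambda x.~N~x$, $N~x$, $x$, plus a one-for-one replacement of pairs $(\lambda z, N)$ by $(\lambda z, \lambda x.~N~x)$, so no generalized redex is ever destroyed.

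The second step is the measure itself. Define a weight on types by $w(o) = 1$ and $w(A\to B) = w(A) + w(B) + 1$, and let $\mu(M)$ be the sum of $w$ applied to the type of $P$, taken over all exposed occurrences $P$ in $M$. I claim that expanding \emph{any} exposed occurrence $N:A\to B$ --- whether or not the resulting step is restricted --- decreases $\mu$ by exactly $1$. Indeed, replacing $N$ by $\lambda x.~N~x$ removes the exposed occurrence $N$ of type $A\to B$ and introduces exactly two new exposed occurrences, namely $N~x$ of type $B$ and $x$ of type $A$, together with one new non-exposed occurrence $\lambda x.~N~x$; every other occurrence keeps its type and its exposedness status, since the rewriting is local and, by construction, turns nothing else into an abstraction or into a function being applied. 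Hence $\mu$ varies by $w(A) + w(B) - w(A\to B) = -1$.

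Combining the two steps: along any $\to_{\eta_r}$ reduction the expanded occurrence is exposed, so each step strictly decreases the natural number $\mu(M)$; therefore $\to_{\eta_r}$ is terminating. The only genuinely delicate part of the argument is the first step --- making the prime-redex bookkeeping precise enough to be certain that ``restricted'' really forces the expanded occurrence to be exposed, in particular checking that wrapping a subterm never removes a generalized redex and that the binder created (or newly applied) by a non-exposed expansion contributes a genuinely new one. Everything after that is an immediate computation.
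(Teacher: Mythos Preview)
Your overall strategy is sound and genuinely different from the paper's. The paper defines a recursive \emph{deficiency} $\df(M)$ using the type-size $\sz(o)=1$, $\sz(A\to B)=\sz(A)+\sz(B)$, and shows by structural induction on $M$ that every restricted $\eta$-expansion strictly decreases it; the case analysis of where a restricted expansion can sit is folded into that induction. You instead isolate the positions (``exposed'' occurrences) at which a restricted step can act and use a direct weighted count over those positions. Your step~2 is clean and correct.

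There is, however, a real error in your justification of step~1. In the abstraction case you assert that if $N=\lambda z.\,N'$ is not syntactically applied then ``the binder $\lambda z$ was not applied before''. This is false: generalized redexes are prime redexes of \emph{all} subterms, and the prime-redex computation can feed $\lambda z$ an argument from outside $N$ even though $N$ is not the left child of any application. Concretely, in $(\lambda w.\,\lambda z.\,N')\,P\,Q$ the subterm $N=\lambda z.\,N'$ is not applied, yet $(\lambda z,Q)$ is a generalized redex. After expanding $N$ to $\lambda x.\,N\,x$, the redex $(\lambda z,Q)$ is \emph{destroyed} and replaced by $(\lambda z,x)$ together with the genuinely new $(\lambda x,Q)$. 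So your blanket claim that ``no generalized redex is ever destroyed'' fails, and the parenthetical reason you give for freshness is wrong. What \emph{is} true, and what you need, is that every binder matched before remains matched afterwards (possibly with a different argument), so the \emph{number} of generalized redexes never decreases; the strict increase in the non-exposed cases then comes from the extra match of either $\lambda x$ or $\lambda z$. You rightly flag step~1 as the delicate part---but your current bookkeeping does not cover these indirect matchings; the fix is to argue at the level of matched binders rather than individual redex pairs.
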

\begin{proof}
%
Suppose $A$ is a type, then its \textbf{size} written $\sz(A)$, is defined by induction on $A$ by
$\sz(o) = 1$ and $\sz(A \to B) = \sz(A) + \sz(B)$.
If $M$ is a term, its \textbf{deficiency} quantifies the lack of $\eta$-expansion within $M$. It is defined by induction on the length of $M$,
as follows:
\begin{itemize}
\item If $M = x_0~M_1~\dots~M_n$, where $x_0$ is an occurrence of a variable or constant, and if $x: A_1 \to \dots \to A_p \to o$,
then set
$\df(M) = \sum_{i=1}^n \df(M_i) + \sum_{i=n+1}^p \sz(A_i)$.
\item If $M = \lambda x.~M'$, we set $\df(M) = \df(M')$.
\item If $M = (\lambda x.~M')~M_1~\dots~M_n$, we set $\df(M) = \df(M'~M_2~\dots~M_n) + \df(M_1)$.
\end{itemize}

Restricted $\eta$-expansion strictly decreases deficiency, by induction on the length of $M$:

\begin{itemize}
\item If $M = x_0~M_1~\dots~M_n$ where $x_0$ is a variable or constant occurrence having type
$A_1 \to \dots \to A_p \to o$, then by definition we have $\df(M) = \sum_{i=1}^n \df(M_i) + \sum_{i=n+1}^p \sz(A_i)$.
We reason by cases on the location of the $\eta$-expansion. If it is:
\[
x_0~M_1~\dots~M_n \to_{\eta_r} \lambda y.~x_0~M_1~\dots~M_n~y_0
\]
then $\df(\lambda y.~x_0~M_1~\dots~M_n~y_0) = \sum_{i=1}^n \df(M_i) + \df(y_0) + \sum_{i=n+2}^p \sz(A_i)$.
But it is obvious by definition that $\df(y_0) = \sz(A_{n+1})-1$, so deficiency is reduced.
If the $\eta$-expansion is $x_0~M_1~\dots~M_n \to_{\eta_r} x_0~M_1~\dots~M'_i~\dots~M_n$, 
then $M_i \to_{\eta_r} M'_i$, so by IH we have $\df(M'_i) < \df(M_i)$. So,
$\df(x_0~M_1~\dots~M'_i~\dots~M_n) < \df(x_0~M_1~\dots~M_n)$.
\item If $M = \lambda x.~M'$, the $\eta$-expansion cannot be $\lambda x.~M' \to \lambda y.~(\lambda x.~M')~y_0$ as that would
create a new generalized redex, so it is $M' \to_{\eta_r} M''$. The property follows from IH.
\item If $M = (\lambda x.~M')~M_1~\dots~M_n$, then the only possible restricted $\eta$-expansions are:
\begin{eqnarray*}
(\lambda x.~M')~M_1~\dots~M_n &\to_{\eta_r}& \lambda y.~(\lambda x.~M')~M_1~\dots~M_n~y_0\\
(\lambda x.~M')~M_1~\dots~M_n &\to_{\eta_r}& (\lambda x.~M'')~M_1~\dots~M_n\\
(\lambda x.~M')~M_1~\dots~M_n &\to_{\eta_r}& (\lambda x.~M')~M_1~\dots~M'_i~\dots~M_n
\end{eqnarray*}
All the others create a generalized $\beta$-redex. In all three cases, the result follows from IH and
definition of deficiency.
\end{itemize}
We have a positive strictly decreasing measure for $\eta_r$, so it terminates.
\end{proof}

\begin{lem}
If $M$ is a term and $M \to_{\eta_r}^* M'$, then $\lh(M') \leq \lh(M) + \ord(M)$.
\label{lem:etapreslh}
\end{lem}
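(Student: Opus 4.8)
The statement bounds the local height of a term after a sequence of restricted $\eta$-expansions in terms of the local height and order of the original term. The natural strategy is to prove this by induction, but one must be careful about what exactly to induct on. A single restricted $\eta$-expansion step can only increase local height in a controlled way; the key intuition is that each generalized redex can only accumulate $\eta$-expansion depth bounded by the order of the arrow type involved, which is itself bounded by $\ord(M)$. So I would first isolate the effect of one step: prove a lemma of the form ``if $M \to_{\eta_r} M'$ then $\lh(M') \le \lh(M) + 1$'' is too weak (the bound would then be linear in the number of steps, which is not directly tied to $\ord(M)$), so instead I expect the right auxiliary invariant involves reasoning about the $\eta$-expansion ``budget'' at each node of the syntax tree.

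Concretely, I would proceed by structural induction on $M$, following the same case split used throughout this section: $M = \daimon~M_1 \dots M_n$, $M = x_0~M_1 \dots M_n$, $M = \lambda x.~M'$, and $M = (\lambda x.~M')~M_1 \dots M_n$. For the constant and variable head cases, a restricted $\eta$-expansion either happens inside some $M_i$ (apply the IH to $M_i$, noting $\ord(M_i) \le \ord(M)$ and $\lh(M_i) \le \lh(M) - 1$ in the variable case, so the $+1$ for the head variable absorbs the discrepancy) or is an external expansion $x_0~M_1 \dots M_n \to_{\eta_r} \lambda y.~x_0~M_1 \dots M_n~y_0$, which adds one argument slot $y_0$ of local height $1$ to a head whose type shrinks — here the crucial point is that the number of such external expansions that can occur in sequence at this position is bounded by the level of the type, hence by $\ord(M)$. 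The $\lambda$-abstraction case is immediate from the IH since $\lh$ ignores leading abstractions and a restricted expansion cannot be external at an abstraction (it would create a new generalized redex). The application case $(\lambda x.~M')~M_1 \dots M_n$ follows from the IH applied to $M'~M_2 \dots M_n$ and to $M_1$, using $\lh((\lambda x.~M')~M_1\dots M_n) = \max(\lh(M'~M_2\dots M_n), \lh(M_1))$ and that order is monotone under taking these subterms.

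**Main obstacle.** The delicate part is controlling the iterated external expansions at a variable-headed subterm. A naive bound of ``$+1$ per step'' does not give $\ord(M)$; I need to observe that once a subterm $x_0~M_1\dots M_n$ of type $A_{n+1} \to \dots \to A_p \to o$ has been fully $\eta$-expanded at the head, no further external expansion is restricted there, and the total local-height increase contributed is bounded by $p - n \le \lv(x) \le \ord(M)$. The cleanest way to make this rigorous is probably to strengthen the induction hypothesis: rather than proving the bound for $\to_{\eta_r}^*$ directly, prove simultaneously that the number of external $\eta_r$-expansions applicable at the head of a subterm of type $A$ (before a generalized redex would be created) is at most the level of $A$. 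This, combined with the deficiency-termination argument from Lemma~\ref{lem:etaterm}, pins down the increment exactly. I would then assemble these observations: each root-to-leaf path in the syntax tree passes through head variables whose contributions are each bounded by their level, and local height along the path increases by at most one per variable node plus the accumulated external-expansion slack, yielding the clean bound $\lh(M') \le \lh(M) + \ord(M)$.
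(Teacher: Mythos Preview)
Your proposal has a genuine gap. You claim that ``the number of external $\eta_r$-expansions applicable at the head of a subterm of type $A$ \dots\ is at most the level of $A$,'' but this is false: the number of external expansions at $x_0 : A_1 \to \dots \to A_p \to o$ is the arity $p$, which can be arbitrarily large even when $\lv(x) = 1$ (take $A_i = o$ for all $i$). So the key auxiliary lemma you propose does not hold, and with it collapses the argument that ``contributions are each bounded by their level.'' The reason the $\lh$ increase is nonetheless controlled is \emph{not} that few external expansions happen, but that the new argument slots $y^j_0$ contribute to $\lh$ through a $\max$, not a sum, and each $y^j_0$ has strictly smaller order than the ambient term, so its own recursive expansion is bounded by $\lv(A_j)$.

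Your plain structural induction on $M$ also does not directly handle these freshly introduced variables $y^j_0$: they are not subterms of $M$, so you cannot invoke the IH on them as stated. One can repair this with a lexicographic induction on $(\ord(M), \text{length}(M))$, observing that $\ord(y^j_0) = \lv(A_j) < \lv(x) \le \ord(M)$, but that is not what you wrote. The paper avoids all of this bookkeeping by a different device: it defines an auxiliary quantity $\lh'(M)$, equal to $\lh$ except that at a variable head $x_0~M_1~\dots~M_n$ with $x : A_1 \to \dots \to A_p \to o$ it pre-charges the remaining argument levels, setting $\lh'(x_0~M_1~\dots~M_n) = 1 + \max(\max_i \lh'(M_i), \max_{n+1 \le j \le p} \lv(A_j))$. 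One checks $\lh(M) \le \lh'(M) \le \lh(M) + \ord(M)$, and then proves by a direct case analysis that $\lh'$ is \emph{exactly preserved} by each $\to_{\eta_r}$ step (the new argument $y_0$ has $\lh'(y_0) = \lv(A_{n+1})$, precisely the term that disappears from the second $\max$). The bound then follows immediately, with no induction on the reduction sequence and no separate treatment of the fresh variables.
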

\begin{proof}
We define a quantity $\lh'(M)$ such that $\lh(M) \leq \lh'(M) \leq \lh(M) + \ord(M)$
and we show that $\lh'$ is preserved by restricted $\eta$-expansion. It is defined as for $\lh$, except for variables where
we set, with $x_0$ occurrence of $x:A_1 \to \dots \to A_p \to o$:
\[
\lh'(x_0~M_1~\dots~M_n) = 1 + \max(\max_{1\leq i \leq n} \lh'(M_i), \max_{1\leq i\leq n} \lv(A_i))
\]
Clearly, we have $\lh(M) \leq \lh'(M) \leq \lh(M) + \ord(M)$. We prove by induction on $M$ that it is preserved by
restricted $\eta$-expansion, skipping the trivial case.
\begin{itemize}
\item If $M = x_0~M_1~\dots~M_n$ where $x_0$ is a variable occurrence of a variable $x: A_1 \to \dots \to A_p \to o$, then
several cases depending on the location of the $\eta$-expansion.
Firstly, if the $\eta$-expansion is:
$
x_0~M_1~\dots~M_n \to_{\eta_r} \lambda y.~x_0~M_1~\dots~M_n~y_0
$ we have
\begin{eqnarray*}
\lh'(x_0~M_1~\dots~M_n)                 &=& 1 + \max(\max_{1\leq i \leq n} \lh'(M_i), \max_{n+1\leq i \leq p} \lv(A_i))\\
\lh'(\lambda y.~x_0~M_1~\dots~M_n~y)    &=& 1 + \max(\max_{1\leq i \leq n} \lh'(M_i), \lh'(y), \max_{n+2\leq i \leq p} \lv(A_i))
\end{eqnarray*}
But then, writing $y: B_1 \to \dots \to B_m \to o$, we have $\lh'(y) = 1 + \max_{1\leq i \leq m} \lv(B_i) = \lv(A_{n+1})$,
so those two quantities are equal.
Secondly, if the restricted $\eta$-expansion is within some $M_i$, then the result follows immediately by IH.
\item If $M = (\lambda x.~M')~M_1~\dots~M_n$, then several cases following the location of the $\eta$-expansion.
For $(\lambda x.~M')~M_1~\dots~M_n \to_{\eta_r} \lambda y.~(\lambda x.~M')~M_1~\dots~M_n~y_0$, we calculate:
\begin{eqnarray*}
\lh'((\lambda x.~M')~M_1~\dots~M_n)     &=& \max(\lh'(M'~M_2~\dots~M_n), \lh'(M_1))\\
                                        &=& \max(\lh'(\lambda y.~M'~M_2~\dots~M_n~y_0), \lh'(M_1))\\
                                        &=& \max(\lh'(M'~M_2~\dots~M_n~y_0), \lh'(M_1))\\
                                        &=& \lh'((\lambda x.~M')~M_1~M_2~\dots~M_n~y_0))\\
                                        &=& \lh'(\lambda y.~(\lambda x.~M')~M_1~M_2~\dots~M_n~y_0)
\end{eqnarray*}
If the $\eta$-expansion is within $M'$ or $M_i$, it follows directly from IH. \qedhere
\end{itemize}
\end{proof}

\begin{lem}
If $M \to_{\eta_r} M'$, then $\depth(M) = \depth(M')$.
\label{lem:etapresdepth}
\end{lem}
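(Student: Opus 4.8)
The plan is to prove Lemma~\ref{lem:etapresdepth} by induction on the length of $M$, following exactly the same case analysis that appears in the proofs of the companion lemmas \ref{lem:etapreslh} and \ref{lem:intr_pres_depth}. The key observation is that $\depth$ only ``sees'' the nesting of generalized redexes (the $+1$ in the last clause of its definition is triggered by the argument $M_1$ of an application $(\lambda x.~M')~M_1~\dots~M_n$), and a \emph{restricted} $\eta$-expansion by definition does not change the number of generalized redexes; so intuitively it cannot affect $\depth$.

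First I would unfold the definition of $M$ according to its head shape: $M$ is one of $\daimon~M_1~\dots~M_n$, $x_0~M_1~\dots~M_n$, $\lambda x.~M'$, or $(\lambda x.~M')~M_1~\dots~M_n$. The constant case is immediate since $\depth(\daimon~M_1~\dots~M_n)=\max_i\depth(M_i)$ (or $1$), and an $\eta_r$-step either happens inside some $M_i$ — handled by the induction hypothesis — or is external, $\daimon~M_1~\dots~M_n\to_{\eta_r}\lambda y.~\daimon~M_1~\dots~M_n~y_0$, which adds a $\lambda$ and a new argument $y_0$ at a position where $\depth$ contributes only $\max(\dots,\depth(y_0))=\max(\dots,1)$, not changing the value. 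The variable case $x_0~M_1~\dots~M_n$ is the same. For $M=\lambda x.~M'$, a restricted $\eta$-expansion cannot be the external one (that would create a new generalized redex, violating restrictedness, exactly as noted in the proof of Lemma~\ref{lem:etaterm}), so it must be internal, $M'\to_{\eta_r}M''$, and $\depth(\lambda x.~M')=\depth(M')=\depth(M'')=\depth(\lambda x.~M'')$ by the induction hypothesis.

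The genuine case is $M=(\lambda x.~M')~M_1~\dots~M_n$, where $\depth(M)=\max(\depth(M'~M_2~\dots~M_n),\depth(M_1)+1)$. Here I would split on the location of the $\eta_r$-step. If it lies inside $M'$ or inside some $M_i$ with $i\geq 2$, it is an internal expansion of the subterm $M'~M_2~\dots~M_n$, and the induction hypothesis (combined with the decomposition of $\to_{\eta_r}$ through this application, as in Lemma~\ref{lem:etalong_decrec}) gives $\depth(M'~M_2~\dots~M_n)$ unchanged, hence $\depth(M)$ unchanged. If it lies inside $M_1$, then $M_1\to_{\eta_r}M_1'$ with $\depth(M_1)=\depth(M_1')$ by the induction hypothesis, so again $\depth$ is preserved. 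The remaining possibility is the external expansion $(\lambda x.~M')~M_1~\dots~M_n\to_{\eta_r}\lambda y.~(\lambda x.~M')~M_1~\dots~M_n~y_0$: mimicking the corresponding computation in Lemma~\ref{lem:etapreslh} I would compute $\depth(\lambda y.~(\lambda x.~M')~M_1~\dots~M_n~y_0)=\depth((\lambda x.~M')~M_1~\dots~M_n~y_0)$, and then observe that appending the extra argument $y_0$ (a variable, with $\depth(y_0)=1$) to a term of ground-type head only adds the clause $\max(\dots,\depth(y_0))$ at the innermost head position, which is dominated and leaves $\depth$ equal; unwinding gives $\depth(M'~M_2~\dots~M_n~y_0)=\depth(M'~M_2~\dots~M_n)$ and $\depth(M_1)$ untouched, so the value is the same as $\depth(M)$.

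I expect the only mildly delicate point to be the bookkeeping in this last external subcase — making precise that threading one additional variable argument through the $\depth$ recursion does not change the maximum, for each of the head shapes the head subterm $M'~M_2~\dots~M_n$ can take. This is entirely analogous to (and easier than) the computation already carried out for $\lh'$ in Lemma~\ref{lem:etapreslh}, since for $\depth$ the added argument $y_0$ always contributes a value of $1$ rather than a level, so no auxiliary quantity like $\lh'$ is needed here. Everything else is a direct unfolding of the definition of $\depth$ together with the induction hypothesis, and I would state those steps as ``straightforward'' in the interest of brevity.
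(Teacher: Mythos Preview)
Your proposal is correct and takes the same approach as the paper, which simply records ``Immediate by induction on $M$''; you have just spelled out the case analysis. One minor slip: by the paper's definition $\depth(\daimon~M_1~\dots~M_n)=1$ unconditionally (not $\max_i\depth(M_i)$), which only makes that case more trivial than you suggest.
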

\begin{proof}
Immediate by induction on $M$.
\end{proof}

\begin{lem}
If $M \to_{\eta} M'$, then $\ord(M) = \ord(M')$
\label{lem:etapresord}
\end{lem}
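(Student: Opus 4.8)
The plan is to prove $\ord(M) = \ord(M')$ when $M \to_\eta M'$ by induction on the structure of $M$, handling separately the case where the $\eta$-expansion step is applied at the root and the cases where it is applied inside a subterm. The key observation is that an $\eta$-expansion $N \to_\eta \lambda x^A.~N~x$ is only valid when $N$ already has type $A \to B$, so the new binder $\lambda x^A$ introduces a variable whose level is $\lv(A)$; but since $N : A \to B$ already occurs in $M$, some subterm of $M$ already witnesses a type of level at least $\lv(A)+1$, hence $\ord$ is unchanged. More precisely, the order is the maximal level of a subterm, and $\eta$-expansion does not change the type of any existing subterm nor create subterms of strictly larger level.

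First I would treat the root case: suppose $M \to_\eta \lambda x^A.~M~x = M'$ where $M : A \to B$. The subterms of $M'$ are: $M'$ itself (of type $A\to B$, same level as $M$), the subterm $M~x$ (of type $B$), the occurrence $x$ (of type $A$), and all subterms of $M$. Since $\lv(M) = \lv(A\to B) = \max(\lv(A)+1, \lv(B)) \geq \lv(B)$ and $\geq \lv(A) + 1 > \lv(A) = \lv(x)$, every new subterm has level at most $\lv(M) \leq \ord(M)$. Conversely every subterm of $M$ is a subterm of $M'$, so $\ord(M') = \ord(M)$. Then I would treat the internal cases: if the redex contracted lies in a proper subterm, say $M = \lambda y.~N$ with $N \to_\eta N'$, or $M = N_1~N_2$ with the reduction in $N_1$ or $N_2$, the inductive hypothesis gives that the order of the affected subterm is unchanged, and since $\ord$ of a compound term is the max of the orders of its immediate components together with the level of its own type (which is unaffected by $\eta$-expansion deep inside, as the overall type is preserved by $\to_\eta$), we conclude $\ord(M') = \ord(M)$.

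I would phrase this cleanly by first recording the trivial fact that $\to_\eta$ preserves the type of a term (immediate from the typing rule for $\eta$-expansion, which requires $M : A \to B$ and produces $\lambda x^A.~M~x : A \to B$), so $\lv(M) = \lv(M')$, and then noting $\ord(M)$ is the supremum of $\lv$ over all subterms. The induction is then essentially bookkeeping over which subterms appear in $M'$ but not $M$ and checking each has bounded level.

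I do not expect any real obstacle here: the statement is a routine structural induction, and the only mildly delicate point is making sure the newly created subterms in the root case ($M~x$ and the fresh occurrence $x$) do not exceed $\ord(M)$, which follows immediately from the definition $\lv(A \to B) = \max(\lv(A)+1, \lv(B))$. So the proof is short: ``Straightforward by induction on $M$, using that $\to_\eta$ preserves the type of a term and hence its level, and that in the root case $M \to_\eta \lambda x^A.~M~x$ the new subterms $M~x$ and $x$ have levels $\lv(B)$ and $\lv(A)$ respectively, both bounded by $\lv(A \to B) = \lv(M)$.''

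\begin{proof}
Straightforward by induction on $M$. First note that $\to_\eta$ preserves the type of a term: the $\eta$-expansion rule requires $M : A \to B$ and produces $\lambda x^A.~M~x : A \to B$, and this extends by context closure. Hence whenever $M \to_\eta M'$ we have $\lv(M) = \lv(M')$.

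Recall that $\ord(N)$ is the maximal level $\lv(P)$ over all subterms $P$ of $N$. If the $\eta$-expansion step occurs in a proper subterm of $M$, then the result is immediate by the induction hypothesis applied to that subterm, since the orders of all immediate subcomponents are preserved and the level of $M$ itself is preserved. It remains to treat the case where the step is applied at the root: $M \to_\eta \lambda x^A.~M~x = M'$ with $M : A \to B$. Every subterm of $M$ is a subterm of $M'$, so $\ord(M') \geq \ord(M)$. Conversely, the only subterms of $M'$ not already in $M$ are $M'$ itself (of type $A \to B$, so $\lv(M') = \lv(M) \leq \ord(M)$), the subterm $M~x$ (of type $B$, with $\lv(B) \leq \lv(A \to B) = \lv(M)$), and the fresh occurrence $x$ (of type $A$, with $\lv(A) < \lv(A) + 1 \leq \lv(A \to B) = \lv(M)$). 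Hence every subterm of $M'$ has level at most $\ord(M)$, so $\ord(M') \leq \ord(M)$, and therefore $\ord(M) = \ord(M')$.
\end{proof}
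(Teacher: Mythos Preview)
Your proof is correct and takes essentially the same approach as the paper: both arguments observe that the only new subterms created by $N \to_\eta \lambda x^A.~N~x$ are $\lambda x^A.~N~x$, $N~x$, and $x$, each of level at most $\lv(N)$. The paper simply locates the subterm $N$ where the expansion occurs and checks these levels directly, while you wrap the same observation in an explicit structural induction on $M$; the core content is identical.
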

\begin{proof}
There is a subterm $N$ such that $N \to_{\eta} \lambda y.~N~y_0$. But $\lv(\lambda y.~N~y_0) = \lv(N)$ and
$\lv(N~y_0) \leq \lv(N)$, so the new subterms have lower level than the original ones.
\end{proof}

Finally, it remains to note that $\eta$-expansion preserves strong local scope.

\begin{lem}
If $M$ is strongly locally scoped and $M \to_{\eta_r} M'$, then $M'$ is sls.
\label{lem:eta_pres_ls}
\end{lem}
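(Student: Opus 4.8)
The plan is to prove Lemma \ref{lem:eta_pres_ls} by induction on the length of $M$, following exactly the case analysis that has been used throughout this section for the related preservation lemmas (e.g. Lemmas \ref{lem:lhr_pres_etalong} and \ref{lem:pres_loc_scope}). Recall that $M$ strongly locally scoped means: for every generalized redex $(\lambda x, N)$ in $M$, the subterm $N$ is closed. After one restricted $\eta$-expansion step $M \to_{\eta_r} M'$, I must check that every generalized redex of $M'$ still has a closed argument subterm. The key structural fact I intend to exploit is that a \emph{restricted} $\eta$-expansion step creates no new generalized redex (this is the defining property of $\to_{\eta_r}$): hence every generalized redex of $M'$ is the image, under the expansion, of a generalized redex already present in $M$, and the expanded subterms map closed terms to closed terms since $\eta$-expansion does not introduce free variables.

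The induction proceeds by cases on the shape of $M$, which is necessarily one of $\daimon~M_1~\dots~M_n$, $x_0~M_1~\dots~M_n$, $\lambda x.~M'$, or $(\lambda x.~M')~M_1~\dots~M_n$. For the first two shapes there are no prime redexes at the top level, so any generalized redex lies inside some $M_i$; the $\eta_r$-step then occurs inside some $M_j$, which stays sls by the induction hypothesis (using Lemma \ref{lem:etalong_subterm}-style reasoning that subterms of sls terms are sls, which follows directly from the definition since a generalized redex of a subterm is a generalized redex of the whole term), and all other $M_i$ are untouched. For $M = \lambda x.~M'$, the expansion is either external, $\lambda x.~M' \to_{\eta_r} \lambda y.~(\lambda x.~M')~y_0$ — but this would create a new generalized redex $(\lambda x, y_0)$, contradicting that the step is restricted, so this case is vacuous — or internal, $M' \to_{\eta_r} M''$, handled by the induction hypothesis since the generalized redexes of $\lambda x.~M''$ are exactly those of $M''$. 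For $M = (\lambda x.~M')~M_1~\dots~M_n$, the top-level prime redex is $(\lambda x, M_1)$, and $M_1$ is closed by hypothesis; the restricted expansion is either external (again vacuous, as it creates a new generalized redex), or inside $M'$, or inside some $M_i$. An expansion inside $M_1$ keeps $M_1$ closed (no new free variables) and sls by the induction hypothesis; an expansion elsewhere is handled by applying the induction hypothesis to $M'~M_2~\dots~M_n$ and recomposing, using that $\eta$-expansion preserves the set of free variables so closedness of each argument subterm is maintained, together with a decomposition-of-sls observation analogous to Lemma \ref{lem:etalong_decrec}.

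The one point requiring a little care — and the place I expect the main obstacle — is making precise the claim that "restricted $\eta$-expansion introduces no new generalized redex, hence every generalized redex of $M'$ corresponds to one of $M$ with the same (up to $\eta$) argument subterm, and that subterm remains closed." Concretely, when $M \to_{\eta_r} M'$ by replacing a subterm $N$ of $M$ with $\lambda y.~N~y_0$, I need to verify that if $(\lambda x, P')$ is a generalized redex of $M'$ then either $P' = P$ for a generalized redex $(\lambda x, P)$ of $M$, or $P'$ is obtained from such a $P$ by the $\eta$-expansion, or $P' = \lambda y.~P~y_0$; in every subcase $\fv(P') = \fv(P)$ (since $y$ is fresh and bound), so closedness transfers. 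Since the $\eta_r$-step is restricted precisely so that no genuinely new redex appears, this bookkeeping goes through; I will fold it into the relevant structural cases above rather than isolating it as a separate lemma, keeping the proof short.

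\begin{proof}
By induction on the length of $M$. Since the step $M \to_{\eta_r} M'$ is restricted, it creates no new generalized redex, so each generalized redex of $M'$ arises from one of $M$ with an argument subterm that is either unchanged or itself $\eta$-expanded; in all cases the set of free variables of that subterm is unchanged (the expansion only introduces a fresh bound variable), so closedness is preserved. We consider the shape of $M$.

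If $M = \daimon~M_1~\dots~M_n$ or $M = x_0~M_1~\dots~M_n$, then $M$ has no prime redex, so every generalized redex of $M$ lies in some $M_i$, and each $M_i$ is sls (a generalized redex of $M_i$ is one of $M$). The expansion occurs inside some $M_j$, so $M_j \to_{\eta_r} M'_j$ is sls by the induction hypothesis, the other $M_i$ are unchanged, and no new top-level prime redex appears; hence $M'$ is sls.

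If $M = \lambda x.~M'$ then the generalized redexes of $M$ are exactly those of $M'$, so $M'$ is sls. The step cannot be $\lambda x.~M' \to_{\eta_r} \lambda y.~(\lambda x.~M')~y_0$, since that introduces the generalized redex $(\lambda x, y_0)$, contradicting restrictedness; so the step is internal, $M' \to_{\eta_r} M''$, and by the induction hypothesis $M''$ is sls, whence so is $\lambda x.~M''$.

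If $M = (\lambda x.~M')~M_1~\dots~M_n$, then $M$ sls gives that $M_1$ is closed and that $M'~M_2~\dots~M_n$ and each $M_i$ are sls (a generalized redex of any of these is one of $M$). Again the external expansion $M \to_{\eta_r} \lambda y.~M~y_0$ is impossible by restrictedness. If the step is inside $M_1$, then $M_1 \to_{\eta_r} M'_1$ is sls by the induction hypothesis and remains closed since $\eta$-expansion adds no free variable; recomposing, $M' = (\lambda x.~M')~M'_1~M_2~\dots~M_n$ has prime redex $(\lambda x, M'_1)$ with $M'_1$ closed, and its other generalized redexes are those of $M'~M_2~\dots~M_n$ and of $M'_1$, all sls. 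If the step is inside $M'$ or inside some $M_i$ with $i \geq 2$, then it induces a step $M'~M_2~\dots~M_n \to_{\eta_r} S$ which by the induction hypothesis yields an sls term; recomposing with $(\lambda x, M_1)$ and using that $M_1$ is still closed, $M'$ is sls.
\end{proof}
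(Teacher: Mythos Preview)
Your approach---induction on the length of $M$ with a case analysis on its shape, driven by the observation that a restricted $\eta$-step creates no new generalized redex and that $\eta$-expansion never enlarges the free-variable set of a subterm---is exactly what the paper does (it simply says ``by straightforward induction on the length of $M$'').

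There is, however, a small inaccuracy in your case analysis. In the cases $M = x_0~M_1~\dots~M_n$ and $M = (\lambda x.~M')~M_1~\dots~M_n$ you claim that the external expansion $M \to_{\eta} \lambda y.~M~y_0$ is ruled out by restrictedness. That is not so: in the variable-headed case the external expansion is always restricted (no prime redex is ever created at the new application), and in the $(\lambda x.~M')~M_1~\dots~M_n$ case it is restricted whenever the spine of $M'~M_2~\dots~M_n$ does not end in an abstraction---indeed the paper's own proof of Lemma~\ref{lem:etaterm} lists precisely this external expansion among the admissible restricted steps. Fortunately these missing cases are harmless: since no new generalized redex appears, every generalized redex of the expanded term already occurs in $M$ with an unchanged argument subterm, so closedness is trivially preserved. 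Your high-level paragraph already contains exactly this argument; you just need to stop asserting that the case is vacuous and instead discharge it with that one sentence.
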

\begin{proof}
B straightforward induction on the length of $M$.
\end{proof}

\begin{prop}
If $M$ is a term, then there is an $\eta$-long term $M'$ such that:
\[
\begin{array}{rclcrcl}
\lh(M') &\leq& \lh(M) + \ord(M) &~~~~~~~~~~& \depth(M') &=& \depth(M)\\
\ord(M') &=&\ord(M) && \norm(M') &\geq& \norm(M)
\end{array}
\]
Moreover if $M$ was strongly locally scoped, so is $M'$.
\label{prop:etalongform}
\end{prop}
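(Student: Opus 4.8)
The plan is to obtain $M'$ by reducing $M$ to a normal form for restricted $\eta$-expansion $\to_{\eta_r}$, and to verify that this $M'$ has all the required properties. First I would invoke Lemma~\ref{lem:etaterm} to know that $\to_{\eta_r}$ terminates on $M$, so there exists $M'$ with $M \to_{\eta_r}^* M'$ and $M'$ in $\eta_r$-normal form. The key observation is that an $\eta_r$-normal form is $\eta$-long: if some $\eta$-expansion step $M' \to_\eta M''$ did not increase the number of generalized redexes, it would be a restricted step, contradicting normality; hence every $\eta$-expansion of $M'$ strictly increases the number of generalized redexes, which is exactly the definition of $\eta$-long.

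Next I would read off each of the four (in)equalities from the lemmas already established about how $\to_{\eta_r}$ and $\to_\eta$ affect the various quantities. For local height, since $M \to_{\eta_r}^* M'$, Lemma~\ref{lem:etapreslh} gives $\lh(M') \leq \lh(M) + \ord(M)$. For depth, Lemma~\ref{lem:etapresdepth} gives $\depth(M) = \depth(M')$ for a single step, and iterating along the reduction sequence $M \to_{\eta_r}^* M'$ yields $\depth(M') = \depth(M)$. For order, Lemma~\ref{lem:etapresord} gives $\ord(M) = \ord(M')$ for a single $\eta$-step (in particular a restricted one), and again iterating along the sequence gives $\ord(M') = \ord(M)$. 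For the norm, Proposition~\ref{prop:etanorm} gives $\norm(N) \geq \norm(M)$ whenever $M \to_\eta N$; applying it along each step of $M \to_{\eta_r}^* M'$ (each restricted step being in particular an $\eta$-step) and composing the inequalities yields $\norm(M') \geq \norm(M)$.

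Finally, for the preservation of strong local scope: if $M$ is sls, then each step of $M \to_{\eta_r}^* M'$ preserves this property by Lemma~\ref{lem:eta_pres_ls}, so $M'$ is sls as well. This completes the construction and verification.

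I do not expect a real obstacle here: the proposition is a bookkeeping assembly of the preceding lemmas, the only genuinely new point being the remark that $\eta_r$-normality coincides with $\eta$-longness, which is immediate from the definitions. The mildest subtlety is the iteration of the single-step statements (depth, order, norm) along the reduction sequence, but each underlying relation is either an equality (so trivially transitive) or a $\geq$ inequality (so composable), and termination of $\to_{\eta_r}$ guarantees the sequence is finite.
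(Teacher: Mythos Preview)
Your proof is correct and follows essentially the same approach as the paper: take $M'$ to be an $\eta_r$-normal form of $M$ (via Lemma~\ref{lem:etaterm}), observe that $\eta_r$-normality means $\eta$-long, and read off the four (in)equalities and the preservation of strong local scope from Lemmas~\ref{lem:etapreslh}, \ref{lem:etapresdepth}, \ref{lem:etapresord}, \ref{lem:eta_pres_ls} and Proposition~\ref{prop:etanorm}. Your write-up is in fact slightly more explicit than the paper's, which simply lists the lemmas without spelling out the iteration or the ``$\eta_r$-normal $\Rightarrow$ $\eta$-long'' argument.
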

\begin{proof}
By Lemma \ref{lem:etaterm}, there is $M'$ such that $M \to_{\eta_r}^* M'$, and there is no further restricted $\eta$-expansion.
By definition, $M'$ is $\eta$-long. Moreover, the preservations of depth, order, local height and
norm follow respectively from Lemmas \ref{lem:etapresdepth}, \ref{lem:etapresord}, \ref{lem:etapreslh} and 
Proposition \ref{prop:etanorm}. The construction preserves strong local scope by Lemma \ref{lem:eta_pres_ls}.
\end{proof}

\subsubsection{Bounds for strongly locally scoped terms}
Putting everything together, we estimate:
\[
\loc_n(h, d)    = \max \{\norm(M) \mid \ord(M) \leq n~\&~\lh(M) \leq h~\&~\depth(M) \leq d~\&~\raisebox{3.5pt}{\txt{$M$ sls}}\}
\]

\begin{prop}
Suppose $M$ is a sls term of order at least one. Then,
\[
\norm(M) \leq 2_{\ord(M)-1}^{\depth(M)\log(\lh(M) + \ord(M) + 1)}
\]
\label{prop:boundetals}
\end{prop}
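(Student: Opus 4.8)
The plan is to chain the three main engines developed in this paper: the reduction of an arbitrary term to $\eta$-long form (Proposition~\ref{prop:etalongform}), the interpretation $\intr{-}$ of strongly locally scoped $\eta$-long ground-type terms into skeletons (Proposition~\ref{prop:ls_to_bs}), and the combinatorial upper bound on the norm of a skeleton (Theorem~\ref{thm_upbs}). A mild preliminary reduces us to the case $M : o$. First strip the leading abstractions of $M$: this touches neither $\norm$, $\lh$, nor $\depth$, and can only decrease $\ord$; the resulting term $N$ is a subterm of $M$, hence still sls. Then apply $N$ to fresh constants $\daimon$, one for each remaining argument type $A_i$. Since each $\daimon_{A_i}$ is closed and of level strictly below $\lv(N)$, this preserves strong local scope, leaves $\lh$ and $\depth$ unchanged, does not increase $\ord$, and cannot decrease $\norm$ (the head occurrence is untouched and the set of prime redexes only grows, so every $\lhr$-step of $N$ survives). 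As the claimed bound $2_{\ord(M)-1}^{\depth(M)\log(\lh(M)+\ord(M)+1)}$ is monotone in all three parameters, it suffices to prove the inequality for this ground-type sls term, which I rename $M$.

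Next, apply Proposition~\ref{prop:etalongform} to $M$: since $\eta$-expansion is type-preserving it yields an $\eta$-long sls term $M'$ of ground type with $\norm(M') \geq \norm(M)$, $\ord(M') = \ord(M)$, $\depth(M') = \depth(M)$ and $\lh(M') \leq \lh(M) + \ord(M)$ — this last inequality being precisely the origin of the $\ord(M)$ summand in the final exponent. Then Proposition~\ref{prop:ls_to_bs} applied to $M'$ produces the skeleton $\intr{M'}$ with $\norm(\intr{M'}) \geq \norm(M') \geq \norm(M)$, $\ord(\intr{M'}) \leq \ord(M)$, $\depth(\intr{M'}) \leq \depth(M)$ and $\max(\intr{M'}) \leq \lh(M)+\ord(M)$.

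Finally, conclude via Theorem~\ref{thm_upbs}. Degenerate cases first: a skeleton whose order or maximum is $0$ admits no reduction (the rewriting rule requires a nonzero root and a nonzero leading edge), so if $\ord(\intr{M'})=0$ or $\max(\intr{M'})=0$ then $\norm(\intr{M'}) = 0$, forcing $\norm(M) = 0$ and making the inequality trivial. Otherwise $\ord(\intr{M'}), \depth(\intr{M'}), \max(\intr{M'}) \geq 1$ (a skeleton always has depth $\geq 1$), and Theorem~\ref{thm_upbs} gives $\norm(\intr{M'}) \leq 2_{\ord(\intr{M'})-1}^{\depth(\intr{M'})\log(\max(\intr{M'})+1)}$; by monotonicity of $(k,x)\mapsto 2_k^x$ in both arguments together with the bounds of the previous paragraph, the right-hand side is at most $2_{\ord(M)-1}^{\depth(M)\log(\lh(M)+\ord(M)+1)}$. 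Composing with $\norm(M) \leq \norm(\intr{M'})$ completes the proof. The only part demanding genuine care is the reduction to ground type together with the degenerate-case bookkeeping; all the real content is already packaged in the three cited results.
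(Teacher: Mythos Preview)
Your proof is correct and follows essentially the same route as the paper: reduce to ground type, apply Proposition~\ref{prop:etalongform} to get an $\eta$-long sls term, then chain Proposition~\ref{prop:ls_to_bs} with Theorem~\ref{thm_upbs}. The only cosmetic differences are that you strip leading abstractions before applying constants (the paper applies constants directly to $M$, which also works since the new generalized redexes $(\lambda x_i, \daimon_{A_i})$ have closed arguments), and you spell out the degenerate cases $\ord(\intr{M'})=0$ or $\max(\intr{M'})=0$ that the paper leaves implicit.
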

\begin{proof}
If $\Gamma \vdash M:A_1\to\dots\to A_n \to o$ is a sls term, we first make it of ground type by forming
$\Gamma \vdash M~\daimon_{A_1}~\dots~\daimon_{A_n} : o$ -- its norm can only increase, the other quantities stay unchanged and
the term is still sls. By Proposition \ref{prop:etalongform}, there is $M'$ $\eta$-long, of ground type, and 
sls such that $\lh(M') \leq \lh(M) + \ord(M)$, $\depth(M') = \depth(M)$, $\ord(M') = \ord(M)$ and $\norm(M') \geq \norm(M)$.
We conclude by Proposition \ref{prop:ls_to_bs} and Theorem \ref{thm_upbs}.
\end{proof}

We now prove the optimality of this upper bound by exhibiting a family of terms whose reduction length asymptotically reaches
it. This family of terms is closely related to the example used in Section \ref{sec:games} for game situations. For 
$n, k, p\geq 0$ and $M:A_p$, we define:
\[
\begin{array}{rclcrcl}
[n]^0_p(M) &=& M &~~~~~~~~~~~~~&[n]^{k+1}_p(M) &=& \church{n}_{p+1}~[n]^k_p(M)
\end{array}
\]
One can immediately check that $[n]_p^k(M): A_p$ and that for all $q\in \mathbb{N}$, $[n]_p^k(\church{q}_p) \to_\beta^* \church{q^{n^k}}_p$.
Exploiting this construction we set, for $n, k, p\geq 0$:
\[
S_{n, k, p} = [n]_p^k(\church{2}_p)~\church{2}_{p-1}~\dots~\church{2}_0
\]
For which it is immediate to check that for all $n, k, p \geq 0$ we have
$S_{n, k, p} \to_{\beta}^* \church{2_{p}^{2^{n^k}}}_0$.
Moreover, by construction of $S_{n, k, p}$, for $n\geq 2$ and $p, k\geq 1$ we have
$\lh(S_{n, k, p}) = n+1$, $\depth(S_{n, k, p}) = k+1$ and $\ord(S_{n, k, p}) = p+3$, and
$S_{n, k, p}$ is sls. To deduce
a lower bound from this, we use:

\begin{lem}
If $M \to_{\beta}^* \church{n}_0$, then
$
\norm(M~\id_o) \geq n
$,
where $\id_o = \lambda x^o.~x$.
\label{lem:main:churchlhr}
\end{lem}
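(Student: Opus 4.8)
The plan is to relate the linear head reduction of $M~\id_o$ to the $\beta$-reduction $M \to_\beta^* \church{n}_0$, exploiting that lhr is compatible with $\beta$-reduction (so $M \equiv_\beta M'$ whenever $M \to_\lhr M'$) and that the lhr of a term always terminates on standard terms (Proposition~\ref{prop:lhr_term}). The key idea is that a Church numeral $\church{n}_0$ applied to $\id_o$, namely $\church{n}_0~\id_o = (\lambda f^{o\to o}.\lambda x^o.~f~(f~\dots~(f~x)))~(\lambda y^o.~y)$, has an lhr sequence of length exactly $n$ (or at least $n$): the head variable is $f$, so it gets substituted by $\lambda y^o.y$, then we unwrap one application of the identity, exposing the next $f$ occurrence in head position, and so on, for $n$ successive substitutions of the head occurrence — one per copy of $f$ — before reaching the innermost $x$. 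So $\norm(\church{n}_0~\id_o) \geq n$ can be computed directly.

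The main work is to transfer this lower bound along the $\beta$-reduction $M \to_\beta^* \church{n}_0$, hence $M~\id_o \to_\beta^* \church{n}_0~\id_o$. Here I would invoke Proposition~\ref{prop:betanorm}, which states that if a standard term $P \to_\beta P'$ then $\norm(P) \geq \norm(P')$; iterating it along the reduction sequence $M~\id_o \to_\beta^* \church{n}_0~\id_o$ gives $\norm(M~\id_o) \geq \norm(\church{n}_0~\id_o) \geq n$, which is exactly the claim. Note that $M$ is assumed simply-typed (so $M~\id_o$ is a well-typed standard term) and all terms in the reduction are standard, so Proposition~\ref{prop:betanorm} applies at each step.

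So, concretely, the steps in order are: (1) observe $M~\id_o \to_\beta^* \church{n}_0~\id_o$ by context closure of $\to_\beta$; (2) compute $\norm(\church{n}_0~\id_o) \geq n$ by explicitly carrying out the $n$ head-occurrence substitutions in $\church{n}_0~\id_o$, each replacing the current head $f$ by $\lambda y^o.y$ and then exposing (after the blocked identity redex becomes head) the next $f$ — since there are $n$ occurrences of $f$, this yields $n$ distinct lhr steps; (3) apply Proposition~\ref{prop:betanorm} repeatedly along the reduction of step (1) to get $\norm(M~\id_o) \geq \norm(\church{n}_0~\id_o) \geq n$. I expect step~(2) — verifying that the lhr sequence of $\church{n}_0~\id_o$ genuinely has length at least $n$, keeping careful track of which occurrence is in head position after each substitution and the intervening generalized redex $(\lambda y, -)$ of the identity — to be the only place requiring any genuine bookkeeping, though it is entirely routine and closely mirrors Example~\ref{ex_lhr}. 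Step~(3) is immediate from the already-proved Proposition~\ref{prop:betanorm}.
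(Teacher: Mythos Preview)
Your proof is correct but takes a somewhat different route from the paper's. The paper's one-line sketch says ``By induction on $n$, exploiting that lhr preserves $\beta$-equivalence,'' which suggests a direct inductive argument on $n$ using only the elementary fact that $M \to_\lhr M'$ implies $M \equiv_\beta M'$, without invoking Proposition~\ref{prop:betanorm}. You instead (1) reduce $M~\id_o \to_\beta^* \church{n}_0~\id_o$, (2) compute $\norm(\church{n}_0~\id_o) \geq n$ explicitly (in fact it is $2n$, since each copy of $f$ yields two lhr steps: one substituting $f$ by $\lambda y.y$, and one substituting the resulting head $y$ by its argument), and (3) invoke the already-proved Proposition~\ref{prop:betanorm} along the $\beta$-chain to conclude. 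Your argument is arguably cleaner given that Proposition~\ref{prop:betanorm} is available and does all the work of transporting the lower bound; the paper's inductive approach is more self-contained in that it relies only on the weaker compatibility of lhr with $\beta$-equivalence, but the details of how the induction step goes through are left implicit. Both are valid; yours is the more direct application of machinery already developed in the paper.
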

\begin{proof} 
By induction on $n$, exploiting that lhr preserves $\beta$-equivalence.
\end{proof}

\begin{thm}
For fixed $n \geq 2$ we have
$
\loc_n(h, d) = 2_{n-1}^{\Theta(d\log(h))}
$.
\end{thm}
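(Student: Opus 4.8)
The plan is to prove the two matching bounds $\loc_n(h,d) \le 2_{n-1}^{O(d\log h)}$ and $\loc_n(h,d) \ge 2_{n-1}^{\Omega(d\log h)}$ separately, for each fixed $n \ge 2$.

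For the upper bound I would simply feed the constraints into Proposition \ref{prop:boundetals}. For any strongly locally scoped $M$ with $\ord(M)\ge 1$, $\lh(M)\le h$, $\depth(M)\le d$ and $\ord(M)\le n$ we get $\norm(M)\le 2_{\ord(M)-1}^{\depth(M)\log(\lh(M)+\ord(M)+1)} \le 2_{n-1}^{d\log(h+n+1)}$, the last step because the exponent is $\ge 1$ (so the tower is monotone in its index $\ord(M)-1\le n-1$) and all quantities involved are monotone; the case $\ord(M)=0$ is trivial since then $M$ has no generalized redex and $\norm(M)=0$. As $n$ is fixed, $\log(h+n+1)=\Theta(\log h)$, which gives the claimed upper bound on $\loc_n(h,d)$.

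For the lower bound when $n\ge 3$, I would reuse the family $S_{a,k,p}$ (where I write $a$ for its first index to avoid clashing with $n$) with $p=n-3\ge 0$, applied to $\id_o$. Three things need to be checked. First, $S_{a,k,p}\,\id_o$ is strongly locally scoped: $S_{a,k,p}$ only ever applies $\church{a}_{p+1}$ to closed subterms, so all its generalized redexes have closed arguments, and appending the closed argument $\id_o$ creates no new generalized redex. Second, appending $\id_o$ changes none of $\lh$, $\depth$, $\ord$: the subterm $\church{2}_0$ already has local height $3$, $\id_o$ has order $1$ and bounded depth, so with the quantities of $S_{a,k,p}$ recalled in Section \ref{sec:bs} we get $\lh(S_{a,k,p}\,\id_o)=a+1$, $\depth(S_{a,k,p}\,\id_o)=k+1$, $\ord(S_{a,k,p}\,\id_o)=p+3=n$; thus, taking $a=h-1$ and $k=d-1$, the term $S_{a,k,p}\,\id_o$ witnesses $\loc_n(h,d)$. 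Third, since $S_{a,k,p}\to_\beta^* \church{2_p^{2^{a^k}}}_0$, Lemma \ref{lem:main:churchlhr} gives $\norm(S_{a,k,p}\,\id_o)\ge 2_{n-3}^{2^{(h-1)^{d-1}}}$. It then remains to use the tower identity $2_{n-1}^X=2_{n-3}^{2^{2^X}}$ and monotonicity of $2_{n-3}$ to reduce the target inequality to $(h-1)^{d-1}\ge h^{d/4}$, i.e.\ $(d-1)\log(h-1)\ge \tfrac14 d\log h$, which holds once $h,d$ are large; this yields $\loc_n(h,d)\ge 2_{n-1}^{\Omega(d\log h)}$.

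The case $n=2$ is where I expect the real difficulty: the strong-local-scope constraint forbids iterating $\church{a}_0:(o\to o)\to(o\to o)$ over a \emph{variable} of type $o\to o$, so there is no sls order-$2$ term $\beta$-reducing to a Church numeral, and Lemma \ref{lem:main:churchlhr} is unavailable. Instead I would use $C_{a,k}\,\daimon_o$, where $C_{a,k}=\church{a}_0(\church{a}_0(\cdots(\church{a}_0\,\id_o)\cdots))$ with $k$ nested copies of $\church{a}_0=\lambda g^{o\to o}\lambda y^o.\,g^a(y)$: each application is to a closed term, so $C_{a,k}\,\daimon_o$ is strongly locally scoped, of order $2$, with $\lh=a+1$ and $\depth=\Theta(k)$. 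Although $C_{a,k}\,\daimon_o\to_\beta^*\daimon_o$, its linear head reduction is long, and the main obstacle is to prove directly that $\norm(C_{a,k}\,\daimon_o)=\Theta(a^k)$: firing the head redex replaces the head occurrence of $g$ in $g^a(y)$ by $\church{a}_0(\cdots)$, and the reduction must fully resolve this inner $(k{-}1)$-fold iteration once for each of the $a$ occurrences of $g$, which — after a careful unfolding of the recurrence (each substitution sends the head back into the iterated subterm) — gives the factor-$a$-per-level growth. Taking $a=h-1$ and $k=\Theta(d)$ so that $\depth\le d$, this gives $\loc_2(h,d)\ge (h-1)^{\Theta(d)}=2^{\Omega(d\log h)}=2_1^{\Omega(d\log h)}$. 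Combining the upper and lower bounds yields $\loc_n(h,d)=2_{n-1}^{\Theta(d\log h)}$ for every fixed $n\ge 2$.
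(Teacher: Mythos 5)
Your proof is correct and follows essentially the same route as the paper: the upper bound comes from Proposition \ref{prop:boundetals}, the lower bound for $n\geq 3$ from the family $S_{h-1,d-1,n-3}~\id_o$ together with Lemma \ref{lem:main:churchlhr}, and the separate order-$2$ case from an iterated low-level Church numeral --- your $C_{a,k}$ is, up to an apparent indexing slip in the paper's $U_{n,d}$ (where $\church{n}_1~\id_o$ does not typecheck and $\church{n}_0$ is surely meant), exactly the paper's witness. The one step you flag as needing ``a careful unfolding of the recurrence'', namely $\norm(C_{a,k}~\daimon_o)=\Theta(a^k)$, is asserted by the paper with no more justification than you give, so your treatment matches the paper's level of detail throughout.
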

\begin{proof} 
We start with $n\geq 3$, $n=2$ requires a separate construction for the lower bound. Let us fix $h \geq 3$ and
$d\geq 2$. By Proposition \ref{prop:boundetals}, we already know that $\loc_n(d, h) \leq 2^{d\log(h + n + 1)}_{n - 1}$.
Moreover, we have $\lh(S_{h-1, d-1, n -3}~\id_o) = h$ and $\depth(T_{h-1, d-1, n-3}~\id_o) = d$, and by Lemma \ref{lem:main:churchlhr} we
have $\norm(T_{h-1, d-1, n-3}~\id_o) \geq 2_{n-1}^{(d-1)\log(h-1)}$. To summarize:
\[
2_{n-1}^{(d-1)\log(h-1)} \leq \loc_n(d, h) \leq 2^{d\log(h + n + 1)}_{n - 1}
\]
Therefore, with $n\geq 3$ fixed and $d, h$ parameters we have $\loc_n(h, d) = 2_{n-1}^{\Theta(d\log(h))}$.

For $n = 2$, the upper bound still holds. For $d, p\geq 2$, define:
\[
U_{n, d} = \church{n}_1~(\church{n}_1~\dots (\church{n}_1~\id_o)\dots)
\]
with $d$ copies of $\church{n}_1$ in total. Then, $U_{n, d}$ is sls and $\lh(U_{n, d}) = n+1$,
$\depth(U_{n, d}~\id_o) = d+1$, $\ord(U_{n, d}~\id_o) = 2$ and $\norm(U_{n, d}) \geq n^d = 2^{d\log(n)}$. 
It follows that $\loc_2(d, h) = 2^{\Theta(d\log(h))}$.
\end{proof}

In particular, reduction length for sls second-order terms of fixed depth is bounded by a polynomial of degree
less than the depth.

\subsection{Generalization to arbitrary terms}
\label{subsec:general}

In this final subsection, we deduce from the study of lhr of ls terms a
bound on the length of lhr of arbitrary terms. The key observation is that any $\lambda$-term can
be transformed into a locally scoped form through $\lambda$-lifting \cite{llift}.

\subsubsection{Lambda-lifting to sls terms}
Take a term $M = \lambda x^A.~(\lambda y^A.~y)~x$.
Obviously, $M$ is not sls: indeed there is a prime redex $(\lambda y, x)$ and the subterm $x$ has $x$ free. In order
to make the variable $x$ ``local'', we modify the abstraction subterm $\lambda y.~y$ to forward explicitly the variable $x$. We get
the term
$M' = \lambda x^A.~(\lambda y^{A\to A}.~y~x) (\lambda {x'}^A.~x')$.
The type of $y$ has changed, but not the type of the overall term. Note that the terms $M$ and $M'$ are still $\beta$-equivalent, although 
we are not going to use that explicitly. More importantly, the norm has increased, the order has increased by one, and the other quantities
are essentially unchanged. We formalize this construction by the $\lambda$-lifting expansion $\to_{\laml}$, defined in Figure \ref{fig:lamldef}.

\begin{figure}
\begin{mathpar}
\inferrule
        {y\in \fv(M_1)}
        {(\lambda x.~M)~M_1~\dots~M_n \to_{\laml} (\lambda x.~M[x~y/x])~(\lambda y'.~M_1[y'/y])~\dots~M_n}
\and
\inferrule
        {M_i \to_{\laml} M'_i}
        {x_0~M_1~\dots~M_n \to_{\laml} x_0~M_1~\dots~M'_i~\dots~M_n}
\and
\inferrule
        {M \to_{\laml} M'}
        {\lambda x.~M \to_{\laml} \lambda x.~M'}
\and
\inferrule
        {M_1 \to_{\laml} M'_1}
        {(\lambda x.~M)~M_1~\dots~M_n \to_{\laml} (\lambda x.~M)~M'_1~\dots~M_n}
\and
\inferrule
        {M~M_2~\dots~M_n \to_{\laml} M'~M'_2~\dots~M'_n \\ \text{($M_1$ closed)}}
        {(\lambda x.~M)~M_1~\dots~M_n \to_{\laml} (\lambda x.~M')~M_1~M'_2~\dots~M'_n}
\end{mathpar}
\caption{Definition of the $\lambda$-lifting expansion $\to_{\laml}$}
\label{fig:lamldef}
\end{figure}

First, we prove that $\to_\laml$ indeed allows us to convert any term $M$ into a sls $M'$. This is done
by showing that $\to_\laml$ terminates, and that its normal forms are sls.

\begin{lem}
Let $M$ be any term. Then, $\to_{\laml}$ terminates on $M$.
\label{lem:lamlterm}
\end{lem}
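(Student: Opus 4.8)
The plan is to exhibit a well-founded measure that strictly decreases under every $\to_{\laml}$ step. Two preliminary observations make this work. First, a single $\to_{\laml}$ step is always an instance of the first rule fired at some position inside $M$ (the other four rules merely form its congruence closure), and that rule neither creates nor destroys generalized redexes: substituting a variable $x$ by $x~y$, or $y$ by a fresh $y'$, cannot turn any subterm into, or out of, an application of an abstraction. Hence the number $r$ of generalized redexes of $M$ is invariant along every $\to_{\laml}$-reduction issued from $M$. Second, for a generalized redex $R = (\lambda z, N)$ of a term, let $\delta(R)$ be the number of abstractions that bind some generalized redex and occur strictly above $\lambda z$; then $\delta(R) \leq r-1$, and $\delta(R)$ is also invariant, because the only abstraction introduced by a step — namely $\lambda y'$ in $\lambda y'.~M_1[y'/y]$ — is never in head position, hence never a redex binder, and all other redex binders keep their relative positions.

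With this in hand, I would associate with $M$ the finite multiset $\mu(M) \subseteq \mathbb{N}$ containing, for each generalized redex $R = (\lambda z, N)$ of $M$, exactly $|\fv(N)|$ copies of the number $(r-1) - \delta(R)$. Since the usual order on $\mathbb{N}$ is well-founded, so is the multiset order it induces. The heart of the argument is that $M \to_{\laml} M'$ implies $\mu(M') < \mu(M)$. Consider the rule-$1$ step involved, lifting $y \in \fv(M_1)$ at the redex $R_0 = (\lambda x, M_1)$. Its argument becomes $\lambda y'.~M_1[y'/y]$, whose free variables are $\fv(M_1)\setminus\{y\}$, so $R_0$ loses exactly one copy of $(r-1)-\delta(R_0)$. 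All other changes to some $|\fv(\text{argument})|$ come from substituting $x$ by $x~y$ in the body of $\lambda x$; this can only enlarge — and by at most one each — the free-variable set of the argument of a generalized redex whose binder lies inside that body, hence strictly below $\lambda x$, so with $\delta(R) > \delta(R_0)$, i.e.\ $(r-1)-\delta(R) < (r-1)-\delta(R_0)$. The renaming $y \mapsto y'$ inside $M_1$, and everything outside $\lambda x$, leave all free-variable counts unchanged. Thus $\mu(M')$ is obtained from $\mu(M)$ by removing one occurrence of $(r-1)-\delta(R_0)$ and adding finitely many strictly smaller numbers: a strict decrease. Passing through the congruence rules adds nothing, since the fired redex $R_0$ and the redexes whose free-variable count grows have, in $M$, binders in exactly the same positions as in the subterm where the step occurs, so the inequalities on $\delta$ are evaluated in $M$ and survive.

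The main obstacle, and the reason for the slightly indirect weights, is well-foundedness. The naive choice "weight $=$ depth of the fired redex" already separates $R_0$ from the newly affected redexes, but depths grow without bound along $\to_{\laml}$-reductions (each step inserts an abstraction), so that measure is not well-founded. Replacing depth by the number of redex binders above the redex repairs this: that quantity is invariant and bounded by the fixed total number $r$ of generalized redexes, so the entries of $\mu$ live in $\{0,\dots,r-1\}$. Everything else is routine bookkeeping about how the substitutions $x \mapsto x~y$ and $y \mapsto y'$ affect the sets $\fv(N)$, which one verifies by a straightforward induction on the length of $M$ following the shape of the $\to_{\laml}$ rules.
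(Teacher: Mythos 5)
Your proof is correct, and it uses a genuinely different termination measure from the paper's. The paper assigns to each \emph{variable occurrence} a recursively defined ``binding distance'' (roughly, the total forwarding distance the occurrence would accumulate through nested generalized redexes, via a $\max + {} \cdots + 1$ recursion on the term structure) and shows that the multiset of these distances decreases: the lifted occurrences of $y$ in $M_1$ are replaced by occurrences of $y'$ of strictly smaller distance, and the fresh occurrences of $y$ created in the body have distance bounded by that of the $x$-occurrences they sit next to. You instead index the multiset by \emph{generalized redexes}, weighting each redex $(\lambda z,N)$ by $|\fv(N)|$ copies of its co-depth $(r-1)-\delta$ among redex binders. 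The two key invariants you isolate --- that a $\to_{\laml}$ step preserves the number $r$ of generalized redexes and the nesting depth $\delta$ of each surviving redex binder (the fresh $\lambda y'$ sits in argument position and so never becomes a redex binder) --- are exactly what make the weights live in the fixed finite range $\{0,\dots,r-1\}$, and your locality argument (any redex whose argument gains $y$ from the substitution $x \mapsto x~y$ has its binder strictly inside the body of $\lambda x$, hence strictly larger $\delta$) correctly yields the strict multiset decrease; the observation that the rewritten subterm has unchanged free variables takes care of enclosing redexes and of the congruence rules. What your approach buys is that well-definedness and boundedness of the weights are immediate, at the cost of having to verify the two invariance claims; the paper's per-occurrence measure avoids those global invariants but pays for it with a more delicate recursive definition. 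Either way the lemma is established.
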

\begin{proof}
For each occurrence $x_0$ of $x$ in $M$, we define its \textbf{binding distance} $d_M(x_0)$:
\[
\begin{array}{rcll}
d_{x_0~M_1~\dots~M_n}(x_0) &=& 0\\
d_{\daimon~M_1~\dots~M_n}(x_0) &=& d_{M_i}(x_0) & (x_0 \in M_i)\\
d_{y_0~M_1~\dots~M_n}(x_0) &=& d_{M_i}(x_0) & (x_0 \in M_i)\\
d_{\lambda x.~M'}(x_0) &=& d_{M'}(x_0)\\
d_{(\lambda y.~M')~M_1~\dots~M_n}(x_0) &=& d_{M'}(x_0) & (x_0 \in M')\\
d_{(\lambda y.~M')~M_1~\dots~M_n}(x_0) &=& \max_{y_i\in oc_y(M')} d_{M'}(y_i) + d_{M_1}(x_0) + 1 & (x_0 \in M_1 \wedge x \in \fv(M'))\\
d_{(\lambda y.~M')~M_1~\dots~M_n}(x_0) &=& d_{M_1}(x_0) & (x_0 \in M_1 \wedge x_0 \not \in \fv(M_1))\\
d_{(\lambda y.~M')~M_1~\dots~M_n}(x_0) &=& d_{M'~M_2~\dots~M_n}(x_0) & (x_0 \in M_i \wedge i\geq 2)
\end{array}
\]
where $x_0 \in M$ means that $x_0$ is a variable occurrence $x$ in $M$ and $oc_y(M)$ is the set of occurrences of a
variable $y$ in $M$.
Then, for any term $M$ we consider the multiset $\mathcal{M}(M) = \{d_M(x_0) \mid x_0 \in oc_x(M)\}$
where by $\{-\mid-\}$ we denote here multiset comprehension.
Then, we show by induction on $M\to_{\laml} M'$ that $\mathcal{M}(M') < \mathcal{M}(M)$ for $<$ the multiset well-founded ordering on
multisets of natural numbers.
\begin{itemize}
\item If $M = (\lambda x.~M')~M_1~\dots~M_n \to_{\laml} (\lambda x.~M'[x~y/x])~(\lambda y'.~M_1[y'/y])~\dots~M_n = M''$:
\begin{itemize}
\item Occurrences of variables other than $y'$ in $M_1[y'/y]$ keep their binding distance,
\item Occurrences of $y$ in $M_1$ of distance $d_M(y_i) = \max_{x_i\in oc_x(M')} d_{M'}(x_i) + d_{M_i}(y_i) + 1$ have been replaced by occurrences of
$y'$ of distance $d_{M''}(y'_i) = d_{M_i}(y_i)$.
\item New occurrences of $y$ of distance less than $\max_{x_i\in oc_x(M')} d_{M'}(x_i)$ appear.
\end{itemize}
So, each variable occurrence is either unchanged or replaced by occurrences of strictly smaller distance, so
the reduction decreases the multiset ordering on $\mathcal{M}(M)$.
\item The other cases follow from IH since $\mathcal{M}(-)$ is additive over all term constructors.\qedhere
\end{itemize}
\end{proof}

\begin{lem}
If a term $M$ is $\laml$-normal, $M$ is strongly locally scoped.
\label{lem:normlocsc}
\end{lem}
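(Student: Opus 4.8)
The plan is to prove the contrapositive: if $M$ is not strongly locally scoped then $M$ is not $\laml$-normal. I would argue by induction on the length of $M$, splitting on its four possible shapes.

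The first step is to record how generalized redexes decompose along the structure of $M$. If $M = h~M_1~\dots~M_n$ with $h$ a head occurrence (a variable or the constant $\daimon$), then neither $M$ nor any of its partial applications $h~M_1~\dots~M_k$ has a prime redex, and every subterm of $M$ is either such a partial application or a subterm of some $M_i$; hence the generalized redexes of $M$ are exactly those of the $M_i$. If $M = \lambda x.~M'$, the generalized redexes of $M$ are those of $M'$. If $M = (\lambda x.~M')~M_1~\dots~M_n$, then inspecting the subterms of $M$ — the partial applications $(\lambda x.~M')~M_1~\dots~M_k$, the subterm $\lambda x.~M'$, and the subterms of $M'$ and of the $M_i$ — and using that $M'$, each $M'~M_2~\dots~M_k$, and each $M_i$ for $i \geq 2$ are subterms of $M'~M_2~\dots~M_n$, one gets that every generalized redex of $M$ is $(\lambda x, M_1)$, or a generalized redex of $M_1$, or a generalized redex of $M'~M_2~\dots~M_n$. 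In particular, if $M$ is not sls then $M_1$ is not closed, or $M_1$ is not sls, or $M'~M_2~\dots~M_n$ is not sls; note that $M_1$ and $M'~M_2~\dots~M_n$ are both strictly shorter than $M$.

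With this in hand the induction is short. The cases $M = \daimon$ and $M = x_0$ have no generalized redexes, so the hypothesis is vacuous. If $M = \lambda x.~M'$, then $M'$ is not sls, the induction hypothesis gives $M' \to_{\laml} M''$, and the congruence rule for abstractions yields $M \to_{\laml} \lambda x.~M''$. If $M = h~M_1~\dots~M_n$, then some $M_i$ is not sls, the induction hypothesis gives $M_i \to_{\laml} M_i'$, and the congruence rule letting a $\laml$-step pass into an argument of a head occurrence yields $M \to_{\laml} h~M_1~\dots~M_i'~\dots~M_n$. Finally, if $M = (\lambda x.~M')~M_1~\dots~M_n$ and $M_1$ is not closed, I would pick $y \in \fv(M_1)$ and fire the $\lambda$-lifting rule, getting $M \to_{\laml} (\lambda x.~M'[x~y/x])~(\lambda y'.~M_1[y'/y])~M_2~\dots~M_n$; if instead $M_1$ is closed, then by the decomposition $M_1$ or $M'~M_2~\dots~M_n$ is not sls, and the induction hypothesis applies. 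In the first sub-case the congruence rule into the first argument — which carries no side condition — gives $M \to_{\laml} (\lambda x.~M')~M_1'~\dots~M_n$; in the second, since a $\laml$-step never alters the number of arguments of the outermost application, the step out of $M'~M_2~\dots~M_n$ has exactly the shape demanded by the congruence rule into the body and the remaining arguments, whose side condition "$M_1$ closed" is precisely our standing assumption, so $M$ again steps.

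The one delicate point is this last side condition: it could in principle prevent propagating a $\laml$-redex buried in the body of a $\beta$-redex whose first argument is open — but in that situation the $\lambda$-lifting rule already fires at that very $\beta$-redex, so no real obstruction arises, and the case split above is designed precisely to exploit this dovetailing. The remaining ingredients are routine: the verification that $\laml$-reduction preserves the outer application arity, and reading the argument-congruence rule as applying to a constant head $\daimon$ as well as to a variable head.
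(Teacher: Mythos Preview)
Your proof is correct and follows the same approach the paper has in mind---a straightforward induction on the length of $M$, proving the contrapositive by exhibiting a $\laml$-redex whenever strong local scope fails. Your decomposition of generalized redexes and the observation that a $\laml$-step preserves the outer application arity (needed to match the premise shape of the last congruence rule) are exactly the routine verifications the paper elides; your flagging of the need to read the argument-congruence rule as covering a $\daimon$ head is a genuine wrinkle in the paper's definition rather than a gap in your argument.
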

\begin{proof}
Straightforward induction on the length of $M$.
\end{proof}

\subsubsection{How $\lambda$-lifting increases the norm}
Now, it remains to investigate how $\lambda$-lifting affects the relevant quantities on terms. First, we show that
it can only increase the norm.

\begin{lem}
If $M$ has a head occurrence $x_0$ where $x$ is free in $M$ and $N$ is a term, then
if $M\to_{\laml} M'$, then the head occurrence of $M'$ is still $x_0$ and $M[N/x_0] \to_{\laml} M'[N/x_0]$.
\label{lem:lamlsubst1}
\end{lem}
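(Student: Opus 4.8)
The plan is to proceed by induction on the length of $M$, with an inner case split on the shape of $M$ and on which rule of Figure~\ref{fig:lamldef} justifies the step $M \to_\laml M'$. Since the head occurrence $x_0$ is an occurrence of a \emph{free} variable $x$, the term $M$ must have one of the forms $x_0~M_1~\dots~M_n$, or $\lambda z.~M_0$, or $(\lambda z.~M_0)~M_1~\dots~M_n$; in the latter two cases necessarily $z \neq x$ and $x_0$ is the head occurrence of $M_0$ (in particular $M$ is not a $\daimon$-headed term). The key structural remark is that $[N/x_0]$ rewrites only the single head occurrence, which lies inside $M_0$ (or is the head of the variable-headed term), so that the bulk of the argument is a bookkeeping exercise: one re-applies the \emph{same} $\to_\laml$ rule to $M[N/x_0]$ and checks that it lands on $M'[N/x_0]$, with the head occurrence preserved.

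For $M = \lambda z.~M_0$ the step is forced to be the abstraction rule, coming from $M_0 \to_\laml M_0'$; the induction hypothesis yields that $M_0'$ still has head occurrence $x_0$ and that $M_0[N/x_0] \to_\laml M_0'[N/x_0]$, and re-applying the abstraction rule concludes. For $M = (\lambda z.~M_0)~M_1~\dots~M_n$ the step is one of the three rules acting on a redex whose head sits in $M_0$: the $\lambda$-lifting rule (rewriting $M_0$ to $M_0[z~y/z]$ and $M_1$ to $\lambda y'.~M_1[y'/y]$), the rule reducing $M_1$, or the rule reducing inside $M_0~M_2~\dots~M_n$ when $M_1$ is closed. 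In each subcase $x_0$, being an occurrence of $x \neq z$, is untouched by $[z~y/z]$, so it remains the head occurrence of $M'$; and using Barendregt's convention to assume $z \notin \fv(N)$, the substitution $[N/x_0]$ commutes with $[z~y/z]$ on $M_0$. Hence re-applying the same rule to $M[N/x_0] = (\lambda z.~M_0[N/x_0])~M_1~\dots~M_n$ — invoking the induction hypothesis for the third subcase on the shorter term $M_0~M_2~\dots~M_n$ — produces exactly $M'[N/x_0]$.

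The remaining case is $M = x_0~M_1~\dots~M_n$, where the only applicable rule is the congruence reducing some argument, $M_i \to_\laml M_i'$, so that $M' = x_0~M_1~\dots~M_i'~\dots~M_n$ plainly still has head occurrence $x_0$; here $M[N/x_0] = N~M_1~\dots~M_n$ and we must exhibit $N~M_1~\dots~M_n \to_\laml N~M_1~\dots~M_i'~\dots~M_n$. This is the point I expect to demand the most care, since we now have to reinstate the reduction inside $M_i$ when a general term $N$, rather than the variable $x_0$, occupies the head position. Writing $N = h~P_1~\dots~P_k$ in head form and distinguishing $h$ a variable or $\daimon$ (where the congruence into an argument of a variable/constant-headed term applies directly) from $h$ an abstraction (where one must additionally account for how the $\lambda$-lifting strategy of Figure~\ref{fig:lamldef} reaches trailing argument positions) dispatches it, the abstraction case relying on a short auxiliary observation about $\to_\laml$. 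Collecting all the cases finishes the induction.
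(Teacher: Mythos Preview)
Your outline follows the same plan as the paper --- induction on the length of $M$ with a case split on the shape of $M$ and on the $\to_\laml$ rule used --- and in fact you go further than the paper in singling out $M = x_0\,M_1\dots M_n$ as the delicate case (the paper only details $M = (\lambda y.\,M')\,M_1\dots M_n$ and calls the rest ``trivial''). Your instinct that this case needs care is correct; unfortunately the resolution you sketch does not go through against the rules of Figure~\ref{fig:lamldef} as written.

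Concretely, with $M = x_0\,M_1\dots M_n$ and $M_i \to_\laml M_i'$ you need $N\,M_1\dots M_n \to_\laml N\,M_1\dots M_i'\dots M_n$. First, Figure~\ref{fig:lamldef} contains \emph{no} congruence rule for $\daimon$-headed applications: if $N = \daimon$ then $\daimon\,M_1\dots M_n$ is $\laml$-normal, so the desired step is simply unavailable --- your parenthetical ``variable/constant-headed'' is wrong for constants. Second, for $N$ an abstraction the ``short auxiliary observation'' you invoke does not exist in general: if $N = \lambda z.\,N'$ and $i\geq 2$, the only rule reaching $M_i$ in $(\lambda z.\,N')\,M_1\dots M_n$ is the last rule of Figure~\ref{fig:lamldef}, whose side condition requires the first argument to be closed; take $M_1$ a free variable and there is no derivation of the required step at all. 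So the lemma, read literally against the rules as stated, fails on such instances --- and the paper's own one-line treatment has exactly the same gap. A repair would have to either add the missing congruences to Figure~\ref{fig:lamldef}, or restrict the hypotheses on $N$ (noting that in the only use, Lemma~\ref{lem:lhrlaml}, $N$ is the closed argument $M_1$ supplied by the last rule, which still does not suffice by itself), or reformulate the claim.
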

\begin{proof}
By induction on the length of $M$. The only non-trivial case is  for the term $M = (\lambda y.~M')~M_1~\dots~M_n$, where
three subcases arise:
\begin{itemize}
\item If $M' \to_{\laml} M''$, then by IH we know that $M'[N/x_0] \to_{\laml} M''[N/x_0]$, therefore
\[(\lambda y.~M'[N/x_0])~M_1~\dots~M_n \to_{\laml} (\lambda y.~M''[N/x_0])~M_1~\dots~M_n\]
\item If $M_i \to_{\laml} M'_i$, then 
$(\lambda y.~M'[N/x_0])~M_1~\dots~M_n \to_{\laml} (\lambda y.~M'[N/x_0])~M_1~\dots~M_n$.
\item If
$(\lambda y.~M')~M_1~\dots~M_n \to_{\laml} (\lambda y.~M'[y~z/y])~(\lambda z.~M_1)~\dots~M_n$, then
since $x$ is free in $M'$, necessarily $y\neq x$ ($y$ is bounded). Hence $M'[y~z/y][N/x_0] = M'[N/x_0][y~z/y]$,
so $(\lambda y.~M'[N/x_0])~M_1~\dots~M_n \to_{\laml} (\lambda y.~M'[y~z/y][N/x_0])~(\lambda z.~M_1)~\dots~M_n$.\qedhere
\end{itemize}
\end{proof}

\begin{lem}
If $M$ has a head occurrence $x_0$ where $x$ is free in $M$ and $N$ is a term with $N \to_{\laml} N'$, then we have $M[N/x_0] \to_{\laml} M[N'/x_0]$.
\label{lem:lamlsubst2}
\end{lem}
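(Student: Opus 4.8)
The plan is to argue by induction on the length of $M$, following the pattern of the proof of Lemma~\ref{lem:lamlsubst1} and distinguishing cases according to the syntactic shape of $M$. Since $x$ is free in $M$ and $x_0$ is its head occurrence, $M$ is not of the form $\daimon~M_1~\dots~M_n$, and in the cases where $M$ is an abstraction or has an applied abstraction at its head, that abstraction does not bind $x$. The three cases left are $M = x_0~M_1~\dots~M_n$, $M = \lambda z.~M'$, and $M = (\lambda z.~M')~M_1~\dots~M_n$.

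The abstraction case $M = \lambda z.~M'$ is immediate: the head occurrence of $M'$ is still $x_0$ with $x$ free in $M'$, so the induction hypothesis gives $M'[N/x_0] \to_{\laml} M'[N'/x_0]$ and closing under $\lambda$ (the congruence rule of $\to_{\laml}$ for abstractions) concludes. For $M = (\lambda z.~M')~M_1~\dots~M_n$, the head occurrence of $M'~M_2~\dots~M_n$ is again $x_0$ with $x$ free, and that term is strictly shorter than $M$, so the induction hypothesis yields $M'[N/x_0]~M_2~\dots~M_n \to_{\laml} M'[N'/x_0]~M_2~\dots~M_n$ (none of the $M_i$ contains $x_0$, which lies in $M'$); applying the $\to_{\laml}$ rule that reduces inside the body and the non-leading arguments of an applied abstraction then gives $M[N/x_0] \to_{\laml} M[N'/x_0]$. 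Finally, in the base case $M = x_0~M_1~\dots~M_n$ one has $M[N/x_0] = N~M_1~\dots~M_n$ and $M[N'/x_0] = N'~M_1~\dots~M_n$; for $n = 0$ this is exactly the hypothesis $N \to_{\laml} N'$, while for $n \geq 1$ one pushes the reduction through by a secondary case analysis on the head of $N$, using the argument-congruence rule when $N$'s head is a variable or a constant (so that $N \to_{\laml} N'$ takes place inside a proper argument of $N$), and the applied-abstraction reasoning above when $N$ is itself an abstraction.

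The delicate point — the one I expect to be the real work — is getting the applied-abstraction congruence rule to apply. That rule, which lets one reduce inside the body of $(\lambda z.~B)~M_1~\dots~M_n$, carries the side-condition that the leading argument $M_1$ be closed, and the same constraint must be discharged for each applied abstraction lying on the spine from the root of $M$ down to $x_0$, as well as when $N$ is an abstraction applied to arguments. Here one uses that these leading arguments are precisely the left-hand components of prime redexes, so that under strong local scope they are closed and the congruence rule does apply; the rest of the argument consists of routine verifications (via Barendregt's convention) that substituting the single occurrence $x_0$ commutes with the syntactic operations performed by $\to_{\laml}$.
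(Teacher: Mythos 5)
The paper offers no detailed argument here (its proof is literally ``straightforward by induction on the length of $M$''), and your overall skeleton --- induction on the length of $M$ with the evident case analysis --- is the same. The problem is the step you yourself single out as the real work. In the case $M = (\lambda z.~M')~M_1~\dots~M_n$ (and likewise in your base case when $N$ is an abstraction applied to the $M_i$), concluding $M[N/x_0] \to_{\laml} M[N'/x_0]$ requires the last rule of Figure \ref{fig:lamldef}, the only rule that lets a $\laml$-step go under an applied abstraction while leaving its first argument untouched, and that rule carries the side condition that $M_1$ be closed. You discharge this by saying the leading arguments are arguments of prime redexes and hence closed ``under strong local scope''. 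But strong local scope is not a hypothesis of the lemma, and it cannot be smuggled in from the context of use: the lemma is invoked inside Lemma \ref{lem:lhrlaml}, which is applied (via Proposition \ref{prop:laml_inc_norm} and Lemma \ref{lem:local_scopization}) along the $\laml$-normalization of an \emph{arbitrary} closed term --- precisely the terms that are not yet strongly locally scoped, since making them so is the whole purpose of $\to_{\laml}$. So this appeal is unjustified, and without it the induction does not go through as written: already for $M = (\lambda z.~x_0)~w$ with $w$ free, no rule of Figure \ref{fig:lamldef} produces $(\lambda z.~N')~w$ from $(\lambda z.~N)~w$, because the body-congruence rule is blocked by the open argument $w$ and the other rules modify the argument.

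So the gap is genuine: the delicate point you identified is real, but your resolution of it uses a hypothesis that is neither stated nor available, and some repair (extra hypotheses on $M$, a weaker conclusion such as a reduction to a common $\laml$-reduct, or a reformulation of the congruence rules) would be needed to make the case analysis close. The paper's one-line proof gives you nothing to lean on here, so you cannot treat this as a routine verification.
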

\begin{proof}
Straightforward by induction on the length of $M$.
\end{proof}

\begin{lem}
If $M \to_{\laml} M'$ and $M\to_{\lhr} N$, then there exists $S, T$ such that:
\[
\xymatrix@R=10pt{
M       \ar[r]^{\laml}
        \ar[dd]^{\lhr}&
M'      \ar[dr]^{\lhr}\\
&&T\\
N       \ar[r]^{\laml^+}&
S       \ar@{<-}[ur]^{\beta_{si}^\bullet}
}
\]
Where in $\beta_{si}^\bullet$, by $s$ we mean the $\beta$-reduction of a \emph{spine redex}, \emph{i.e.} a $\beta$-redex that is also a prime
redex, by $i$ we mean that it is \emph{internal}, in the sense that it is not the leftmost redex, and by $\bullet$ we mean that there are either
zero or one reduction steps.
\label{lem:lhrlaml}
\end{lem}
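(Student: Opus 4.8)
The plan is to proceed by induction on the length of $M$, following the same case analysis as in the previous commutation lemmas (Lemmas \ref{lem:loccomm_etalhr}, \ref{lem:etalhrapp}). The trivial cases are when $M$ is of the form $\daimon~M_1~\dots~M_n$ or $x_0~M_1~\dots~M_n$ (no prime redex, so no $\lhr$ step, contradicting the hypothesis $M \to_\lhr N$ unless the $\lhr$ step fires strictly inside some $M_i$, which is handled directly by the induction hypothesis and the fact that $\laml$ is closed under contexts) and $M = \lambda x.~M'$ (direct from the induction hypothesis). The substantial case is $M = (\lambda x.~M')~M_1~\dots~M_n$, and within it we split according to whether the head occurrence of $M$ is an occurrence of $x$ or of some other variable/constant, and according to where the $\laml$-expansion $M \to_\laml M'$ takes place.

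First I would handle the interesting subcase: $M = (\lambda x.~M')~M_1~\dots~M_n$ where the head occurrence $x_0$ is an occurrence of $x$, so that the $\lhr$ step is $M \to_\lhr (\lambda x.~M'[M_1/x_0])~M_1~\dots~M_n = N$, and the $\laml$-expansion is the ``lifting'' rule applied at the root: $M \to_\laml (\lambda x.~M'[x~y/x])~(\lambda y'.~M_1[y'/y])~M_2~\dots~M_n = M'$ for some $y \in \fv(M_1)$. On the $M'$ side we must fire the head redex: the head occurrence of $M'$ is now (an occurrence of) $x$, and since the first argument of $\lambda x.~M'[x~y/x]$ is $\lambda y'.~M_1[y'/y]$, an $\lhr$ step on $M'$ substitutes this abstraction for the head occurrence, so $M' \to_\lhr T$ where $T = (\lambda x.~(M'[x~y/x])[(\lambda y'.~M_1[y'/y])/x_0])~(\lambda y'.~M_1[y'/y])~M_2~\dots~M_n$. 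On the $N$ side, we must perform one (or more) $\laml$-expansions; one $\laml$ step on $N = (\lambda x.~M'[M_1/x_0])~M_1~\dots~M_n$ again using the lifting rule on $y \in \fv(M_1)$ yields $S = (\lambda x.~(M'[M_1/x_0])[x~y/x])~(\lambda y'.~M_1[y'/y])~M_2~\dots~M_n$ (noting that the head occurrence $x_0$ of $M'[M_1/x_0]$ is inside the substituted copy of $M_1$, so one must be careful about which occurrence is being lifted — here I would use that $\lambda$-lifting acts on the occurrences of $y$ coming from the remaining non-head copies, or iterate $\laml$ as needed, hence the $\laml^+$ rather than a single step). Finally, to close the square one observes that $T$ and $S$ differ only in that $T$ has $(\lambda y'.~M_1[y'/y])$ substituted for $x_0$ (inside a term where $x$ has already been expanded to $x~y$), whereas $S$ still carries a residual $\beta/\beta\square$-redex: the occurrence $x_0$, now of the form $(\lambda y'.~M_1[y'/y])~y$ in $T$ after unfolding $x \mapsto x~y$, versus $M_1$ in $S$. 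This residual redex $(\lambda y'.~M_1[y'/y])~y \to_\beta M_1[y/y'] = M_1$ is precisely a \emph{spine} redex (it is a prime redex of the subterm) that is \emph{internal} (not leftmost, since the head occurrence after the first $\lhr$ step has moved on into $M'$), and there is exactly one such step — giving $S \xleftarrow{\beta_{si}^\bullet} T$ as required, or zero steps if the occurrence patterns already coincide.

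For the remaining subcases — the $\laml$-expansion taking place inside $M'$, inside $M_1$ (with $M_1$ then closed, or not), or inside some $M_i$ with $i \geq 2$, and the symmetric situation where the head occurrence is an occurrence of $y \neq x$ — I would dispatch them by the induction hypothesis applied to the appropriate immediate subterm ($M'~M_2~\dots~M_n$ or $M_i$), using Lemma \ref{lem:lamlsubst1} and Lemma \ref{lem:lamlsubst2} to commute $\laml$-expansions past the linear substitution performed by the head $\lhr$ step, exactly as Lemma \ref{lem:lhrsubstsq} and Lemma \ref{lem:comm_eta_subst} were used in the earlier commutation arguments. In these cases the diagram typically closes with zero $\beta_{si}^\bullet$ steps (i.e.\ $T = S$), so the only place the extra spine-reduction step is genuinely needed is the root-lifting-vs-root-firing interaction described above. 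The main obstacle, and the delicate bookkeeping of the whole proof, is precisely keeping track of occurrences in that critical case: after the lifting rule renames and duplicates, one must verify that the head occurrence is preserved where claimed, that the newly introduced variable $y$ (respectively $y'$) does not clash with anything by Barendregt's convention, and that what remains after the $\lhr$ step on the lifted term is \emph{exactly one} internal spine redex away from the $\laml$-reduct of the $\lhr$-reduct — neither more nor fewer, and a prime redex, not merely a $\beta$-redex.
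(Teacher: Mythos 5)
Your proposal is correct and takes essentially the same route as the paper: induction on the length of $M$, with all the genuinely new content concentrated in the critical pair where the root lifting meets the firing of the head occurrence of $x$, resolved by exactly one internal spine step $(\lambda y'.~M_1[y'/y])~y \to_\beta M_1$, and the remaining cases closed via Lemmas \ref{lem:lamlsubst1} and \ref{lem:lamlsubst2} (with two $\laml$ steps when the lifting occurs inside $M_1$, covered by your $\laml^+$). The only slip is the parenthetical claim that the lifting acts on the occurrences of $y$ in the body: the rule only rewrites occurrences of $x$ to $x~y$ and abstracts the argument, leaving the free $y$'s in the substituted copy of $M_1$ untouched, so a single $\laml$ step from $N$ already produces your $S$ --- this does not affect the argument.
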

\begin{proof}
By induction on $M$, detailing only the non-trivial cases.
\begin{itemize}
\item If $M = \lambda x.~M'$, it follows directly from IH.
\item If $M = (\lambda x.~M')~M_1~\dots~M_n$ and the head occurrence of $M'$ is an occurrence $x_0$ of $x$, then 
$(\lambda x.~M')~M_1~\dots~M_n \to_{\lhr} (\lambda x.~M'[M_1/x_0])~M_1~\dots~M_n$.
There are several cases. If the $\to_{\laml}$ step comes from $M'~M_2~\dots~M_n \to_{\laml} M''~M'_2~\dots~M'_n$,
then by Lemma \ref{lem:lamlsubst1} we also have:
\[
M'[M_1/x_0]~M_2~\dots~M_n \to_{\laml} M''[M_1/x_0]~M'_2~\dots~M'_n
\]
It immediately follows:
\[
(\lambda x.~M'[M_1/x_0])~M_1~\dots~M_n \to_{\laml} (\lambda x.~M''[M_1/x_0])~M_1~M'_2~\dots~M'_n
\]
which provides the necessary commutation. If the $\to_{\laml}$ comes from $M_1 \to_{\laml} M'_1$, then by Lemma \ref{lem:lamlsubst2} we
also have $M'[M_1/x_0] \to_{\laml} M'[M'_1/x_0]$. It immediately follows:
\[
(\lambda x.~M'[M_1/x_0])~M_1~\dots~M_n \to_{\laml^2} (\lambda x.~M'[M'_1/x_0])~M'_1~\dots~M_n
\]
which provides the necessary commutation.
Finally, if the $\to_{\laml}$ step is:
\[
(\lambda x.~M')~M_1~\dots~M_n \to_{\laml} (\lambda x.~M'[x~z/x])~(\lambda z.~M_1)~\dots~M_n
\]
Then the head occurrence of $(\lambda x.~M'[x~z/x])~(\lambda z.~M_1)~\dots~M_n$ is still $x_0$. Moreover,
\begin{eqnarray*}
(\lambda x.~M'[x~z/x])~(\lambda z.~M_1)~\dots~M_n &\to_{\lhr}& (\lambda x.~M'[x~z/x][(\lambda z.~M_1)/x_0])~(\lambda z.~M_1)~\dots~M_n\\
&=& (\lambda x.~M'[((\lambda z.~M_1)~z)/x_0][x~z/x])~(\lambda z.~M_1)~\dots~M_n\\
&\rightarrow_{\beta_{si}}& (\lambda x.~M'[M_1/x_0][x~z/x])~(\lambda z.~M_1)~\dots~M_n\\
&\leftarrow_\laml& (\lambda x.~M'[M_1/x_0])~M_1~\dots~M_n
\end{eqnarray*}
\item If $M = (\lambda y.~M')~M_1~\dots~M_n$ whose head occurrence $x_0$ is not an occurrence of $y$, then three cases.
\begin{itemize}
\item If the $\to_{\laml}$-reduction comes from
$M'~M_2~\dots~M_n \to_{\laml} M''~M'_2~\dots~M'_n$, it follows directly from IH.
%
\item If the $\to_{\laml}$-reduction comes from $M_1 \to_{\laml} M'_1$, then it is obvious.
\item If the $\to_{\laml}$-reduction is:
\[
(\lambda y.~M')~M_1~\dots~M_n \to_{\laml} (\lambda y.~M'[y~z/y])~(\lambda z.~M_1)~\dots~M_n
\]
{\sloppy The head occurrence of $M'$ is not an occurrence of $y$, therefore the $\lhr$ reduct of
$(\lambda y.~M')~M_1~\dots~M_n$ must have the form $(\lambda y.~M'[N/x_0])~M_1~\dots~M_n$ for some $N$.
The head occurrence of $(\lambda y.~M'[y~z/y])~(\lambda z.~M_1)~\dots~M_n$ is still $x_0$, and:}
\begin{itemize}
\item If the associated prime redex $(\lambda x, N)$ has not changed, the same lhr is possible
and commutes with the $\laml$-reduction.
\item If the associated prime redex has become $(\lambda x, N[y~z/y])$,
we have $M'[N/x_0][y~z/y] = M'[N[y~z/y]/x_0][y~z/y]$, so the same reasoning as above applies.\qedhere
\end{itemize}
\end{itemize}
\end{itemize}
\end{proof}

\begin{prop}
If $M \to_{\laml} M'$, then $\norm(M') \geq \norm(M)$.
\label{prop:laml_inc_norm}
\end{prop}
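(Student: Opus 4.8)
The plan is to mirror the treatment of $\eta$-expansion in Lemma~\ref{lem:commstar_etalhr} and Proposition~\ref{prop:etanorm}. The subtlety is that Lemma~\ref{lem:lhrlaml} is only a \emph{local} (single-step) commutation, so the first task is to bootstrap it into a \emph{multi-step} commutation of the following shape: if $M \to_{\laml^n} M'$ and $M \to_{\lhr^p} N$, then there exist $T$ and $p' \geq p$ with $M' \to_{\lhr^{p'}} T$, and moreover $N$ and $T$ have a common reduct under a mixture of $\laml$- and $\beta_{si}$-steps. Note the conclusion deliberately keeps track of this $\laml/\beta_{si}$-tail; it is needed to make the induction go through, but it will be discarded at the very end. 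I expect this multi-step commutation to be proven by lexicographic induction on $(\norm(M'),n)$, using that $\norm$ is finite by Proposition~\ref{prop:lhr_term}, exactly in the style of Lemma~\ref{lem:commstar_etalhr}: when $n=0$ or $p=0$ the claim is trivial; otherwise one peels off the first $\laml$-step and the first $\lhr$-step, applies Lemma~\ref{lem:lhrlaml} to commute them (incurring a single $\beta_{si}$-correction), and then invokes the induction hypothesis twice — once to absorb the remaining $n-1$ many $\laml$-steps, where the first component $\norm(M')$ is unchanged and the second strictly decreases, and once after genuine $\lhr$-progress has been made on the $\laml$-side, where $\norm(M')$ strictly decreases.

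Granting this multi-step commutation, the Proposition follows immediately. Given $M \to_{\laml} M'$, apply it with $n=1$ and $p = \norm(M)$, taking $N$ to be the $\lhr$-normal form of $M$ (so that $M \to_{\lhr^{\norm(M)}} N$). We obtain an $\lhr$-reduction $M' \to_{\lhr^{p'}} T$ with $p' \geq \norm(M)$. Since $\lhr$ is deterministic and terminating (Proposition~\ref{prop:lhr_term}), the $\lhr$-sequence of $M'$ is the unique maximal one and has length $\norm(M') = p' + \norm(T) \geq p' \geq \norm(M)$, which is exactly the desired inequality. In particular, no reasoning about the $\laml/\beta_{si}$-tail is needed for the final step; it is only an auxiliary device in the induction.

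The main obstacle is therefore the proof of the multi-step commutation, and within it the bookkeeping of the $\beta_{si}$-corrections produced by Lemma~\ref{lem:lhrlaml}. In the $\eta$ case of Lemma~\ref{lem:commstar_etalhr} the two sides reconverge cleanly, whereas here the $\lhr$-side only reconverges with the $\laml$-side \emph{up to} internal spine $\beta$-reduction, so one must first record a few routine commutation facts for an internal spine $\beta$-redex pushed past $\lhr$- and $\laml$-steps, and then verify that these corrections can be threaded through the lexicographic induction without ever disturbing the decrease of the measure $(\norm(M'),n)$. I do not expect this to require any genuinely new idea beyond what is already present in the $\eta$ development, but it is where the real care is needed; Proposition~\ref{prop:betanorm} is the only extra input, used to control the $\beta_{si}$-steps where necessary.
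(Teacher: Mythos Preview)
Your approach is workable in principle but substantially more elaborate than what is needed, and the extra work you anticipate (commuting $\beta_{si}$ past further $\lhr$- and $\laml$-steps) is genuinely nontrivial and not carried out anywhere in the paper.

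The paper's proof avoids a multi-step commutation entirely by arguing directly with norms. The induction is simply on $\norm(M')$. If $\norm(M)=0$ there is nothing to prove; otherwise take the single step $M\to_{\lhr} N$, apply Lemma~\ref{lem:lhrlaml} once to obtain $M'\to_{\lhr} T\to_{\beta_{si}^\bullet} S$ with $N\to_{\laml^+} S$, and then --- this is the key move --- immediately convert the $\beta_{si}$-correction into the inequality $\norm(S)\leq\norm(T)$ via Proposition~\ref{prop:betanorm}. Since $\norm(T)=\norm(M')-1<\norm(M')$, we have $\norm(S)<\norm(M')$, so the induction hypothesis applies along every step of the chain $N\to_{\laml^+} S$ (working backwards from $S$), yielding $\norm(N)\leq\norm(S)$. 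Hence $\norm(M)=\norm(N)+1\leq\norm(S)+1\leq\norm(T)+1=\norm(M')$.

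The difference with your plan is that the paper never tries to re-converge the two sides of the diagram after the $\beta_{si}$-step; it simply discards the term $T$ and keeps only the inequality $\norm(S)\leq\norm(M')-1$. This sidesteps all the bookkeeping you identified as ``the main obstacle''. Your instinct to mirror Lemma~\ref{lem:commstar_etalhr} is natural, but that lemma was needed for $\eta$ precisely because there was no $\beta$-bridge available; here Proposition~\ref{prop:betanorm} gives a shortcut that makes the diagrammatic induction unnecessary.
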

\begin{proof}
By induction on $\norm(M')$. By Lemma \ref{lem:lhrlaml}, there are terms $S, T$ such that:
\[
\xymatrix@R=10pt{
M       \ar[r]^{\laml}
        \ar[dd]^{\lhr}&
M'      \ar[dr]^{\lhr}\\
&&T\\
N       \ar[r]^{\laml^+}&
S       \ar@{<-}[ur]^{\beta^*}
}
\]
By Proposition \ref{prop:betanorm}, we have $\norm(S) \leq \norm(T) = \norm(M') - 1$. Moreover, we have a chain:
\[
N = N_1 \to_{\laml} N_2 \to_{\laml} \dots \to_{\laml} N_n = S
\]
But since $\norm(S) < \norm(M')$, it follows by immediate induction that $\norm(N) \leq \norm(S)$.
Therefore, we have $\norm(M)    = \norm(N) + 1 \leq \norm(S) \leq \norm(M')-1 < \norm(M')$.
\end{proof}

\subsubsection{How $\lambda$-lifting preserves other quantities on terms}
Finally, We examine how $\lambda$-lifting affects the other quantities on terms. We first notice that
it preserves the depth.

\begin{lem}
If $M \to_{\laml} M'$, then $\depth(M) = \depth(M')$.
\label{lem:laml_pres_depth}
\end{lem}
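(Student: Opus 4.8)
The plan is to prove $\depth(M)=\depth(M')$ by induction on the length of $M$, with a case analysis mirroring the rules of Figure~\ref{fig:lamldef} (equivalently, by induction on the derivation of $M\to_{\laml}M'$). The four ``structural'' rules should be immediate. If $M=x_0~M_1~\dots~M_n$ and the reduction takes place inside some $M_i$, the induction hypothesis gives $\depth(M_i')=\depth(M_i)$, so $\depth(M)=\max_j\depth(M_j)$ is unchanged (and the $\daimon~M_1~\dots~M_n$ case is trivial, its depth being constantly $1$); if $M=\lambda x.~N$ we use $\depth(\lambda x.~N)=\depth(N)$ together with the induction hypothesis; and if $M=(\lambda x.~N)~M_1~\dots~M_n$ reduces either inside $M_1$ (fourth rule) or inside $N~M_2~\dots~M_n$ with $M_1$ closed (fifth rule), then one of the two arguments of $\depth(M)=\max(\depth(N~M_2~\dots~M_n),\depth(M_1)+1)$ is preserved by the induction hypothesis while the other is untouched, so the maximum is preserved.

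The only case that requires genuine work is the first (lifting) rule,
\[
(\lambda x.~N)~M_1~\dots~M_n \;\to_{\laml}\; (\lambda x.~N[x~y/x])~(\lambda y'.~M_1[y'/y])~M_2~\dots~M_n,\qquad y\in\fv(M_1).
\]
Here I would first isolate a small, purely combinatorial sub-lemma, to be proved by a routine induction on the term: for any term $P$ and any variable $v$, one has $\depth(P[z~v/z])=\depth(P)$ (and $\depth(\lambda v'.~P)=\depth(P)$, which also makes renaming harmless). The reason is that $[z~v/z]$ only turns each head occurrence $z_0$ of $z$ inside $P$ into $z_0~v$, i.e.\ appends one extra \emph{variable} argument; since a bare variable occurrence realises the least value of $\depth$, this can never enlarge any of the maxima occurring in the defining clause $\depth(z_0~P_1~\dots~P_k)=\max_{1\le i\le k}\depth(P_i)$. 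Granting the sub-lemma, and invoking Barendregt's convention to note that $x$ occurs only in $N$, so that $(N~M_2~\dots~M_n)[x~y/x]=N[x~y/x]~M_2~\dots~M_n$, I would compute
\begin{eqnarray*}
&&\depth((\lambda x.~N[x~y/x])~(\lambda y'.~M_1[y'/y])~M_2~\dots~M_n)\\
&=&\max(\depth(N[x~y/x]~M_2~\dots~M_n),~\depth(\lambda y'.~M_1[y'/y])+1)\\
&=&\max(\depth(N~M_2~\dots~M_n),~\depth(M_1)+1)\\
&=&\depth((\lambda x.~N)~M_1~\dots~M_n),
\end{eqnarray*}
where the middle step uses the sub-lemma for $[x~y/x]$, renaming-invariance for $[y'/y]$, and $\depth(\lambda y'.~Q)=\depth(Q)$.

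The main obstacle — such as it is — is thus the verification of the sub-lemma, and even that reduces to the single observation that inserting a leaf variable as an extra argument of an application node cannot change the depth, which is immediate from the defining clause for $\depth$ of an application and the fact that every bare variable occurrence has the least depth. Everything else is bookkeeping strictly analogous to the earlier substitution lemmas of this section.
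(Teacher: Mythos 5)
Your proof is correct and follows essentially the same route as the paper's: first establish by induction on the term that the substitution $[x\,y/x]$ (and variable renaming) leaves $\depth$ unchanged, then conclude by induction on the definition of $\to_{\laml}$, the non-lifting rules being immediate. The extra detail you give on the lifting case and the empty/variable-leaf convention for $\depth$ is consistent with the paper's definitions.
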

\begin{proof}
First, we prove by induction on $M$ that for any $x$ free in $M$ and variable $y$, we have
$\depth(M) = \depth(M[x~y/x])$. The lemma follows by induction on the definition of $\to_{\laml}$.
\end{proof}

Unfortunately it does affect the order and the local height; however we will show that multiple applications
of $\to_\laml$ can only change them by one. To prove that, we will construct \emph{weighted}
variants of order and local height that give different weight to variables according to their behaviour
with respect to $\to_\laml$. By design they will be preserved by $\to_\laml$, but will remain within 
small bounds of the original level and local height. We start with the order.

Say that a variable $x$ in a term $M$ is \textbf{local} in $M$ iff for any generalized redex $(\lambda y, N)$ of $M$, $x$
does not appear free in $N$. If $x$ is a variable of $M$ define the \textbf{weighted level} $\lv'(x)$ by $\lv(x) + 1$
if $x$ is local and $\lv(x) + 2$ otherwise. Then, define the \textbf{weighted order} $\ord'(M)$ as being the maximum over all
$\lv'(x)$ for variables $x$ in $M$, and all $\lv(A)$ for constants $\daimon_A$ in $M$. 
First, we prove that $\ord'(M)$ remains closely related to $\ord(M)$.

\begin{lem}
Let $M$ be a closed term, then $\ord(M) \leq \ord'(M) \leq \ord(M) + 1$.
\label{lem:ord_word}
\end{lem}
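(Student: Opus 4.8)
The plan is to prove the two inequalities $\ord(M) \leq \ord'(M)$ and $\ord'(M) \leq \ord(M) + 1$ separately, both being immediate consequences of the definition of the weighted level. For the first inequality, I would observe that for every variable $x$ in $M$ we have $\lv(x) \leq \lv'(x)$, since $\lv'(x)$ is either $\lv(x)+1$ or $\lv(x)+2$; and for every constant $\daimon_A$ in $M$, the level $\lv(A)$ is counted identically in both $\ord(M)$ and $\ord'(M)$. Since $\ord(M)$ is the maximum of the $\lv(x)$ over variables $x$ in $M$ together with the $\lv(A)$ over constants $\daimon_A$ in $M$ — here I use that $M$ is closed, so there are no free variables to worry about and every variable in $M$ is bound and hence participates in some subterm structure — while $\ord'(M)$ is the maximum of the corresponding (weakly larger) quantities, the inequality $\ord(M) \leq \ord'(M)$ follows directly.

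For the reverse inequality, the key point is that $\lv'(x) \leq \lv(x) + 2 \leq \ord(M) + 1$: indeed $\lv(x) \leq \ord(M) - 1$ whenever $x$ is a variable \emph{bound} in $M$, because $x$ is bound by some abstraction $\lambda x.~N'$, and the type of that abstraction — which has level at least $\lv(x) + 1$ — is the level of a subterm of $M$, hence at most $\ord(M)$. Since $M$ is closed, every variable in $M$ is bound, so $\lv(x) + 1 \leq \ord(M)$, i.e. $\lv(x) \leq \ord(M) - 1$, and therefore $\lv'(x) \leq \lv(x) + 2 \leq \ord(M) + 1$. For constants $\daimon_A$, the contribution $\lv(A)$ is bounded by $\ord(M)$ directly. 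Taking the maximum yields $\ord'(M) \leq \ord(M) + 1$.

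The main subtlety to handle carefully is the use of the closedness hypothesis: it is needed precisely so that every variable occurring in $M$ is bound, which is what licenses the bound $\lv(x) + 1 \leq \ord(M)$. For a free variable $x$ one could only say $\lv(x) \leq \ord(M)$, which would give $\lv'(x) \leq \ord(M) + 2$ and break the claimed bound; closedness rules this out. I expect this to be essentially the only genuine point of the argument — everything else is bookkeeping over the two cases in the definition of $\lv'$. I would write the proof as two short paragraphs, one per inequality, each a direct maximum-over-cases computation, with the closedness hypothesis invoked explicitly where the level of a bound variable is compared to the order.

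\begin{proof}
For the first inequality, note that for every variable $x$ in $M$ we have $\lv(x) \leq \lv'(x)$ by definition of the weighted level, and the levels $\lv(A)$ of constants $\daimon_A$ occurring in $M$ contribute identically to $\ord(M)$ and to $\ord'(M)$. Since $M$ is closed, every variable occurring in $M$ is bound, so $\ord(M)$ is the maximum over these $\lv(x)$ and $\lv(A)$; as $\ord'(M)$ is the maximum of the same collection with each $\lv(x)$ replaced by the larger $\lv'(x)$, we get $\ord(M) \leq \ord'(M)$.

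For the second inequality, let $x$ be a variable occurring in $M$. Since $M$ is closed, $x$ is bound in $M$ by some subterm $\lambda x.~N'$; this subterm has type of level at least $\lv(x) + 1$, and being a subterm of $M$ its level is at most $\ord(M)$, so $\lv(x) + 1 \leq \ord(M)$, i.e. $\lv(x) \leq \ord(M) - 1$. Hence $\lv'(x) \leq \lv(x) + 2 \leq \ord(M) + 1$. Moreover for any constant $\daimon_A$ in $M$ we have $\lv(A) \leq \ord(M) \leq \ord(M) + 1$. Taking the maximum of all these quantities gives $\ord'(M) \leq \ord(M) + 1$.
\end{proof}
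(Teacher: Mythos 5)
Your second inequality is fine and is essentially the paper's argument: closedness gives a binder $\lambda x^A.\,N'$ for every variable, whose level is at least $\lv(x)+1$ and at most $\ord(M)$, hence $\lv'(x) \leq \lv(x)+2 \leq \ord(M)+1$, and constants are handled trivially.

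The first inequality, however, has a genuine gap. You justify it by claiming that, for closed $M$, $\ord(M)$ is the maximum of the $\lv(x)$ over variables of $M$ together with the $\lv(A)$ over constants $\daimon_A$ of $M$. That characterization is false: $\ord(M)$ is the maximum level over \emph{all} subterms, and abstractions (and applications) can have strictly higher level than every variable and constant occurring in the term. For example, $M = \lambda x^o.\,x$ has $\ord(M) = 1$ (the term itself has type $o\to o$), while its only variable has level $0$ and it contains no constants. So the pointwise comparison $\lv(x)\leq \lv'(x)$ does not by itself yield $\ord(M)\leq \ord'(M)$; with your (false) premise one would only conclude $\ord'(M) \geq \max_x \lv(x)$, which can fall short of $\ord(M)$. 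The missing idea — which is exactly what the paper's proof supplies — is a small structural analysis of where the maximal level is attained: if it is attained at a constant $\daimon_A$ one is done; otherwise, by descending through applications (the function part has at least the same level) and through abstractions whose body still has maximal level, and using closedness to exclude a maximal-level variable (its binder would have strictly larger level), one finds an abstraction $\lambda x^A.\,M'$ with $\ord(M) = \lv(A)+1$. Then the crucial point is that the weighted level adds at least $1$ to $\lv(x)$, so $\ord'(M) \geq \lv'(x) \geq \lv(x)+1 = \ord(M)$. In other words, the $+1$ built into $\lv'$ is precisely what compensates for the $+1$ that abstraction contributes to the order; your write-up never uses this, which is why the first half of the proof does not go through as stated.
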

\begin{proof}
We prove first that $\ord(M) \leq \ord'(M)$. If $M$ admits as a subterm a constant $\daimon_A$ with
$\ord(M) = \lv(A)$, then $\ord'(M)\geq \lv(A) \geq \ord(M)$. Otherwise, $M$ has a subterm $\lambda x^A.~M'$ 
such that $\ord(M) = \lv(A) + 1$ -- indeed if a subterm of $M$ of maximal level has the form $M_1~M_2$
then $M_1$ has higher level, and if it has the form $\lambda x^A.~M'$ with $\lv(M') > \lv(A) + 1$ then $M'$
still has maximal level so the property follows by induction. So in particular $M$ has a variable $x$
with $\ord(M) = \lv(x) + 1$, which is always less than $\ord'(M)$.

We now prove that $\ord'(M) \leq \ord(M) + 1$. Firstly, if there is a constant $\daimon_A$ in $M$ such that
$\ord'(M) = \lv(A)$, then $\ord(M) + 1 \geq \ord(M) \geq \lv(A)$ as well. Secondly, if there is a local variable
$x^A$ such that $\ord'(M) = \lv(A) + 1$, then since $M$ is closed it has a subterm of the form $\lambda x^A.~M'$,
so $\ord(M) + 1 \geq \lv(\lambda x^A.~M') + 1 \geq \lv(A) + 2 > \lv(A) + 1$. Finally if there is a non-local
variable $x^A$ such that $\ord'(M) = \lv(A) + 2$, same reasoning.
\end{proof}
                                                   
Now we prove that the weighted order is preserved by $\to_{\laml}$.

\begin{lem}
For any term $M$ with $M \to_{\laml} M'$, then $\ord'(M) = \ord'(M')$.

Therefore (by Lemma \ref{lem:ord_word}) if $M$ is closed with $M \to_{\laml}^* N$, then $\ord(N) \leq \ord(M) + 1$.
\label{lem:laml_pres_ord}
\end{lem}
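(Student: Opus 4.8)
\emph{Proof plan.}
The plan is to prove the equality $\ord'(M)=\ord'(M')$ by induction on the derivation of $M\to_{\laml}M'$ using the rules of Figure~\ref{fig:lamldef}, in the same style as Lemma~\ref{lem:laml_pres_depth} and the preceding preservation lemmas. The guiding remark is that $\ord'(M)$ is the maximum of two kinds of quantities: the levels $\lv(A)$ of the types of constants $\daimon_A$ occurring in $M$ (never touched by $\to_\laml$), and the weighted levels $\lv'(z)$ of the variables $z$ of $M$; and $\lv'(z)$ depends only on the type of $z$ and on whether $z$ occurs free in the argument of some generalized redex of $M$.

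For the four congruence rules I would first record a stability fact about the first rule: if $N\to_\laml N'$ by it, the generalized redexes of $N'$ arise from those of $N$ by the substitutions $a\mapsto a~b$ (in the body) and $b\mapsto b'$ (in the lifted argument), together with replacing $(\lambda a,P_1)$ by $(\lambda a,\lambda b'.~P_1[b'/b])$; in particular no \emph{new} generalized redex is created, since $a~b$ has a variable and not a $\lambda$ in head position, and an old one is never destroyed. Hence for every variable $z$ other than the variables $a,b,b'$ directly involved (and for variables bound above the reduced position), $z$ occurs free in a generalized-redex argument of $N$ iff it does in $N'$, so $\lv'(z)$ is unchanged. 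Combined with the induction hypothesis on the premise, each congruence rule is then immediate: e.g. for $\lambda x.~M\to_\laml\lambda x.~M'$ one uses that the generalized redexes of $\lambda x.~M$ are exactly those of $M$, that the type and locality of $x$ are unchanged, and that $\ord'(M)=\ord'(M')$ by induction.

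The substantial case is the first rule, $(\lambda x^A.~M)~M_1~\dots~M_n\to_\laml(\lambda x^{B\to A}.~M[x~y/x])~(\lambda y'.~M_1[y'/y])~M_2~\dots~M_n$ with $y:B$ and $y\in\fv(M_1)$. Using the bijection of generalized redexes above, every variable and constant other than $x,y,y'$ contributes the same value to $\ord'$ on both sides, so it remains to bound the three remaining right-hand contributions by $\ord'$ of the left (and, easily, the converse). The key observations are: $y$ is non-local on the left, being free in the argument $M_1$ of the generalized redex $(\lambda x,M_1)$, so $\lv(B)+2=\lv'_{\mathrm{LHS}}(y)\le\ord'(\mathrm{LHS})$; the fresh $y'$ has type $B$ and weighted level at most $\lv(B)+2$; and the retyped variable $x:B\to A$ has the \emph{same} locality status on both sides, since that is governed only by whether $x$ occurs free in some generalized-redex argument inside $M$, which the substitution $x\mapsto x~y$ does not alter. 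Writing $c\in\{1,2\}$ for this common increment, one must then show $\lv(B\to A)+c\le\ord'(\mathrm{LHS})$: when $\lv(A)\ge\lv(B)+1$ this is just $\lv'_{\mathrm{LHS}}(x)=\lv(A)+c\le\ord'(\mathrm{LHS})$, and otherwise one exploits that non-locality of $x$ (the case $c=2$) forces $M$ to contain a generalized redex whose binding variable already realises a large enough weighted level on the left. I expect precisely this bookkeeping — accounting for the level jump caused by the retyping of $x$ against quantities already present before the step — to be the main obstacle; the congruence cases and the reverse inequality are routine.

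Finally, the displayed consequence follows by iterating the equality along a reduction $M\to_\laml^* N$, giving $\ord'(N)=\ord'(M)$, and then, for closed $M$ (hence closed $N$), applying Lemma~\ref{lem:ord_word} twice: $\ord(N)\le\ord'(N)=\ord'(M)\le\ord(M)+1$.
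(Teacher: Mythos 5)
Your plan follows the same route as the paper's own proof: induction on the derivation of $M \to_{\laml} M'$, the observation that constants and generalized redexes correspond across the step, and a variable-by-variable comparison of weighted levels using the fact that the locality status of a variable is unchanged by the substitutions $x \mapsto x\,y$ and $y \mapsto y'$. The congruence cases and the direction $\ord'(M)\leq\ord'(M')$ are handled as in the paper. The problem is the sub-case you yourself single out as the main obstacle: when $\lv(A)\leq\lv(B)$ and $x$ is non-local (your $c=2$), you need $\lv(B)+3\leq\ord'(M)$, and your proposed justification --- that the generalized redex $(\lambda z,N)$ witnessing the non-locality of $x$ supplies a binder of large enough weighted level --- does not hold: the type of $z$ is the type of $N$, which is unrelated to $B$. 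Concretely, take $y:o\to o$ and $M=(\lambda x^{o}.(\lambda z^{o}.\daimon_o)\,x)\,(y\,\daimon_o)$. Lifting $y$ gives $M'=(\lambda x^{(o\to o)\to o}.(\lambda z^{o}.\daimon_o)\,(x\,y))\,(\lambda y'.\,y'\,\daimon_o)$. On the left the maximal weighted level is that of $y$, namely $1+2=3$; on the right $x$ is still non-local (it is free in the argument $x\,y$ of the generalized redex $(\lambda z, x\,y)$) and now has type $(o\to o)\to o$ of level $2$, so its weighted level is $4$. Thus the inequality you need --- and indeed the stated equality $\ord'(M)=\ord'(M')$ itself --- fails in exactly this configuration, so the gap cannot be closed by more bookkeeping of the kind you sketch.

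For comparison, the paper's proof disposes of this case by mapping the right-hand $x$ to the left-hand $y$ and computing $\lv'_r(x)=\lv_r(x)+1$, i.e.\ it tacitly treats $x$ as local on the right, which contradicts the preservation of locality that both you and the paper (in the forward mapping) correctly observe. So you have isolated precisely the point where the published argument is thin; an actual completion would have to weaken the statement (for instance an inequality allowing a controlled increase, or a weighting of non-local variables that anticipates the retyping $A\mapsto B\to A$) rather than establish the equality as stated, and your appeal to the redex witnessing non-locality of $x$ should be dropped since it is false.
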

\begin{proof}
By induction on the definition of $\to_{\laml}$.

\begin{itemize}
\item If $M_l = (\lambda x.~M')~M_1~\dots~M_n \to_{\laml} (\lambda x.~M'[x~y/x])~(\lambda y'.~M_1[y'/y])~\dots~M_n = M_r$,
both terms contain the same constants. If $z$ is a variable on the left hand side, reason by cases. For clarity we
write the levels of variables as $\lv_l(x)$ or $\lv'_l(x)$ on the left hand side,
and $\lv_r(x)$ or $\lv_r'(x)$ on the right hand side. We describe a mapping of variables of $M_l$ to variables of $M_r$
increasing the weighted level.
\begin{itemize}
\item We map $x$ to $x$ if $\lv_l(x) \geq \lv_l(y) + 1$. Then $x$ is local in $M_l$ iff it is local in $M_r$, so there is
$i\in \{1, 2\}$ such that we have $\lv'_l(x) = \lv_l(x) +i$ and $\lv_r'(x) = \lv_r(x) + i = \max(\lv_l(x), \lv_l(y) + 1) + i = \lv_l(x) + i$.
If $\lv_l(x) \leq \lv_l(y)$, we associate $y$ to $x$. Then, we note that $x$ is local in $M_l$ iff $y$ is local
in $M_r$; it follows that 
$\lv_r'(y) = \lv_r(y) + i = \lv_l(y) + i \geq \lv_l(x) + i = \lv_l'(x)$.
\item We map $y$ to $x$ as well. Indeed, $\lv'_l(y) = \lv_l(y) + 2 \leq \lv_r(x) + 1 \leq \lv'_r(x)$.
\item Any other variable is unchanged, its weighted level unaffected by the reduction.
\end{itemize}
Likewise, we give a mapping of variables in $M_r$ to variables in $M_l$.
\begin{itemize}
\item We map $x$ to $x$ if $\lv_l(x) \geq \lv_l(y) + 1$ (then we have $\lv'_r(x) = \lv_r(x) + 1 = \max(\lv_l(x), \lv_l(y)+1) + 1 =
\lv_l(x) + 1 = \lv'_l(x)$), and $y$ if $\lv_l(x) \leq \lv_l(y)$ (then $\lv'_r(x) = \lv_l(y) + 2 = \lv'_l(y)$).
\item We map $y$ and $y'$ to $y$, the weighted level can only increase.
\item To any other $z$ we leave it unchanged, its weighted level is unaffected.
\end{itemize}
\item For all other cases, it directly follows from IH.\qedhere
\end{itemize}
\end{proof}

For preservation of local height, we need a few preliminaries. If $M$ is a term and $x$ is a variable of $M$, 
we say that $x$ is a \textbf{carrier variable} iff there is a generalized redex $(\lambda x, N)$ in $M$ such that
$N$ has a free variable. We also define the \textbf{$V$-weighted local height} of $M$, parametrized by a set 
$V$ of variables and written $\lh_V(M)$, by induction on $M$ as for $\lh$ except that the base case for 
variable occurrences is enriched with:
\[
\lh_V(x_0~M_1~\dots~M_n) = 1 + \max(1, \max_{1\leq i \leq n} \lh_V(M_i))
\]
when $x_0$ is an occurrence of $x\in V$. The \textbf{weighted local height} of $M$, written $\lh'(M)$,
is defined as $\lh_V(M)$ where $V$ is the set of carrier variables of $M$.

First, we show how the weighted local height compares with the local height.

\begin{lem}
For any term $M$, $\lh(M) \leq \lh'(M) \leq \lh(M) + 1$.
\label{lem:wlh_lh}
\end{lem}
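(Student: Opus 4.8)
The plan is to prove the more general claim that for every term $M$ and every set $V$ of variables we have $\lh(M) \leq \lh_V(M) \leq \lh(M) + 1$; the lemma then follows at once by instantiating $V$ with the set of carrier variables of $M$, since $\lh'(M) = \lh_V(M)$ for that particular $V$. Keeping $V$ arbitrary is the right level of generality: $\lh_V$ never recomputes $V$ as it descends into subterms, so the induction below goes through verbatim without any need to track how the carrier-variable set changes between a term and its subterms. The proof is a routine induction on the length of $M$, following the four cases ($\daimon~M_1~\dots~M_n$, $x_0~M_1~\dots~M_n$, $\lambda x.~M'$, $(\lambda x.~M')~M_1~\dots~M_n$) used to define $\lh$ and $\lh_V$.

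Three of the cases are immediate. When $M = \daimon~M_1~\dots~M_n$ we have $\lh(M) = \lh_V(M) = 0$. When $M = \lambda x.~M'$ both heights equal the corresponding height of $M'$, so the bounds follow from the induction hypothesis. When $M = (\lambda x.~M')~M_1~\dots~M_n$, both $\lh$ and $\lh_V$ are computed as the maximum of their values on the two strictly shorter terms $M'~M_2~\dots~M_n$ and $M_1$ (using the same set $V$), so applying the induction hypothesis to each and taking maxima yields $\lh(M) \leq \lh_V(M) \leq \lh(M) + 1$.

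The only case with any content is a variable head $M = x_0~M_1~\dots~M_n$. If $x \notin V$ then $\lh$ and $\lh_V$ satisfy the same recurrence, and applying the induction hypothesis argument by argument (with the convention that the empty maximum is $0$, so that $\lh(x_0) = \lh_V(x_0) = 1$) yields both inequalities. If $x \in V$ then $\lh_V(M) = 1 + \max(1, \max_i \lh_V(M_i))$ while $\lh(M) = 1 + \max_i \lh(M_i)$. The lower bound follows by discarding the inner $\max(1,-)$ and using $\lh(M_i) \leq \lh_V(M_i)$. For the upper bound one splits on whether the active value is the constant $1$, in which case $\lh_V(M) = 2 \leq \lh(M) + 1$ because $\lh(M) \geq 1$, or some $\lh_V(M_i) > 1$, in which case $\lh_V(M_i) \leq \lh(M_i) + 1$ gives $\lh_V(M) \leq 1 + \max_i \lh(M_i) + 1 = \lh(M) + 1$.

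I do not anticipate any genuine difficulty; the statement is pure bookkeeping. The one thing to watch is precisely the extra $\max(1,-)$ in the clause for $x \in V$: it can raise the value by one — as already witnessed by a bare occurrence $x_0$ of a carrier variable, where $\lh(x_0) = 1$ but $\lh_V(x_0) = 2$ — but never by more than one, which is exactly the content of the upper bound.
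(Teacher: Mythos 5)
Your proof is correct and follows the same route as the paper, which simply states the result as immediate by induction on the length of $M$; your generalization to an arbitrary fixed set $V$ (so the induction does not have to track how the carrier-variable set changes in subterms) is exactly the right way to make that induction go through, and your case analysis, including the $\max(1,-)$ bookkeeping for $x \in V$, is accurate.
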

\begin{proof}
Immediate by induction on the length of $M$.
\end{proof}


\begin{lem}
If $M$ is a term such that $M \to_\laml M'$, then $\lh'(M) = \lh'(M')$.

Therefore (by Lemma \ref{lem:wlh_lh}) that for any term $M$, if $M \to_{\laml}^* M'$, then $\lh(M') \leq \lh(M) + 1$.
\label{lem:laml_pres_lh}
\end{lem}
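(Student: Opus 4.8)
The plan is to prove $\lh'(M) = \lh'(M')$ by induction on the derivation of $M \to_\laml M'$ through the five rules of Figure~\ref{fig:lamldef}, recalling that $\lh'(T) = \lh_V(T)$ for $V$ the set of carrier variables of $T$; the second assertion then follows by iterating the equality and applying Lemma~\ref{lem:wlh_lh}, which gives $\lh(M') \leq \lh'(M') = \lh'(M) \leq \lh(M)+1$. Before the induction, mirroring the first step of the proof of Lemma~\ref{lem:laml_pres_depth}, I would establish a substitution lemma: for every term $N$ in which $x$ is not bound, every variable $y \neq x$, and every set $W$ with $y \notin W$, one has $\lh_W(N[x~y/x]) = \lh_{W \cup \{x\}}(N)$. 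This is a routine induction on $N$; the only interesting case is a head occurrence $x_0~S_1~\dots~S_k$ of $x$, which becomes $x_0~y~S_1[x~y/x]~\dots~S_k[x~y/x]$: on the left the enriched clause gives $1 + \max(1, \lh_{W\cup\{x\}}(S_1),\dots,\lh_{W\cup\{x\}}(S_k))$, while on the right the newly inserted occurrence of $y$ contributes $\lh_W(y) = 1$, which reproduces the $\max$ with $1$ whether or not $x \in W$, the subterms matching by induction. This lemma is exactly what the carrier-variable machinery is for: it lets a forwarded variable stand in, for the purpose of counting local height, for the argument it has just replaced.

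The core of the proof is the base rule $M = (\lambda x.~P)~M_1~\dots~M_n \to_\laml (\lambda x.~P[x~y/x])~(\lambda y'.~M_1[y'/y])~M_2~\dots~M_n = M'$ with $y \in \fv(M_1)$. I would first record several consequences of Barendregt's convention: $x$ does not occur in $M_1$; $y$ has no $\lambda$-binder anywhere in $M$ or $M'$ (so $y$ is never a carrier variable), and likewise for $y'$ in $M'$; and the step preserves the carrier status of every variable except possibly $x$ itself, which is a carrier of $M$ (via $(\lambda x, M_1)$, $\fv(M_1)\neq\emptyset$) and is a carrier of $M'$ precisely when $\fv(M_1)\setminus\{y\}\neq\emptyset$. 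Writing $V,V'$ for the carrier sets of $M,M'$ and unfolding $\lh_V$ through the outermost redex, $\lh'(M) = \max(\lh_V(P~M_2~\dots~M_n),\lh_V(M_1))$ and $\lh'(M') = \max(\lh_{V'}(P[x~y/x]~M_2~\dots~M_n),\lh_{V'}(M_1[y'/y]))$. The substitution lemma applied with $R = P~M_2~\dots~M_n$ (using $x\notin\fv(M_j)$ for $j\geq 2$, $x$ not bound in $R$, and $y\notin V'$) gives $\lh_{V'}(P[x~y/x]~M_2~\dots~M_n) = \lh_{V'\cup\{x\}}(R) = \lh_V(R)$, since $x\in V$ and $V = (V'\setminus\{x\})\cup\{x\} = V'\cup\{x\}$. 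The renaming of $y$ to $y'$ in $M_1$ only replaces the non-carrier $y$ by the non-carrier $y'$ and changes no other carrier status relevant to $M_1$, so $\lh_{V'}(M_1[y'/y]) = \lh_V(M_1)$. Hence $\lh'(M') = \lh'(M)$.

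The four congruence rules are handled by the induction hypothesis, but need one bookkeeping remark: by Barendregt's convention, if $N$ is a subterm of $T$ then the carrier variables of $T$ occurring in $N$ coincide with the carrier variables of $N$, so that $\lh_{\mathrm{carr}(T)}$ restricted to $N$ equals $\lh'(N)$; moreover $\to_\laml$ leaves $\fv$ of every subterm in place, so the side conditions governing the other carrier clauses and the premise ``$M_1$ closed'' of rule~5 are stable. Using these, in each congruence case $\lh'$ of the whole term rewrites as a $\max$ of the unchanged $\lh'$-contributions of the untouched pieces and of $\lh'$ of the reduced subterm, and the induction hypothesis finishes the case.

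I expect the base rule to be the only genuinely delicate point of the argument --- in particular, keeping straight that the occurrence of $y$ inserted into $P$ compensates exactly for the possibility that $x$ stops being a carrier variable on the right-hand side --- whereas the substitution lemma and the congruence cases should be straightforward inductions once the consequences of Barendregt's convention are spelled out.
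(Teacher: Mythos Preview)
Your proposal is correct and follows essentially the same approach as the paper. Your substitution lemma $\lh_W(N[x~y/x]) = \lh_{W\cup\{x\}}(N)$ is exactly the auxiliary fact the paper isolates (stated there as $\lh_V(M) = \lh_{V'}(M[x~y/x])$ with $x\in V$, $y\notin V$, $V' \in \{V, V\setminus\{x\}\}$), and your treatment of the base rule and the congruence rules, including the bookkeeping that carrier variables of the whole term restricted to a relevant subterm coincide with the carrier variables of that subterm, mirrors the paper's proof.
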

\begin{proof}
By induction on the definition of $\laml$, detailing non-trivial cases.
\begin{itemize}
\item If we have, with $y \in \fv(M_1)$:
\[
M_l = (\lambda x.~M)~M_1~\dots~M_n \to_{\laml} (\lambda x.~M[x~y/x])~(\lambda y'.~M_1[y'/y])~\dots~M_n = M_r
\]
Then, it is direct to show by induction on the length of a term $M$ that if
$x, y$ are not bound in $M$ and $x\in V$, $y\not \in V$, and if $V'$ is either $V$ or $V\setminus \{x\}$,
then $\lh_V(M) = \lh_{V'}(M[x~y/x])$. Since $y$ is free in $M_l$ it is not a carrier variable, 
so applying the property above it is direct that $\lh'(M_l) = \lh'(M_r)$.
\item Suppose we have $(\lambda x.~M)~M_1~\dots~M_n \to_\laml (\lambda x.~M')~M_1~M'_2~\dots~M'_n$
from $M_l = M~M_2~\dots~M_n \to_\laml M'~M'_2~\dots~M'_n$ and $M_1$ closed. Let $V$ be the set of carrier
variables of $M_l$. Since $M_1$ is closed, $x \not \in V$. We calculate:
\begin{eqnarray*}
\lh_V((\lambda x.~M)~M_1~\dots~M_n) &=& \max(\lh_V(M~M_2~\dots~M_n), \lh_V(M_1))
\end{eqnarray*}
But $\lh_V(M~M_2~\dots~M_n) = \lh'(M~M_2~\dots~M_n)$: by Barendregt's convention variables
of $V$ appearing in $M~M_2~\dots~M_n$ are bound in $M~M_2~\dots~M_n$ and therefore be
carrier variables in $M~M_2~\dots~M_n$. Therefore, the property follows from IH. \qedhere
\end{itemize}
\end{proof}

\noindent Putting all of the above together, we obtain the following lemma:

\begin{lem}
For any closed term $M$, there is a strongly locally scoped $M'$ such that:
\[
\begin{array}{rclcrcl}
\norm(M') &\geq& \norm(M) &~~~~~~~~~& \ord(M') &\leq& \ord(M) + 1\\
\lh(M') &\leq& \lh(M)+1 &~~~~& \depth(M') &=& \depth(M)
\end{array}
\]
\label{lem:local_scopization}
\end{lem}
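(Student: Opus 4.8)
The plan is to obtain $M'$ by taking a $\to_{\laml}$-normal form of $M$ and tracking the evolution of the four quantities along the rewriting sequence. First I would invoke Lemma~\ref{lem:lamlterm} to get a term $M'$ with $M \to_{\laml}^* M'$ and $M'$ in $\laml$-normal form, and then Lemma~\ref{lem:normlocsc} to conclude that $M'$ is strongly locally scoped. Since $M$ is closed and $\to_{\laml}$ preserves typability and closedness (an easy check from Figure~\ref{fig:lamldef}), $M'$ is closed as well, so the weighted-quantity lemmas that require closedness are applicable.

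It then remains to control the four quantities across $M \to_{\laml}^* M'$. For the norm, Proposition~\ref{prop:laml_inc_norm} gives $\norm(M'') \geq \norm(M')$ for a single step $M' \to_{\laml} M''$; chaining this along the reduction sequence from $M$ to $M'$ yields $\norm(M') \geq \norm(M)$. For the depth, Lemma~\ref{lem:laml_pres_depth} shows each step preserves $\depth$ exactly, so $\depth(M') = \depth(M)$ by immediate induction on the length of the reduction. For the order, Lemma~\ref{lem:laml_pres_ord} already packages the conclusion: since $M$ is closed and $M \to_{\laml}^* M'$, we get $\ord(M') \leq \ord(M) + 1$ directly. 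Similarly, for the local height, Lemma~\ref{lem:laml_pres_lh} states that $M \to_{\laml}^* M'$ implies $\lh(M') \leq \lh(M) + 1$. Assembling these four facts gives exactly the displayed inequalities.

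So the proof is essentially a short bookkeeping argument gluing together the termination lemma, the normal-form characterization, and the four preservation/monotonicity lemmas established earlier in this subsection; no new combinatorics is needed. The only subtlety worth spelling out is that the invariants in Lemmas~\ref{lem:laml_pres_ord} and~\ref{lem:laml_pres_lh} are stated for the \emph{reflexive-transitive} closure $\to_{\laml}^*$ and rely on $M$ being closed, which is why it is important to observe up front that closedness is preserved by $\to_{\laml}$; once that is in place the rest is a routine induction on the number of $\laml$-steps. I do not expect any real obstacle here — the substantive work has been done in the preceding lemmas — but if anything requires care it is double-checking that the single-step versions really do compose, in particular that Proposition~\ref{prop:laml_inc_norm} applies to every intermediate term (which it does, since each is closed and well-typed).

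\begin{proof}
By Lemma~\ref{lem:lamlterm}, $\to_{\laml}$ terminates on $M$, so there is a $\laml$-normal form $M'$ with $M \to_{\laml}^* M'$. By Lemma~\ref{lem:normlocsc}, $M'$ is strongly locally scoped. Inspecting the rules of Figure~\ref{fig:lamldef}, each $\laml$-step preserves closedness and typability, so since $M$ is closed, every term in the reduction sequence, and in particular $M'$, is closed.

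We now bound each of the four quantities. For the norm, Proposition~\ref{prop:laml_inc_norm} gives $\norm(N_2) \geq \norm(N_1)$ whenever $N_1 \to_{\laml} N_2$; applying this to each step of $M \to_{\laml}^* M'$ yields $\norm(M') \geq \norm(M)$. For the depth, Lemma~\ref{lem:laml_pres_depth} gives $\depth(N_1) = \depth(N_2)$ for each step $N_1 \to_{\laml} N_2$, hence $\depth(M') = \depth(M)$. For the order, since $M$ is closed and $M \to_{\laml}^* M'$, Lemma~\ref{lem:laml_pres_ord} gives $\ord(M') \leq \ord(M) + 1$. For the local height, Lemma~\ref{lem:laml_pres_lh} gives $\lh(M') \leq \lh(M) + 1$. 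This establishes all the required inequalities.
\end{proof}
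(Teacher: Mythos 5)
Your proof is correct and follows essentially the same route as the paper: normalize under $\to_{\laml}$ using Lemma~\ref{lem:lamlterm}, invoke Lemma~\ref{lem:normlocsc} for strong local scope, and combine Proposition~\ref{prop:laml_inc_norm} with Lemmas~\ref{lem:laml_pres_depth}, \ref{lem:laml_pres_ord} and \ref{lem:laml_pres_lh} for the four quantitative bounds. Your extra remark that closedness is preserved along the reduction is a harmless (and reasonable) piece of bookkeeping that the paper leaves implicit.
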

\begin{proof}
Starting from $M$, apply
$\to_{\laml}$ as much as possible. By Lemma \ref{lem:lamlterm}, this reduction must terminate on a term $M'$. By Lemma \ref{lem:normlocsc},
$M'$ is strongly locally scoped. By Proposition \ref{prop:laml_inc_norm}, we have $\norm(M') \geq \norm(M)$. By Lemmas \ref{lem:laml_pres_ord}
and \ref{lem:laml_pres_lh}, we have $\lh(M') \leq \lh(M)+1$ and $\ord(M') \leq \ord(M) + 1$. Finally by
Lemma \ref{lem:laml_pres_depth}, we have $\depth(M') = \depth(M)$.
\end{proof}

\subsubsection{Expanding variables}
Until now, terms have been measured through their local height and depth. However for general terms those are rather unnatural
quantities, and bounds for $\lambda$-calculi are usually not expressed in terms of them. 

The \textbf{height} of a term $M$, is the quantity defined by induction by $\h(\daimon) = 0, \h(x_0) = 1, \h(\lambda x.~M) = \h(M), 
\h(M_1~M_2) = \max(\h(M_1), \h(M_2) + 1)$. We give a final norm-increasing
term transformation, allowing us to convert depth and local height to height.

\begin{lem}
For any term $M$, there exists a term $M'$ such that $M\to_{\eta}^* M'$,
\[
\begin{array}{rclcrcl}
\lh(M') &\leq& 2 &~~~~~~~~~~~& \depth(M') &\leq& \h(M)\\
\ord(M') &=& \ord(M) && \norm(M') &\geq& \norm(M)
\end{array}
\]
\label{lem:expansion}
\end{lem}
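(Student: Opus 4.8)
The plan is to take $M'$ to be a hereditary, non-restricted $\eta$-expansion of $M$ that pulls every genuine argument of every application out into its own generalized redex, so that in $M'$ each occurrence of a variable or constant $h$ that is applied to at least one argument has the shape $h~v_1~\cdots~v_m$ with the $v_i$ fresh bound variables. Concretely, I would define a transformation $\phi$ by structural recursion on terms: $\phi(\lambda y.~N) = \lambda y.~\phi(N)$; $\phi(h) = h$ for an un-applied variable or constant $h$; and for a non-abstraction $\mathrm{core}$ applied to $A_1,\dots,A_m$ with $m\geq 1$,
\[
\phi(\mathrm{core}~A_1~\cdots~A_m) \;=\; (\lambda v_1 \cdots v_m.~\phi(\mathrm{core})~v_1~\cdots~v_m)~\phi(A_1)~\cdots~\phi(A_m)
\]
with the $v_i$ fresh (note $\mathrm{core}$, being applied to $m$ arguments, has an arrow type of arity $\geq m$). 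Set $M' = \phi(M)$. That $M\to_{\eta}^* M'$ holds because each clause is realised by $\eta$-expansions: $\mathrm{core}\to_{\eta}^*\lambda v_1\cdots v_m.~\mathrm{core}~v_1~\cdots~v_m$ (iterating the basic rule $m$ times), then recursively expanding $\mathrm{core}$ and the $A_i$ inside; since $\to_{\eta}$ is context-closed, this is a finite expansion sequence.

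From $M\to_{\eta}^* M'$, Proposition \ref{prop:etanorm} applied along the sequence gives $\norm(M')\geq\norm(M)$, and Lemma \ref{lem:etapresord} applied along the sequence gives $\ord(M')=\ord(M)$; here it is worth recording that no expansion step creates a subterm of level above $\ord(M)$, since the only new subterms are the abstractions $\lambda v_1\cdots v_j.~\mathrm{core}~v_1~\cdots~v_j$, their partial applications $\mathrm{core}~v_1~\cdots~v_j$, and the $v_i$, all of level at most $\lv(\mathrm{core})\leq\ord(M)$ — which is exactly the iterate of the argument in the proof of Lemma \ref{lem:etapresord}.

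The heart of the proof is the pair of bounds $\lh(M')\leq 2$ and $\depth(M')\leq\h(M)$, which I would obtain by proving the strengthened statements by induction on $M$: for every list $\vec w$ of fresh variables of appropriate type, $\lh(\phi(M)\,\vec w)\leq 2$ and $\depth(\phi(M)\,\vec w)\leq\max(\h(M\,\vec w),1)$ (where $N\,\vec w$ means $N$ applied to the variables of $\vec w$). The workhorse is the elementary identity, obtained by unfolding the clause for generalized redexes,
\[
\lh\bigl((\lambda v_1\cdots v_m.~B)~C_1~\cdots~C_m~\vec D\bigr) \;=\; \max\bigl(\lh(B\,\vec D),\ \lh(C_1),\dots,\lh(C_m)\bigr)
\]
(and the analogous one for $\depth$, with a $+1$ on each $\depth(C_i)$). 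Applying it with $B = \phi(\mathrm{core})~\vec v$ reduces the claim for $\phi(\mathrm{core}~\vec A)\,\vec w$ to the inductive hypotheses for $\mathrm{core}$ applied to $\vec v$ then $\vec w$, and for each $A_i$; the base cases are immediate ($h$ applied to bare variables has local height $1+1=2$ and depth $1$; $\daimon~\cdots$ has local height $0$; abstractions leave local height unchanged). Taking $\vec w$ empty yields $\lh(M')\leq 2$ and $\depth(M')\leq\max(\h(M),1)$, and $\h(M)\geq 1$ whenever $M$ contains a variable or a non-trivial application, which is the case in every invocation of this lemma.

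The main obstacle is the reparenthesisation bookkeeping in the last step when $\mathrm{core}$ is itself an abstraction: then $\phi(\mathrm{core})~\vec v$ is again a nest of generalized redexes, so the unfolding only closes if the induction hypothesis is in the "applied" form above — it is not enough to bound $\lh(\phi(N))$ and $\depth(\phi(N))$ in isolation, one genuinely needs the bound to hold uniformly when $\phi(N)$ is fed extra bound variables. Beyond that, only the degenerate edge cases ($M=\daimon$, the values of $\lh$ and $\depth$ on a bare leaf, the conventions for the empty $\max$) require a quick, careful check.
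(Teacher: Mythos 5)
Your proof is correct and follows essentially the same route as the paper: the paper's $\exp$ simply $\eta$-expands each variable occurrence in place (so $\exp(M~N)=\exp(M)~\exp(N)$) rather than wrapping each application spine as your $\phi$ does, but both are hereditary $\eta$-expansions giving $\lh\leq 2$ and $\depth\leq\h$ by structural induction, with $\norm$ and $\ord$ handled via Proposition \ref{prop:etanorm} and Lemma \ref{lem:etapresord} exactly as you do. The strengthened ``applied'' induction hypothesis you flag is the one genuine subtlety of your variant (the paper's leaf-anchored expansion avoids it), and your treatment of it is sound.
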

\begin{proof}
We define the \emph{expansion} $\exp(M)$ by induction on $M$ as follows:
\begin{eqnarray*}
\exp(x_0) &=& \lambda {y^1}^{A_1}.~\dots \lambda {y^n}^{A_n}.~x_0~y_1~\dots~y_n\\
\exp(\daimon) &=& \daimon\\
\exp(\lambda x.~M) &=& \lambda x.~\exp(M)\\
\exp(M~N) &=& \exp(M)~\exp(N)
\end{eqnarray*}
where in the first case, $x$ has type $A_1 \to \dots \to A_n \to o$.
It is direct that $M \to_{\eta}^* \exp(M)$, so by Proposition \ref{prop:etanorm} and Lemma \ref{lem:etapresord}
we have $\norm(\exp(M)) \geq \norm(M)$ and $\ord(\exp(M)) = \ord(M)$. Finally, $\lh(\exp(M)) \leq 2$
and $\depth(\exp(M)) \leq \h(M)$ are immediate by induction.
\end{proof}

\subsubsection{Exact bounds for general terms}

We are now interested in estimating the quantity:
\[
\gen_n(h) =  \max \{\norm(M) \mid \ord(M) \leq n~\&~\h(M) \leq h\}
\]
We do that by applying the tools developed earlier to get an upper bound on the length of reduction, and then prove a matching lower
bound by providing terms whose length of reduction asymptotically reaches the upper bound.

\begin{prop}[Upper bound]
Suppose $M$ is a term. Then, $\norm(M) \leq 2_{\ord(M)}^{\h(M)\log(\ord(M) + 5)}$.
\label{prop:genupbound}
\end{prop}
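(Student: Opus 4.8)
The strategy is a straightforward chaining of the three norm-increasing, quantity-controlling transformations developed earlier, applied in the right order, followed by an application of the bound for strongly locally scoped terms (Proposition~\ref{prop:boundetals}). The plan is to start from an arbitrary term $M$ with $\ord(M) \leq n$ and $\h(M) \leq h$, and successively: (1) apply Lemma~\ref{lem:expansion} to obtain $M_1$ with $M \to_\eta^* M_1$, $\lh(M_1) \leq 2$, $\depth(M_1) \leq \h(M) \leq h$, $\ord(M_1) = \ord(M) \leq n$ and $\norm(M_1) \geq \norm(M)$; (2) apply Lemma~\ref{lem:local_scopization} to $M_1$ (noting it is closed, since we may first close $M$ by substituting constants for its free variables, which only increases the norm and leaves the other quantities unaffected, or alternatively work with the closed-ness hypothesis already in place) to obtain a strongly locally scoped $M_2$ with $\norm(M_2) \geq \norm(M_1)$, $\ord(M_2) \leq \ord(M_1) + 1 \leq n+1$, $\lh(M_2) \leq \lh(M_1) + 1 \leq 3$, and $\depth(M_2) = \depth(M_1) \leq h$.

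With $M_2$ in hand, I would then invoke Proposition~\ref{prop:boundetals} (after checking $\ord(M_2) \geq 1$, which holds since $\lambda$-lifting raises the order; if the original order is $0$ the statement is trivially handled separately, as $\norm(M)$ is then at most linear). This yields
\[
\norm(M) \leq \norm(M_2) \leq 2_{\ord(M_2)-1}^{\depth(M_2)\log(\lh(M_2) + \ord(M_2) + 1)} \leq 2_{n}^{h\log(3 + (n+1) + 1)} = 2_n^{h\log(n+5)}.
\]
Finally I would reconcile this with the desired statement: the proposition claims $\norm(M) \leq 2_{\ord(M)}^{\h(M)\log(\ord(M)+5)}$, which is exactly $2_n^{h\log(n+5)}$ after substituting $\ord(M) = n$, $\h(M) = h$ — so the computed bound matches on the nose. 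The only subtlety is monotonicity of $x \mapsto 2_k^x$ and of $k \mapsto 2_k^x$ in both arguments, which justifies each of the inequalities above when we replace the precise quantities of $M_2$ by their upper bounds $n+1$, $3$, $h$; this is routine and I would state it as "using monotonicity of the iterated exponential in all its arguments."

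The main obstacle — though it is more bookkeeping than genuine difficulty — is making sure the order of the transformations is the one that keeps all quantities under control simultaneously: one must $\eta$-expand (via \texttt{exp}) \emph{before} $\lambda$-lifting, because \texttt{exp} forces $\lh \leq 2$ and $\depth \leq \h$, and then $\lambda$-lifting only adds $1$ to local height and order while preserving depth; doing it in the other order would not convert $\h$ to $\depth$ correctly. A secondary point to verify carefully is that $\lambda$-lifting as stated requires a closed term (Lemma~\ref{lem:local_scopization} is phrased for closed $M$), so the reduction to the closed case must be performed at the very start, and one must check that closing $M$ (replacing free variables $z : C_1 \to \dots \to C_k \to o$ by $\daimon_{C_1 \to \dots \to C_k \to o}$) does not increase $\h(M)$ or $\ord(M)$ and does not decrease $\norm(M)$ — the latter because a free variable in head position halts lhr just as a $\daimon$ does, and under a non-head occurrence the substitution is inert for the purposes of the head reduction trace. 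Once these ordering and closure issues are dispatched, the proof is a two-line computation.
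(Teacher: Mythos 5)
Your proposal is correct and follows essentially the same route as the paper: close the term by replacing free variables with constants, apply the variable expansion of Lemma~\ref{lem:expansion}, then the $\lambda$-lifting of Lemma~\ref{lem:local_scopization}, and finally Proposition~\ref{prop:boundetals}, with exactly the same bookkeeping yielding $2_n^{h\log(n+5)}$. The only cosmetic difference is the order-zero edge case, which the paper dispatches by observing that $\ord(M_2)=0$ forces $\norm(M_2)=0$ rather than by appealing to $\lambda$-lifting raising the order, but this does not affect correctness.
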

\begin{proof}
Start with a term $M$, call it $M_0$. Without loss of generality we can consider $M$ closed, otherwise we replace occurrences of
free variables by occurrences of constants without changing the norm, and only reducing the local height, depth and order.
Then, we apply the following transformations.

%
First we expand variables: by Lemma \ref{lem:expansion}, there is a term closed term $M_1$ with:
\[
\begin{array}{rclcrcl}
\norm(M_1) &\geq& \norm(M_0) &~~~~& \ord(M_1) &=& \ord(M_0)\\
\lh(M_1) &\leq& 2 &~~~~& \depth(M_1) &\leq& \h(M_0)
\end{array}
\]
Then, we make it strongly locally scoped. By Lemma \ref{lem:local_scopization}, there is $M_2$ sls such that:
\[
\begin{array}{rclcrcl}
\norm(M_2) &\geq& \norm(M_1) &~~~~& \ord(M_2) &\leq& \ord(M_1) + 1\\
\lh(M_2) &\leq& \lh(M_1)+1 &~~~~& \depth(M_2) &=& \depth(M_1)
\end{array}
\]
If $\ord(M_2) = 0$, then $M_2$ does not have any prime redex (nor any application, in fact), so $\norm(M_2) = 0$. Otherwise,
by Proposition \ref{prop:boundetals}, $\norm(M_2) \leq 2_{\ord(M_2)-1}^{\depth(M_2)\log(\lh(M_2) + \ord(M_2) + 1)}$.
By the inequalities above, we immediately deduce that $\norm(M) \leq 2_{\ord(M)}^{\h(M)\log(\ord(M) + 5)}$.
\end{proof}

\begin{thm}
For fixed $n\geq 3$ we have $\gen_n(h) = 2_n^{\Theta(h)}$.
\end{thm}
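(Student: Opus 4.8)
The upper bound $\gen_n(h) \leq 2_n^{\Theta(h)}$ is already supplied by Proposition \ref{prop:genupbound}, which gives $\norm(M) \leq 2_{\ord(M)}^{\h(M)\log(\ord(M)+5)}$; for fixed $n\geq 3$ the exponent is $\Theta(h)$, so the upper half is immediate. The real content is therefore the matching lower bound: I must exhibit, for each fixed $n\geq 3$ and each $h$, a simply-typed $\lambda$-term $M$ with $\ord(M)\leq n$, $\h(M)\leq h$ (up to an additive constant, absorbed by the $\Theta$) and $\norm(M) \geq 2_n^{\Omega(h)}$.

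First I would reuse the iterated-Church-numeral construction already used in the proof of Theorem \ref{th:boundvps} and in the $\loc_n$ lower bound. Recall the terms $[n]^k_p(M)$ and $S_{n,k,p} = [n]_p^k(\church{2}_p)~\church{2}_{p-1}~\dots~\church{2}_0$, for which $S_{n,k,p} \to_\beta^* \church{2_p^{2^{n^k}}}_0$, and Lemma \ref{lem:main:churchlhr}, which gives $\norm(M~\id_o) \geq m$ whenever $M\to_\beta^* \church{m}_0$. Applying this, $\norm(S_{n,k,p}~\id_o) \geq 2_p^{2^{n^k}} = 2_{p+2}^{\Omega(n^k)}$. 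The point now is the bookkeeping on the \emph{height} rather than on local height and depth: unlike in Section \ref{subsec:bounds}, where bounds were stated in $\lh$ and $\depth$, here the complexity must be read off $\h$. The composition of the iterated-numeral term nests $k$ applications of $\church{n}_{p+1}$ and the numerals are of type $A_{p+1}$, so $\h(S_{n,k,p})$ grows roughly like $k + p$ (more precisely, the height contributions of the fixed numeral subterms are a constant depending only on $p$, while each of the $k$ iteration layers contributes a bounded amount). Fixing $p$ to a small constant so that $p+2 \leq n$ (this is where $n\geq 3$ is needed, since we want at least $2_{p+2}$ with $p+2\le n$ — roughly $p = n-3$ matching the offset used earlier), and letting $k$ vary with $h$ so that $\h(S_{n,k,p}~\id_o)\leq h$, I get $k = \Theta(h)$ and hence $\norm(S_{n,k,p}~\id_o) \geq 2_n^{\Omega(n^{\Theta(h)})} \geq 2_n^{2^{\Omega(h)}}$, which is of the form $2_n^{\Theta(h)}$ in the tower sense — actually one should be slightly careful and instead take $S$ so that the final numeral is at the right level of the tower, giving directly $2_n^{\Theta(h)}$; I would mirror the exact indexing of the Section \ref{sec:games} construction (the terms $\lambda\lpair{x,y,z_1,\dots,z_d}.~x^n(y)~z_1\dots z_d$ applied to a tuple of Church numerals at decreasing levels) rather than reinvent it.

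The cleanest route is probably to avoid re-deriving a height-indexed family from scratch and instead invoke the already-established $\loc_n$ lower bound together with Lemma \ref{lem:expansion} in reverse spirit: the witnessing terms $S_{h-1,d-1,n-3}~\id_o$ and $U_{n,d}$ used in the proof of the $\loc_n$ theorem already have $\lh$ and $\depth$ linear in their parameters, so I only need to check that for this \emph{specific} family $\h$, $\lh$ and $\depth$ are all $\Theta$ of the same quantity — which holds because these terms are built from a bounded stock of subterms (Church numerals $\church{2}_j$ for a range of $j$, and $\church{n}_1$, and $\id_o$) glued by a linear number of applications, so their syntax tree has height proportional to the number of gluing steps plus a constant. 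Then $\gen_n(h) \geq \loc_n(\Theta(h),\Theta(h)) = 2_{n-1}^{\Theta(h\log h)} \geq 2_{n-1}^{\Theta(h)}$; to actually reach $2_n$ rather than $2_{n-1}$ one needs one more tower level, obtained by raising the order by one (hence the shift from $n-1$ to $n$ and the hypothesis $n\geq 3$), i.e. by taking the corresponding family with one extra Church-numeral layer at the top. I expect the \textbf{main obstacle} to be precisely this arithmetic of indices: matching the height parameter $h$ against the two different growth-parameters ($d$ for the depth of iteration, and the fixed offset into the exponential tower) so that the exponent genuinely lands at tower-height $n$ and grows like $\Theta(h)$, while simultaneously verifying that the extra applications and $\eta$-expansions used to normalize the term into the right shape do not inflate $\h$ beyond $O(h)$ nor $\ord$ beyond $n$ — all the relevant preservation facts (Propositions \ref{prop:etanorm}, \ref{prop:betanorm}, Lemma \ref{lem:main:churchlhr}) are available, so once the indexing is pinned down the verification is routine.
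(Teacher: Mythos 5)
Your upper bound half is fine and coincides with the paper's (it is just Proposition \ref{prop:genupbound}), but the lower bound plan has a genuine gap: no family of sls terms, and in particular neither $S_{n,k,p}$ nor $U_{n,d}$ nor any appeal to the $\loc_n$ theorem, can witness $2_n^{\Theta(h)}$ at order $n$. Proposition \ref{prop:boundetals} bounds every sls term of order $n$ and height $h$ by $2_{n-1}^{O(h\log h)}$ (since $\depth$ and $\lh$ are at most about $\h$), which is asymptotically far below $2_n^{\Theta(h)}$; so the witnesses must be \emph{non}-locally-scoped, and your route is structurally blocked, not merely an indexing nuisance. Your proposed fix, ``raising the order by one'' or adding ``one extra Church-numeral layer at the top,'' is not available: $\gen_n(h)$ requires $\ord(M)\leq n$, and that modification produces terms of order $n+1$, i.e.\ a lower bound for $\gen_{n+1}$. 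There is also an arithmetic slip in your first computation: for fixed $n$, $2_p^{2^{n^k}}=2_{p+2}^{\Theta(k\log n)}$, not $2_{p+2}^{\Omega(n^k)}$; the factor $n^k$ cannot survive as the top exponent after being absorbed into tower levels, so with $p+3\leq n$ the family $S_{n,k,p}~\id_o$ yields at best $2_{n-1}^{\Theta(h)}$.

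The missing idea is the Beckmann-style self-composition used in the paper (and explicitly flagged there as \emph{not} sls): set $b_0^p=\church{2}_p$ and $b_{k+1}^p=\lambda x^{A_{p-1}}.~b_k^p~(b_k^p~x)$, then $B_k^p=b_k^p~\church{2}_{p-1}~\dots~\church{2}_0$. Self-composition keeps the type, hence the order, fixed at $p+2$ and adds only a constant to the height per layer, yet $b_k^p$ denotes the numeral $2^{2^k}$, so $B_k^p\to_\beta^*\church{2_{p+2}^k}_0$; by Lemma \ref{lem:main:churchlhr}, $\norm(B_k^p~\id_o)\geq 2_{p+2}^k$ with $\ord(B_k^p)=p+2$ and $\h(B_k^p)=k+3$. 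Taking $p+2=n$ and $k=\Theta(h)$ gives exactly the tower of height $n$ with a top exponent linear in $h$ — the extra exponential level that order-$n$ sls terms provably cannot reach, which is precisely why the general bound sits one level above the locally scoped one.
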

\begin{proof}
\emph{Upper bound.} By Proposition \ref{prop:genupbound}. 

\emph{Lower bound.} The construction is essentially the same as the one used in \cite{beck} for the lower bound in terms of height.
For $p\geq 1$ and $k\geq 0$, we define $b_0^p = \church{2}_{p}$ and $b_{k+1}^p = \lambda x^{A_{p-1}}.~b_k^p~(b_k^p~x)$.
Then, we set:
\[
B_k^p = b_k^p~\church{2}_{p-1}~\dots~\church{2}_0
\]
Note that this term is \emph{not} sls.
By standard arithmetic of Church numerals, we have that for any $p\geq 1, k\geq 0$,
$B_k^p \to_{\beta}^* \church{2_{p+2}^k}_0$.
By Lemma \ref{lem:main:churchlhr} it follows that $\norm(B_k^p~\id_o) \geq 2_{p+2}^k$. It is direct to check that 
$\ord(B_k^p) = p+2$ and $\h(B_k^p) = k+3$ (for $k\geq 1$), concluding the proof.
\end{proof}

For a term $M$ of height $h$ and order $n$, Beckmann's results \cite{beck} predict that any $\beta$-reduction chain of $M$ terminates
in less than $2_{n+1}^{\Theta(h)}$ steps. It might seem counter-intuitive that our bound (with lhr) is smaller
than Beckmann's (with $\beta$-reduction) since we substitute only one occurrence at a time, which is obviously longer. However,
Beckmann considers arbitrary $\beta$-reduction, not head $\beta$-reduction. The possibility of reducing in arbitrary
locations of the term unlocks much longer reductions, since higher-order free variables or constants can isolate sections of the term that
will never arrive in head position but can still be affected by arbitrary $\beta$-reduction. The fact that the length of lhr has
the same order of magnitude as head $\beta$-reduction is not surprising in the light of Accattoli and Dal Lago's recent result \cite{adl} that
a similar notion of lhr is quadratically related to head reduction.

\bigskip

\noindent\textbf{Acknowledgment.} The author is grateful to the anonymous referees for their very helpful comments and suggestions.

\bibliographystyle{plain}
\bibliography{main}

%

\end{document}